\documentclass[letterpaper]{article}
\usepackage{preamble}

\usepackage{graphicx}    \usepackage{caption} 

\usepackage{algorithm}
\usepackage[noend]{algorithmic}

\usepackage{amsmath,amsthm,amsfonts,amssymb}
\usepackage{adjustbox}
\usepackage{thmtools} 
\usepackage{subcaption}
\usepackage{authblk}
\usepackage{thm-restate}
\usepackage{footnote}
\makesavenoteenv{tabular, center}
\usepackage{mathtools}
\usepackage[capitalize]{cleveref}
\usepackage{xcolor}
\usepackage{todonotes}
\usepackage{threeparttable}
\usepackage{dsfont}
\usepackage{microtype}
\usepackage{diagbox}  \usepackage{tikz}
\usetikzlibrary{decorations.pathreplacing}
\usetikzlibrary{arrows.meta,decorations.pathmorphing}
\usepackage{enumitem}
\setlist{noitemsep}
\setlist[description]{leftmargin=\parindent,labelindent=0pt,topsep=2pt}
\setlist[itemize]{leftmargin=\parindent,labelindent=0pt,topsep=2pt}
\usepackage{newfloat}
\usepackage{listings}
\DeclareCaptionStyle{ruled}{labelfont=normalfont,labelsep=colon,strut=off} 
\floatstyle{ruled}
\newfloat{listing}{tb}{lst}{}
\floatname{listing}{Listing}

\usepackage{booktabs}

\newcommand{\PAV}{\operatorname{PAV}}
\newcommand{\norm}[1]{\lVert#1\rVert}
\newcommand{\lift}[1]{\operatorname{lift}(#1)}

\title{Approval-Based Apportionment: \\ Like Portioning, Approximately like Committee Voting}
\author{Paul Gölz}
\author{Hannane Yaghoubizade}
\affil{Cornell University, School of Operations Research and Information Engineering}
\date{}

\begin{document}
\maketitle

\begin{abstract}
We study approval-based apportionment, a variant of committee elections in which candidates (``parties'') can be selected several times.
We show that the proportionality axioms EJR, EJR+, and FJR coincide, and so do PJR, PJR+, and FPJR; that Lindahl priceability, an axiom implying core stability, is equivalent to a notion of approximate optimality with respect to the proportional approval voting (PAV) score; and that locally PAV-optimal committees are priceable.

Approval-based apportionment (where a candidate receives an integer number of seats) lies between committee elections (zero or one seat) and portioning (a fractional number of seats).
We formally connect portioning and apportionment by giving a construction that lifts axioms from apportionment to portioning and preserves implications between them.
Several of our new implications between apportionment axioms are natural from a portioning perspective, leading us to believe that apportionment sits closer to portioning than to committee elections.
None of them holds in committee elections, but several extend approximately, which makes apportionment a fruitful setting for conjecturing approximate relationships in approval-based committee elections.
\end{abstract}

\section{Introduction}
\definecolor{committeecolor}{HTML}{009E73}
\definecolor{portioningcolor}{HTML}{E69F00}      \begin{figure*}[t]
    \centering

    \begin{subfigure}[t]{0.32\textwidth}
        \centering
        \begin{adjustbox}{max width=\linewidth,center}
            \begin{tikzpicture}[
     x=0.9cm,
    y=1.1cm,
    box/.style={
        draw,
        rounded corners,
        very thin,
        node font=\footnotesize,
        minimum height=14pt,
        inner xsep=2pt,
        inner ysep=1pt,
        align=center
    }
]
  \node[box] (node1) at (-3.57,3.96) {Lindahl Pr.\vphantom{Ty}};
  \node[box] (node2) at (-3.57,3.18) {Core Stability\vphantom{Ty}};
  \node[box] (node5) at (-3.56,2.22) {FJR\vphantom{Ty}};
  \node[box] (node12) at (-1.94,2.22) {EJR\vphantom{Ty}};
  \node[box] (node6) at (-0.18,2.22) {EJR+\vphantom{Ty}};

  \draw[arrows=-Stealth, thick] (node1.south) -- (node2.north);
  \draw[arrows=-Stealth, thick] (-3.19,2.18) -- (-2.32,2.15);
  \draw[arrows=-Stealth, thick] (-0.67,2.15) -- (-1.56,2.15);

  \node[box] (node7) at (-3.57,1.46) {FPJR\vphantom{Ty}};
  \node[box] (node11) at (-1.94,1.46) {PJR\vphantom{Ty}};
  \node[box] (node8) at (-0.18,1.45) {PJR+\vphantom{Ty}};
  \node[box] (node10) at (-1.91,0.38) {Priceability\vphantom{Ty}};

  \draw[arrows=-Stealth, thick] (node2.south) -- (node5.north);
  \draw[arrows=-Stealth, thick] (node5.south) -- (node7.north);
  \draw[decorate, thick, arrows=-Stealth, bend left=17]
      (node2.south) to[bend left=13] (node8.north);
  \draw[arrows=-Stealth, thick] (node12.south) -- (node11.north);
  \draw[arrows=-Stealth, thick] (node10.north) -- (node8.south);
  \draw[arrows=-Stealth, thick] (node10.north) -- (node7.south);
  \draw[arrows=-Stealth, thick] (node6.south) -- (node8.north);

  \node[box] (node3) at (-0.18,3.18) {Bounded PAV Imp.\vphantom{Ty}};
  \node[box] (node4) at (-0.18,3.96) {Local PAV Optimality\vphantom{Ty}};

  \draw[arrows=-Stealth, thick, line cap=projecting]
      (node4.south) -- (node3.north);
  \draw[thick, line cap=projecting, arrows=-Stealth, purple]
      (-1.56,2.28) -- (-0.68,2.27);
  \draw[arrows=-Stealth, thick]
      (-0.63,1.37) -- (-1.57,1.37);
  \draw[thick, line cap=projecting, arrows=-Stealth, purple]
      (-1.57,1.5) -- (-0.63,1.49);

  \draw[arrows=-Stealth, thick, line cap=projecting]
      (node3.south) -- (node6.north);

  \draw[thick, line cap=projecting, purple, arrows=Stealth-]
      (-3.19,2.3) -- (-2.34,2.28);

  \draw[arrows=-Stealth, thick]
      (-3.12,1.39) -- (-2.3,1.39);
  \draw[thick, line cap=projecting, purple, arrows=Stealth-]
      (-3.12,1.53) -- (-2.3,1.53);

  \draw[decorate, thick, arrows=Stealth-Stealth, purple, bend right=15]
      (-0.5,2.45) to[bend right=11] (-3.38,2.45);

  \draw[decorate, thick, arrows=Stealth-Stealth, purple, bend left=15]
      (-0.44,1.23) to[bend left=9] (-3.33,1.23);

  \draw[thick, purple, arrows=-Stealth]
      (-2.96,3.74) -- (-1.62,3.24);
  \draw[thick, purple, arrows=Stealth-]
      (-3.33,3.74) -- (-1.62,3.09);
  \draw[arrows=-Stealth, thick, purple, bend left=50]
      (node4.east) to[bend left=30] (node10.east);

  \node[box] (node9) at (-0.18,4.74) {PAV Optimality\vphantom{Ty}};

  \draw[arrows=-Stealth, thick, line cap=projecting]
      (node9.south) -- (node4.north);

\end{tikzpicture}
        \end{adjustbox}
        \caption{Implications in \textbf{apportionment}. Black arrows are inherited from the committee election setting, and the red arrows are established in this work.\footnotemark}
        \label{fig:abctoapp}
    \end{subfigure}
    \hfill
    \begin{subfigure}[t]{0.32\textwidth}
        \centering
        \begin{adjustbox}{max width=\linewidth,center}
            \begin{tikzpicture}[
     x=0.9cm,
    y=1.1cm,
    box/.style={
        draw,
        rounded corners,
        very thin,
        node font=\footnotesize,
        minimum height=14pt,
        inner xsep=2pt,
        inner ysep=1pt,
        align=center
    }
]
  \node[box] (node1) at (-3.57,3.96) {Lindahl Pr.\vphantom{Ty}};
  \node[box] (node13) at (-3.58,4.74) {Frugal Lindahl Pr.\vphantom{Ty}};
  \node[box] (node2) at (-3.57,3.18) {Core Stability\vphantom{Ty}};
  \node[draw=none, node font=\scriptsize, text=committeecolor, fill opacity=1, draw opacity=1] at (-1.89,2.77) {2-approx};
  \node[box] (node5) at (-3.56,2.22) {FJR\vphantom{Ty}};
  \node[box] (node12) at (-1.94,2.22) {EJR\vphantom{Ty}};
  \node[box] (node6) at (-0.18,2.22) {EJR+\vphantom{Ty}};

  \draw[arrows=-Stealth, thick] (node1.south) -- (node2.north);
  \draw[arrows=-Stealth, thick] (node5.east) -- (node12.west);
  \draw[arrows=-Stealth, thick] (node6.west) -- (node12.east);

  \node[box] (node7) at (-3.57,1.46) {FPJR\vphantom{Ty}};
  \node[box] (node11) at (-1.94,1.46) {PJR\vphantom{Ty}};
  \node[box] (node8) at (-0.18,1.45) {PJR+\vphantom{Ty}};
  \node[box] (node10) at (-1.91,0.38) {Priceability\vphantom{Ty}};

  \draw[arrows=-Stealth, thick] (node2.south) -- (node5.north);
  \draw[arrows=-Stealth, thick] (node5.south) -- (node7.north);
  \draw[decorate, thick, arrows=-Stealth, bend left=17]
      (node2.south) to[bend left=13] (node8.north);
  \draw[arrows=-Stealth, thick] (node12.south) -- (node11.north);
  \draw[arrows=-Stealth, thick] (node10.north) -- (node8.south);
  \draw[arrows=-Stealth, thick] (node10.north) -- (node7.south);
  \draw[arrows=-Stealth, thick] (node6.south) -- (node8.north);

  \node[box] (node3) at (-0.18,3.18) {Bounded PAV Imp.\vphantom{Ty}};
  \node[box] (node4) at (-0.18,3.96) {Local PAV Optimality\vphantom{Ty}};

  \draw[arrows=-Stealth, thick, line cap=projecting]
      (-0.18,3.74) -- (-0.18,3.4);
  \draw[arrows=-Stealth, thick]
      (node8.west) -- (node11.east);

  \draw[arrows=-Stealth, thick, line cap=projecting]
      (node3.south) -- (node6.north);

  \draw[arrows=-Stealth, thick]
      (node7.east) -- (node11.west);

  \draw[decorate, thick, arrows=-Stealth, bend right=15, committeecolor, decorate, decoration={snake}, /pgf/decoration/segment length=7pt, /pgf/decoration/amplitude=0.35pt]
      (-0.5,2.45) to[bend right=11] (-3.38,2.45);

  \draw[decorate, thick, arrows=-Stealth, bend left=15, committeecolor, decorate, decoration={snake}, /pgf/decoration/segment length=7.0pt, /pgf/decoration/amplitude=0.35pt]
      (-0.44,1.23) to[bend left=9] (-3.33,1.23);

  \draw[thick, arrows=-Stealth, committeecolor]
      (-2.63,4.51) -- (-1.60,3.32);
  \draw[thick, arrows=Stealth-, committeecolor, decorate, decoration={snake}, /pgf/decoration/segment length=6.6pt, /pgf/decoration/amplitude=0.4pt]
      (-2.89,4.55) -- (-1.62,3.09);

  \node[box] (node9) at (-0.18,4.74) {PAV Optimality\vphantom{Ty}};

  \draw[arrows=-Stealth, thick, line cap=projecting]
      (-0.19,4.52) -- (-0.18,4.18);

  \draw[arrows=-Stealth, thick, committeecolor] (node13.south) -- (node1);

  \node[draw=none, node font=\scriptsize, text=committeecolor, fill opacity=1, draw opacity=1, rotate=-56] at (-2.39,3.69) {2-approx};

  \node[draw=none, node font=\scriptsize, text=committeecolor, fill opacity=1, draw opacity=1] at (-1.89,1) {2-approx};
\end{tikzpicture}
        \end{adjustbox}
        \caption{Implications in \textbf{committee elections}, including some implying multiplicatively relaxed properties. Black arrows were previously known; green arrows are new in this work.}
        \label{fig:diag-party}
    \end{subfigure}
    \hfill
    \begin{subfigure}[t]{0.32\textwidth}
        \centering
        \begin{adjustbox}{max width=\linewidth,center}
            \begin{tikzpicture}[
     x=0.9cm,
    y=1.1cm,
    box/.style={
        draw,
        rounded corners,
        very thin,
        node font=\footnotesize,
        minimum height=14pt,
        inner xsep=2pt,
        inner ysep=1pt,
        align=center
    }
]
  \node[box] (node1) at (-3.57,3.96) {Lindahl Pr.\vphantom{Ty}};
  \node[box] (node2) at (-3.57,3.18) {Core Stability\vphantom{Ty}};
  \node[box] (node5) at (-3.56,2.22) {FJR\vphantom{Ty}};
  \node[box] (node12) at (-1.94,2.22) {EJR\vphantom{Ty}};
  \node[box] (node6) at (-0.18,2.22) {EJR+\vphantom{Ty}};

  \draw[arrows=-Stealth, thick] (node1.south) -- (node2.north);
  \draw[arrows=-Stealth, thick] (-3.19,2.18) -- (-2.32,2.15);
  \draw[arrows=-Stealth, thick] (-0.67,2.15) -- (-1.56,2.15);

  \node[box] (node7) at (-3.57,1.46) {FPJR\vphantom{Ty}};
  \node[box] (node11) at (-1.94,1.46) {PJR\vphantom{Ty}};
  \node[box] (node8) at (-0.18,1.45) {PJR+\vphantom{Ty}};
  \node[box] (node10) at (-1.91,0.38) {Priceability\vphantom{Ty}};

  \draw[arrows=-Stealth, thick] (node2.south) -- (node5.north);
  \draw[arrows=-Stealth, thick] (node5.south) -- (node7.north);
  \draw[decorate, thick, arrows=-Stealth, bend left=17]
      (node2.south) to[bend left=13] (node8.north);
  \draw[arrows=-Stealth, thick] (node12.south) -- (node11.north);
  \draw[arrows=-Stealth, thick] (node10.north) -- (node8.south);
  \draw[arrows=-Stealth, thick] (node10.north) -- (node7.south);
  \draw[arrows=-Stealth, thick] (node6.south) -- (node8.north);

  \node[box] (node3) at (-0.18,3.18) {Bounded PAV Imp.\vphantom{Ty}};
  \node[box] (node4) at (-0.18,3.96) {Local PAV Optimality\vphantom{Ty}};

  \draw[arrows=-Stealth, thick, line cap=projecting]
      (-0.27,3.74) -- (-0.27,3.4);
  \draw[thick, line cap=projecting, arrows=-Stealth]
      (-1.56,2.28) -- (-0.68,2.27);
  \draw[arrows=-Stealth, thick]
      (-0.63,1.37) -- (-1.57,1.37);
  \draw[thick, line cap=projecting, arrows=-Stealth]
      (-1.57,1.5) -- (-0.63,1.49);

  \draw[arrows=-Stealth, thick, line cap=projecting]
      (node3.south) -- (node6.north);

  \draw[thick, line cap=projecting, arrows=Stealth-]
      (-3.19,2.3) -- (-2.34,2.28);

  \draw[arrows=-Stealth, thick]
      (-3.12,1.39) -- (-2.3,1.39);
  \draw[thick, line cap=projecting, arrows=Stealth-]
      (-3.12,1.53) -- (-2.3,1.53);

  \draw[decorate, thick, arrows=Stealth-Stealth, bend right=15]
      (-0.5,2.45) to[bend right=11] (-3.38,2.45);

  \draw[decorate, thick, arrows=Stealth-Stealth, bend left=15]
      (-0.44,1.23) to[bend left=9] (-3.33,1.23);

  \draw[thick, arrows=-Stealth]
      (-2.96,3.74) -- (-1.62,3.24);
  \draw[thick, arrows=Stealth-]
      (-3.33,3.74) -- (-1.62,3.09);
  \draw[arrows=-Stealth, thick, bend left=50]
      (node4.east) to[bend left=30] (node10.east);

  \node[box] (node9) at (-0.18,4.74) {PAV Optimality\vphantom{Ty}};

  \draw[arrows=-Stealth, thick, line cap=projecting]
      (-0.28,4.52) -- (-0.27,4.18);

  \draw[arrows=-Stealth, thick, bend left=50, portioningcolor]
        (node2.west) to[bend right=32] (node10.west);

  \draw[arrows=-Stealth, thick, portioningcolor] (-0.11,3.4) -- (-0.11,3.74);

  \draw[arrows=-Stealth, thick, portioningcolor] (-0.11,4.18) -- (-0.11,4.52);
\end{tikzpicture}
        \end{adjustbox}
        \caption{Implications between \textbf{portioning} analogs of apportionment properties, formally defined in \cref{sec:portioning:lift}. Black arrows also hold in apportionment; yellow ones only in portioning.}
        \label{fig:portioning}
    \end{subfigure}

    \caption{Implication relations between properties in approval-based apportionment, committee elections, and portioning.}
    \label{fig:implication-relations}
\end{figure*}

The work by \textcite{JR} on proportionality in approval-based committee elections has proved deeply influential on the subsequent decade of social choice research.
In these committee elections, each voter $1 \leq i \leq n$ approves of a subset $A_i$ of the candidates $C$; the task is to choose a committee of $k$ candidates; and the challenge is to capture and attain the ideal of proportionality in this setting.
The work following up on \textcite{JR} has led to an explosion of proportionality axioms.
To name just a few, their axioms of \emph{justified representation (JR)} and \emph{extended justified representation (EJR)} have since been joined by variants abbreviated \emph{PJR}~\cite{PJR}, \emph{FJR}~\cite{FJR}, \emph{PJR+} and \emph{EJR+}~\cite{EJR+}, \emph{FPJR}~\cite{FPJR}, \emph{BJR}~\cite{FGP+26}, and Droop-quota versions of most of these~\cite{CE26}; alongside notions of core stability~\cite{JR}, priceability~\cite{limitsofwelf}, and others.

Faced with such an embarrassment of axiomatic riches, it becomes more urgent to learn about the relationship between axioms.
By this, we mean going beyond just their implication hierarchy, which is well understood.
For example, FJR implies EJR, which in turn implies PJR. \emph{Should we think of EJR as conceptually closer to FJR or to PJR?}
To take another example, a committee $W$ maximizing the so-called PAV score (to be formally defined later) attains strong proportionality axioms like EJR+~\cite{EJR+}.
\emph{Is the PAV score fundamentally related to proportionality axioms in this setting, or just one of many ways to attain them?}

Our approach to get at such (a priori, ill-defined) structure resembles the use of \emph{model organisms} in biology, where observations in a simpler organism (say, a mouse) yield conjectured patterns that can then be tested in a more complicated organism (say, a human).
In the same way, we study a related social choice setting with slightly simpler mathematical structure, in which more theorems (say, implications between axioms) hold true.
We then go through theorems that hold in the simpler setting but not in full committee elections, and ask if they extend in some approximate form.

Our ``model organism'' for committee elections is \emph{approval-based apportionment}~\cite{paulapproval}.
The only difference is that the same candidate may be placed several times in the committee; whereas committee elections allocate zero or one seat to each candidate (and allocate $k$ seats in total), apportionment can give any natural number of seats to each candidate (under the same constraint).
Alternatively to this presentation as a relaxation of committee elections, apportionment can also be seen as a subdomain consisting only of committee elections in which sufficiently many copies of each candidate are available.
Through this embedding, \textcite{paulapproval} formalize how to lift axioms and voting rules from committee elections to apportionment.
Properties that hold for all committee-election instances must hold for all apportionment instances, so the latter setting lets us prove more theorems.
In this sense, apportionment is a ``simpler organism'', which we study as a proxy for committee elections.

Whereas this introduction focuses on the benefits of studying apportionment for understanding committee elections, the apportionment setting is well motivated in its own right.
If the candidates are political parties, apportionment maps voters' approval preferences over them to an allocation of seats in a legislature to the parties~\cite{paulapproval}, which generalizes classic apportionment~\cite{BY01} in which each voter can only support a single party.
Another application is representing opinions with a small number of AI-generated textual statements, where \textcite{FGP+26} argue that ``representing multiple segments of the population by identical statements might sometimes be\footnotetext[1]{It was previously known that bounded PAV improvement implies core stability~\cite{paulapproval}, which we strengthen by showing that it even implies Lindahl priceability.} appropriate.''\footnote{Whereas the model of \textcite{FGP+26} assumes cardinal preferences, an earlier preprint uses approval preferences.}

\subsection{Our Results and Techniques}
We begin by showing three new results for the apportionment setting in \cref{sec:apportionment}. These new implications are represented as red arrows in \cref{fig:abctoapp}, in addition to implications inherited from the committee elections setting (black arrows):
\begin{itemize}
\item The proportionality axioms EJR, EJR+, and FJR coincide, and so do PJR, PJR+, and FPJR. 
Answering the question from our introduction, this is a formal way in which EJR is closer to FJR than to PJR.
\item The axiom \emph{Lindahl priceability}~\cite{munagala2022auditing} is equivalent to what we call \emph{bounded PAV improvement}, which says that the PAV score of a committee $W$ cannot increase by $n/k$ or more when adding one more candidate. Bounded PAV improvement is the key property in proofs that the PAV voting rule (and its local-search variants) satisfy EJR and EJR+~\cite{JR,EJR+}. This result shows an intrinsic link between the PAV objective and strong notions of proportionality.
\item Any committee maximizing (or locally maximizing) the PAV score is priceable. This is surprising because Pareto-optimal, ``welfarist'' rules like PAV cannot guarantee priceability in committee elections~\cite{limitsofwelf}.
\end{itemize}
This yields three implications that ``almost'' hold in general committee elections, and only fail to go through because desirable candidates cannot be chosen several times.

To better understand the similarity of implication relationships in apportionment to those in other settings, we consider one more setting in \cref{sec:portioning}, \emph{approval-based portioning}~\cite{BMS05}.
In this setting, the number of ``seats'' allocated to each candidate can be fractional rather than integer. (Their sum is normalized to $1$ rather than $k$.)
Whereas \textcite{paulapproval} discuss approval-based apportionment as lying between committee elections and classic (single-party choice) apportionment, we believe that portioning is the more natural second endpoint.

Formalizing a relationship used by \textcite{Peters25} to show that PAV converges to the maximum Nash welfare portioning, we define a way to lift predicates over committees (axioms or voting rules) from apportionment to portioning, broadly speaking by considering the limit as the committee size $k$ goes to infinity.
We show that axioms and voting rules from committee elections map to natural analogs in portioning.\footnote{E.g., the \emph{method of equal shares}~\cite{limitsofwelf} lifts to \emph{majoritarian portioning}~\cite{Majoritarian-portioning}.}
Any theorem in apportionment stating that there always exists a committee with a property $X$ or that a committee property $X$ implies a property $Y$ extends to the lifted versions of these properties in portioning, so portioning is, in turn, a ``simpler'' setting than apportionment.
In our impression, the network of axiom implications of apportionment seems closer to that of portioning than to that of committee elections.
This suggests that the committee's indivisibility matters less than whether candidates may be selected several times.

In \cref{sec:abc}, we return to committee elections to show that, though the implications we found in apportionment do not hold in committee elections, several of them \emph{approximately} hold in this more general setting (\cref{fig:diag-party}).
First, we find that EJR+ implies 2-approximate FJR and PJR+ implies 2-approximate FPJR, a question that we would not have thought to ask without the result in apportionment.
Second, bounded PAV improvement is closely linked to a natural strengthening of Lindahl priceability (\emph{frugal Lindahl priceability}, equivalent in apportionment), which implies bounded PAV improvement, and whose 2-approximate notion is implied by bounded PAV improvement.

\subsection{Related Work}
Our work owes much to the literature on committee elections.
We reference relevant works throughout and refer the reader to \textcite{LS23} for a comprehensive treatment.

Using approval ballots for political apportionment has been proposed many times, including by \textcite{BKP19} and \textcite{Majoritarian-portioning}.
\textcite{BLS18} observe that (single-party choice) apportionment is a subdomain of approval-based committee elections.

\textcite{paulapproval} systematically study the approval-based apportionment setting.
They formalize how to lift axioms and voting rules from committee elections to apportionment, similar to the lifting we give from apportionment to portioning.
They also show that PAV-optimal committees satisfy \emph{core stability}, whereas the satisfiability of core stability in general committee elections is a major open problem.
While they discuss portioning as a source of possible apportionment methods (by rounding the portioning), they do not develop the connection between the models.
\textcite{DDE+23} show that strategyproofness and weak forms of proportionality are incompatible in apportionment, and characterize the Chamberlin--Courant rule using a weak form of strategyproofness.
They also discuss apportionment as an intermediate model between committee elections and portioning in terms of their output types.
As mentioned, \textcite{Peters25} identifies the connection between PAV and maximum Nash welfare in portioning and inspires our lifting between these settings.

Other papers touch on the apportionment setting as a special case of their model: \textcite{BBG22} discuss how their \emph{excess method} rule coincides with sequential PAV; \textcite{CGP24} discuss apportionment as a special case of their sequential-decision model; and \textcite{FGP+26} show that the BJR axiom is incomparable with other strengthenings of JR in apportionment.

Portioning (or \emph{fair mixing}) was introduced by \textcite{BMS05}, motivated by time sharing.
Since the outcome space is convex, microeconomic tools like the Lindahl equilibrium readily apply to this setting~\cite{FGM16}.
We refer the reader to \textcite{ST26} for an overview of this literature.

\section{Apportionment}
\label{sec:apportionment}
\subsection{Preliminaries}\label{sec:app:pre}
An apportionment instance is a tuple $(N, P, A, k)$ consisting of a set of voters $N = \{1, \dots, n\}$, a finite set of parties $P$, a tuple $A=(A_i)_{i\in N}$ where $A_i \subseteq P$ is the approval set of $i$, and the committee size $k \in \mathbb{N}$.
Set $q(S) \coloneqq \lfloor|S| \cdot k /n\rfloor$ for the (Hare) \emph{quota} of a set of voters $S\subseteq N$.
By default, $i$ ranges over $N$ and $j$ over $P$.
For ease of exposition, we assume that every voter $i$ approves at least one party.
A \emph{committee} is a multiset $W: P \to \mathbb{N}$ of size $|W| = \sum_{j} W(j) = k$.
As usual, $W(j)$ denotes the number of copies of $j$ in $W$, and $+$ and $-$ denote multiset addition and subtraction.
An \emph{apportionment method} is a function mapping each apportionment instance to a committee.
We write $u_i(W) \coloneqq \sum_{j\in A_i} W(j)$ for voter $i$'s \emph{utility} for committee $W$, which simply counts the number of seats filled by approved parties.

We now define several proportionality axioms.
Each of them is lifted from its committee-elections definition to the apportionment setting using the embedding of \textcite{paulapproval}.\footnote{Whereas the $k$ copies per party in their embedding are ``sufficiently many'' for their considered axioms, we use $k+1$ copies.}
Some lifted definitions include non-trivial simplifications, so we formally derive them in \cref{app:appor:def}.
This appendix also includes definitions for additional axioms, which we defer due to lack of space.

Fix an apportionment instance and a committee $W$. Then:
\begin{description}
    \item[EJR/EJR+.] $W$ satisfies \emph{extended justified representation (plus)}\footnote{These definitions collapse, see the proof of \cref{thm:efjr+}.}
    if, for every nonempty $S \subseteq N$ with $\bigcap_{i \in S} A_i \neq \emptyset$, there exists a voter $i \in S$ with $u_i(W) \geq q(S)$.
    \item[FJR.] $W$ satisfies \emph{full justified representation} if, for every nonempty $S \subseteq N$, every multiset $T$ of parties of size $|T| \leq q(S)$, and every $\beta \in \mathbb{N}$ such that $u_i(T) \geq \beta$ for all $i \in S$, there is an $i \in S$ such that $u_i(W) \geq \beta$.
    \item[Lindahl Priceability.] $W$ is \emph{Lindahl priceable} if there exist nonnegative personalized prices $(p_{ij})_{i,j}$ such that $\sum_{i} p_{ij} \leq n/k$ for all $j$ and such that, for any voter $i$ and multiset $T$ of parties with $\sum_{j} T(j) \, p_{ij} \leq 1$, it holds that $u_i(T) \leq u_i(W)$. (Conceptually, $i$ cannot afford any outcome they prefer over $W$ within a budget of $1$.)
    \item[Priceability.] $W$ is \emph{priceable} if there exist nonnegative payments $(p_{ij})_{i,j}$ and a per-unit price $p \geq 0$ such that $p_{ij}=0$ whenever $j \notin A_i$, $\sum_{j} p_{ij} \leq 1$ for all $i$, $\sum_{i} p_{ij} = W(j) \cdot p$ for all $j$, and $\sum_{i : j \in A_i} (1 -\sum_{j' \in P} p_{ij'}) \leq p$ for all $j$. (The unspent budget of $j$'s supporters is at most $p$.)
\end{description}

\noindent An apportionment method satisfies an axiom if it only produces committees satisfying that axiom.

The \emph{proportional approval voting (PAV)} rule~\cite{Thiele95} and its objective play a prominent role in our results.
The \emph{PAV score} of a committee $W$ is $\PAV{}(W) \coloneqq \sum_{i\in N} H_{u_i(W)}$ where $H_t \coloneqq \sum_{\ell=1}^t 1/\ell$.
\(W\) is \emph{PAV optimal} if it maximizes the PAV score among all committees and is \emph{locally PAV optimal} if no swap of one selected party for another strictly increases its PAV score. 
For each $j$, we denote the marginal PAV gain from adding one copy of $j$ to $W$ (leading to a multiset of size $k+1$) by
\(
\Delta_W^+(j)\coloneqq \PAV{}(W\!+\!\{j\})-\PAV{}(W)
=\sum_{i:\,j\in A_i} 1/(u_i(W)\!+\!1).
\)
We say that a committee $W$ satisfies \textbf{bounded PAV improvement} if $\Delta_W^+(j) < n/k$ for all $j$.
While it may be unintuitive to speak of the increase of the PAV score when adding a $(k+1)$th seat, this property (and its committee-election analog, see \cref{sec:abc:prelims}) plays a key role, in the sense that it is implied by PAV optimality\footnote{It is even implied by local PAV optimality, and the approximate local PAV optimality reached by the polynomial-time local-search variant of PAV~\cite{AEH+18}.} and is sufficient to prove proportionality axioms like EJR+.

\subsection{New Implications Between Properties}
In this section, we prove three sets of new implication relations between committee properties in the apportionment setting (red arrows in \cref{fig:abctoapp}).
Recall that, due to the embedding of apportionment into committee elections, all implications between axioms from committee elections hold in apportionment as well (black arrows in the figure).
We verify in \cref{app:appor:omit} that no other pair of axioms in the figure implies each other, so the picture of pairwise relationships of the displayed properties is complete.

Our point is not that any of the three results is hard to prove.
Instead, these implications contradicted our expectations (formed in committee elections) and thus pushed us to reevaluate the structure of approval-based apportionment.

\medskip\noindent\textbf{FJR$\Leftrightarrow$EJR$\Leftrightarrow$EJR+ and FPJR$\Leftrightarrow$PJR$\Leftrightarrow$PJR+.}
We begin with a simple observation that several of the strengthenings of justified representation, though distinct in committee elections, collapse in apportionment.
\begin{theorem}\label{thm:efjr+}
   In the apportionment setting, FJR, EJR, and EJR+ coincide, as do FPJR, PJR, and PJR+.
\end{theorem}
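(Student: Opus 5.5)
Because implications from committee elections carry over to apportionment, we already have FJR $\Rightarrow$ EJR and EJR+ $\Rightarrow$ EJR, and likewise FPJR $\Rightarrow$ PJR and PJR+ $\Rightarrow$ PJR. So it suffices to prove two things: EJR $\Rightarrow$ FJR together with EJR $\Rightarrow$ EJR+, and PJR $\Rightarrow$ FPJR together with PJR $\Rightarrow$ PJR+. Throughout, the basic tool is the embedding of \textcite{paulapproval} with $k+1$ copies per party. In the committee-election image, a group $S$ that jointly approves some party $j$ always has an unelected copy of $j$ available, because $W(j)\le k$.

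\textbf{EJR+ collapses to EJR.} In the lifted committee-election EJR+ definition, a group $S$ is relevant if some unelected candidate $c$ is approved by all of $S$; the conclusion is that some $i\in S$ has $u_i(W)\ge q(S)$ or approves $c$. Since $c\notin W$, the second disjunct adds nothing. The existence of such a $c$ is equivalent to $\bigcap_{i\in S}A_i\neq\emptyset$, since a spare copy always exists. This is exactly the EJR definition given above. I would record this observation (deferring to \cref{app:appor:def}), and an analogous one for PJR+ versus PJR.

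\textbf{EJR $\Rightarrow$ FJR.} Fix $S$, a multiset $T$ with $|T|\le q(S)$, and $\beta$ such that $u_i(T)\ge\beta$ for all $i\in S$. The key step is to pass to a subgroup sharing a common party. Since $T$ has at most $q(S)$ seats and each voter in $S$ approves at least $\beta$ of them, double counting gives
\[\sum_{j} T(j)\,|S\cap N_j| \;\ge\; \beta|S|,\]
where $N_j$ is the set of voters approving $j$. Hence some party $j\in T$ satisfies $|S\cap N_j|\ge \beta|S|/|T|\ge \beta|S|/q(S)$. Set $S'=S\cap N_j$. Then
\[\frac{|S'|k}{n}\;\ge\;\frac{\beta}{q(S)}\cdot\frac{|S|k}{n}\;\ge\;\beta,\]
so $q(S')\ge\beta$ because $\beta$ is an integer. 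Since $S'$ jointly approves $j$, EJR gives some $i\in S'\subseteq S$ with $u_i(W)\ge q(S')\ge\beta$, which is the FJR conclusion. We may assume $\beta\ge1$, since $\beta=0$ is trivial, so $T$ is nonempty and the averaging is valid.

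\textbf{The PJR side.} Here the statements concern the union utility $|W\cap \bigcup_{i\in S}A_i|$, counted with multiplicity in the apportionment lifting. The same averaging argument should work: pick $j$ and $S'\subseteq S$ as above, apply PJR to $S'$, and note that the union of approval sets over $S'$ is contained in the union over $S$. The main obstacle is getting the lifted FPJR definition right. It is stated only in the appendix, and I would expect it to require $\sum_{j\in\bigcup_{i\in S}A_i}W(j)\ge \beta$ whenever every $i\in S$ satisfies $u_i(T)\ge\beta$. If that is the form, the proof above transfers verbatim. If FPJR instead uses a different threshold on $T$, I would redo the averaging with that quantity in place of $\beta$.
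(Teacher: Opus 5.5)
Your proposal is correct and follows the paper's proof: EJR and EJR+ collapse because a spare unelected copy of any jointly approved party always exists, and EJR $\Rightarrow$ FJR follows by averaging over $T$ to find a subgroup $S'$ jointly approving one party with $q(S')\ge\beta$. The PJR side works the same way using $\bigcup_{i\in S'}A_i\subseteq\bigcup_{i\in S}A_i$, and your guessed FPJR definition is exactly the one in the appendix. The one slip is the phrase ``or approves $c$'' in your EJR+ paragraph: every $i\in S$ approves $c$, so that disjunct would make the axiom vacuous, and it should instead be the condition $c\notin W$.
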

\begin{proof}
    The collapse between EJR+ and EJR is so immediate that we even gave identical definitions for the apportionment setting.
    In committee elections, the two axioms differ in which groups $S$ of voters are ``cohesive'' enough to deserve representation guarantees.
    EJR demands a certain number of candidates, all of whom every $i \in S$ approves, whereas EJR+ demands a single candidate, universally approved in $S$, that must not be in $W$.
    Since apportionment corresponds to the case of sufficiently many copies of each candidate, any single universally-approved candidate can be replicated into several candidates, or into one that is not yet selected, which is why cohesiveness simplifies as in our definition.
    
    Since FJR already implies EJR in committee elections, it suffices to prove that EJR implies FJR. 
    Let the committee $W$ satisfy EJR.
Consider any \(S\subseteq N\) and \(T: P \to \mathbb{N}\) such that $|T|\leq q(S) \leq |S|\cdot k/n$ and $u_i(T) \geq \beta$ for all $i \in S$.
By averaging over $T$, some party $j \in T$ is approved by at least
$\beta \cdot |S|/|T| \geq \beta\cdot n/k$
voters in \(S\). By applying the EJR guarantee of \(W\) to this subset of $S$, one of its members must satisfy \(u_i(W)\geq\beta\), which establishes FJR. The equivalence of FPJR, PJR, and PJR+ follows from an analogous argument given in \cref{app:appor:omit}.
\end{proof}

To build intuition for the apportionment setting, we show on the following example how the argument above fails to prove that, say, EJR+ implies FJR in general committee elections.
\begin{example}
    Let $n=6$ and $k=3$, let the set of candidates/parties be $\{a, b, c, x, y\}$, and the approval sets be $A_1=A_2=A_3 =\{a, b\}$ and $A_4=A_5=A_6 =\{a, c\}$.
\end{example}
Consider the (terrible) committee $W=\{a, x, y\}$.
In both apportionment and committee elections, this violates FJR, as witnessed by the grand coalition $S=N$ with the proposal $T = \{a, b, c\}$, in which every $i$ approves $\beta=2$ alternatives, while they only approve one in $W$.
By applying the averaging argument, we get that $a$ is approved by at least $2 \, n / k = 4$ (in fact, six) voters.
In apportionment, this shows that $W$ violates EJR+: since such popular parties are available, a utility of 1 for all voters is not enough.
In committee elections, by contrast, the popular candidate $a$ is already in $W$ and cannot be added again.
Hence, the violation of FJR does not lead to a violation of EJR+ there.

We will show in \cref{sec:abc:approx} that the FJR violation of an EJR+ committee in committee elections cannot get worse than the example above:
If the $u_i(T)$ were strictly more than twice $\max_{i \in S} u_i(W)$, then a variant of this proof goes through.

\medskip\noindent\textbf{Bounded PAV improvement$\Leftrightarrow$Lindahl priceability.}
Our next result shows a surprisingly tight relation between the PAV objective and Lindahl priceability.
Whereas PAV started out as the only voting rule known to satisfy strong proportionality axioms like EJR~\cite{JR}, it now has serious competitors like the method of equal shares~\cite{limitsofwelf}.
The fact that Lindahl priceability is not only satisfied by PAV, but moreover equivalent to a notion of approximate PAV optimality, 
reaffirms that the PAV objective is fundamental to proportional representation.

\begin{theorem} \label{thm:PAV-LP}
In the apportionment setting, $W$ is Lindahl priceable if and only if it satisfies bounded PAV improvement.
\end{theorem}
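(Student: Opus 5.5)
The plan is to prove both directions directly from the definitions. In each direction the natural personalized price of a party $j \in A_i$ for voter $i$ is about $1/(u_i(W)+1)$, which is exactly voter $i$'s contribution to $\Delta_W^+(j)$. The only point where apportionment matters is that a deviating multiset $T$ may contain many copies of the same party.

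\emph{Lindahl priceability $\Rightarrow$ bounded PAV improvement.} Fix Lindahl prices $(p_{ij})$ for $W$, a voter $i$, and a party $j \in A_i$. Consider the multiset $T$ consisting of $u_i(W)+1$ copies of $j$. Then $u_i(T) = u_i(W)+1 > u_i(W)$, so by Lindahl priceability $i$ cannot afford $T$, which gives $(u_i(W)+1)\,p_{ij} > 1$, i.e., $p_{ij} > 1/(u_i(W)+1)$. Summing over the supporters of $j$ gives
\[
\Delta_W^+(j) = \sum_{i:\, j \in A_i} \frac{1}{u_i(W)+1} < \sum_{i:\, j\in A_i} p_{ij} \le \sum_i p_{ij} \le n/k
\]
whenever $j$ has at least one supporter. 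If $j$ has no supporters, then $\Delta_W^+(j) = 0 < n/k$ trivially.

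\emph{Bounded PAV improvement $\Rightarrow$ Lindahl priceability.} Let $D \coloneqq \max_j \Delta_W^+(j)$. We have $D>0$ because every voter approves some party and $n \ge 1$. Bounded PAV improvement gives $D < n/k$, so $\lambda \coloneqq (n/k)/D > 1$. Define prices
\[
p_{ij} \coloneqq \frac{\lambda}{u_i(W)+1} \text{ if } j \in A_i, \qquad p_{ij} \coloneqq 0 \text{ otherwise}.
\]
\begin{itemize}
\item \emph{Per-party budget.} For every party $j$ we have $\sum_i p_{ij} = \lambda\,\Delta_W^+(j) \le \lambda D = n/k$.
\item \emph{No affordable improvement.} Take any multiset $T$ with $u_i(T) > u_i(W)$, so $u_i(T) \ge u_i(W)+1$. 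Only copies of approved parties contribute to $u_i(T)$, and each such copy costs $\lambda/(u_i(W)+1)$. Hence $\sum_j T(j)\,p_{ij} \ge \lambda > 1$, so $i$ cannot afford $T$.
\end{itemize}
Unapproved parties are free, but they add no utility, so they are harmless.

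\emph{Main obstacle.} The only delicate step is matching strict and non-strict inequalities. Lindahl priceability only rules out bundles costing at most $1$, so a bundle costing exactly $1$ is still affordable. The scaling factor $\lambda > 1$ handles this, and it is available precisely because bounded PAV improvement is a strict inequality. In the converse direction, the strictness of $p_{ij} > 1/(u_i(W)+1)$ yields the strict bound $\Delta_W^+(j) < n/k$.
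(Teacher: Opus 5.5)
Your proof is correct and follows the paper's argument. The first direction is identical, and in the converse you use the same prices $1/(u_i(W)+1)$ on approved parties, slightly inflated; the only difference is that you inflate multiplicatively by $\lambda=(n/k)/\max_j \Delta_W^+(j)>1$ and build the certificate directly, whereas the paper adds $\epsilon$ and argues by contrapositive with a limit $\epsilon\to 0$, so your version avoids that limiting step.
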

\begin{proof}
Suppose first that \(W\) is Lindahl priceable and let
\((p_{ij})_{i,j}\) be a corresponding price system.
For every voter \(i\) and party \(j \in A_i\), a multiset $T$ with \(u_i(W)+1\) copies of \(j\) would give \(i\) strictly
greater utility than \(W\). It must therefore be unaffordable, so
$
(u_i(W)+1)\cdot p_{ij}>1
$
and
$
p_{ij}>1/({u_i(W)+1}).
$
Consequently, for every party \(j\),
\[
\Delta_W^+(j)
=
\sum_{i:j\in A_i}\frac{1}{u_i(W)+1}
<
\sum_{i:j\in A_i}p_{ij}
\leq
\sum_{i\in N}p_{ij}
\leq
\frac{n}{k}.
\]
(When nobody approves $j$, the first inequality is not strict, but $\Delta_W^+(j)=0$.) \medskip

\noindent Now, suppose that $W$ is not Lindahl priceable, from which we will deduce that $\Delta_W^+(j) \geq  n/k$ for some $j$.
Despite our assumption, let us attempt to construct prices that are almost Lindahl prices.
For $\epsilon > 0$ small enough, set
\[ p_{ij} \coloneqq \begin{cases} 
\tfrac{1}{u_i(W) + 1} + \epsilon & \text{if $j \in A_i$} \\
0 & \text{otherwise}
\end{cases}\]
for all $i$ and $j$.
We chose these prices so that a voter $i$ cannot afford a multiset $T$ they prefer over $W$ within a budget of $1$.
Indeed, that would require at least $u_i(W) + 1$ approved alternatives, for a total price of at least
\[(u_i(W) + 1) \cdot (\tfrac{1}{u_i(W) + 1} + \epsilon) = 1 + (u_i(W) + 1) \, \epsilon > 1.\]
Since, by assumption, $W$ is not Lindahl priceable, the price system above must fail the first condition, that is, for some $j$,
\[   n/k < \sum_{i} p_{ij} = \sum_{i : j \in A_i} (\tfrac{1}{u_i(W) + 1} + \epsilon) \leq \Delta_W^+(j) + n \, \epsilon. \]
Since, for some $j$, this holds for arbitrarily small $\epsilon>0$, it follows that $\Delta_W^+(j) \geq  n/k$.
\end{proof}
We will explain in \cref{sec:abc:approx} how to adapt this proof into one for an approximate relationship between bounded PAV improvement and a variant of Lindahl priceability.

\medskip\noindent\textbf{PAV Is Priceable.}
Our third result shows that (locally) PAV optimal committees are priceable.
This is counter-intuitive because priceability was defined as part of a case against ``welfarist'' rules~\cite{limitsofwelf}, which, like PAV, maximize a function of voter utilities.
Although that paper shows that no Pareto-optimal welfarist rule is priceable, these desiderata turn out to be compatible in apportionment.

\begin{theorem}
    In the apportionment setting, local PAV optimality implies priceability.
    \label{thm:pavpriceable}
\end{theorem}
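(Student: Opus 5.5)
The plan is to write down explicit payments that sit between two natural extremes: a voter $i$ paying $1/(u_i(W)+1)$ per approved seat, and paying $1/u_i(W)$ per approved seat. Local PAV optimality should be exactly what guarantees that, for every selected party, the per-seat price lies between what these two extremes raise. Alongside $\Delta_W^+(j)$, write $\Delta_W^-(j) \coloneqq \sum_{i: j \in A_i} 1/u_i(W)$ for the PAV loss from removing one copy of some $j$ with $W(j)\ge 1$. Every voter approving such a $j$ has $u_i(W)\ge 1$, so $\Delta_W^-(j)$ is well defined.

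\emph{Step 1 (key inequality).} Show that local PAV optimality gives $\Delta_W^+(j') \le \Delta_W^-(j)$ for every $j$ with $W(j)\ge1$ and every party $j'$. Consider the swap $W - \{j\} + \{j'\}$ and split voters into three groups:
\begin{itemize}
\item voters approving $j'$ but not $j$ gain $1/(u_i(W)+1)$;
\item voters approving $j$ but not $j'$ lose $1/u_i(W)$;
\item all other voters are unaffected.
\end{itemize}
Optimality says the total gain is at most the total loss. Now add $\sum_{i: j,j' \in A_i} 1/(u_i(W)+1)$ to the gain side and the larger quantity $\sum_{i: j,j'\in A_i} 1/u_i(W)$ to the loss side. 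This yields $\Delta_W^+(j') \le \Delta_W^-(j)$. For $j=j'$ the inequality is trivial.

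\emph{Step 2 (price and payments).} Set the per-unit price $p \coloneqq D \coloneqq \max_{j'} \Delta_W^+(j')$. For each $j$ with $W(j)\ge 1$, Step 1 gives $\Delta_W^+(j) \le D \le \Delta_W^-(j)$. Hence there is some $t_j\in[0,1]$ with
\[(1-t_j)\,\Delta_W^+(j) + t_j\,\Delta_W^-(j) = D.\]
Define, for $j \in A_i$,
\[p_{ij} \coloneqq W(j)\Bigl(\tfrac{1-t_j}{u_i(W)+1} + \tfrac{t_j}{u_i(W)}\Bigr),\]
and set $p_{ij}=0$ whenever $j\notin A_i$ or $W(j)=0$. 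Then $\sum_i p_{ij} = W(j)\, D = W(j)\, p$, as required.

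\emph{Step 3 (budget conditions).} Fix a voter $i$.
\begin{itemize}
\item Each approved seat costs $i$ at most $1/u_i(W)$ and at least $1/(u_i(W)+1)$.
\item Summing over the $u_i(W)$ approved seats, $i$ spends at most $1$.
\item $i$ spends at least $u_i(W)/(u_i(W)+1)$, so the leftover is at most $1/(u_i(W)+1)$. This also covers $u_i(W)=0$, where the leftover is $1$.
\end{itemize}
Therefore, for every party $j'$, the supporters' unspent budget is at most $\sum_{i: j'\in A_i} 1/(u_i(W)+1) = \Delta_W^+(j') \le D = p$. This completes priceability.

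The main obstacle is Step 1, deriving the swap inequality correctly when voters approve both $j$ and $j'$. The point is that those voters can only help, since $1/(u+1)\le 1/u$. A second trap is choosing $p$: it must be the maximum marginal gain, not something tied to $n/k$, so that the leftover condition holds for all parties simultaneously. The interpolation in Step 2 then absorbs the slack between $\Delta_W^+(j)$ and $\Delta_W^-(j)$ party by party.
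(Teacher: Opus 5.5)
Your proof is correct, and it shares its skeleton with the paper's proof. Both use the per-unit price $p=\max_{j'}\Delta_W^+(j')$. Both rest on the consequence of local optimality that $\Delta_W^-(j)\ge p$ whenever $W(j)\ge 1$; you derive it by a direct case split over voters, the paper via monotonicity of $\Delta_T^-(j)$ in $T$. Both bound each voter's leftover by $1/(u_i(W)+1)$.

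Where you differ is in how the payments are produced. The paper starts from base payments $W(j)/(u_i(W)+1)$ and asks for top-ups $\delta_{ij}\ge 0$ with
\[
\textstyle\sum_j\delta_{ij}\le 1/(u_i(W)+1)
\quad\text{and}\quad
\sum_i\delta_{ij}=W(j)\,(p-\Delta_W^+(j)).
\]
It proves these exist by reducing to a Hall-type condition through a max-flow/min-cut lemma (\cref{lem:supp-dem}). It then verifies that condition by summing $p-\Delta_W^+(j)\le\Delta_W^-(j)-\Delta_W^+(j)$ over the selected seats.

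Your interpolation writes down such a flow explicitly: $\delta_{ij}=W(j)\,t_j\bigl(\tfrac{1}{u_i(W)}-\tfrac{1}{u_i(W)+1}\bigr)$. In fact, the paper's Hall verification in effect checks the voter-side capacity constraints of exactly this proportional allocation with $t_j=1$, so your argument unwinds it.

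Your version is shorter, fully constructive (closed-form payments computable from $W$), and needs no flow lemma. The paper's formulation is more modular: it isolates a necessary and sufficient condition for the top-ups to exist, which would stay usable in variants where no natural proportional allocation suggests itself.
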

\begin{proof}
    Let $W$ be locally PAV optimal.
    It would be natural to show priceability by setting payments according to the Lindahl prices in the previous proof: $p'_{ij} \coloneqq W(j)/(u_i(W) + 1)$ if $j \in A_i$ and $p_{ij}'=0$ otherwise.
    For these payments, each voter spends $u_i(W)/(u_i(W) + 1) \leq 1$, each party receives $W(j) \cdot \sum_{i : j \in A_i} 1 / (u_i(W) + 1) = W(j) \cdot \Delta_W^+(j)$, and for each $j$, the left-over money of its supporters is
    $\sum_{i : j \in A_i} (1 - u_i(W)/(u_i(W) + 1)) = \Delta_W^+(j)$.
    Intuitively, the payments $p'$ satisfy a variant of priceability with different per-unit prices $\Delta_W^+(j)$ for each $j$, whereas we need a common price $p$.
    
    We fix this defect by making voter $i$ spend some additional amount $\delta_{ij} \geq 0$ for each $j \in A_i$.
    We set $p \coloneqq \max_{j} \Delta_W^+(j)$, so that the constraint on left-over money is satisfied even without additional spending.
    In determining the $\delta_{ij}$, we ensure that (i) no voter's additional spending exceeds their left-over budget $1/(u_i(W) + 1)$ and that (ii) each alternative $j$ receives extra payments equal to $W(j) \cdot (p -\Delta_W^+(j))$.
    This way, $p_{ij} \coloneqq p_{ij}' + \delta_{ij}$ will certify priceability.

    Conditions (i) and (ii) can be written as constraints on a flow from voters to alternatives, with an edge between voters and their approved alternatives.
    Hence, our desired $\delta_{ij}$ exist iff a Hall condition is satisfied, which we derive in \cref{app:appor:omit}.
    Specifically, we must show that, for any set of parties $P' \subseteq P$,
    \begin{equation}\label{eq:hall}
    \sum_{j \in P'} W(j) \cdot (p - \Delta_W^+(j)) \leq \sum_{i : A_i \cap P' \neq \emptyset} \tfrac{1}{u_i(W) + 1}.
    \end{equation}

    Next, we make use of local optimality.
    If $T$ is a multiset of parties and $T(j) > 0$ for some $j$, set
    \[ \textstyle \Delta_T^-(j) \coloneqq \PAV(T) - \PAV(T - \{j\}) = \sum_{i : j \in A_i} 1/u_i(T). \]
    By local optimality, for parties $j$ and $j'$ such that $W(j)>0$,
    \[ 0 \geq \Delta_W^+(j') - \Delta_{W + \{j'\}}^-(j) \geq \Delta_W^+(j') - \Delta_{W}^-(j), \]
    where the inequality follows as $\Delta_T^-(j)$ is monotone nonincreasing in $T$.
    Hence, it holds that $\Delta_W^-(j) \geq \max_{j'} \Delta_W^+(j') = p$ for all such $j$, and that
        \begin{equation*}
        p\!-\!\Delta^+_W(j) \leq \Delta^-_W(j)\!-\!\Delta^+_W(j)= \sum_{i:j\in A_i} \tfrac{1}{u_i(W)}\!-\!\tfrac{1}{u_i(W)+1}.
    \end{equation*}
    
    By summing this inequality over $W$, we get
    \begin{align*}
       & \sum\nolimits_{j\in P': W(j) >0} W(j)\cdot (p- \Delta^+_W(j))\\
       \leq{} &\sum_{\mathclap{i: (A_i \cap P')\ne \emptyset, u_i(W) >0}}  \left(\tfrac{1}{u_i(W)} - \tfrac{1}{u_i(W)+1}\right)\cdot \left( \textstyle \sum_{j \in A_i \cap P'} W(j)\right)\\
        \leq{} &\sum_{\mathclap{i: (A_i \cap P')\ne \emptyset, u_i(W) >0}}  \left(\tfrac{1}{u_i(W)} - \tfrac{1}{u_i(W)+1}\right) \cdot u_i(W)\\
       \leq{} &\sum\nolimits_{i: A_i \cap P' \ne \emptyset} \frac{1}{u_i(W)+1}.
    \end{align*}
    This shows the Hall condition~\eqref{eq:hall}. Hence, $W$ is priceable.
\end{proof}
Note that a locally PAV optimal committee satisfies both Lindahl priceability and priceability, but it is not known whether it can be computed in polynomial time~\cite{KE24}. By contrast, a committee satisfying bounded PAV improvement (and so Lindahl priceability) can be found in polynomial time using local search~\cite{AEH+18}.

\section{Portioning}
\label{sec:portioning}
\subsection{Preliminaries}
In portioning, an instance is a triple $(N, P, A)$, just
as in apportionment but without a committee size.
A \emph{portioning} is a vector $r \in \mathbb{R}^P_{\geq 0}$ with $\norm{r}_1 = 1$, that is, it allocates seat shares $r(j)$ to each party $j$, adding up to one.
Denote $i$'s utility for a vector $t \in \mathbb{R}_{\geq 0}^P$ by $u_i(t) \coloneqq \sum_{j \in A_i} t(j)$.
A \emph{portioning method} maps each instance to a portioning. A prominent portioning method is \emph{Nash portioning}~\cite{BMS05}, which selects the portioning $r$ maximizing the Nash welfare $\operatorname{NW}(r) \coloneqq \prod_{i\in N} u_i(r)$.

\subsection{Connecting Apportionment and Portioning}
\label{sec:portioning:lift}
The portioning setting has often served as a source of inspiration in committee elections (e.g., in defining Lindahl priceability, \citealt{munagala2022auditing}) and apportionment (e.g., rounding majoritarian portioning, \citealt{paulapproval}), but this connection has typically stayed informal.
The one exception we know is a note by \textcite{Peters25}.
Fixing $N, P, A$, he shows that, when $W_1, W_2, \dots$ is an infinite sequence of PAV-optimal committees for committee sizes $k_1 < k_2 < \dots$, then every convergent subsequence of the normalized apportionments $W_\ell / k_\ell$ converges to a Nash-optimal portioning.

Building on this idea, we formally develop the connection between apportionment and portioning.
We do so by defining a \emph{lifting} operation that takes any logical predicate $X=X(N, P, A, k, W)$ over an apportionment instance and a committee and maps it to a corresponding predicate $\lift{X}=\lift{X}(N, P, A, r)$ over portioning instances and portionings.
Typically, these predicates will describe committee-level axioms ($X = \text{``$W$ satisfies EJR+ in $(N, P, A, k)$''}$), but we can also lift voting rules ($X = \text{``$W$ is the outcome of PAV in the instance''}$).
\begin{definition}\label{def:lifting}
Let \(X\) be a predicate over an apportionment instance and committee.
Its \emph{portioning analog} \(\lift{X}\)
is a predicate over a portioning instance \((N,P,A)\) and a portioning $r$, which is satisfied iff there exists a sequence of apportionment committees $(W_\ell)_{\ell\in \mathbb{N}}$ satisfying $X$ on the instance $(N, P, A, |W_\ell|)$, such that $\lim_{\ell\to \infty}|W_\ell| = \infty$ and $\lim_{\ell\to \infty}W_\ell/|W_\ell| = r$.
\end{definition}
\begin{table*}[tbh]
{\begin{center}
    \begin{tabular}{ll}
        \toprule
        Committee-Election / Apportionment Predicate & Lifted Portioning Analog \\
        \midrule
        Lindahl Priceability\textsuperscript{~\cite{munagala2022auditing}}  & Lindahl Equilibrium\textsuperscript{~\cite{Foley70}} \\
        Priceability\textsuperscript{~\cite{limitsofwelf}}  & Decomposability\textsuperscript{~\cite{BBG+22}} / Group Fair Share\textsuperscript{~\cite{BMS05}} \\
        Core Stability\textsuperscript{~\cite{JR}} & Weak Core\textsuperscript{~\cite{Aumann61}} \\
        Proportional Approval Voting\textsuperscript{~\cite{Thiele95}} & Nash Portioning\textsuperscript{~\cite{BMS05}} \\
        Method of Equal Shares\textsuperscript{\textdagger~\cite{limitsofwelf}} & Majoritarian Portioning\textsuperscript{\textdagger~\cite{Majoritarian-portioning}} \\
        \bottomrule
    \end{tabular}
\end{center}}\caption{Apportionment axioms and methods and their lifted analogs. (\textdagger: breaking ties according to fixed order over parties.)}
\label{tbl:analogs}
\end{table*}
To support the claim that this definition is ``the right one'', \cref{tbl:analogs} gives five examples of axioms and voting rules in committee elections that lift to key notions in the portioning literature.
These connections are also reassuring signs that the committee-election definitions actually track their continuous motivations.
For example, Lindahl priceability lifts to its inspiration \emph{Lindahl equilibrium}, despite missing one of its conditions, and we strengthen the one-sided containment of \textcite{Peters25} to show that the Nash optimal portionings are exactly the limits of PAV optimal apportionments.
Priceability lifts to \emph{decomposability} as defined by \textcite{BBG+22}, which \textcite{BBP+21} show to be equivalent to \emph{group fair share}~\cite{BMS05}. 
Other notions from apportionment translate into natural analogs in portioning that can be stated without reference to limits.
(See \cref{app:por:lift} for details and proofs.)

\subsection{Implications Between Properties}
We now consider the implications between properties in portioning.
We ask if the graph of implications in apportionment (all arrows, \cref{fig:abctoapp}) is more similar to that in committee elections (black arrows, \cref{fig:abctoapp}), or to the implications between the portioning analogs of the axioms (all arrows, \cref{fig:portioning}).

Another desirable property of our lifting is that it mechanically preserves implications and existence results from apportionment. (See \cref{app:por:thms} for a simple proof.)
As a result, all implications from apportionment induce valid arrows (in black) in \cref{fig:portioning}, and all axioms remain satisfiable.
\begin{proposition}\label{rem:carry}
Suppose that, on all apportionment instances, predicate $X$ implies predicate $Y$.
Then, $\lift{X}$ implies $\lift{Y}$ in all portioning instances.
Similarly, if for each apportionment instance, there exists a committee satisfying $X$, it must be true that, for each portioning instance, there exists a portioning satisfying $\lift{X}$.
\end{proposition}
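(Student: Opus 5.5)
Both claims follow by unfolding \cref{def:lifting}; no property of specific axioms is needed.

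For the implication part, fix a portioning instance $(N,P,A)$ and a portioning $r$ satisfying $\lift{X}$. By definition, there is a sequence of committees $(W_\ell)_{\ell\in\mathbb{N}}$ with the following properties: each $W_\ell$ satisfies $X$ on the apportionment instance $(N,P,A,|W_\ell|)$, $|W_\ell|\to\infty$, and $W_\ell/|W_\ell|\to r$. Since $X$ implies $Y$ on every apportionment instance, each $W_\ell$ also satisfies $Y$ on $(N,P,A,|W_\ell|)$. The same sequence, with the same size and limit conditions, therefore witnesses $\lift{Y}$ for $r$. The only care needed is that the implication is applied instance by instance, with committee size $k=|W_\ell|$, and the hypothesis covers all instances.

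For the existence part, fix $(N,P,A)$. For each $k\in\mathbb{N}$, the hypothesis gives a committee $W_k$ with $|W_k|=k$ that satisfies $X$ on $(N,P,A,k)$. The normalized vectors $W_k/k$ are nonnegative with $\ell_1$-norm $1$, so they lie in the probability simplex over $P$. This simplex is compact because $P$ is finite.

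By Bolzano--Weierstrass, some subsequence $W_{k_\ell}/k_\ell$ converges to a point $r$. Since $k_\ell\to\infty$, this subsequence witnesses $\lift{X}$ for $r$. The simplex is closed, so $r\ge 0$ and $\norm{r}_1=1$, and $r$ is a genuine portioning.

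The only step that needs real attention is this compactness argument. An arbitrary sequence of committees need not converge, so passing to a convergent subsequence is essential, and one must check that the limit is still a valid portioning. Both points follow from finiteness of $P$ and closedness of the simplex.
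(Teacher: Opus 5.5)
Your proof is correct and follows the paper's argument essentially verbatim: the implication part reuses the same witnessing sequence $(W_\ell)$, and the existence part takes the committees $W_k$ for every size $k$ and extracts a convergent subsequence of $(W_k/k)$ by compactness. The only difference is that you work in the probability simplex rather than $[0,1]^P$, which makes explicit that the limit is a genuine portioning.
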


As \cref{fig:portioning} shows, the portioning setting has a few additional implications relative to apportionment (proofs in \cref{app:por:thms}).
The more material one is that the analogs of local PAV optimality and PAV optimality coincide with those of bounded PAV improvement, which is already equivalent to Lindahl priceability in apportionment.
This collapse follows from the equivalence between Nash optimality and Lindahl equilibrium~\cite{FGM16} but can also be seen step-by-step: as $k \to \infty$, the approximate local optimality of bounded PAV improvement requires local optimality, and since the PAV objective converges to the concave Nash welfare, local optimality implies global optimality.
The other new implication is that the analog of core stability (weak core) implies the analog of priceability (decomposability).
This only incrementally strengthens the implication chain
\[ \lift{\text{Lindahl pr.}} \Leftrightarrow \lift{\text{local PAV opt.}} \Rightarrow \lift{\text{priceability}}.\]

More importantly, our \cref{thm:PAV-LP,thm:pavpriceable} relating to the PAV objective are natural from the perspective of portioning.
Indeed, our equivalence between bounded PAV improvement and Lindahl priceability recalls the equivalence of Lindahl equilibrium with maximum Nash welfare, and the prices we set in \cref{thm:PAV-LP} are discrete analogs of those proving the portioning equivalence via the KKT conditions ($i$'s value for $j$ divided by $i$'s current utility).
Similarly, the priceability of PAV is less surprising knowing that the Nash portioning is decomposable, which again follows from its first-order optimality conditions.
For these reasons, our impression is that the integrality of apportionment is less of a hurdle to axiomatic implications than the inability to choose candidates more than once in committee elections.

\section{Committee Elections}
\label{sec:abc}
\subsection{Preliminaries}
\label{sec:abc:prelims}
An instance in committee elections is defined as in apportionment, except that we write $C$ for the set of \emph{candidates} (replacing the parties).
Since candidates can be chosen at most once, $k \leq |C|$ and a committee $W$ is simply a subset of $C$ of size $k$.
We set $u_i(W) \coloneqq |A_i \cap W|$.

Fix a committee $W$. We restate key axioms, highlighting differences in color. We defer other definitions to \cref{app:abc:definitions}, and provide a more detailed discussion of bounded PAV improvement and frugal Lindahl priceability in \cref{app:com:new}. For FJR, we introduce a multiplicative relaxation by some $\alpha \geq 1$:
\definecolor{accentblue}{HTML}{7040AD}
\newcommand{\chg}[1]{\textcolor{accentblue}{#1}}
\begin{description}
    \item[EJR+.] $W$ satisfies EJR+ if, for every $\emptyset \neq S \subseteq N$ with $(\bigcap_{i \in S} A_i) \chg{\setminus W} \!\neq\! \emptyset$, there exists $i \in S$ with $u_i(W) \geq q(S)$.
    \item[\chg{$\alpha$-}FJR.] $W$ satisfies \chg{$\alpha$-}FJR if, for every $\emptyset \neq S \subseteq N$, every $T \chg{\subseteq C}$ of size $|T| \leq q(S)$ and every $\beta \in \mathbb{N}$ such that $u_i(T) \geq \chg{\alpha \cdot}\beta$ for all $i \in S$, there is $i \in S$ with $u_i(W) \geq \beta$.
    \item[Bounded PAV Improvement.] $W$ satisfies this axiom if $\Delta_W^+(j) < n/k$ for all $j \in C \chg{\setminus W}$.
\end{description}

\subsection{Transferring Results from Apportionment}
\label{sec:abc:approx}
Recall our motivation of using apportionment as a ``model organism'' for committee elections, to understand relations between concepts that are weaker than implications.
We now study if our three connections in apportionment carry over in terms of approximation results.

Concerning our first result, while EJR does not imply approximate notions of EJR+ and FJR as shown in \cref{app:abc:proofs}, we can adapt our proof to connect EJR+ and FJR:
\begin{theorem}\label{thm:2fjr}
    EJR+ implies 2-FJR, and PJR+ implies 2-FPJR.
\end{theorem}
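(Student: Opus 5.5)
We argue by contradiction, adapting the averaging step from the proof of \cref{thm:efjr+}. The change is that we average only over the candidates of $T$ that are \emph{not} in $W$. The factor $2$ is what guarantees that enough such candidates remain. This matters because EJR+ (and PJR+) only guard groups whose common candidate lies outside $W$.

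\emph{EJR+ implies 2-FJR.} Let $W$ satisfy EJR+. Fix $\emptyset \neq S \subseteq N$, a set $T \subseteq C$ with $|T| \leq q(S) \leq |S| k/n$, and $\beta \in \mathbb{N}$ with $u_i(T) \geq 2\beta$ for all $i \in S$. We may assume $\beta \geq 1$. Suppose, for contradiction, that $u_i(W) \leq \beta - 1$ for every $i \in S$. Then each $i \in S$ approves at least $2\beta - (\beta - 1) = \beta + 1$ candidates of $T \setminus W$. In particular $T \setminus W \neq \emptyset$. Double counting the pairs $(i,j)$ with $i \in S$, $j \in (T\setminus W) \cap A_i$ gives
\[ \sum_{j \in T \setminus W} |\{i \in S : j \in A_i\}| \geq (\beta+1)\,|S|. \]
Since $|T \setminus W| \leq |T| \leq |S| k/n$, some $j \in T \setminus W$ is approved by a set $S' \subseteq S$ with $|S'| \geq (\beta+1)\, n/k$. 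Hence $q(S') = \lfloor |S'| k/n \rfloor \geq \beta + 1$. Moreover $j \in (\bigcap_{i \in S'} A_i) \setminus W$, so EJR+ yields some $i \in S'$ with $u_i(W) \geq \beta + 1$. This contradicts the assumption. (The argument in fact tolerates $u_i(T) \geq 2\beta - 1$.)

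\emph{PJR+ implies 2-FPJR.} The argument is the same, with the union of approved winners taking the place of individual utilities, following the deferred definitions. PJR+ requires $|W \cap \bigcup_{i \in S} A_i| \geq q(S)$ whenever $(\bigcap_{i\in S} A_i)\setminus W \neq \emptyset$. For 2-FPJR, suppose $u_i(T) \geq 2\beta$ for all $i \in S$, with $|T| \leq q(S)$. Assume for contradiction that the set $U \coloneqq W \cap \bigcup_{i\in S} A_i$ has $|U| \leq \beta - 1$. Every approved candidate of $i \in S$ that lies in $W$ belongs to $U$, so $u_i(W) \leq |U| \leq \beta-1$. Thus each $i$ again approves at least $\beta+1$ candidates of $T\setminus W$. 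The same averaging produces $j \notin W$ and $S' \subseteq S$ with $q(S') \geq \beta+1$. PJR+ then gives $|W \cap \bigcup_{i \in S'} A_i| \geq \beta+1$. But this set is contained in $U$, a contradiction.

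The main point to get right is the averaging step: the bound must be taken over $T \setminus W$ rather than $T$, and $T\setminus W$ must be nonempty. Both follow from the slack that the factor $2$ provides. The remaining care is in matching the exact form of the deferred FPJR definition, which I expect to be set up so that its violation amounts to $|W \cap \bigcup_{i\in S}A_i| < \beta$.
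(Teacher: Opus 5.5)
Your proof is correct and follows essentially the same route as the paper. The factor $2$ guarantees that each voter in $S$ still approves many candidates of $T \setminus W$, and rerunning the averaging argument from \cref{thm:efjr+} over $T \setminus W$ then gives a subgroup whose common candidate lies outside $W$, contradicting EJR+ (respectively PJR+); your count of $\beta+1$ instead of the paper's $\beta$ is valid and yields the slight strengthening to $u_i(T) \geq 2\beta - 1$ you note, though the theorem does not need it.
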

\begin{proof}
    Retracing the proof of \cref{thm:efjr+}, we can already see that, if there were a ($1$-)FJR violation $S, T, \beta$ such that $T \subseteq C \setminus W$, there must be a group of size at least $\beta \cdot n / k$ who all like one element of $T$, so an EJR+ violation.

    We now assume that $W$ violates $2$-FJR and will deduce an EJR+ violation from that.
    We are given $S, T, \beta$ such that $u_i(T) \geq 2 \, \beta > 2 \, u_i(W)$ for all $i \in S$.
    Since $|A_i \cap W| < \beta$ and $|A_i \cap T| \geq 2 \, \beta$, $|A_i \cap (T \setminus W)| \geq \beta$.
    But then, $u_i(T \setminus W) \geq \beta > u_i(W)$ for all $i \in S$, so $S, T \setminus W, \beta$ witnesses a $1$-FJR violation with a candidate set $T \setminus W$ disjoint from $W$, so we obtain a contradiction with EJR+ as explained above.
    We give the PJR+ analog in \cref{app:abc:proofs}.
\end{proof}
Our argument recalls an observation by \textcite{paulapproval} that the proof for PAV's core stability in apportionment extends to committee elections with ``disjoint objections'', which they show implies a 2-approximation for core stability.\smallskip

In committee elections, Lindahl priceability and bounded PAV improvement are logically incomparable, seemingly disagreeing with the connection we found in apportionment.
We can, however, establish a bidirectional link if we slightly strengthen Lindahl priceability by adding a third condition (we also add an approximation by $\alpha \geq 1$):
\begin{description}
    \item[\chg{$\alpha$-Frugal} Lindahl Priceability.] $W$ is \chg{$\alpha$-frugal} Lindahl priceable if there exist prices $(p_{ij})_{i,j} \geq 0$ such that (i)~$\sum_{i}p_{ij} \leq n/k$ for all $j$; (ii)~whenever $\sum_{j \in T} p_{ij} \leq 1$ for some $i$ and $T \chg{\subseteq C}$, it holds that $u_i(T) \leq \chg{\alpha \cdot} u_i(W)$; and (iii)~\chg{$p_{ij} \leq p_{ij'}$ whenever $j \in A_i \cap W$ and $j' \in A_i \setminus W$}.
\end{description}
In apportionment, ($1$-)frugal Lindahl priceability coincides with Lindahl priceability, so both lift to Lindahl equilibrium.

\looseness=-1
Adding condition (iii) to Lindahl priceability is natural given its inspiration from Lindahl equilibrium since, in continuous settings, a voter maximizing their utility subject to a budget will buy a cheaper good before an expensive good that is equally preferred.
\mbox{($1$-)}frugal Lindahl priceability is implied by \emph{stable priceability}~\cite{PPS+21}, in which voters lexicographically maximize utility and then left-over budget.\footnote{Whereas stable priceability is unsatisfiable for some instances, this is open for frugal Lindahl priceability, see below.}

Frugal Lindahl priceability and bounded PAV improvement are related in both directions:
\begin{theorem}\label{thm:frugal}
    Frugal Lindahl priceability implies bounded PAV improvement, and bounded PAV improvement implies 2-frugal Lindahl priceability.\footnote{This strengthens the result that bounded PAV improvement implies 2-core stability~\cite{limitsofwelf}.}
\end{theorem}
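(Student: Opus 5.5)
\textbf{First direction.} I would adapt the first half of the proof of \cref{thm:PAV-LP}, using condition (iii) to make up for the fact that a candidate can be selected only once. Let $W$ be frugal Lindahl priceable with prices $(p_{ij})$. Fix $j' \in C \setminus W$ and a voter $i$ with $j' \in A_i$. Consider $T \coloneqq (A_i \cap W) \cup \{j'\}$. Then $u_i(T) = u_i(W)+1 > u_i(W)$, so by (ii) with $\alpha = 1$ the set $T$ is unaffordable:
\[ \textstyle\sum_{j \in A_i \cap W} p_{ij} + p_{ij'} > 1. \]
By (iii), every $p_{ij}$ with $j \in A_i \cap W$ is at most $p_{ij'}$. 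The left-hand side is therefore at most $(u_i(W)+1)\, p_{ij'}$, which gives $p_{ij'} > 1/(u_i(W)+1)$. Summing over all supporters of $j'$ and applying (i) yields
\[ \Delta_W^+(j') < \textstyle\sum_i p_{ij'} \leq n/k. \]
If nobody approves $j'$, then $\Delta_W^+(j') = 0 < n/k$ directly. This establishes bounded PAV improvement.

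\textbf{Second direction.} Assume $\Delta_W^+(j) < n/k$ for all $j \in C \setminus W$. I would define the prices as follows:
\begin{itemize}
\item $p_{ij} \coloneqq 1/(u_i(W)+1) + \epsilon$ for $j \in A_i \setminus W$;
\item $p_{ij} \coloneqq 0$ for $j \in A_i \cap W$;
\item $p_{ij} \coloneqq 0$ for $j \notin A_i$.
\end{itemize}
Here $\epsilon > 0$ is chosen small enough that $\Delta_W^+(j) + n\epsilon \leq n/k$ for every $j \in C \setminus W$. Such an $\epsilon$ exists because $C$ is finite and all these inequalities are strict.

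I would then check the three conditions in turn.
\begin{itemize}
\item Condition (i) holds for $j \in C \setminus W$ by the choice of $\epsilon$. It holds trivially for $j \in W$, where all prices are $0$.
\item Condition (iii) holds because approved members of $W$ have price $0$, which is at most any other price.
\item For condition (ii), any affordable $T$ contains at most $u_i(W)$ candidates from $A_i \setminus W$. Buying $u_i(W)+1$ of them would cost $1 + (u_i(W)+1)\epsilon > 1$. From $A_i \cap W$, the set $T$ contributes at most $u_i(W)$ candidates, and unapproved candidates add nothing to $u_i(T)$. Hence $u_i(T) \leq 2\,u_i(W)$, which is exactly 2-frugal Lindahl priceability.
\end{itemize}
The case $u_i(W) = 0$ is consistent with this bound: every approved candidate then costs $1 + \epsilon$, so $u_i(T) = 0$.

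\textbf{Main obstacle.} I expect the only delicate point to be the choice of prices for candidates already in $W$. Pricing them at $1/(u_i(W)+1)$, as in apportionment, could violate (i), because bounded PAV improvement gives no control over $\sum_i p_{ij}$ for $j \in W$. Pricing them at zero avoids this, satisfies (iii) for free, and costs exactly the factor $2$. This mirrors the ``disjoint objections'' argument in the proof of \cref{thm:2fjr}.
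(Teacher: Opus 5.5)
Your proof is correct and follows the paper's argument. The first direction uses the same test set $(A_i\cap W)\cup\{j'\}$ together with condition (iii); the second uses the same prices, $1/(u_i(W)+1)+\epsilon$ on $A_i\setminus W$ and zero elsewhere, and the only difference is that you fix $\epsilon$ from the strict inequalities and verify (i)--(iii) directly, whereas the paper argues by contrapositive and lets $\epsilon\to 0$.
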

\begin{proof}
Suppose first that \(W\) is frugal Lindahl priceable, with prices $(p_{ij})_{i,j}$.
Fix any candidate $j^* \in C \setminus W$.
We will show that $\Delta_W^+(j^*) < n/k$.
If no voter approves $j^*$, this is direct.
Else, let $i$ be any voter approving $j^*$ and $T \coloneqq (A_i \cap W) \cup \{j^*\}$.
Since $u_i(T) > u_i(W)$, we have that
\[ 1 < \sum_{j \in T} p_{ij} = p_{ij^*} + \sum_{j \in A_i \cap W} p_{ij} \leq (u_i(W) + 1) \, p_{ij^*}, \]
where the last inequality follows from condition (iii) of frugal Lindahl priceability.
Thus, $p_{ij^*} > 1/(u_i(W) + 1)$ and
\begin{align*}
     \Delta_W^+(j^*) = \sum_{i: j^* \in A_i} \frac{1}{u_i(W) + 1} < \sum_{i: j^* \in A_i} p_{ij^*} \leq \frac{n}{k},
\end{align*}
where the last inequality follows from condition (i).

As before, we prove the converse by contrapositive, assuming that $2$-frugal Lindahl priceability is violated and obtaining a $j \in C \setminus W$ with large PAV increase $\Delta_W^+(j)$.
This constraint ``$j \in C \setminus W$'' is what stops the old price system from working.
It would satisfy conditions (ii) and (iii), so the assumed violation of ($2$-)frugal Lindahl priceability would give us some $j$ with $\Delta_W^+(j) \geq n/k$.
But this does not contradict bounded PAV improvement in committee elections if $j \in W$.

Recalling the idea of ``disjoint objections'' leads us to the right pricing system.
Holding voter $i$ fixed, our goal is to ensure that no $T \subseteq C$ affordable within a budget of $1$ can violate condition (ii) by having $u_i(T) > 2 \, u_i(W)$.
The insight is that, for this to be the case, it must also hold that $u_i(T \setminus W) > u_i(W)$.
So, if we set a price slightly above $1/(u_i(W) + 1)$ only for approved alternatives outside of $W$, this is enough to avoid a violation of condition (ii).
Indeed, we set, for small enough $\epsilon > 0$,
\[
p_{ij}
\coloneqq
\begin{cases}
\tfrac{1}{u_i(W)+1} + \epsilon & \text{if $j\in A_i\setminus W$}\\
0 & \text{otherwise.}
\end{cases}
\]
This satisfies conditions (ii) and (iii) of $2$-frugal Lindahl priceability by construction.
Since, by assumption, this axiom does not hold, there must be a violation of condition (i), that is, some $j$ for which $\sum_{i} p_{ij} > n/k$.
Now, such a $j$ cannot be in $W$ because candidates in $W$ only have zero prices.
We conclude as in \cref{thm:PAV-LP} that, for some $j \in C \setminus W$,
\[ n/k < \sum_{i} p_{ij} = \sum_{i : j \in A_i} (\tfrac{1}{u_i(W) + 1} + \epsilon) \leq \Delta_W^+(j) + n \, \epsilon, \]
and hence that $\Delta_W^+(j) \geq n/k$ as claimed.
\end{proof}
\looseness=-1
Since we could not find any counterexamples to the satisfiability of frugal Lindahl priceability, we see it as a plausible target for the field's push to design a voting rule satisfying core stability.
The above result that frugal Lindahl priceability implies bounded PAV improvement may be useful in restricting the space of voting rules to explore;
while maximizing the PAV score can violate core stability, a frugal Lindahl priceable rule cannot leave out alternatives that would sufficiently increase the PAV score.
We also show in \cref{app:com:new} that bounded PAV improvement implies an optimal \emph{proportionality degree}~\cite{Skowron21} of $\ell - 1$, another attractive aspect of a hypothetical voting rule satisfying frugal Lindahl priceability. \smallskip

Our final result, the priceability of PAV, does not seem to carry over to committee elections. 
In \cref{app:abc:proofs}, we adapt a proof by \textcite{limitsofwelf} to show that a Pareto-optimal and welfarist voting rule like PAV cannot yield any constant approximation to priceability (where the approximation allows voters more unspent budget).

\section{Conclusion}
In this paper, we contributed to the understanding of approval-based apportionment by proving new axiomatic implications and formally investigating the setting's relationship with portioning.
Much more remains to be studied: we did not consider axioms like perfect representation and laminar proportionality that only constrain certain profiles, relational axioms like committee monotonicity, or the compatibility of axioms (e.g., committee monotonicity and core stability).
Conceptually, we would like to understand what kinds of axioms are easier to satisfy in apportionment than in general committee elections.

Our investigation also led us to interesting results for committee elections.
For example, we would not have thought to ask whether EJR+ implies approximate FJR had it not been for their equivalence in apportionment.
We are also curious whether the frugal strengthening of Lindahl priceability may prove useful in other works.
In hindsight, the simpler cohesiveness condition of EJR in apportionment foreshadowed the definition of EJR+, which, despite being stronger than EJR, has nicer computational properties and a natural greedy algorithm~\cite{EJR+}.
With luck, the apportionment setting might again point to an axiomatic variation with beneficial mathematical structure.

\printbibliography

\newpage 
\onecolumn
\appendix
\setcounter{section}{0}
\renewcommand{\thesection}{\Alph{section}}
\renewcommand{\thesubsection}{\thesection.\arabic{subsection}}
\setcounter{secnumdepth}{3}
\section*{Appendix}
\crefalias{section}{appendix}
\crefalias{subsection}{subappendix}      \crefalias{subsubsection}{subsubappendix} 

\section{Use of AI Tools}
We used conversations with ChatGPT and Claude to help develop initial versions of the proofs that PAV implies priceability, that majoritarian portioning is the lift of the Method of Equal Shares, and that Nash portioning is the lift of PAV.
These tools were also used for proofreading, including checks of both exposition and mathematical correctness.
All arguments were subsequently reviewed, revised, and verified by the authors.
In addition to our manual efforts, we searched for counterexamples to the satisfiability of frugal Lindahl priceability in committee elections using ChatGPT 5.6 Ultra, but did not find any.

\section{Apportionment}
An apportionment instance is a tuple $(N, P, A, k)$. By default, $i$ ranges over the set of voters $N$ and $j$ over the set of parties $P$.
A \emph{committee} is a multiset $W: P \to \mathbb{N}$ of size $|W| = \sum_{j} W(j) = k$.
We write $u_i(W) \coloneqq \sum_{j\in A_i} W(j)$ for voter $i$'s \emph{utility} for committee $W$. $q(S)\coloneqq \lfloor|S| \cdot k /n\rfloor$ is the (Hare) \emph{quota} of a set of voters $S\subseteq N$.
\subsection{Axiom Definitions}\label{app:appor:def}

For completeness, we first formalize the embedding of apportionment into committee elections introduced by \citet{paulapproval}, in a manner analogous to \cref{def:lifting}.

\newcommand{\ap}[1]{\operatorname{app}(#1)}

\begin{definition}[\citealt{paulapproval}]\label{def:paullift}
    Let $X = X(N, C, A, k, W)$ be a predicate over a committee election instance and a committee. Its \emph{apportionment analog} $X^{\mathrm{ap}} = X^{\mathrm{ap}} (N, P, A, k, W)$ is a predicate over an apportionment instance and committee defined as follows. Let $(N,P^{\mathrm{cl}},A^{\mathrm{cl}},k,W^{\mathrm{cl}})$ be the committee election instance obtained by replacing each party $j\in P$ with $k+1$ clones $j^{(1)},\ldots,j^{(k+1)}.$ Specifically,
    \[
    P^{\mathrm{cl}} = \cup_{j\in P} \{j^{(1)},\ldots,j^{(k+1)}\} \quad , \quad A^{\mathrm{cl}}_i=\cup_{j\in A_i} \{j^{(1)},\ldots,j^{(k+1)}\} \quad , \quad W^{\mathrm{cl}} = \cup_{j \in P} \{j^{(1)},\ldots,j^{(W(j))}\}.
    \]
 Then $X^{\mathrm{ap}}(N,P,A,k,W)$ holds if and only if $X(N,P^{\mathrm{cl}},A^{\mathrm{cl}},k,W^{\mathrm{cl}})$ holds.
\end{definition}

The following is the apportionment analog of \cref{rem:carry}, which is stated implicitly in \cite{paulapproval}.
\begin{proposition}\label{prop:liftimply}
Suppose that, on all committee election instances, predicate $X$ implies predicate $Y$.
Then, $X^{\mathrm{ap}}$ implies $Y^{\mathrm{ap}}$ on all apportionment instances.
Similarly, if for each committee election instance, there exists a committee satisfying $X$, it must be true that, for each apportionment instance, there exists a committee satisfying $X^{\mathrm{ap}}$.
\end{proposition}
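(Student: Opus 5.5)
The proof is a direct unfolding of \cref{def:paullift}. The point to notice is that the cloning construction sends every apportionment instance--committee pair $(N,P,A,k,W)$ to a single committee-election instance--committee pair $(N,P^{\mathrm{cl}},A^{\mathrm{cl}},k,W^{\mathrm{cl}})$. Any universal statement over committee elections therefore specializes to that pair.

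For the implication, fix an apportionment instance $(N,P,A,k)$ and a committee $W$ with $X^{\mathrm{ap}}(N,P,A,k,W)$. By definition this means $X(N,P^{\mathrm{cl}},A^{\mathrm{cl}},k,W^{\mathrm{cl}})$. Before applying the hypothesis, I will check that $W^{\mathrm{cl}}$ is a legitimate committee of the cloned instance:
\begin{itemize}
\item it is a set, because $W(j)\le k<k+1$, so enough clones exist;
\item it has size $\sum_j W(j)=k$;
\item $k\le |P^{\mathrm{cl}}|$ holds.
\end{itemize}
The hypothesis that $X$ implies $Y$ on all committee election instances then gives $Y(N,P^{\mathrm{cl}},A^{\mathrm{cl}},k,W^{\mathrm{cl}})$, which is $Y^{\mathrm{ap}}(N,P,A,k,W)$ by definition.

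For existence, fix an apportionment instance $(N,P,A,k)$ and form the cloned instance $(N,P^{\mathrm{cl}},A^{\mathrm{cl}},k)$. By hypothesis some committee $W'\subseteq P^{\mathrm{cl}}$ of size $k$ satisfies $X$. This $W'$ need not have the canonical form $\{j^{(1)},\dots,j^{(W(j))}\}$, since it may use arbitrary clones. So I define the multiset $W(j)\coloneqq |W'\cap\{j^{(1)},\dots,j^{(k+1)}\}|$, which has size $k$. Then $W'$ and $W^{\mathrm{cl}}$ differ only by a permutation of clones within each party.

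The main step, and the only real obstacle, is showing that $X$ is preserved under such clone permutations. The permutation is a bijection of $P^{\mathrm{cl}}$ that maps each $A^{\mathrm{cl}}_i$ to itself, so it is an automorphism of the instance. I will argue that the predicates of interest are invariant under renaming candidates, since they depend only on approval structure. I plan to state this neutrality assumption explicitly, as \textcite{paulapproval} implicitly do; for a non-neutral $X$ the existence claim could fail. Under neutrality, $X(N,P^{\mathrm{cl}},A^{\mathrm{cl}},k,W^{\mathrm{cl}})$ holds, that is, $X^{\mathrm{ap}}(N,P,A,k,W)$ holds, which completes the proof.
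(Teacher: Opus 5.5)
Your proposal is correct and follows the paper's proof. Both parts are direct unfoldings of \cref{def:paullift}, and for the existence part the paper relies on the same assumption, stated in a footnote as invariance of $X$ under relabeling of identical candidates. Your version is more explicit: it checks that $W^{\mathrm{cl}}$ is a valid committee and shows the clone permutation is an automorphism of the cloned instance, so you actually need only invariance under such automorphisms rather than full neutrality.
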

\begin{proof}
    Suppose an apportionment committee $W$ satisfies $X^{\mathrm{ap}}$ on $(N, P, A, k)$. By definition, $W^{\mathrm{cl}}$ satisfies $X$, and therefore $Y$ on $(N, P^{\mathrm{cl}}, A^{\mathrm{cl}}, k)$. Again by definition, $W$ satisfies $Y^{\mathrm{ap}}$ on $(N, P, A, k)$.
    For the second part, fixing an apportionment instance $(N, P, A, k)$, there exists a committee $W^{\mathrm{cl}}$ satisfying $X$\footnote{To be precise, axiom $X$ must be invariant under relabeling of identical candidates.} on $(N, P^{\mathrm{cl}}, A^{\mathrm{cl}}, k)$. By definition, $W$ satisfies $X^{\mathrm{ap}}$ on $(N, P, A, k)$.
\end{proof}

We next define the apportionment versions of the axioms by applying the construction in \cref{def:paullift}. Fix an apportionment instance $(N, P, A, k)$ and a committee $W$. Then:
\begin{description}
    \item[JR.] $W$ satisfies \emph{justified representation} if, for every nonempty $S \subseteq N$ with $\bigcap_{i \in S} A_i \neq \emptyset$ and $q(S) \geq 1$, there exists a voter $i \in S$ with $u_i(W) \geq 1$.
    \item[EJR/EJR+.] $W$ satisfies \emph{extended justified representation (plus)} if, for every nonempty $S \subseteq N$ with $\bigcap_{i \in S} A_i \neq \emptyset$, there exists a voter $i \in S$ with $u_i(W) \geq q(S)$.
    \item[FJR.] $W$ satisfies \emph{full justified representation} if, for every nonempty $S \subseteq N$, every multiset $T: P \to \mathbb{N}$ of size $|T| \leq q(S)$, and every $\beta \in \mathbb{N}$ such that $u_i(T) \geq \beta$ for all $i \in S$, there is an $i \in S$ such that $u_i(W) \geq \beta$.
    \item[PJR/PJR+.] $W$ satisfies \emph{proportional justified representation (plus)}
    if, for every nonempty $S \subseteq N$ with $\bigcap_{i \in S} A_i \neq \emptyset$, it holds that $\sum_{j\in \bigcup_{i\in S} A_i} W(j)\geq q(S)$.
     \item[FPJR.] $W$ satisfies \emph{full proportional justified representation}
    if, for every nonempty $S \subseteq N$, every multiset $T: P \to \mathbb{N}$ of size $|T| \leq q(S)$, and every $\beta \in \mathbb{N}$ such that $u_i(T) \geq \beta$ for all $i \in S$, it holds that $\sum_{j\in \bigcup_{i\in S} A_i} W(j)\geq \beta$.
    \item[Core Stability.] $W$ satisfies \emph{core stability}
    if, for every $S\subseteq N$ and every nonempty multiset $T: P\to \mathbb{N}$ of size $|T| \leq q(S)$, there exists a voter $i\in S$ with $u_i(T) \leq u_i(W)$.
    \item[Lindahl Priceability.] $W$ is \emph{Lindahl priceable} if there exist nonnegative personalized prices $(p_{ij})_{i,j}$ such that $\sum_{i} p_{ij} \leq n/k$ for all $j$ and such that, for any voter $i$ and multiset $T: P \to \mathbb{N}$ with $\sum_{j} T(j) \, p_{ij} \leq 1$, it holds that $u_i(T) \leq u_i(W)$. (Conceptually, $i$ cannot afford any outcome they prefer over $W$ within a budget of $1$.)
    \item[Priceability.] $W$ is \emph{priceable} if there exist nonnegative payments $(p_{ij})_{i,j}$ and a per-unit price $p \geq 0$ such that $p_{ij}=0$ whenever $j \notin A_i$, $\sum_{j} p_{ij} \leq 1$ for all $i$, $\sum_{i} p_{ij} = W(j) \cdot p$ for all $j$, and $\sum_{i : j \in A_i} (1 -\sum_{j' \in P} p_{ij'}) \leq p$ for all $j$. (Conceptually, the unspent budget of $j$'s supporters is at most $p$.)
    \item[Bounded PAV Improvement.] $W$ satisfies bounded PAV improvement if $\Delta^+_W(j) < n/k$ for every $j \in P$.
    \item[Local PAV Optimality.] $W$ is locally PAV optimal if, $\PAV(W{-}\{j\}{+}\{j'\}) \leq \PAV(W)$ for all parties $j$ and $j'$ with $W(j) > 0$.
    \item[PAV Optimality.] $W$ is PAV optimal if, $\PAV(W') \leq \PAV(W)$ for every committee $W'$.
\end{description}

As the definitions of priceability and Lindahl priceability have gone through some simplification, we formally establish their correspondence with the committee election definitions. The original committee election definitions are included in \cref{app:abc:definitions}.

\begin{proposition}
    The apportionment definition of Lindahl priceability is precisely the lift, as defined in \cref{def:paullift}, of their committee election counterparts. 
\end{proposition}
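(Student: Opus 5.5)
We need to show that Lindahl priceability of $W^{\mathrm{cl}}$ in the cloned instance $(N,P^{\mathrm{cl}},A^{\mathrm{cl}},k)$ is equivalent to the apportionment definition for $W$. The committee-election definition (deferred to \cref{app:abc:definitions}) asks for prices $(p_{ic})_{i,c}$ over candidates with (i) $\sum_i p_{ic}\le n/k$ for every candidate $c$, and (ii) for every voter $i$ and every set $T\subseteq P^{\mathrm{cl}}$ with $\sum_{c\in T}p_{ic}\le 1$, we have $|A^{\mathrm{cl}}_i\cap T|\le |A^{\mathrm{cl}}_i\cap W^{\mathrm{cl}}|=u_i(W)$. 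I would prove the two directions separately. Throughout I would use that utilities agree under cloning: a multiset $T$ over $P$ and any set of clones with the same counts per party give every voter the same utility.

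\emph{Apportionment to committee elections.} Given apportionment prices $(p_{ij})$, assign every clone the price of its party: $p_{i j^{(m)}}\coloneqq p_{ij}$. Condition (i) carries over verbatim. For (ii), take an affordable set $T\subseteq P^{\mathrm{cl}}$ and let $T'$ be the multiset over $P$ that counts, for each $j$, how many clones of $j$ lie in $T$. Then $\sum_j T'(j)p_{ij}=\sum_{c\in T}p_{ic}\le 1$ and $u_i(T')=|A_i^{\mathrm{cl}}\cap T|$, so the apportionment condition gives $u_i(T')\le u_i(W)$, as required.

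\emph{Committee elections to apportionment.} This direction is the main obstacle, because clones of one party may now carry different prices. Given clone prices, I would set $p_{ij}\coloneqq \min_m p_{ij^{(m)}}$, though this threatens condition (i), since the minimizing clone may differ across voters. A cleaner choice is to fix one clone per party that is cheap for everyone. With $k+1$ clones of $j$ and total clone budget $\sum_m\sum_i p_{ij^{(m)}}\le (k+1)n/k$, averaging alone does not help. So I would instead take $p_{ij}\coloneqq \max_m p_{ij^{(m)}}$, which is not obviously bounded either. The route I expect to work keeps (i) safe by using a single fixed clone, $p_{ij}\coloneqq p_{ij^{(1)}}$, and then proves (ii) directly. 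Suppose a multiset $T$ with $\sum_j T(j)p_{ij}\le 1$ gave $u_i(T)>u_i(W)$. Drop all unapproved parties, since this changes neither affordability nor utility. Then trim $T$ down to exactly $u_i(W)+1\le k+1$ approved copies; this only lowers cost.

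Now $T(j)$ copies of $j$ must be realized by distinct clones, and at most $k+1$ are needed per party, which is exactly why $k+1$ clones suffice. The difficulty is that the other clones of $j$ may be more expensive than $j^{(1)}$ for voter $i$. To handle this, I would choose the fixed clone per voter, $p_{ij}\coloneqq\min_m p_{ij^{(m)}}$, and then restore (i) through the following argument. For each voter, the cheapest $u_i(W)+1$ approved clones overall must cost more than $1$. From this I would derive that the set of prices $\tilde p_{ij}\coloneqq$ the $(u_i(W)+1)$-th smallest price among $i$'s approved clones (or a suitable per-party threshold) is dominated by clone prices in a way that preserves the column sums.

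If this gets messy, the fallback is to use the fact, already established in the proof of \cref{thm:PAV-LP}, that Lindahl priceability in apportionment is characterized by $p_{ij}>1/(u_i(W)+1)$ with column sums at most $n/k$. For the committee-election side, note that any $u_i(W)+1$ approved clones of a single party must be unaffordable. Applying this to the $u_i(W)+1$ cheapest clones of $j$ shows that at least one of them, and hence one of the first $u_i(W)+1$ clones in sorted order, costs more than $1/(u_i(W)+1)$. I would then show by averaging over the $k+1$ clones that some clone $j^{(m)}$ has $\sum_i p_{ij^{(m)}}\le n/k$ and is priced above $1/(u_i(W)+1)$ for all its approvers. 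This yields bounded PAV improvement, and \cref{thm:PAV-LP} then gives the apportionment-level Lindahl priceability.
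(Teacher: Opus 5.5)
Your direction from apportionment to committee elections (copy $p_{ij}$ onto every clone) is correct and matches the paper. The converse has a genuine gap: you never carry a price system through, and the fallback rests on a false claim.

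Two of your diagnoses along the way are off. First, the per-voter minimum $\min_m p_{ij^{(m)}}$ does not endanger condition (i). It is pointwise at most $p_{ij^{(1)}}$, so $\sum_i \min_m p_{ij^{(m)}} \le \sum_i p_{ij^{(1)}} \le n/k$. Its real defect is condition (ii), because the actual clones may cost more than the minimum. Second, averaging, which you discard, is exactly what works.

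More seriously, your last step claims that some single clone $j^{(m)}$ is priced above $1/(u_i(W)+1)$ by \emph{all} approvers of $j$. This is false. Take $n=2$, $k=1$, $P=\{j\}$, $A_1=A_2=\{j\}$, $W(j)=1$, and clone prices $p_{1j^{(1)}}=0$, $p_{1j^{(2)}}=3/2$, $p_{2j^{(1)}}=3/2$, $p_{2j^{(2)}}=0$. Each column sum is $3/2\le 2=n/k$. The only set of utility $2$, namely $\{j^{(1)},j^{(2)}\}$, costs $3/2>1$ for both voters, so $W^{\mathrm{cl}}$ is Lindahl priceable. Yet each clone is priced $0<1/2$ by one of its approvers. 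Your observation that the $u_i(W)+1$ cheapest clones of $j$ cost more than $1$ only shows that one of them exceeds the threshold for voter $i$. It gives no clone that works for every approver at once.

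The missing idea is to average per voter over all $k+1$ clones: $p_{ij}\coloneqq \frac{1}{k+1}\sum_{\ell} p'_{ij^{(\ell)}}$. Condition (i) survives because $\sum_i p_{ij}$ is the average of $k+1$ column sums, each at most $n/k$.

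For condition (ii), take an affordable multiset $T$ with $u_i(T)>u_i(W)$. Trim it as you propose so that $T(j)\le u_i(W)+1\le k+1$ for every approved $j$; the paper instead notes that $p_{ij}>1/(k+1)$ forces $T(j)\le k$. Now realize the $T(j)$ copies of each $j$ by the $T(j)$ clones of $j$ that are cheapest for voter $i$. These cost at most $T(j)\,p_{ij}$, since the mean of the smallest $T(j)$ values is at most the overall mean. This clone set is therefore affordable with the same utility, contradicting Lindahl priceability of $W^{\mathrm{cl}}$. This is the paper's proof.

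Your route through \cref{thm:PAV-LP} can be repaired with the same averaging, provided the inequality is taken after summing over voters rather than clone by clone. Because the $u_i(W)+1$ cheapest clones of $j$ cost more than $1$, the full average satisfies $p_{ij}>1/(u_i(W)+1)$. Hence $\Delta_W^+(j)<\sum_i p_{ij}\le n/k$ whenever some voter approves $j$, and $\Delta_W^+(j)=0<n/k$ otherwise.
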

\begin{proof}
    Given the apportionment $(N, P, A, k)$, committee $W$ is Lindahl priceable if and only if $W^{\mathrm{cl}}$ is  Lindahl priceable for $(N, P^{\mathrm{cl}}, A^{\mathrm{cl}}, k)$.
    
    Assume given $(N, P^{\mathrm{cl}}, A^{\mathrm{cl}}, k)$, $W^{\mathrm{cl}}$ is Lindahl priceable witnessed by $(p'_{ic})_{i, c\in P^{\mathrm{cl}}}$. Note that, by the utility-maximization condition of Lindahl priceability, we must have $\sum_{\ell}p'_{ij^{(\ell)}} > 1$ for every $i$ and $j\in A_i$ as $u_i(W^{\mathrm{cl}}) \leq k < u_i(\{j^{(1)}, j^{(2)}, \dots, j^{(k+1)}\})$. Set $p_{ij}$ to the average price of all copies of $j$, $p_{ij} = \sum_\ell p'_{ij^{(\ell)}}/(k+1)$.
    We get $\sum_i p_{ij} =  \sum_{\ell}\sum_i p'_{ij^{(\ell)}}/(k+1)\leq {n}/{k}$ for all $j$. 
    
Also, fixing $i$, consider a multiset $T: P \to \mathbb{N}$ with $\sum_{j} T(j)p_{ij} \leq 1$. 
    Note that for $j\in A_i$, as $p_{ij} > {1}/({k+1})$ and $\sum_{j} T(j)p_{ij} \leq 1$, we must have $T(j) <  k+1$. 
    Let $T'\subseteq P^{\mathrm{cl}}$ contain, for every $j\in A_i$, the $T(j)$ copies of $j$ with the lowest prices $p'_{ij^{(\ell)}}$, and no copies of any $j\notin A_i$. This gives $\sum_{j^{(\ell)}\in T'}p'_{ij^{(\ell)}} \leq \sum_{j} T(j)p_{ij}$ and $u_i(T) = u_i(T')$. Then we have $\sum_{j^{(\ell)}\in T'}p'_{ij^{(\ell)}} \leq \sum_{j} T(j)p_{ij} \leq 1$, implying $u_i(T) = u_i(T') \leq u_i(W^{\mathrm{cl}}) = u_i(W)$ where the inequality is by Lindahl priceability of $W^{\mathrm{cl}}$ for $(N, P^{\mathrm{cl}}, A^{\mathrm{cl}}, k)$.
    Therefore, $W$ is Lindahl priceable witnessed by $(p_{ij})_{i, j}$. 
    
    Conversely, assume given $(N, P, A, k)$, $W$ is Lindahl priceable witnessed by $(p_{ij})_{i,j}$. 
    Setting $p'_{ij^{(\ell)}} = p_{ij}$, we get $\sum_i p'_{ij^{(\ell)}} = \sum_i p_{ij}\leq n/k$ for every $j^{(\ell)} \in P^{\mathrm{cl}}$. 
    
    Also, for every $i$ and $T' \subseteq P^{\mathrm{cl}}$ with $\sum_{c\in T'} p'_{ic} \leq 1$, let $T:P\to \mathbb{N}$ be such that $T(j)$ is the number of clones of $j$ in $T'$. 
    Then we have  $\sum_j T(j)p_{ij} = \sum_{j^{(\ell)}\in T'} p'_{ij^{(\ell)}} \leq 1$, and $u_i(T') = u_i(T) \leq u_i(W) = u_i(W^{\mathrm{cl}})$ implying $W^{\mathrm{cl}}$ is Lindahl priceable witnessed by $(p'_{ic})_{i, c\in P^{\mathrm{cl}}}$.
\end{proof}

\begin{proposition}
    The apportionment definition of priceability is precisely the lift, as defined in \cref{def:paullift}, of their committee election counterparts. 
\end{proposition}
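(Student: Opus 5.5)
The plan is to prove both directions of the equivalence directly through the cloned instance $(N,P^{\mathrm{cl}},A^{\mathrm{cl}},k)$ from \cref{def:paullift}. I use the committee-election definition of priceability of \textcite{limitsofwelf}, as recalled in \cref{app:abc:definitions}: there are a price $p\ge 0$ and payment functions $p'_{ic}\ge 0$ over candidates such that
\begin{itemize}
\item $p'_{ic}=0$ whenever $c\notin A^{\mathrm{cl}}_i$;
\item $\sum_c p'_{ic}\le 1$ for every voter $i$;
\item $\sum_i p'_{ic}=p$ for every $c\in W^{\mathrm{cl}}$, and $\sum_i p'_{ic}=0$ for every $c\notin W^{\mathrm{cl}}$;
\item for every $c\notin W^{\mathrm{cl}}$, the unspent budget of $c$'s supporters, $\sum_{i:c\in A^{\mathrm{cl}}_i}\bigl(1-\sum_{c'}p'_{ic'}\bigr)$, is at most $p$.
\end{itemize}
The key structural fact is the same in both directions. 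Each party $j$ has $k+1$ clones but $W(j)\le k$, so at least one clone of $j$ lies outside $W^{\mathrm{cl}}$. Moreover, the supporters of every clone of $j$ are exactly the voters with $j\in A_i$.

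\emph{From apportionment to clones.} Given apportionment payments $(p_{ij})$ and per-unit price $p$, I spread each voter's payment to $j$ evenly over the selected clones. That is, I set $p'_{ij^{(\ell)}}\coloneqq p_{ij}/W(j)$ for $\ell\le W(j)$, and $p'_{ij^{(\ell)}}\coloneqq 0$ otherwise. When $W(j)=0$, the condition $\sum_i p_{ij}=W(j)\,p=0$ forces $p_{ij}=0$ for all $i$, so nothing is lost.

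Then I check the conditions in turn:
\begin{itemize}
\item Approval and budget constraints are inherited, since each voter's total spending is unchanged.
\item Each selected clone collects $\sum_i p_{ij}/W(j)=p$, and unselected clones collect $0$.
\item For an unselected clone $j^{(\ell)}$, its supporters are exactly $\{i:j\in A_i\}$ and their total spending is unchanged. So the unspent-budget condition for $j^{(\ell)}$ is literally the apportionment condition for $j$.
\end{itemize}

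\emph{From clones to apportionment.} Given clone payments $p'$ with price $p$, I aggregate them by setting $p_{ij}\coloneqq\sum_{\ell}p'_{ij^{(\ell)}}$. Again I check the conditions:
\begin{itemize}
\item Approval and budget constraints carry over, and voters' total spending is unchanged.
\item Exactly $W(j)$ clones of $j$ each receive $p$ and the rest receive $0$, so $\sum_i p_{ij}=W(j)\,p$.
\item For the unspent-budget condition on $j$, I invoke it for one unselected clone of $j$, which exists by the structural fact above. Its supporters and their leftover budgets coincide with those of $j$.
\end{itemize}

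I expect no real mathematical obstacle. The main care point is matching the exact conventions of the original committee-election definition (e.g., whether unselected candidates must receive exactly zero payment, and equality versus inequality in the price condition). The argument above handles the strict version, and weaker variants only make both directions easier. The only substantive ingredient is the existence of an unselected clone for every party, which is precisely why the embedding uses $k+1$ rather than $k$ copies.
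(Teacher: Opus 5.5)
Your proof is correct and matches the paper's. It uses the same aggregation $p_{ij}=\sum_\ell p'_{ij^{(\ell)}}$ for clones-to-apportionment, and the same even split $p_{ij}/W(j)$ over the $W(j)$ selected clones (zero on the rest) for the converse. You are slightly more careful than the paper in two places: the paper checks the leftover-budget condition via $j^{(1)}$, which may be a selected clone where the committee-election condition says nothing, whereas you use an unselected clone such as $j^{(k+1)}$; and you treat the case $W(j)=0$ explicitly.
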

\begin{proof}
    Given the apportionment $(N, P, A, k)$, committee $W$ is priceable if and only if $W^{\mathrm{cl}}$ is priceable for $(N, P^{\mathrm{cl}}, A^{\mathrm{cl}}, k)$.
    Assume given $(N, P^{\mathrm{cl}}, A^{\mathrm{cl}}, k)$, $W^{\mathrm{cl}}$ is priceable witnessed by $(p', (p'_{ic})_{i, c\in P^{\mathrm{cl}}})$. Set $p_{ij}$ to the sum of prices of all copies of $j$: $p_{ij} = \sum_\ell p'_{ij^{(\ell)}}$. 
    Then if $j\notin A_i$, $j^{(\ell)} \notin A^{\mathrm{cl}}_i$ for every $\ell$ and $p_{ij} = \sum_\ell p'_{ij^{(\ell)}} = 0$. 
    Also, $\sum_{j} p_{ij} = \sum_{j^{(\ell)} \in P^{\mathrm{cl}}} p'_{ij^{(\ell)}} \leq 1$ for every $i$, and $\sum_{i} p_{ij} = \sum_{i}\sum_{\ell} p'_{ij^{(\ell)}} = W(j)\cdot p'$ for every $j$. 
    Finally, $\sum_{i: j\in A_i} (1-\sum_{j'\in P} p_{ij'}) = \sum_{i: j^{(1)}\in A^{\mathrm{cl}}_i} (1-\sum_{c\in P^{\mathrm{cl}}} p'_{ic}) \leq p'$ for every $j$ implying $W$ is priceable witnessed by $(p', (p_{ij})_{i, j})$.

     Conversely, assume given $(N, P, A, k)$, $W$ is priceable witnessed by $(p, (p_{ij})_{i,j})$. 
    Set $p'_{ij^{(\ell)}}=p_{ij}/W(j)$ if $j^{(\ell)} \in W^{\mathrm{cl}}$ and $p'_{ij^{(\ell)}}=0$ otherwise. 
    This way, if $j^{(\ell)}\notin A^{\mathrm{cl}}_i$, then $j\notin A_i$ and $p'_{ij^{(\ell)}} = 0$. Also, 
    $\sum_{j^{(\ell)}\in P^{\mathrm{cl}}} p'_{ij^{(\ell)}}=\sum_{j} \sum_{\ell=1}^{W(j)} p_{ij}/W(j) = \sum_{j} p_{ij} \leq 1$ for every $i$, and for every $j^{(\ell)} \in P^{\mathrm{cl}}$ it holds that $\sum_i p'_{ij^{(\ell)}} = 0$ if $j^{(\ell)} \notin W^{\mathrm{cl}}$ and $\sum_i p'_{ij^{(\ell)}} = \sum_i p_{ij}/W(j)=p$ if  $j^{(\ell)}\in W^{\mathrm{cl}}$. 
    Finally, $\sum_{i: j^{(\ell)}\in A^{\mathrm{cl}}_i} (1-\sum_{c\in P^{\mathrm{cl}}} p'_{ic}) = \sum_{i: j\in A_i} (1-\sum_{j'\in P} \sum_{l=1}^{W(j')}p'_{ij^{'(\ell)}}) \leq p$ for every $j^{(\ell)} \in P^{\mathrm{cl}}$. Therefore, $(p, p'_{ic})_{i,c\in P^{\mathrm{cl}}}$ witnesses the priceability of $W^{\mathrm{cl}}$ for $(N, P^{\mathrm{cl}}, A^{\mathrm{cl}}, k)$.
\end{proof}

\subsection{Omitted Proofs}\label{app:appor:omit}

\begin{proposition}\label{prop:FPJR+}
   In the apportionment setting, FPJR, PJR, and PJR+ coincide.
\end{proposition}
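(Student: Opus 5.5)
The plan mirrors the proof of \cref{thm:efjr+}. First, PJR and PJR+ are given identical apportionment definitions above: cohesiveness in both reduces to $\bigcap_{i\in S} A_i \neq \emptyset$. This is because any universally approved party has enough clones, so some clone lies outside $W^{\mathrm{cl}}$ and the ``not in $W$'' requirement of PJR+ is vacuous. So it only remains to relate FPJR and PJR. FPJR implies PJR in committee elections, and this implication lifts to apportionment by \cref{prop:liftimply}. It therefore suffices to show that PJR implies FPJR.

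For that direction, let $W$ satisfy PJR and take $S$, a multiset $T$ with $|T| \le q(S) \le |S|k/n$, and $\beta$ with $u_i(T) \ge \beta$ for all $i\in S$. We may assume $\beta \ge 1$, otherwise there is nothing to show, and hence $T$ is nonempty. Double counting gives $\sum_{j} T(j)\,|\{i\in S: j\in A_i\}| = \sum_{i\in S} u_i(T) \ge \beta|S|$. Averaging over the $|T|$ copies, some party $j$ with $T(j)>0$ is approved by a set $S'\subseteq S$ of size at least $\beta|S|/|T| \ge \beta n/k$. Then $q(S') = \lfloor |S'|k/n\rfloor \ge \beta$, since $\beta$ is an integer, and $S'$ is cohesive because $j \in \bigcap_{i\in S'} A_i$.

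Applying PJR to $S'$ gives $\sum_{j'\in\bigcup_{i\in S'}A_i} W(j') \ge q(S') \ge \beta$. Since $S'\subseteq S$, we have $\bigcup_{i\in S'}A_i\subseteq\bigcup_{i\in S}A_i$, so the sum over the larger union is at least $\beta$ as well. This is exactly the FPJR conclusion for $(S,T,\beta)$.

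No step is a serious obstacle. The two points needing care are the integrality step $q(S')\ge\beta$ and the monotonicity of the union in passing from $S'$ back to $S$. Beyond these, one should only double-check that the PJR+ cohesiveness condition really simplifies as claimed under the clone embedding with $k+1$ copies.
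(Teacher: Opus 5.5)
Your proposal is correct and follows essentially the same route as the paper. PJR and PJR+ collapse under the clone embedding, FPJR $\Rightarrow$ PJR is inherited from committee elections, and PJR $\Rightarrow$ FPJR uses the same averaging argument to find a subgroup $S'\subseteq S$ of size at least $\beta n/k$ sharing a party of $T$, followed by the monotonicity $\bigcup_{i\in S'}A_i\subseteq\bigcup_{i\in S}A_i$. Your explicit handling of $\beta\ge 1$ (so that $T\neq\emptyset$) and of the integrality step $q(S')\ge\beta$ spells out details the paper leaves implicit.
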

\begin{proof}
    The equivalence of PJR+ and PJR again follows immediately from the characterization of PJR in \citet{paulapproval}.
    
    Since FPJR already implies PJR in committee elections, it suffices to prove that PJR implies FPJR. Let the committee $W$ satisfy PJR.
Consider any \(S\subseteq N\) and \(T: P \to \mathbb{N}\) such that $|T|\leq q(S) \leq |S|\cdot k/n$ and $u_i(T) \geq \beta$ for all $i \in S$.
By averaging over $T$, some party $j \in T$ is approved by at least
$\beta \cdot |S|/|T| \geq \beta\cdot n/k$
voters in \(S\). By applying the PJR guarantee of \(W\) to this subset $S'$ of $S$, it must hold that $\sum_{j \in \bigcup_{i \in S'} A_i} W(j) \geq \beta$.
Since $\sum_{j \in \bigcup_{i \in S} A_i} W(j)$ is at least$\sum_{j \in \bigcup_{i \in S'} A_i} W(j)$, this implies FPJR.
\end{proof}

We next prove the following lemma, which we used in the proof of \cref{thm:pavpriceable}.
\begin{lemma}\label{lem:supp-dem}
Let \(G=(V\cup U,E)\) be a bipartite graph, with supply upper-bounds \(h_v\) for
\(v\in V\) and demands \(d_u\) for \(u\in U\). Then, there exist nonnegative weights \((x_{vu})_{\{v,u\}\in E}\) such that \(\sum_{u\in N(v)}x_{vu}\leq h_v\) for every \(v\in V\), and \(\sum_{v\in N(u)}x_{vu}=d_u\) for every \(u\in U\), if and only if
\(
\sum_{u\in U'}d_u
\leq
\sum_{v\in N(U')}h_v
\) 
for every $U'\subseteq U$ where \(N(U')\) is the set of neighbors of \(U'\).
\end{lemma}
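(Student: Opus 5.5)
Necessity is immediate. Given feasible weights $x$ and any $U'\subseteq U$, summing the demand equalities over $u\in U'$ gives
\[ \sum_{u\in U'} d_u=\sum_{u\in U'}\sum_{v\in N(u)}x_{vu}. \]
Every edge in this double sum has its $V$-endpoint in $N(U')$. Since $x\ge 0$, we can regroup by $v\in N(U')$ and bound the sum by $\sum_{v\in N(U')}\sum_{u\in N(v)}x_{vu}\le\sum_{v\in N(U')}h_v$.

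For sufficiency I plan to reduce to max-flow/min-cut. Build a network with a source $s$ and a sink $t$:
\begin{itemize}
\item an arc $s\to v$ of capacity $h_v$ for each $v\in V$;
\item an arc $v\to u$ of infinite capacity for each edge $\{v,u\}\in E$;
\item an arc $u\to t$ of capacity $d_u$ for each $u\in U$.
\end{itemize}
(Implicitly we assume $h_v,d_u\ge0$, which holds in our application.) A flow of value $\sum_{u\in U}d_u$ saturates every sink arc. Setting $x_{vu}$ to the flow on $v\to u$ then gives exactly the desired weights: conservation at $v$ together with the capacity of $s\to v$ yields the supply bounds, and saturation at $u$ yields the demand equalities. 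By the max-flow min-cut theorem, which holds for real capacities, it therefore suffices to show that every $s$–$t$ cut has capacity at least $\sum_u d_u$.

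The main step is analyzing an arbitrary finite-capacity cut $(\{s\}\cup V_1\cup U_1,\ \{t\}\cup V_2\cup U_2)$. Finiteness forces no infinite arc from $V_1$ to $U_2$, i.e., $N(U_2)\subseteq V_2$. The cut's capacity is $\sum_{v\in V_2}h_v+\sum_{u\in U_1}d_u$. Applying the Hall-type hypothesis to $U'=U_2$ gives
\[ \sum_{v\in V_2}h_v\ \ge\ \sum_{v\in N(U_2)}h_v\ \ge\ \sum_{u\in U_2}d_u, \]
using $h\ge0$ for the first inequality. Hence the capacity is at least $\sum_{u\in U_2}d_u+\sum_{u\in U_1}d_u=\sum_u d_u$, as needed.

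I don't expect a genuine obstacle. The only points needing care are the infinite capacities, which can be replaced by a large finite constant such as $\sum_v h_v+1$, and using real rather than integral capacities, which is fine because max-flow min-cut holds over the reals. An alternative route is LP duality/Farkas on the feasibility system, but the flow argument is cleaner. Finally, in \cref{thm:pavpriceable} the lemma is applied with $V$ the voters, $h_i=1/(u_i(W)+1)$, $U$ the parties, and $d_j=W(j)(p-\Delta_W^+(j))\ge0$, which yields condition~\eqref{eq:hall}.
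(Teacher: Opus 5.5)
Your proof is correct and takes essentially the same route as the paper: the same source--supply--demand--sink network with infinite-capacity middle arcs, and the same cut analysis, in which a finite cut with $U' = T\cap U$ forces $N(U')\subseteq T\cap V$ and so has capacity at least $\sum_{v\in N(U')}h_v+\sum_{u\in U\setminus U'}d_u$. The differences are minor: you prove necessity by direct double counting instead of reading both directions off the min-cut characterization, and you state explicitly that $h_v,d_u\ge 0$ is needed (which does hold in the application), whereas the paper leaves this assumption implicit.
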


\begin{proof}
Construct a flow network as follows. 

\begin{figure}[h]
\centering
\begin{tikzpicture}[
    >=latex,
    every node/.style={font=\small},
    vertex/.style={circle,draw,minimum size=5.5mm,inner sep=0pt},
    lab/.style={font=\scriptsize}
]

\node[vertex] (s) at (0,0) {$s$};

\node[vertex] (v1) at (1.7,0.9) {$v_1$};
\node[vertex] (v2) at (1.7,0) {$v_2$};
\node at (1.7,-0.55) {$\vdots$};
\node[vertex] (vn) at (1.7,-1.2) {$v_m$};

\node[vertex] (u1) at (3.8,0.9) {$u_1$};
\node[vertex] (u2) at (3.8,0) {$u_2$};
\node at (3.8,-0.55) {$\vdots$};
\node[vertex] (um) at (3.8,-1.2) {$u_{m'}$};

\node[vertex] (t) at (5.5,0) {$t$};

\draw[->] (s) -- node[above,lab] {$h_{v_1}$} (v1);
\draw[->] (s) -- node[above,lab] {$h_{v_2}$} (v2);
\draw[->] (s) -- node[below,lab] {$h_{v_{m}}$} (vn);

\draw[->] (v1) -- node[above,lab] {$\infty$} (u1);
\draw[->] (v2) -- node[above,lab] {$\infty$} (u1);
\draw[->] (v2) -- node[above,lab] {$\infty$} (um);
\draw[->] (vn) -- node[above,lab] {$\infty$} (u2);

\draw[->] (u1) -- node[above,lab] {$d_{u_1}$} (t);
\draw[->] (u2) -- node[above,lab] {$d_{u_2}$} (t);
\draw[->] (um) -- node[below,lab] {$d_{u_{m'}}$} (t);

\end{tikzpicture}
\end{figure}

Add a source $s$ and a sink $t$. For each
$v\in V$, add a directed edge $(s,v)$ of capacity $h_v$. For each edge $\{v,u\}\in E$, add
a directed edge $(v,u)$ of capacity $+\infty$. Finally, for each $u\in U$, add a directed  edge
$(u,t)$ of capacity $d_u$.
Observe that a feasible assignment $(x_{vu})_{\{v,u\}\in E}$ is equivalent to an
$s$--$t$ flow of value $\sum_{u\in U}d_u$: the flow on edge $(v,u)$ is
$x_{vu}$, the capacity constraints on $(s,v)$ enforce
$\sum_{u\in N(v)}x_{vu}\le h_v$, and saturating every edge $(u,t)$ is
equivalent to satisfying
$\sum_{v\in N(u)}x_{vu}=d_u$. On the other hand, the capacity of the cut $(V \cup U \cup \{s\}, \{t\})$ equals $\sum_{u\in U}d_u$, and the value of $s$--$t$ max-flow is at most $\sum_{u\in U}d_u$. Therefore, we need to prove the value of $s$--$t$ max-flow is exactly $\sum_{u\in U}d_u$. By the max-flow min-cut theorem, it suffices to prove that
$s$--$t$ min-cut has capacity exactly $\sum_{u\in U}d_u$. 

Consider any finite-capacity cut $(S,T)$ with $s\in S$ and $t\in T$. 
Let
\(
U'=T\cap U
\).
Then every neighbor of $U'$ lies in $T\cap V$; otherwise an infinite-capacity
edge would cross the cut. Hence
\(
N(U') \subseteq T\cap V
\), and the cut has capacity of at least
\(
\sum_{v\in N(U')}h_v+\sum_{u\in U\setminus U'}d_u
\). Moreover, this lower bound is attained exactly when $N(U') = T\cap V$.
Therefore, the capacity of min-cut is $\sum_{u\in U}d_u$ if and only if for every $U'\subseteq U$,
\[
\sum_{v\in N(U')}h_v+\sum_{u\in U\setminus U'}d_u
\ge
\sum_{u\in U}d_u,
\]
which is equivalent to
\(
\sum_{v\in N(U')}h_v
\ge
\sum_{u\in U'}d_u.
\)
\end{proof}

We next confirm that there are no other implication relations in \cref{fig:abctoapp}. For the reader's convenience, the figure is shown again below.

\begin{adjustbox}{max width=\linewidth,center}
            \begin{tikzpicture}[
     x=0.9cm,
    y=1.1cm,
    box/.style={
        draw,
        rounded corners,
        very thin,
        node font=\footnotesize,
        minimum height=14pt,
        inner xsep=2pt,
        inner ysep=1pt,
        align=center
    }
]
  \node[box] (node1) at (-3.57,3.96) {Lindahl Pr.\vphantom{Ty}};
  \node[box] (node2) at (-3.57,3.18) {Core Stability\vphantom{Ty}};
  \node[box] (node5) at (-3.56,2.22) {FJR\vphantom{Ty}};
  \node[box] (node12) at (-1.94,2.22) {EJR\vphantom{Ty}};
  \node[box] (node6) at (-0.18,2.22) {EJR+\vphantom{Ty}};

  \draw[arrows=-Stealth, thick] (node1.south) -- (node2.north);
  \draw[arrows=-Stealth, thick] (-3.19,2.18) -- (-2.32,2.15);
  \draw[arrows=-Stealth, thick] (-0.67,2.15) -- (-1.56,2.15);

  \node[box] (node7) at (-3.57,1.46) {FPJR\vphantom{Ty}};
  \node[box] (node11) at (-1.94,1.46) {PJR\vphantom{Ty}};
  \node[box] (node8) at (-0.18,1.45) {PJR+\vphantom{Ty}};
  \node[box] (node10) at (-1.91,0.38) {Priceability\vphantom{Ty}};

  \draw[arrows=-Stealth, thick] (node2.south) -- (node5.north);
  \draw[arrows=-Stealth, thick] (node5.south) -- (node7.north);
  \draw[decorate, thick, arrows=-Stealth, bend left=17]
      (node2.south) to[bend left=13] (node8.north);
  \draw[arrows=-Stealth, thick] (node12.south) -- (node11.north);
  \draw[arrows=-Stealth, thick] (node10.north) -- (node8.south);
  \draw[arrows=-Stealth, thick] (node10.north) -- (node7.south);
  \draw[arrows=-Stealth, thick] (node6.south) -- (node8.north);

  \node[box] (node3) at (-0.18,3.18) {Bounded PAV Imp.\vphantom{Ty}};
  \node[box] (node4) at (-0.18,3.96) {Local PAV Optimality\vphantom{Ty}};

  \draw[arrows=-Stealth, thick, line cap=projecting]
      (node4.south) -- (node3.north);
  \draw[thick, line cap=projecting, arrows=-Stealth, purple]
      (-1.56,2.28) -- (-0.68,2.27);
  \draw[arrows=-Stealth, thick]
      (-0.63,1.37) -- (-1.57,1.37);
  \draw[thick, line cap=projecting, arrows=-Stealth, purple]
      (-1.57,1.5) -- (-0.63,1.49);

  \draw[arrows=-Stealth, thick, line cap=projecting]
      (node3.south) -- (node6.north);

  \draw[thick, line cap=projecting, purple, arrows=Stealth-]
      (-3.19,2.3) -- (-2.34,2.28);

  \draw[arrows=-Stealth, thick]
      (-3.12,1.39) -- (-2.3,1.39);
  \draw[thick, line cap=projecting, purple, arrows=Stealth-]
      (-3.12,1.53) -- (-2.3,1.53);

  \draw[decorate, thick, arrows=Stealth-Stealth, purple, bend right=15]
      (-0.5,2.45) to[bend right=11] (-3.38,2.45);

  \draw[decorate, thick, arrows=Stealth-Stealth, purple, bend left=15]
      (-0.44,1.23) to[bend left=9] (-3.33,1.23);

  \draw[thick, purple, arrows=-Stealth]
      (-2.96,3.74) -- (-1.62,3.24);
  \draw[thick, purple, arrows=Stealth-]
      (-3.33,3.74) -- (-1.62,3.09);
  \draw[arrows=-Stealth, thick, purple, bend left=50]
      (node4.east) to[bend left=30] (node10.east);

  \node[box] (node9) at (-0.18,4.74) {PAV Optimality\vphantom{Ty}};

  \draw[arrows=-Stealth, thick, line cap=projecting]
      (node9.south) -- (node4.north);

\end{tikzpicture}
        \end{adjustbox}

\begin{proposition} [\citealt{paulapproval}]\label{app:appor:prop:PEJR}
    EJR does not imply core stability, and PJR does not imply EJR.
\end{proposition}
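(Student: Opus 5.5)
Both claims are non-implications, so the plan is to exhibit one explicit apportionment instance and committee for each. Throughout, the checks use the simplified apportionment definitions from \cref{app:appor:def}.

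\textbf{PJR without EJR.} I would take $n=2$, $k=2$, parties $\{a,b,c\}$, approval sets $A_1=\{a,b\}$ and $A_2=\{a,c\}$, and the committee $W=\{b,c\}$.
\begin{itemize}
\item EJR fails. The grand coalition $S=N$ is cohesive because both voters approve $a$, and $q(N)=2$. Yet $u_1(W)=u_2(W)=1<2$.
\item PJR holds. For $S=N$, the parties approved by someone in $S$ carry $W(b)+W(c)=2\ge q(N)$ seats. For each singleton $S=\{i\}$, $q(S)=1$ and voter $i$ approves one seat of $W$.
\end{itemize}
The second claim is thus a two-line check.

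\textbf{EJR without core stability.} By \cref{thm:efjr+}, EJR coincides with FJR here, so the counterexample must be a core violation that is not an FJR violation. FJR only rules out objections in which every member of $S$ gets at least a common level $\beta$ from $T$ while having less than $\beta$ in $W$. A core objection only needs $u_i(T)>u_i(W)$ for each $i$ separately. The example therefore needs heterogeneous improvements. Some members of $S$ have low $u_i(W)$ and gain little from $T$. Others already have $u_i(W)\ge\beta$, where $\beta$ is the minimum of $u_i(T)$ over $S$, and gain a lot from $T$.

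My concrete attempt is to take $S$ as the union of two blocks $S_1$ and $S_2$ and $T=\{a,b,b,\dots\}$. Voters in $S_1$ approve $a$, and voters in $S_2$ approve $a$ and $b$. The committee $W$ gives $S_1$ low utility, and gives $S_2$ moderate utility through parties in $A_i$ outside $T$. I will also choose $k<n$ so that singleton and small groups have quota $0$, which avoids trivially forcing utility on $S_1$.

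The main obstacle is that $S_1\cup S_2$ is itself cohesive via $a$. EJR then requires some member to have utility at least $q(S_1\cup S_2)$, and I must make sure an $S_2$ member does. Meanwhile, $|T|\le q(S)$ must still let every $S_2$ member strictly exceed that utility. I would tune the block sizes, $k$, and the number of copies of $b$ to balance these constraints. Then I would verify EJR exhaustively over the finitely many cohesive groups, which are indexed by their common party. If hand tuning stalls, I would fall back on the counterexample of \textcite{paulapproval} that this proposition cites and simply re-verify it under our definitions.
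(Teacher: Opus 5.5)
Your PJR-but-not-EJR instance is correct: $S=N$ is cohesive via $a$ with $q(N)=2$, both utilities are $1$, and every cohesive group's union carries at least its quota. It is also more self-contained than the paper's proof, which only cites that seq-Phragm\'en satisfies PJR but not EJR in apportionment.

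The EJR-without-core-stability half has a genuine gap. You never produce a committee, and the construction you sketch cannot work however you tune it. In your sketch every member of $S$ approves $a$, so $\bigcap_{i\in S}A_i\neq\emptyset$. The apportionment form of EJR then yields some $i\in S$ with $u_i(W)\ge q(S)\ge |T|\ge u_i(T)$, so $(S,T)$ is never a core objection. The obstacle you flagged is therefore not a balancing act but a contradiction. An $S_2$ member with $u_i(W)\ge q(S)$ can never strictly gain from a $T$ of size at most $q(S)$. In apportionment, a core violation of an EJR committee must come from a coalition with no commonly approved party.

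A working instance keeps your heterogeneity idea but makes $S$ non-cohesive. Take $n=4$, $k=8$, $A_1=\{a\}$, $A_2=\{b\}$, $A_3=\{c\}$, $A_4=\{b,c\}$, and $W=\{a,a,a,a,b,b,c,c\}$, so the utilities are $(4,2,2,4)$.
\begin{itemize}
\item EJR holds. The only groups with a common party are the singletons, with quota $2$, and $\{2,4\}$ and $\{3,4\}$, with quota $4$. Voter $4$ covers both pairs.
\item Core stability fails. $S=\{2,3,4\}$ has $q(S)=6$, and $T=\{b,b,b,c,c,c\}$ gives utilities $3>2$, $3>2$, $6>4$.
\end{itemize}
Voter $4$ already has $u_4(W)=4\ge\beta=3$, which is why FJR, and hence EJR, is not violated. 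This is the integer version of \cref{ex:ejr-core}.

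Your fallback of citing \textcite{paulapproval} is what the paper actually does: there, the method of equal shares satisfies EJR but not core stability in apportionment. If you take that route, you must state that specific instance and verify it, not leave it as a contingency.
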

\begin{proof}
    Based on Appendix B of \citeauthor{paulapproval}, in the apportionment setting, seq-Phragmén satisfies PJR but not EJR, and the method of equal shares satisfies EJR but not core stability.
\end{proof} 

\begin{example}[\citealt{edge}]\label{ex:app:lp}
    Let $n=3$, $k=2$, and \(P=\{a, b, c, d\}\). The voters have approval sets 
\(
A_1=\{c, d\},
A_2=\{a, c, d\},
A_3=\{a, b, c, d\}
\), and the elected committee is \(W=\{a, b\}\).
\end{example}
\begin{proposition}
    Core stability does not imply Lindahl priceability.
\end{proposition}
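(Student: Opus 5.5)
We use \cref{ex:app:lp} as the counterexample. We show directly that $W=\{a,b\}$ is core stable. For Lindahl priceability, instead of ruling out every price system by hand, we invoke \cref{thm:PAV-LP}: in apportionment, Lindahl priceability is equivalent to bounded PAV improvement. So it suffices to exhibit a party whose marginal PAV gain is at least $n/k$.

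\emph{Failure of Lindahl priceability.} The utilities under $W=\{a,b\}$ are $u_1(W)=0$ (voter $1$ approves neither $a$ nor $b$), $u_2(W)=1$ and $u_3(W)=2$. Party $c$ is approved by all three voters, so
\[
\Delta_W^+(c)=\tfrac{1}{1}+\tfrac{1}{2}+\tfrac{1}{3}=\tfrac{11}{6}\geq \tfrac{3}{2}=\tfrac{n}{k}.
\]
Hence bounded PAV improvement fails, and by \cref{thm:PAV-LP} $W$ is not Lindahl priceable.

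\emph{Core stability.} We go through coalitions $S$ by size, using $q(S)=\lfloor 2|S|/3\rfloor$.
\begin{itemize}
\item If $|S|\le 1$, then $q(S)=0$, so no nonempty $T$ has $|T|\le q(S)$ and there is nothing to check.
\item If $|S|=2$, then $q(S)=1$, so $T$ is a single copy of one party. Every such $T$ gives each voter utility at most $1$. Every two-voter coalition contains voter $2$ or voter $3$, whose utilities in $W$ are already $1$ and $2$. That voter satisfies $u_i(T)\le u_i(W)$.
\item If $|S|=3$, then $q(S)=2$. Voter $3$ has $u_3(T)\le |T|\le 2=u_3(W)$.
\end{itemize}
Thus no coalition has a blocking $T$, and $W$ is core stable.

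No step is a real obstacle. Routing through \cref{thm:PAV-LP} replaces the search over all price systems with a single computation. The only care needed is the bookkeeping of quotas and utilities in the core-stability cases.
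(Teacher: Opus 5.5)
Your proof is correct, and for the Lindahl half it takes a different route from the paper. The paper settles core stability of $W=\{a,b\}$ with the remark that it is ``easy to verify''. For the failure of Lindahl priceability it imports an external result: the same profile, viewed as a committee election, admits no Lindahl prices. Since the apportionment version of the axiom (via the clone embedding) is only more demanding, the failure transfers. You instead stay inside the paper. You use the equivalence of \cref{thm:PAV-LP} to replace the nonexistence of a price system by the single computation $\Delta_W^+(c)=11/6\geq 3/2$. This mirrors the strategy the paper itself uses for the portioning analog (\cref{ex:por:lp} together with \cref{prop:por:eq}). Your route is self-contained and mechanical, and your case analysis over $|S|$ supplies the core-stability check the paper leaves to the reader. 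The paper's route does not depend on \cref{thm:PAV-LP}. It also yields the stronger fact that $W$ is not Lindahl priceable even as a committee-election outcome. Your argument cannot give that, because bounded PAV improvement and Lindahl priceability are incomparable in committee elections, so the marginal-gain computation only certifies failure in apportionment.
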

\begin{proof}
    It is easy to verify that in \cref{ex:app:lp}, $W$ is core stable. \citet{edge} showed $W$ is not Lindahl priceable in the case of committee elections, and Lindahl priceability in apportionment is only more constrained.
\end{proof}

\begin{example}\label{ex:app:bpi}
    Let $n=10$, $k=5$, and $P{=}\{a, b, c\}$. The approval sets are $A_i = \{a\}$ for $i\in\{1,2 , 3\}$, $A_i = \{b\}$ for $i\in\{4, 5, 6\}$, $A_i = \{c\}$ for $i\in\{7, 8, 9, 10\}$. The elected committee is $W=\{a, b, c, c, c\}$.
\end{example}
\begin{example}\label{ex:app:pav}
    Let $n=4$, $k=2$, and $P{=}\{a, b, c, d\}$.  approval sets are $A_1 = \{a, c\}$, $A_2 = \{a, d\}$, $A_3 = \{b, c, d\}$, $A_4=\{b\}$, and $W=\{c, d\}$.
\end{example}

\begin{proposition}\label{app:appor:pav}
    Bounded PAV improvement does not imply local PAV optimality, and local PAV optimality does not imply PAV optimality.
\end{proposition}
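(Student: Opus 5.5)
The plan is to use \cref{ex:app:bpi} for the first non-implication and \cref{ex:app:pav} for the second. In each case I compute utilities and the relevant PAV quantities directly. No earlier result is needed beyond the definitions of $\Delta_W^+$ and $\PAV$.

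\textbf{Bounded PAV improvement without local optimality (\cref{ex:app:bpi}).} Here $n/k = 2$.

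Under $W=\{a,b,c,c,c\}$, voters $1$--$6$ have utility $1$ and voters $7$--$10$ have utility $3$. The marginal gains are therefore
\[
\Delta_W^+(a)=\Delta_W^+(b)=3\cdot\tfrac12=\tfrac32<2, \qquad \Delta_W^+(c)=4\cdot\tfrac14=1<2,
\]
so bounded PAV improvement holds.

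To refute local optimality, swap one copy of $c$ for $a$. Voters $7$--$10$ drop from utility $3$ to $2$, which costs $4\cdot\tfrac13=\tfrac43$. Voters $1$--$3$ rise from $1$ to $2$, which gains $3\cdot\tfrac12=\tfrac32$. Since $\tfrac32>\tfrac43$, the PAV score strictly increases, so $W$ is not locally PAV optimal.

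\textbf{Local optimality without global optimality (\cref{ex:app:pav}).} Under $W=\{c,d\}$ the utilities are $(1,1,2,0)$, so $\PAV(W)=1+1+\tfrac32=\tfrac72$.

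The committee $\{a,b\}$ gives every voter utility $1$, so its score is $4>\tfrac72$. Hence $W$ is not PAV optimal.

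For local optimality, I enumerate all swaps that replace $c$ by some $j'\in\{a,b,c,d\}$; the swaps removing $d$ follow by the symmetry exchanging $c\leftrightarrow d$, $a\leftrightarrow b$ on parties and voters $1\leftrightarrow 2$.
\begin{itemize}
\item Swapping in $c$ leaves $W$ unchanged.
\item $\{a,d\}$ gives utilities $(1,2,1,0)$, hence score $\tfrac72$.
\item $\{b,d\}$ gives utilities $(0,1,2,1)$, hence score $\tfrac72$.
\item $\{d,d\}$ gives utilities $(0,2,2,0)$, hence score $3$.
\end{itemize}
None of these strictly exceeds $\tfrac72$, so $W$ is locally PAV optimal.

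The only point requiring care is this exhaustive check of swaps together with the symmetry argument for the swaps removing $d$. Every other step is a direct computation.
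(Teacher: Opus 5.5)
Your approach is the paper's: the same two examples, the swap $W-\{c\}+\{a\}$ refuting local optimality, and the committee $\{a,b\}$ refuting global optimality. All your numerical computations are correct. There is one incorrect step, though.

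The symmetry you invoke for the swaps removing $d$ is not a symmetry of the instance. Exchanging $a\leftrightarrow b$ and $c\leftrightarrow d$ sends $A_1=\{a,c\}$ to $\{b,d\}$ and $A_4=\{b\}$ to $\{a\}$. Neither is an approval set in the profile, so no relabeling of voters turns this map into an automorphism. The scores you would transfer happen to come out right, but not for the reason you give.

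The repair is immediate. The correct automorphism fixes $a$, $b$, and voters $3,4$, and exchanges only $c\leftrightarrow d$ together with voters $1\leftrightarrow 2$. It works because $A_3=\{b,c,d\}$ is invariant under $c\leftrightarrow d$, and the map also fixes $W=\{c,d\}$. Under it, removing $d$ and adding $j'$ corresponds to removing $c$ and adding the image of $j'$, so your four cases cover everything.

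Alternatively, compute the three remaining swaps directly:
\begin{itemize}
\item $\{a,c\}$ gives utilities $(2,1,1,0)$ and score $7/2$;
\item $\{b,c\}$ gives $(1,0,2,1)$ and score $7/2$;
\item $\{c,c\}$ gives $(2,0,2,0)$ and score $3$.
\end{itemize}
With either fix, the proof is complete.
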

\begin{proof}
    In \cref{ex:app:bpi} $W$ satisfies bounded PAV improvement as $\Delta^+_W(j) < {n}/{k} = 2$ for every $j$, but it is not locally optimal: $\PAV{}(W{+}\{a\} {-} \{c\}) > \PAV{}(W)$. In \cref{ex:app:pav}, $W$ is locally PAV optimal but not optimal as $\PAV{}(\{a, b\}) >\PAV{}(W)$.
\end{proof}

\begin{proposition}\label{app:appor:lindhprice}
    Lindahl priceability does not imply priceability.
\end{proposition}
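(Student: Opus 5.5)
We can reuse \cref{ex:app:bpi}, which already serves as a counterexample in \cref{app:appor:pav}. By \cref{thm:PAV-LP}, Lindahl priceability is equivalent to bounded PAV improvement in apportionment. So it suffices to check that the committee $W=\{a,b,c,c,c\}$ of that example satisfies bounded PAV improvement and is not priceable.

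\textbf{Step 1: bounded PAV improvement.} Here $n/k=2$. For $a$, the three supporters each have utility $1$, so $\Delta^+_W(a)=3/2<2$, and likewise $\Delta^+_W(b)=3/2$. For $c$, the four supporters each have utility $3$, so $\Delta^+_W(c)=4/4=1<2$. Hence $W$ is Lindahl priceable.

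\textbf{Step 2: no common price $p$ works.} Each voter approves exactly one party, so in any priceability witness voter $i$ pays only to their unique approved party. Let $L_j\ge 0$ be the total unspent budget of the $n_j$ supporters of $j$. The payment condition then reads $n_j - L_j = W(j)\,p$. The leftover condition says $L_j\le p$, and nonnegativity of leftovers gives $L_j\ge 0$. Together these give
\[ W(j)\,p \;\le\; n_j \;\le\; (W(j)+1)\,p \]
for every party $j$, which is the familiar divisor-method (D'Hondt) characterization.

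\textbf{Step 3: the constraints are contradictory.} For party $a$, with $n_a=3$ and $W(a)=1$, the upper inequality gives $3\le 2p$, so $p\ge 3/2$. For party $c$, with $n_c=4$ and $W(c)=3$, the lower inequality gives $3p\le 4$, so $p\le 4/3$. Since $4/3<3/2$, no $p$ exists, and $W$ is not priceable.

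The only step needing care is the reduction in Step 2: the constraint $p_{ij}=0$ for $j\notin A_i$ forces each voter's whole expenditure onto their single approved party, which turns the per-party budget equations into the two-sided bounds above. The remaining steps are arithmetic.
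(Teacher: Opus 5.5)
Your proposal is correct and matches the paper's proof. The paper uses the same \cref{ex:app:bpi}, gets Lindahl priceability from bounded PAV improvement via \cref{thm:PAV-LP}, and derives the same contradictory bounds $p \geq 3/2$ (from the leftover budget of $a$'s supporters) and $p \leq 4/3$ (from $c$'s three seats); your general bound $W(j)\,p \le n_j \le (W(j)+1)\,p$ simply spells out what the paper states directly for these two parties.
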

\begin{proof}
   In \cref{ex:app:bpi} $W$ satisfies bounded PAV improvement, and by \cref{thm:PAV-LP} it is Lindahl priceable, but is not priceable: the first three voters have a total budget of $3$ and purchase one seat. 
   Their remaining budget is therefore $3-p$, which must be at most $p$. 
   Hence, \(p\geq \frac{3}{2}\). 
   On the other hand, the last four voters have a total budget of $4$ and purchase three seats, implying $3p\leq 4$, and \(p\leq \frac{4}{3}\).
\end{proof}
\begin{proposition}
    Priceability does not imply EJR.
\end{proposition}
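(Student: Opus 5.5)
The plan is to give an explicit apportionment instance together with a committee that is priceable but violates EJR. Following the usual intuition from committee elections, voters who agree on a common party can each spend their whole budget on a party only they approve. The money is then fully exhausted, so the leftover-budget condition of priceability holds trivially. Meanwhile the cohesive group receives only one seat per voter instead of its quota.

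Concretely, I will take $n = k = 2$ and parties $P=\{x, y_1, y_2\}$, with approval sets $A_1=\{x,y_1\}$ and $A_2=\{x,y_2\}$. The committee is $W=\{y_1,y_2\}$.

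First I check the EJR violation using the apportionment definition from \cref{sec:app:pre}.
\begin{itemize}
\item Take $S=N$. Then $\bigcap_{i\in S}A_i=\{x\}\neq\emptyset$.
\item The quota is $q(S)=\lfloor 2\cdot 2/2\rfloor=2$.
\item Each voter has $u_i(W)=1<2$, so no member of $S$ meets the EJR guarantee.
\end{itemize}

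Second, I exhibit the priceability certificate: per-unit price $p=1$, payments $p_{1y_1}=p_{2y_2}=1$, and all other $p_{ij}=0$. I then verify the four conditions in turn.
\begin{itemize}
\item Payments go only to approved parties.
\item Each voter spends exactly $1\le 1$.
\item Party $y_1$ receives $1=W(y_1)\cdot p$, likewise $y_2$, and $x$ receives $0=W(x)\cdot p$.
\item Every voter has unspent budget $0$, so for each party the supporters' leftover is $0\le p$.
\end{itemize}
Hence $W$ is priceable but not EJR. By \cref{thm:efjr+}, it also fails EJR+ and FJR.

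There is no real obstacle here. The only point needing care is to use the simplified apportionment definitions from \cref{app:appor:def}, which are already justified as lifts, rather than the committee-election versions. With those definitions, each check is immediate.
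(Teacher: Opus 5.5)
Your counterexample is correct. With $n=k=2$, $A_1=\{x,y_1\}$, $A_2=\{x,y_2\}$ and $W=\{y_1,y_2\}$, the group $N$ has the common party $x$ and quota $q(N)=2$, while each voter has utility $1$. Your certificate with $p=1$ satisfies all four priceability conditions. Budgets are fully spent, so the leftover condition holds trivially.

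This is a different route from the paper, which constructs no instance. It argues that seq-Phragm\'en is priceable in committee elections \cite{limitsofwelf}, so its apportionment outputs are priceable via the clone embedding (\cref{prop:liftimply}). It then cites \textcite{paulapproval} for an apportionment instance where seq-Phragm\'en violates EJR.

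What each approach buys:
\begin{itemize}
\item The paper's argument is shorter on the page. It also gives a slightly more informative fact: even a natural rule that always guarantees priceability fails EJR. The cost is that it rests on two external results transported through the embedding.
\item Your version is self-contained and checkable in a few lines. It also isolates the mechanism: a cohesive group may spend its whole budget on parties that are not commonly approved. Priceability permits this, but EJR forbids leaving the group that under-represented.
\end{itemize}

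Your construction is essentially the discrete analog of the paper's portioning counterexample (\cref{app:por:ex:pjr}). As a consistency check, your $W$ still satisfies PJR: for $S=N$, the parties in $A_1\cup A_2$ receive $2\geq q(N)$ seats. This matches the known implication from priceability to PJR.
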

\begin{proof}
    Seq-Phragmén satisfies priceability for committee elections~\cite{limitsofwelf} and thus for apportionment by \cref{prop:liftimply}, yet it can violate EJR in the apportionment setting~\cite{paulapproval}.
\end{proof}
\section{Portioning}
A portioning instance is a tuple $(N, P, A)$. By default, $i$ ranges over the set of voters $N$ and $j$ over the set of parties $P$.
A \emph{portioning} is a vector $r \in \mathbb{R}^P_{\geq 0}$ with $\norm{r}_1 = 1$.
$i$'s utility for a vector $t \in \mathbb{R}_{\geq 0}^P$ is denoted by $u_i(t) \coloneqq \sum_{j \in A_i} t(j)$.
\subsection{Preliminaries}
Below comes the definition of the portioning concepts we mentioned in the body.
\begin{description}
    \item[Lindahl Equilibrium.] Let $r\in \mathbb{R}^P_{\geq 0}$ be a portioning and $(p_{ij})_{i, j}$ be nonnegative prices. Then $(r, (p_{ij})_{i, j})$ is a \emph{Lindahl equilibrium} if  $\sum_{j} r(j)\,p_{ij} \leq 1$ for every $i$, $\sum_i p_{ij} \leq n$ for every $j$ and $\sum_i p_{ij} = n$ if $r(j) > 0$, and for every voter $i$ and $t\in \mathbb{R}^P_{\geq 0}$ such that $\sum_{j} t(j) \, p_{ij} \leq 1$,  $u_i(t) \leq u_i(r)$.
    \item [Decomposability.] A portioning $r$ is \emph{decomposable} if we can write $r=\sum_{i} \delta_i$ where $\delta_{i}: \mathbb{R}^P_{\geq 0}$ is such that $\delta_{i}(j)=0$ whenever $j \notin A_i$, and $\sum_{j} \delta_{i}(j) =1/n$ for all $i$.\footnote{The original definition by \citet{BBP+21} considers a more general setting which allows agents (voters) to have arbitrary contributions $c_i \geq 0$. We assume all voters contribute equally, with  $1/n$.}
    \item [Group Fair Share.] A portioning $r$ satisfies \emph{group fair share} if for every $S\subseteq N$, $\sum_{j \in \bigcup_{i \in S} A_i} r(j) \geq |S|/n$.
    \item [Weak Core.] A portioning $r$ is in \emph{weak core} if there does not exist a nonempty coalition $S\subseteq N$ and $t\in \mathbb{R}_{\geq 0}^P$ such that $\lVert t \rVert_1 \leq |S|/n$ and $u_i(t) > u_i(r)$ for every $i\in S$.
    \item[Nash Portioning.] The portioning method that selects a portioning $r$ maximizing the Nash welfare $\text{NW}(r)=\prod_{i} u_i(r)$.
    \item [Majoritarian portioning.] It proceeds in rounds $\ell=1,2, \dots$. Initially, all parties and voters are active. In iteration $\ell$, we select the active party $j_\ell$ that is approved by the highest number of active voters. Let $N_\ell$ be the set of active voters who approve $j_\ell$. Then, set $r(j_\ell)$ to $|N_\ell|/n$, and mark $j_\ell$ and all voters in $N_\ell$ as inactive. If active voters remain, the next iteration is started; else, $r$ is returned.
\end{description}
\subsection{Lifting Apportionment Predicates}\label{app:por:lift}
In this part, we derive the definitions for the lift of apportionment predicates mentioned in \cref{app:appor:def}. We will need Dirichlet's theorem on simultaneous Diophantine approximation, which is given below.

\begin{theorem}\label{thm:drich}
    Let \(\alpha\in\mathbb{R}^d\) be a real vector. Then there exist a sequence of integer vectors \((z^\ell)_{\ell \in \mathbb{N}}\) where $z^\ell \in \mathbb{Z}^d$ and a sequence of positive integers $(k_\ell)_{\ell\in \mathbb{N}}$ such that
\[
\lim_{\ell\to\infty} \lVert k_\ell\alpha-z^\ell\rVert_\infty = \lim_{\ell\to\infty} \max_{1\leq j\leq d} |k_\ell\alpha_j - z^\ell_j|=0
\]
\end{theorem}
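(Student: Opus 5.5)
The plan is the classical pigeonhole argument, applied once for each precision level $Q=\ell$, followed by collecting the resulting pairs into sequences. Fix $Q\in\mathbb{N}$ with $Q\geq 1$. For each integer $m\in\{0,1,\dots,Q^d\}$, consider the vector of fractional parts $\{m\alpha\}\coloneqq m\alpha-\lfloor m\alpha\rfloor\in[0,1)^d$, with floor and fractional part taken coordinatewise. Partition $[0,1)^d$ into the $Q^d$ half-open boxes $\prod_{j=1}^d[a_j/Q,(a_j+1)/Q)$ with $a_j\in\{0,\dots,Q-1\}$. There are $Q^d+1$ points and only $Q^d$ boxes, so two indices $m_1<m_2$ give points in the same box.

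Next I would turn this pair into the desired approximation. Set $k\coloneqq m_2-m_1$, which is a positive integer with $1\leq k\leq Q^d$. Set $z\coloneqq\lfloor m_2\alpha\rfloor-\lfloor m_1\alpha\rfloor\in\mathbb{Z}^d$. Then
\[
k\alpha-z=\{m_2\alpha\}-\{m_1\alpha\}.
\]
Both fractional-part vectors lie in one box of side length $1/Q$, so each coordinate of this difference has absolute value strictly less than $1/Q$. Hence $\lVert k\alpha-z\rVert_\infty<1/Q$.

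Finally, I would take $Q=\ell$ and denote the resulting pair by $(k_\ell,z^\ell)$. This yields positive integers $k_\ell$ and integer vectors $z^\ell\in\mathbb{Z}^d$ with $\lVert k_\ell\alpha-z^\ell\rVert_\infty<1/\ell$, which tends to $0$ as $\ell\to\infty$, as claimed.

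There is no real obstacle here. The only points needing care are that the boxes must be half-open, so that every point of $[0,1)^d$ lies in exactly one box, and that $k_\ell$ must be strictly positive, which is guaranteed by $m_1<m_2$.

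The statement does not require $k_\ell\to\infty$, but the later lifting arguments may want it. If so, I would strengthen the construction as follows. When $\alpha\notin\mathbb{Q}^d$, some coordinate $\alpha_j$ is irrational, so $|k\alpha_j-z_j|>0$ for every pair; since this error must drop below $1/\ell$, the $k_\ell$ cannot stay bounded and must be unbounded, and a subsequence tends to infinity. When $\alpha\in\mathbb{Q}^d$ with common denominator $D$, I would simply take $k_\ell=\ell D$ and $z^\ell=\ell D\alpha$, which gives zero error for every $\ell$.
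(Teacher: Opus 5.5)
Your proof is correct and complete. It is the standard pigeonhole proof of Dirichlet's simultaneous approximation theorem, and it yields the quantitative bound $\lVert k_\ell\alpha - z^\ell\rVert_\infty < 1/\ell$ with $1 \le k_\ell \le \ell^d$.

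The paper does not prove this statement; it invokes it as Dirichlet's classical theorem, so there is no competing route to compare against.

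Your closing remark on forcing $k_\ell \to \infty$ is the same argument the paper gives separately in the proof of \cref{cor:lift}. There, the rational case uses a common denominator $D$ with $k_\ell = \ell D$. The irrational case argues that a bounded (hence infinitely often repeated) $k_\ell$ would force $k\alpha_j$ to be integral.

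Your irrational-case reasoning actually gives slightly more than you claim: $k_\ell \to \infty$ along the whole sequence, not just a subsequence. For each fixed $K$, the error over $1 \le k \le K$ has a positive lower bound, and the bound $1/\ell$ eventually drops below it.
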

\begin{corollary}\label{cor:lift}
    For every portioning \(r\in\mathbb{R}_{\geq 0}^P\), there exist integer-valued committees \(W_\ell: P \to \mathbb{N}\) for every $\ell\in \mathbb{N}$ such that $|W_\ell| \to \infty$, and $\max_{j}||W_\ell| r(j)-W_\ell(j)| \to 0$. In particular, we also get $W_\ell/|W_\ell| \to r$.
\end{corollary}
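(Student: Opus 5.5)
We apply \cref{thm:drich} with $d = |P|$ and $\alpha = r$, treating $r$ as a vector in $\mathbb{R}^P$. This gives positive integers $k_\ell$ and integer vectors $z^\ell$ with $\varepsilon_\ell \coloneqq \max_j |k_\ell r(j) - z^\ell_j| \to 0$. Two things must then be fixed before the $z^\ell$ can serve as committees. First, the entries $z^\ell_j$ may be negative. Second, $k_\ell$ need not tend to infinity, since Dirichlet's theorem as stated only guarantees the error tends to zero.

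For the second issue, we split into two cases. If $(k_\ell)$ is unbounded, we pass to a subsequence with $k_\ell \to \infty$. Otherwise $k_\ell$ takes some value $K$ infinitely often. Along those indices the integer vectors $z^\ell$ converge to $Kr$, which forces $Kr \in \mathbb{Z}^P$. In that case we abandon the Dirichlet sequence and take $k_\ell \coloneqq \ell K$ and $z^\ell \coloneqq \ell K r$, which are exact integer vectors with zero error. In either case we now have $k_\ell \to \infty$ and $\varepsilon_\ell \to 0$.

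For the first issue, we may assume $\varepsilon_\ell < 1/2$ by dropping finitely many terms. Then $z^\ell_j > k_\ell r(j) - 1/2 \geq -1/2$, so $z^\ell_j \geq 0$ because it is an integer. Moreover, if $r(j) = 0$ then $|z^\ell_j| < 1/2$ forces $z^\ell_j = 0$. We define $W_\ell(j) \coloneqq z^\ell_j$, which is a multiset on $P$.

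The remaining point, and the only step needing care, is that $|W_\ell|$ need not equal $k_\ell$. Summing over $j$ and using $\norm{r}_1 = 1$ gives $\bigl||W_\ell| - k_\ell\bigr| \leq |P|\,\varepsilon_\ell$. Since $|W_\ell|$ and $k_\ell$ are integers, once $|P|\,\varepsilon_\ell < 1$ we get $|W_\ell| = k_\ell$. Dropping finitely many more terms, we therefore have $|W_\ell| = k_\ell \to \infty$ and
\[
\max_j \bigl||W_\ell|\, r(j) - W_\ell(j)\bigr| = \varepsilon_\ell \to 0.
\]
Dividing by $|W_\ell|$ then gives $\max_j |r(j) - W_\ell(j)/|W_\ell|| \leq \varepsilon_\ell / k_\ell \to 0$, so $W_\ell / |W_\ell| \to r$. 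Should one prefer to avoid the integrality argument, one could instead absorb the discrepancy $|W_\ell| - k_\ell$ into a single party with $r(j) > 0$. The changed entry is at most $|P|\varepsilon_\ell$ from its target, which still tends to zero, so this also yields the claim.
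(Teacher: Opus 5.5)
Your proof is correct and follows essentially the paper's route. Both use Dirichlet approximation, the observation that $k_\ell$ can only stay bounded when $r$ is rational (in which case an exact integer multiple of $r$ works), and nonnegativity from small error plus integrality. The only difference is minor: you approximate all $|P|$ coordinates and use integrality of $|W_\ell|$ and $k_\ell$ to get $|W_\ell|=k_\ell$ once $|P|\varepsilon_\ell<1$. The paper instead approximates only the coordinates in $P\setminus\{j^*\}$ for some $j^*$ with $r(j^*)>0$, and sets $W_\ell(j^*)$ to the complement so that $|W_\ell|=k_\ell$ holds by construction; your closing alternative is exactly this.
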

\begin{proof}
    For $|P|=1$, then necessarily $r=1$, and the desired sequence  $(W_\ell)_{\ell\in \mathbb{N}}$ trivially exists. So assume $|P| > 1$ and fix a $j^*\in P$ satisfying $r(j^*) > 0$. By \cref{thm:drich}, there exist $W_\ell: P\setminus\{j^*\} \to \mathbb{Z}$ and positive integers $k_{\ell}$ such that $\max_{j\ne j^*}|k_\ell r(j)-W_\ell(j)| \to 0$. Let $W_\ell(j^*) = k_\ell - \sum_{j \ne j^*} W_\ell(j)$. We obviously get $|W_\ell| = k_\ell$, and also

    {
    \footnotesize
    \begin{align*}
    ||W_\ell| r(j^*)-W_\ell(j^*)|
    =
    \left||W_\ell|\left(1-\sum_{j\neq j^*}r(j)\right)
    -\left(|W_\ell|-\sum_{j\neq j^*}W_\ell(j)\right)\right|
    =
    \left|\sum_{j\neq j^*}\bigl(W_\ell(j)-|W_\ell| r(j)\bigr)\right|
    \leq \sum_{j\neq j^*}
\left|W_\ell(j)-|W_\ell| r(j)\right|
\end{align*}
}
which converges to 0 as \(P\) is finite and
\(
\left|W_\ell(j)-|W_\ell| r(j)\right| \to 0
\) for every $j\ne j^*$.
    
    We now show that we may choose the sequence so that $|W_\ell| \to \infty$. If all coordinates $r(j)$ for $j\ne j^*$ are rational, let $D$ be a common denominator and $k_\ell = \ell D$. Then $W_\ell(j) = k_\ell r(j)$ for $j\ne j^*$ will be integral and satisfies $\max_{j\ne j^*}|k_\ell r(j)-W_\ell(j)|=0$. 
    Otherwise, if one of the coordinates $r(j)$ for $j\ne j^*$ is irrational, assume for the sake of contradiction that $|W_\ell|\not\to \infty$. 
    Then there exists some integer $k$ such that $|W_\ell|=k_\ell = k$ for infinitely many $\ell$. 
    Therefore, on this subsequence $\max_{j\ne j^*}|k r(j)-W_\ell(j)| \to 0$. As $W_\ell(j)$ is integral, $k r(j)$ must be integral for every $j\ne j^*$, which implies all coordinates $j\ne j^*$ are rational and leads to a contradiction. Therefore we showed $|W_\ell| \to \infty$.

    Also, as $|W_\ell| r(j) \geq 0$, $W_\ell(j)$ is integral, and $||W_\ell| r(j) - W_\ell(j)| \to 0$, we must have $W_\ell(j) \geq 0$ for sufficiently large $\ell$. Hence, after discarding finitely many initial terms, we may assume that $W_\ell \geq 0$ for every $\ell$.

    Finally, we have $\lim_{\ell\to \infty} \lVert r - W_\ell/|W_\ell|\rVert = \lim_{\ell\to \infty}\frac{1}{|W_\ell|}\lVert |W_\ell| r - W_\ell\rVert = 0$, implying $W_\ell/|W_\ell| \to r$.
\end{proof}

Observe that, for committees \(W_\ell\) obtained by applying \cref{cor:lift}, we have
\(
\varepsilon_\ell:=\max_{j}| |W_\ell| r(j)-W_\ell(j)|\to0.
\)
Therefore,  for every voter \(i\),
\begin{equation}\label{eq:lift1}
    |u_i(W_\ell)-|W_\ell| u_i(r)|
\leq |A_i|\varepsilon_\ell
< 1,
\end{equation}
and, for every \(S\subseteq N\),
\begin{equation}\label{eq:lift2}
    \left|\sum_{j\in \bigcup_{i\in S}A_i}W_\ell(j)-|W_\ell|\sum_{j\in \bigcup_{i\in S}A_i}r(j)\right|
\leq |\bigcup_{i\in S}A_i|\varepsilon_\ell
< 1
\end{equation}
for all sufficiently large $\ell$.

\begin{proposition}
    Given an instance $(N, P, A)$, every portioning \(r\) satisfies the lift of JR.
\end{proposition}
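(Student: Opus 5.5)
We must show that for any portioning $r$ there is a sequence of committees $W_\ell$ satisfying JR on $(N,P,A,|W_\ell|)$ with $|W_\ell|\to\infty$ and $W_\ell/|W_\ell|\to r$. The plan is to start from the sequence given by \cref{cor:lift} and repair it by moving a bounded number of seats, so that JR holds while the normalized limit is unchanged.

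\textbf{Step 1: the approximating sequence.} Apply \cref{cor:lift} to $r$ to obtain committees $W'_\ell$ with $k_\ell:=|W'_\ell|\to\infty$ and $W'_\ell/k_\ell\to r$.

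\textbf{Step 2: the repair.} Consider the JR requirement at size $k_\ell$. Any cohesive group $S$ with $q(S)\ge 1$ has $|S|\ge n/k_\ell$, and the quota condition adds nothing for a nonempty $S$. So JR at size $k_\ell$ asks exactly that every cohesive group containing no represented voter has size less than $n/k_\ell$. For large $k_\ell$ this is a demand on a bounded number of groups; in the limit it says every voter with $u_i=0$ must be covered.

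We therefore fix the defect directly. For each voter $i$, choose one party $j_i\in A_i$, which exists because every voter approves some party. Let $R$ be the set of these parties, so $|R|\le |P|$. Build $W_\ell$ from $W'_\ell$ by first removing up to $|P|$ seats. Remove them from parties with $W'_\ell(j)\ge 1$; for large $\ell$ some party has $r(j)>0$ and hence many seats, so enough seats are available. Then add one seat for each party in $R$ not already present. After this, the size is still $k_\ell$, possibly after padding so it is exact.

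Now every voter $i$ has $u_i(W_\ell)\ge 1$, so JR holds trivially. The total modification is at most $2|P|$ seats, so $\|W_\ell/k_\ell-W'_\ell/k_\ell\|_1\le 2|P|/k_\ell\to0$ and $W_\ell/k_\ell\to r$.

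\textbf{Step 3: bookkeeping.} One detail needs care: the removal step must not itself destroy a seat that some voter relies on. This causes no trouble, because all removals happen first and the additions in $R$ come afterwards, guaranteeing every voter $u_i\ge1$ at the end. The remaining work is only to ensure nonnegativity and that $|W_\ell|=k_\ell$. For $\ell$ large, some party with $r(j)>0$ has $W'_\ell(j)\ge 2|P|$, and we take all removals from that party.

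I expect no real obstacle here. The one point to state carefully is that JR is implied by every voter having positive utility, which holds directly by definition since the conclusion of JR is $u_i(W)\ge1$ for some $i\in S$.
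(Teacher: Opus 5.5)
Your proof is correct, but it takes a longer route than the paper. The paper never uses \cref{cor:lift}. It sets $W_\ell(j)=\lfloor r(j)\ell+1\rfloor$ directly. Then every party gets at least one seat, so every voter has $u_i(W_\ell)\ge 1$ and JR holds. Moreover $\ell\le|W_\ell|\le \ell+|P|$, and $W_\ell/|W_\ell|\to r$ follows by squeezing.

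This works because \cref{def:lifting} checks the predicate on $(N,P,A,|W_\ell|)$ for whatever size $|W_\ell|$ turns out to be. Nothing forces the size to stay at $k_\ell$. Your removal step exists only to meet that self-imposed constraint, and so does the bookkeeping that goes with it: picking a heavy party $j^*$ with $W'_\ell(j^*)\ge 2|P|$, and matching the number of removals to the number of parties of $R$ absent from $W'_\ell$. You could have simply added the missing seats. What your route gives you is a reusable perturbation principle: changing a convergent sequence by $O(1)$ seats does not change its normalized limit. That is useful when you must start from a specific well-behaved sequence. For JR, where the fix is just ``give every voter an approved seat,'' the paper's one-line construction is more elementary and avoids Dirichlet's theorem entirely.

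Two small points. Your remark that the quota condition ``adds nothing for a nonempty $S$'' holds only once $k_\ell\ge n$. This is harmless, since you only use that every voter having positive utility implies JR. Also, Step~2 as worded (remove seats from arbitrary parties with $W'_\ell(j)\ge 1$, ``possibly after padding'') could remove a seat some voter relies on and leave the count off. It is your Step~3 refinement that makes the size come out exactly $k_\ell$ with no padding: take all removals from the one heavy party, and remove exactly as many seats as there are absent parties of $R$.
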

\begin{proof}
    Consider an arbitrary portioning $r$. For every $\ell\in\mathbb{N}$, define a committee $W_\ell$ by $W_\ell(j)=\lfloor r(j)\ell+1\rfloor$. 
        Note that $|W_\ell|=\sum_{j}\lfloor r(j)\ell+1\rfloor\geq\sum_{j}r(j)\ell=\ell$. Therefore $\lim_{\ell\to\infty}|W_\ell|=+\infty$. Since $W_\ell(j)\geq1$ for every $j$, it satisfies JR on the instance $(N,P,A,|W_\ell|)$. Moreover, $|W_\ell|\leq\sum_{j}(r(j)\ell+1)=\ell+|P|$, and hence, for a fixed $j$, we have $$\frac{\lfloor r(j)\ell+1\rfloor}{\ell+|P|}\leq\frac{W_\ell(j)}{|W_\ell|}\leq\frac{\lfloor r(j)\ell+1\rfloor}{\ell}.$$
        As $\ell\to\infty$, both the left- and right-hand sides converge to $r(j)$. Hence $\lim_{\ell\to\infty}W_\ell(j)/|W_\ell|=r(j)$ for every $j$, and as $P$ is finite, we have $\lim_{\ell\to\infty}W_\ell/|W_\ell|=r$.
\end{proof}

\begin{proposition}
    Given an instance $(N, P, A)$, portioning \(r\) satisfies the lift of \emph{EJR/EJR+} if, for every nonempty $S \subseteq N$ with $\bigcap_{i \in S} A_i \neq \emptyset$, there exists a voter $i \in S$ with $u_i(r) \geq |S|/n$.
\end{proposition}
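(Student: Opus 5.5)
The plan is to prove both directions of the characterization: $r$ satisfies the lift of EJR/EJR+ if and only if, for every nonempty $S \subseteq N$ with $\bigcap_{i\in S} A_i \neq \emptyset$, some $i \in S$ has $u_i(r) \geq |S|/n$. Both directions rely on the fact that the quota $q(S) = \lfloor |S| k/n \rfloor$, divided by $k$, converges to $|S|/n$. The sufficiency direction additionally uses the committees from \cref{cor:lift} and the error bound \eqref{eq:lift1}.

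\emph{Necessity.} Suppose $r$ satisfies the lift, and let $W_\ell$ be the committees from \cref{def:lifting}, writing $k_\ell \coloneqq |W_\ell| \to \infty$. Fix a nonempty $S$ with a commonly approved party. For each $\ell$, EJR gives some $i_\ell \in S$ with
\[
u_{i_\ell}(W_\ell) \geq \lfloor |S| k_\ell / n \rfloor \geq |S| k_\ell/n - 1 .
\]
Since $S$ is finite, some voter $i$ equals $i_\ell$ for infinitely many $\ell$, and we pass to that subsequence. Dividing by $k_\ell$ gives $u_i(W_\ell/k_\ell) \geq |S|/n - 1/k_\ell$. The utility $u_i$ is linear and hence continuous, so letting $\ell \to \infty$ yields $u_i(r) \geq |S|/n$.

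\emph{Sufficiency.} Suppose $r$ satisfies the stated condition, and take the committees $W_\ell$ from \cref{cor:lift}. These have $k_\ell \coloneqq |W_\ell| \to \infty$ and $W_\ell/k_\ell \to r$, so it only remains to show that $W_\ell$ satisfies EJR for all sufficiently large $\ell$; we then discard the finitely many initial terms.

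Fix a cohesive $S$ and a voter $i \in S$ with $u_i(r) \geq |S|/n$. By \eqref{eq:lift1}, for large $\ell$ the integer $u_i(W_\ell)$ satisfies
\[
u_i(W_\ell) > k_\ell u_i(r) - 1 \geq x - 1, \qquad \text{where } x \coloneqq |S| k_\ell/n .
\]
The key integrality step is that the smallest integer strictly greater than $x-1$ is exactly $\lfloor x \rfloor$. This holds whether or not $x$ is an integer: if $x$ is an integer it is $x$ itself, and otherwise it is $\lfloor x \rfloor$. Hence $u_i(W_\ell) \geq q(S)$ for that $S$.

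There are only finitely many sets $S$, so a single threshold $\ell_0$ works for all of them simultaneously. For $\ell \geq \ell_0$ every $W_\ell$ therefore satisfies EJR, which proves sufficiency.

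The main obstacle is this integrality step. The additive error in \eqref{eq:lift1} must be strictly less than $1$, and the bound must be converted into the floor quota exactly rather than losing one seat. The strictness is exactly what \eqref{eq:lift1} provides for large $\ell$, which is why the argument uses the Dirichlet-based committees of \cref{cor:lift} rather than naive rounding.
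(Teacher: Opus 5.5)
Your proposal is correct and follows essentially the same route as the paper. For necessity you pigeonhole a voter in $S$ over the sequence and pass to the limit. For sufficiency you take the committees of \cref{cor:lift}, use \eqref{eq:lift1} to obtain an additive error strictly below $1$, apply integrality to recover the floor quota, and take a uniform threshold over the finitely many sets $S$.
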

\begin{proof}
    Assume we have a sequence of committees $(W_\ell)_{\ell\in\mathbb{N}}$ such that $W_\ell$ satisfies EJR on $(N,P,A,|W_\ell|)$, $|W_\ell|\to\infty$, and $W_\ell/|W_\ell|\to r$. 
        Consider an arbitrary nonempty set $S\subseteq N$ with $\bigcap_{i\in S}A_i\neq\emptyset$. 
        As every $W_\ell$ is EJR and $S$ is finite, there exists an $i^*\in S$ satisfying $u_{i^*}(W_\ell)\geq\lfloor |S||W_\ell|/n\rfloor$ for infinitely many $\ell\in\mathbb{N}$. 
        Passing to this subsequence, we have $$u_{i^*}(r)=\lim_{\ell\to\infty}u_{i^*}(W_\ell/|W_\ell|)\geq\lim_{\ell\to\infty}\lfloor |S||W_\ell|/n\rfloor/|W_\ell|=|S|/n.$$

        Conversely, assume that $r$ satisfies the proposed definition. Apply \cref{cor:lift} to find committees $W_\ell: P \to \mathbb{N}$. By \cref{eq:lift1} for all sufficiently large $\ell$, $|u_i(W_\ell)-|W_\ell| u_i(r)|<1$ for every $i$.
        Consider an arbitrary nonempty set $S\subseteq N$ with $\bigcap_{i\in S}A_i\neq\emptyset$, and let $i_S\in S$ be such that $u_{i_S}(r)\geq |S|/n$. Then $u_{i_S}(W_\ell)>|W_\ell| u_{i_S}(r)-1\geq |W_\ell||S|/n-1$ for all sufficiently large $\ell$. Since $u_{i_S}(W_\ell)$ is integral, this implies $u_{i_S}(W_\ell)\geq\lfloor |W_\ell||S|/n\rfloor$. As there are only finitely many sets $S\subseteq N$, $W_\ell$ satisfies EJR for all sufficiently large $\ell$.
\end{proof}

\begin{proposition}
    Given an instance $(N, P, A)$, portioning \(r\) satisfies the lift of \emph{FJR} if, for every nonempty $S \subseteq N$, every $t \in \mathbb{R}_{\geq 0}^P$ with $\lVert t\rVert \leq |S|/n$, and every $\beta \geq 0$ such that $u_i(t) \geq \beta$ for all $i \in S$, there is an $i \in S$ such that $u_i(r) \geq \beta$.
\end{proposition}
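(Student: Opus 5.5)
Both directions follow the template of the EJR/EJR+ lift. A shortcut is available: by \cref{thm:efjr+}, FJR and EJR coincide in apportionment, so their lifts coincide. It then suffices to show that the proposed continuous FJR condition is equivalent to the already-characterized lifted EJR condition. That condition says: for every nonempty $S$ with $\bigcap_{i\in S}A_i\neq\emptyset$, some $i\in S$ has $u_i(r)\ge |S|/n$.

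\emph{Continuous FJR implies continuous EJR.} Take $S$ with a commonly approved party $j$. Let $t$ place mass $|S|/n$ on $j$ and set $\beta=|S|/n$. Then $\|t\|_1\le|S|/n$ and $u_i(t)=\beta$ for all $i\in S$, so continuous FJR gives some $i\in S$ with $u_i(r)\ge|S|/n$.

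\emph{Continuous EJR implies continuous FJR.} This mirrors the averaging argument in the proof of \cref{thm:efjr+}. Take $S$, $t$ with $\|t\|_1\le |S|/n$, and $\beta>0$ with $u_i(t)\ge\beta$ for all $i\in S$; the case $\beta=0$ is trivial. Summing gives
\[
\sum_j t(j)\,|\{i\in S: j\in A_i\}|\ \ge\ \beta|S|.
\]
Dividing by $\|t\|_1$, some $j$ with $t(j)>0$ has supporter set $S_j\subseteq S$ of size at least $\beta|S|/\|t\|_1\ge \beta n$. The group $S_j$ is cohesive, since all its members approve $j$. Continuous EJR then yields $i\in S_j\subseteq S$ with $u_i(r)\ge|S_j|/n\ge\beta$.

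To keep the argument self-contained, one could instead argue directly from the definition of the lift. In the \emph{necessity} direction, take EJR/FJR committees $W_\ell\to r$ and a violating $(S,t,\beta)$. Approximate $t$ by integer multisets $T_\ell$ with $|T_\ell|\le q_\ell(S)$ and $u_i(T_\ell)\ge \beta_\ell$, where $\beta_\ell=\lceil(\beta-\delta)|W_\ell|\rceil$ for a small slack $\delta$. Passing to a subsequence where a fixed $i$ is the FJR witness and taking limits gives $u_i(r)\ge\beta-\delta$ for every $\delta$. The \emph{sufficiency} direction uses \cref{cor:lift} together with \cref{eq:lift1}, as in the EJR proof.

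The main obstacle in this direct route is the rounding near the boundary. When $u_i(r)$ equals $\beta$ exactly, integer effects can break the implication, so the slack must be handled carefully, for instance by shrinking $t$ slightly so that the quota floor is respected. Routing the proof through \cref{thm:efjr+} together with the EJR-lift characterization avoids this issue entirely, so that is the route I would take.
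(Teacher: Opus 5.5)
Your proof is correct, but it takes a different route from the paper's.

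\textbf{The paper's route.} The paper argues directly from \cref{def:lifting}.
\begin{itemize}
\item For necessity, it floors $t$ to $T_\ell(j)=\lfloor t(j)|W_\ell|\rfloor$. By integrality this automatically gives $|T_\ell|\le q(S)$, and it costs each voter at most $|A_i|$ utility. It then applies FJR to $W_\ell$ and passes to a subsequence with a fixed witness.
\item For sufficiency, it takes the committees from \cref{cor:lift} and sets $t=T/|W_\ell|$ and $\beta=\min_{i\in S}u_i(T)$. It concludes from $u_i(W_\ell)>|W_\ell|u_i(r)-1\ge\beta-1$ together with integrality.
\end{itemize}

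\textbf{Your route.} You observe that the lift depends only on which committees satisfy a predicate, so \cref{thm:efjr+} gives $\lift{\text{FJR}}=\lift{\text{EJR}}$. You then show that the continuous FJR condition is equivalent to the already-characterized continuous EJR condition. Both directions of that equivalence are right. The point mass $(|S|/n)e_j$ handles one direction. For the other, weighted averaging yields a party $j$ with $t(j)>0$ whose supporter set satisfies $|S_j|\ge\beta|S|/\lVert t\rVert_1\ge\beta n>0$, so $S_j$ is nonempty and cohesive.

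\textbf{Trade-offs.} Your argument is shorter and needs no limit or rounding of its own. It also makes explicit that the apportionment collapse of \cref{thm:efjr+} is mirrored by a purely continuous averaging argument. The paper's argument is self-contained and does not lean on that collapse. It also follows the template the paper reuses for the other lifts, including core stability, where no collapse is available to shortcut through.

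\textbf{The boundary concern.} The obstacle you anticipate in the direct route does not actually arise. Flooring already respects the quota. The utility loss is bounded by a constant and vanishes after normalizing by $|W_\ell|$. In the converse, the strict inequality in \eqref{eq:lift1} plus integrality of $u_i(W_\ell)$ and $\beta$ absorbs the case $u_i(r)=\beta$. So no slack $\delta$ is needed.
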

\begin{proof}
        Assume we have a sequence of committees $(W_\ell)_{\ell\in\mathbb{N}}$ such that $W_\ell$ satisfies FJR on $(N,P,A,|W_\ell|)$, $|W_\ell|\to\infty$, and $W_\ell/|W_\ell|\to r$. 
        Consider an arbitrary nonempty set $S\subseteq N$ and $t\in\mathbb{R}_{\geq0}^P$ such that $\lVert t\rVert_1\leq |S|/n$ and $u_i(t)\geq\beta$ for every $i\in S$. 
        For every $\ell\in\mathbb{N}$, let $T_\ell$ be the integer-valued function defined by $T_\ell(j)=\lfloor t(j)|W_\ell|\rfloor$. Note that $|T_\ell|\leq |W_\ell|\lVert t\rVert_1\leq |S||W_\ell|/n$, and as $T_\ell$ is integer-valued, $|T_\ell|\leq\lfloor |S||W_\ell|/n\rfloor$. Moreover, for every $i\in S$, we have $$u_i(T_\ell)\geq \sum_{j\in A_i} (t(j)|W_\ell| - 1) =   |W_\ell|u_i(t)-|A_i|\geq\beta|W_\ell|-\max_{h\in S}|A_h|.$$

        Since $W_\ell$ satisfies FJR, there is an $i_\ell\in S$ such that $u_{i_\ell}(W_\ell)\geq\beta|W_\ell|-\max_{h\in S}|A_h|$. As $S$ is finite, there exists an $i^*\in S$ such that $i_\ell=i^*$ for infinitely many $\ell$. Passing to this subsequence, we have 
        $$u_{i^*}(r)=\lim_{\ell\to\infty}u_{i^*}(W_\ell/|W_\ell|)\geq\lim_{\ell\to\infty}(\beta|W_\ell|-\max_{h\in S}|A_h|)/|W_\ell|=\beta.$$ As $S$, $t$, and $\beta$ were chosen arbitrarily, $r$ satisfies the proposed definition for the lift of FJR.

        Conversely, assume that $r$ satisfies the proposed definition for FJR. Apply \cref{cor:lift} to find committees $W_\ell: P \to \mathbb{N}$. By \cref{eq:lift1} for all sufficiently large $\ell$, $|u_i(W_\ell)-|W_\ell| u_i(r)|<1$ for every $i$.
        Fix such an $\ell$, a nonempty set $S\subseteq N$, and a function $T:P\to\mathbb{N}$ such that $|T|\leq\lfloor |W_\ell||S|/n\rfloor$. Let $\beta=\min_{i\in S}u_i(T)$ and $t=T/|W_\ell|$. Then $\lVert t\rVert_1\leq |S|/n$ and $u_i(t)\geq\beta/|W_\ell|$ for every $i\in S$. Since $r$ satisfies the proposed definition, there exists an $i\in S$ such that $u_i(r)\geq\beta/|W_\ell|$. Therefore $u_i(W_\ell)>|W_\ell| u_i(r)-1\geq\beta-1$. Since $u_i(W_\ell)$ and $\beta$ are integral, $u_i(W_\ell)\geq\beta$. Thus $W_\ell$ satisfies FJR for all sufficiently large $\ell$, completing the proof.
\end{proof}

\begin{proposition}
     Given an instance $(N, P, A)$, portioning \(r\) satisfies the lift of \emph{PJR/PJR+} if, for every nonempty $S \subseteq N$ with $\bigcap_{i \in S} A_i \neq \emptyset$, it holds that $\sum_{j \in \bigcup_{i\in S} A_i} r(j) \geq |S|/n$.
\end{proposition}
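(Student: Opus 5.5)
We prove the two directions separately, mirroring the proof for the lift of EJR/EJR+. Throughout, write $U_S \coloneqq \bigcup_{i\in S} A_i$. PJR in apportionment says that for every nonempty $S$ with $\bigcap_{i\in S}A_i\neq\emptyset$, we have $\sum_{j\in U_S} W(j)\geq \lfloor |S|\,k/n\rfloor$.

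\emph{Necessity.} Suppose $(W_\ell)_\ell$ is a sequence of PJR committees on $(N,P,A,|W_\ell|)$ with $|W_\ell|\to\infty$ and $W_\ell/|W_\ell|\to r$. Fix a nonempty $S$ with $\bigcap_{i\in S}A_i\neq\emptyset$. For every $\ell$, PJR gives
\[ \sum_{j\in U_S} W_\ell(j) \geq \lfloor |S|\,|W_\ell|/n\rfloor > |S|\,|W_\ell|/n - 1. \]
Divide by $|W_\ell|$ and let $\ell\to\infty$. The left side converges to $\sum_{j\in U_S} r(j)$, because $U_S$ is finite and the convergence is coordinatewise. The right side converges to $|S|/n$. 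This gives the claimed inequality. Unlike the EJR case, no subsequence extraction is needed, since the guarantee is a single aggregate quantity rather than a statement about some voter.

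\emph{Sufficiency.} Suppose $r$ satisfies the stated condition. Apply \cref{cor:lift} to obtain committees $W_\ell$ with $|W_\ell|\to\infty$ and $W_\ell/|W_\ell|\to r$. By \cref{eq:lift2}, for all sufficiently large $\ell$ and every $S$,
\[ \Bigl|\sum_{j\in U_S} W_\ell(j) - |W_\ell|\sum_{j\in U_S} r(j)\Bigr| < 1. \]
A single threshold for $\ell$ works for all $S$ because there are finitely many $S$. For such $\ell$ and any qualifying $S$,
\[ \sum_{j\in U_S}W_\ell(j) > |W_\ell|\sum_{j\in U_S}r(j) - 1 \geq |W_\ell|\,|S|/n - 1. \]
The left side is an integer strictly greater than $|W_\ell||S|/n-1$. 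Any integer exceeding $x-1$ is at least $\lfloor x\rfloor$, so the left side is at least $\lfloor |W_\ell||S|/n\rfloor = q(S)$. Hence $W_\ell$ satisfies PJR for all large $\ell$. Discarding the finitely many earlier terms yields a witnessing sequence, so $r$ satisfies the lift of PJR.

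The only delicate step is this integrality rounding together with the uniformity over $S$; both are handled by \cref{eq:lift2} and the finiteness of $N$. Since PJR and PJR+ coincide in apportionment (\cref{prop:FPJR+}), the same characterization applies to the lift of PJR+.
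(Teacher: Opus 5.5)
Your proposal is correct and follows essentially the same route as the paper: passing to the limit in the PJR inequality for necessity, and for sufficiency applying \cref{cor:lift} and \cref{eq:lift2} followed by the integrality rounding to $\lfloor |W_\ell||S|/n\rfloor$. Your explicit remarks on uniformity over the finitely many $S$ and on discarding initial terms are details the paper leaves implicit.
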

\begin{proof}
    Assume we have a sequence of committees $(W_\ell)_{\ell\in\mathbb{N}}$ such that $W_\ell$ satisfies PJR on $(N,P,A,|W_\ell|)$, $|W_\ell|\to\infty$, and $W_\ell/|W_\ell|\to r$. Consider an arbitrary nonempty set $S\subseteq N$ with $\cap_{i\in S}A_i\neq\emptyset$. As every $W_\ell$ is PJR, it holds that $\sum_{j\in\cup_{i\in S}A_i}W_\ell(j)\geq\lfloor |S||W_\ell|/n\rfloor$ for every $\ell\in\mathbb{N}$. Therefore $\sum_{j\in\cup_{i\in S}A_i}r(j)=\lim_{\ell\to\infty}\sum_{j\in\cup_{i\in S}A_i}W_\ell(j)/|W_\ell|\geq |S|/n$.

        Conversely, assume that $r$ satisfies the proposed definition. 
        Apply \cref{cor:lift} to find committees $W_\ell: P \to \mathbb{N}$. 
        By \cref{eq:lift2} for all sufficiently large $\ell$, $ |\sum_{j\in \cup_{i\in S}A_i}W_\ell(j)-|W_\ell|\sum_{j\in \cup_{i\in S}A_i}r(j)|<1$ for every $S \subseteq N$.

        Consider a nonempty set $S\subseteq N$ with $\cap_{i\in S}A_i\neq\emptyset$. Then $\sum_{j\in \cup_{i\in S}A_i}W_\ell(j)>|W_\ell|\sum_{j\in \cup_{i\in S}A_i}r(j)-1\geq |W_\ell||S|/n-1$. Since the left-hand side is integral, $\sum_{j\in \cup_{i\in S}A_i}W_\ell(j)\geq\lfloor |W_\ell||S|/n\rfloor$. Thus $W_\ell$ satisfies PJR for all sufficiently large $\ell$.
\end{proof}

\begin{proposition}
    Given an instance $(N, P, A)$, portioning \(r\) satisfies the lift of \emph{FPJR} if, for every nonempty $S \subseteq N$, every $t \in \mathbb{R}_{\geq 0}^P$ with $\lVert t\rVert_1 \leq |S|/n$, and every $\beta \geq 0$ such that $u_i(t) \geq \beta$ for all $i \in S$, it holds that $\sum_{j \in \bigcup_{i\in S} A_i} r(j) \geq \beta$.
\end{proposition}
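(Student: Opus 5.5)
We prove the two directions separately, closely following the FJR lifting proof just above. The only change is that the conclusion is now the aggregate quantity $\sum_{j\in\bigcup_{i\in S}A_i}W(j)$ rather than an individual utility. We handle this with \cref{eq:lift2} in place of \cref{eq:lift1}.

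\emph{Necessity.} Suppose $(W_\ell)_{\ell}$ are FPJR committees on $(N,P,A,|W_\ell|)$ with $|W_\ell|\to\infty$ and $W_\ell/|W_\ell|\to r$. Fix $S$, $t$ with $\lVert t\rVert_1\le |S|/n$, and $\beta\ge 0$ with $u_i(t)\ge\beta$ for all $i\in S$. Set $T_\ell(j)=\lfloor t(j)|W_\ell|\rfloor$. As in the FJR proof, $|T_\ell|\le\lfloor |S||W_\ell|/n\rfloor$, and for every $i\in S$,
\[
u_i(T_\ell)\ge \beta|W_\ell|-\max_{h\in S}|A_h| \eqqcolon \beta_\ell .
\]
If $\beta_\ell$ is not an integer, we replace it by $\lceil \beta_\ell\rceil$ when it is at most $\min_{i\in S}u_i(T_\ell)$; otherwise we simply use $\min_{i\in S}u_i(T_\ell)$, which is an integer at least $\beta_\ell$. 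FPJR of $W_\ell$ then gives
\[
\sum_{j\in\bigcup_{i\in S}A_i}W_\ell(j)\ge \beta|W_\ell|-\max_{h\in S}|A_h| .
\]
Dividing by $|W_\ell|$ and letting $\ell\to\infty$ yields $\sum_{j\in\bigcup_{i\in S}A_i}r(j)\ge\beta$. Unlike the FJR case, no subsequence extraction over voters is needed, because the conclusion does not single out a voter.

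\emph{Sufficiency.} Assume $r$ satisfies the stated condition. Take $W_\ell$ from \cref{cor:lift}, and choose $\ell$ large enough that \cref{eq:lift2} holds with error $<1$ for every $S\subseteq N$; there are finitely many such $S$. Fix $S$, a multiset $T$ with $|T|\le\lfloor |W_\ell||S|/n\rfloor$, and an integer $\beta$ with $u_i(T)\ge\beta$ for all $i\in S$. Put $t=T/|W_\ell|$. Then $\lVert t\rVert_1\le |S|/n$ and $u_i(t)\ge\beta/|W_\ell|$ for all $i\in S$, so the condition gives $\sum_{j\in\bigcup_{i\in S}A_i}r(j)\ge\beta/|W_\ell|$. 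By \cref{eq:lift2},
\[
\sum_{j\in\bigcup_{i\in S}A_i}W_\ell(j) > \beta-1,
\]
and since both sides are integers, the sum is at least $\beta$. Hence $W_\ell$ satisfies FPJR for all large $\ell$, and $r$ satisfies the lift.

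The only delicate points are the integrality bookkeeping in the necessity direction (the integer $\beta$ demanded by the apportionment definition) and making the choice of $\ell$ uniform over all $S$ in the sufficiency direction. Both are handled as indicated and mirror the FJR argument.
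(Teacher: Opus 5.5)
Your proposal is correct and follows the paper's proof in both directions: for necessity you round $t|W_\ell|$ down to $T_\ell$ and apply FPJR, and for sufficiency you take the committees from \cref{cor:lift} and combine \cref{eq:lift2} with integrality. Your integrality bookkeeping in the necessity direction is slightly more careful than the paper's, which applies FPJR with the possibly non-integer bound directly; the case split is unnecessary, though, since $\min_{i\in S}u_i(T_\ell)\in\mathbb{N}$ is always at least $\beta_\ell$ and you can apply FPJR with it directly.
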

\begin{proof}
    
        Assume we have a sequence of committees $(W_\ell)_{\ell\in\mathbb{N}}$ such that $W_\ell$ satisfies FPJR on $(N,P,A,|W_\ell|)$, $|W_\ell|\to\infty$, and $W_\ell/|W_\ell|\to r$. Consider an arbitrary nonempty set $S\subseteq N$ and $t\in\mathbb{R}_{\geq0}^P$ such that $\lVert t\rVert_1\leq |S|/n$ and $u_i(t)\geq\beta$ for every $i\in S$. For every $\ell\in\mathbb{N}$, let $T_\ell(j)=\lfloor t(j)|W_\ell|\rfloor$. As in the case of FJR, $|T_\ell|\leq\lfloor |S||W_\ell|/n\rfloor$ and $u_i(T_\ell)\geq\beta|W_\ell|-\max_{h\in S}|A_h|$ for every $i\in S$. Since $W_\ell$ satisfies FPJR, $\sum_{j\in\bigcup_{i\in S}A_i}W_\ell(j)\geq\beta|W_\ell|-\max_{h\in S}|A_h|$, and
        $$\sum_{j\in\bigcup_{i\in S}A_i}r(j) = \lim_{\ell\to \infty} \sum_{j\in\bigcup_{i\in S}A_i}W_\ell(j)/|W_\ell| \geq\lim_{\ell\to \infty}(\beta|W_\ell|-\max_{h\in S}|A_h|)/|W_\ell| = \beta$$
        Thus $r$ satisfies the proposed definition for the lift of FPJR.

        Conversely, assume that $r$ satisfies the proposed definition. 
        Apply \cref{cor:lift} to find committees $W_\ell: P \to \mathbb{N}$. By \cref{eq:lift2} for all sufficiently large $\ell$, $ |\sum_{j\in \bigcup_{i\in S}A_i}W_\ell(j)-|W_\ell|\sum_{j\in \bigcup_{i\in S}A_i}r(j)|<1$ for every $S \subseteq N$.

        Fix such an $\ell$, a nonempty set $S\subseteq N$, and an integer-valued function $T:P\to\mathbb{N}$ such that $|T|\leq\lfloor |W_\ell||S|/n\rfloor$. Let $\beta=\min_{i\in S}u_i(T)$ and $t=T/|W_\ell|$. Then $\lVert t\rVert_1\leq |S|/n$ and $u_i(t)\geq\beta/|W_\ell|$ for every $i\in S$. Since $r$ satisfies the proposed definition, $\sum_{j\in\bigcup_{i\in S}A_i}r(j)\geq\beta/|W_\ell|$. Therefore $\sum_{j\in\bigcup_{i\in S}A_i}W_\ell(j)>|W_\ell|\sum_{j\in\bigcup_{i\in S}A_i}r(j)-1\geq\beta-1$. Since the left-hand side and $\beta$ are integral, $\sum_{j\in\bigcup_{i\in S}A_i}W_\ell(j)\geq\beta$. Thus $W_\ell$ satisfies FPJR for all sufficiently large $\ell$, completing the proof.
\end{proof}

\begin{proposition}
    The lift of \emph{core stability} is weak core. That is, given an instance $(N, P, A)$, portioning \(r\) satisfies the lift of core stability if, for every nonempty $S \subseteq N$ and every $t \in \mathbb{R}_{\geq 0}^P$ such that $\lVert t\rVert_1 \leq |S|/n$, there exists a voter $i \in S$ with $u_i(t) \leq u_i(r)$.
\end{proposition}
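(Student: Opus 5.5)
We prove both directions, following the template of the FJR lifting proof above.

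\emph{Necessity.} Let $(W_\ell)$ be core-stable committees with $|W_\ell|\to\infty$ and $W_\ell/|W_\ell|\to r$. Suppose, for contradiction, that some nonempty $S$ and $t\geq 0$ with $\lVert t\rVert_1\le |S|/n$ satisfy $u_i(t)>u_i(r)$ for all $i\in S$. Let $\gamma \coloneqq \min_{i\in S}(u_i(t)-u_i(r))>0$.

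We first need slack in the budget. If $t(j)>0$ for some $j$, scaling $t$ down by a factor $(1-\eta)$ for small $\eta>0$ keeps the strict inequalities (with margin, say, $\gamma/2$). It also gives $\lVert t\rVert_1\le(1-\eta)|S|/n$. If $t=0$, then $u_i(r)<0$, which is impossible. Now set $T_\ell(j)\coloneqq\lfloor t(j)|W_\ell|\rfloor$. Then $|T_\ell|\le(1-\eta)|S||W_\ell|/n\le q(S)$. For every $i\in S$,
\[ u_i(T_\ell)\ge|W_\ell|u_i(t)-|A_i| \quad\text{and}\quad u_i(W_\ell)=|W_\ell|\,(u_i(r)+o(1)). \]
Hence for large $\ell$, $u_i(T_\ell)-u_i(W_\ell)\ge|W_\ell|\gamma/2-|A_i|-o(|W_\ell|)>0$ for all $i\in S$, which also forces $T_\ell$ to be nonempty. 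This contradicts core stability of $W_\ell$. The scaling is actually unnecessary, since the floor already ensures $|T_\ell|\le q(S)$, but it does no harm.

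\emph{Sufficiency.} Assume $r$ is in the weak core. Take $W_\ell$ from \cref{cor:lift}, so \eqref{eq:lift1} gives $|u_i(W_\ell)-|W_\ell|u_i(r)|<1$ for large $\ell$. Fix such an $\ell$, a nonempty $S$, and a nonempty $T$ with $|T|\le q(S)$. Set $t=T/|W_\ell|$, so $\lVert t\rVert_1\le|S|/n$. By the weak core, some $i\in S$ has $u_i(T)/|W_\ell|\le u_i(r)$. Therefore $u_i(T)\le |W_\ell|u_i(r)<u_i(W_\ell)+1$, and integrality gives $u_i(T)\le u_i(W_\ell)$. So $W_\ell$ is core stable for all large $\ell$, and $r$ is the limit of core-stable committees as required.

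\emph{Main obstacle.} The delicate point is the necessity direction. We must ensure that strict preference in the limit yields strict preference for the rounded $T_\ell$ despite the $O(1)$ rounding losses. The uniform positive margin $\gamma$, multiplied by $|W_\ell|\to\infty$, dominates these losses. Everything else mirrors the FJR proof.
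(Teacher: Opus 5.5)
Your proof is correct and follows essentially the paper's argument: it uses the same floor rounding $T_\ell(j)=\lfloor t(j)|W_\ell|\rfloor$ for necessity, and the same \cref{cor:lift} plus \eqref{eq:lift1} plus integrality argument for sufficiency. The only difference is packaging in the necessity direction: you argue by contradiction with a uniform margin $\gamma$, whereas the paper picks a non-objecting voter $i_\ell\in S$ for each $\ell$, passes to a subsequence on which $i_\ell$ is constant, and takes limits. As you note, the $(1-\eta)$ scaling is unnecessary.
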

\begin{proof}
    Assume we have a sequence of committees $(W_\ell)_{\ell\in\mathbb{N}}$ such that $W_\ell$ satisfies core stability on $(N,P,A,|W_\ell|)$, $|W_\ell|\to\infty$, and $W_\ell/|W_\ell|\to r$. Consider an arbitrary nonempty set $S\subseteq N$ and $t\in\mathbb{R}_{\geq0}^P$ such that $\lVert t\rVert_1\leq |S|/n$. For every $\ell\in\mathbb{N}$, let $T_\ell(j)=\lfloor t(j)|W_\ell|\rfloor$. Then $|T_\ell|\leq\lfloor |S||W_\ell|/n\rfloor$. Since $W_\ell$ satisfies core stability, there is an $i_\ell\in S$ such that $u_{i_\ell}(W_\ell)\geq u_{i_\ell}(T_\ell)$. As $S$ is finite, there exists an $i^*\in S$ such that $i_\ell=i^*$ for infinitely many $\ell$. Passing to this subsequence, we obtain $u_{i^*}(r)=\lim_{\ell\to\infty}u_{i^*}(W_\ell/|W_\ell|)\geq\lim_{\ell\to\infty}u_{i^*}(T_\ell)/|W_\ell|=u_{i^*}(t)$. Thus $r$ satisfies the proposed definition for the lift of core stability.

        Conversely, assume that $r$ satisfies the proposed definition for core stability. Apply \cref{cor:lift} to find committees $W_\ell: P \to \mathbb{N}$. By \cref{eq:lift1} for all sufficiently large $\ell$, $|u_i(W_\ell)-|W_\ell| u_i(r)|<1$ for every $i$.

        Fix such an $\ell$, a nonempty set $S\subseteq N$, and an integer-valued function $T:P\to\mathbb{N}$ such that $|T|\leq\lfloor |W_\ell||S|/n\rfloor$. Let $t=T/|W_\ell|$. Then $\lVert t\rVert_1\leq |S|/n$, so there exists an $i\in S$ such that $u_i(t)\leq u_i(r)$. Therefore $u_i(T)\leq |W_\ell| u_i(r)<u_i(W_\ell)+1$. Since $u_i(T)$ and $u_i(W_\ell)$ are integral, $u_i(T)\leq u_i(W_\ell)$. Thus $W_\ell$ satisfies core stability for all sufficiently large $\ell$, completing the proof.
\end{proof}

\begin{proposition}
    Given an instance $(N, P, A)$, portioning \(r\) satisfies the lift of \emph{priceability} if, there exist nonnegative payments $(p_{ij})_{i,j}$ and a per-unit price $p\geq 0$, such that $p_{ij}=0$ whenever $j \notin A_i$, $\sum_{j} p_{ij} =1$ for all $i$, and $\sum_{i} p_{ij} = n\cdot r(j)$ for every $j$. Therefore, the lift of \emph{priceability} is decomposability.
\end{proposition}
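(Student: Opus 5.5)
The plan is to prove the two directions separately. I first take limits of the priceability certificates of a converging sequence of committees, and then build certificates for the committees supplied by \cref{cor:lift} by rescaling a given decomposition.

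\emph{Lift implies decomposition.} Let $W_\ell$ be priceable on $(N,P,A,k_\ell)$ with $k_\ell = |W_\ell|\to\infty$ and $W_\ell/k_\ell\to r$. Let $(p^\ell,(p^\ell_{ij}))$ be the certificates.
\begin{itemize}
\item Since $\sum_i\sum_j p^\ell_{ij} = k_\ell p^\ell \le n$, we get $p^\ell\le n/k_\ell\to 0$.
\item Every voter approves some party $j$, and the unspent budget of $j$'s supporters is at most $p^\ell$. Hence each voter's leftover $1-\sum_j p^\ell_{ij}$ is at most $p^\ell\to0$.
\item Consequently the total spending $k_\ell p^\ell$ lies in $[n-np^\ell,n]$ and tends to $n$.
\end{itemize}
All payments lie in $[0,1]$, so by compactness I pass to a subsequence along which $p^\ell_{ij}\to p_{ij}$. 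The limit inherits $p_{ij}=0$ for $j\notin A_i$ and satisfies $\sum_j p_{ij}=1$. Moreover,
\[
\sum_i p^\ell_{ij} = W_\ell(j)\,p^\ell = \frac{W_\ell(j)}{k_\ell}\cdot k_\ell p^\ell \;\longrightarrow\; r(j)\, n .
\]
So $\delta_i(j) := p_{ij}/n$ gives the decomposition.

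\emph{Decomposition implies lift.} Let $(p_{ij})$ satisfy $p_{ij}=0$ for $j\notin A_i$, $\sum_j p_{ij}=1$ and $\sum_i p_{ij}=n\,r(j)$. Take the committees $W_\ell$ from \cref{cor:lift}, with $k=|W_\ell|$ and $\varepsilon_\ell=\max_j |k\,r(j)-W_\ell(j)|\to0$. For large $\ell$, $W_\ell(j)=0$ whenever $r(j)=0$. Let $r_{\min}=\min\{r(j):r(j)>0\}$. Then for every $j$ with $r(j)>0$,
\[
\left|\frac{W_\ell(j)}{k\,r(j)}-1\right|\le \eta_\ell := \frac{\varepsilon_\ell}{k\,r_{\min}}.
\]
Note that $\eta_\ell = o(1/k)$. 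I set the per-unit price and payments to
\[
p := \frac{n}{k}(1-\eta_\ell), \qquad p'_{ij} := p_{ij}\,\frac{W_\ell(j)\,p}{n\,r(j)} \text{ if } r(j)>0, \qquad p'_{ij}:=0 \text{ otherwise.}
\]
The checks are then as follows.
\begin{itemize}
\item Approved-only payments are inherited from $(p_{ij})$.
\item Each party collects $\sum_i p'_{ij} = W_\ell(j)\,p$, as required.
\item Each voter spends $\sum_j p_{ij}\,\tfrac{W_\ell(j)}{k r(j)}(1-\eta_\ell)$, which lies in $[(1-\eta_\ell)^2,\,(1+\eta_\ell)(1-\eta_\ell)]\subseteq[1-2\eta_\ell,\,1]$. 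Here I use that $p_{ij}>0$ forces $r(j)>0$.
\item The leftover of any set of supporters is therefore at most $2n\eta_\ell = 2n\varepsilon_\ell/(k r_{\min})$. This is at most $p=\tfrac{n}{k}(1-\eta_\ell)$ once $\varepsilon_\ell$ is small enough.
\end{itemize}
Hence $W_\ell$ is priceable for all large $\ell$, and $W_\ell/|W_\ell|\to r$ holds by \cref{cor:lift}.

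The main obstacle is the converse direction. Exact equality $\sum_i p'_{ij}=W_\ell(j)p$ must hold simultaneously with the leftover bound, and that bound is of order $1/k$, much tighter than naive rounding errors. The key point making this work is that \cref{cor:lift} gives additive errors $\varepsilon_\ell\to0$ in seat counts, not merely relative errors. Rescaling party-by-party with the ratio $W_\ell(j)/(k\,r(j))$ and shrinking the price by a factor $(1-\eta_\ell)$ with $\eta_\ell=o(1/k)$ then keeps both budgets and leftovers under control.
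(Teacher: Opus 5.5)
Your proof is correct. The forward direction is essentially the paper's argument: you bound $p^\ell\le n/|W_\ell|$, bound each voter's leftover by $p^\ell$ via an approved party, and extract a convergent subsequence of payments.

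The converse takes a genuinely different route. The paper does not use \cref{cor:lift} there. It considers the rational polytope $\mathcal{P}$ of pairs $(x,y)$, where $x$ is a portioning and $y$ is an approval-respecting payment matrix with $\sum_j y_{ij}=1$ and $\sum_i y_{ij}=n x_j$. Because the vertices of $\mathcal{P}$ are rational, $(r,(p_{ij}))$ is a limit of rational points $(r_\ell,(p^\ell_{ij}))\in\mathcal{P}$. The paper then takes $W_\ell=\ell I_\ell r_\ell$ for a common denominator $I_\ell$. These committees are priceable \emph{exactly}: the per-unit price is $n/|W_\ell|$, the payments are the $p^\ell_{ij}$ themselves, and every voter spends their whole budget. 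So the leftover condition holds with zero leftover, and no error bookkeeping is needed.

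You instead keep the Dirichlet committees of \cref{cor:lift} and absorb their seat-count errors quantitatively. Rescaling each party's payments by $W_\ell(j)/(k\,r(j))$ and shrinking the price by $(1-\eta_\ell)$ keeps spending at most $1$. The leftovers are then $O(\varepsilon_\ell/k)$, which is beaten by $p\approx n/k$.

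The two approaches buy different things. The paper's construction is exact and self-contained but builds a bespoke sequence. Yours fits the uniform template the paper uses for the other lifted axioms, and shows slightly more: every sequence with vanishing additive seat-count error is eventually priceable. As you correctly observe, this genuinely needs $\varepsilon_\ell$ to be small relative to $r_{\min}$, so naive rounding with $\Theta(1)$ additive error would not suffice.
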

\begin{proof}
    Assume that we have a sequence of allocations $(W_\ell)_{\ell\in\mathbb{N}}$ such that $W_\ell$ is priceable on $(N,P,A,|W_\ell|)$, $|W_\ell|\to\infty$, and $W_\ell/|W_\ell|\to r$. For every $\ell\in\mathbb{N}$, let $(p^\ell_{ij})_{i,j}$ and $q_\ell>0$ be a priceability certificate for $W_\ell$. Thus, $\sum_jp^\ell_{ij}\leq1$ for every $i$, $p^\ell_{ij}=0$ whenever $j\notin A_i$, $\sum_ip^\ell_{ij}=q_\ell W_\ell(j)$ for every $j$, and $\sum_{i:j\in A_i}(1-\sum_{j'\in P}p^\ell_{ij'})\leq q_\ell$ for every $j$.

        Summing $\sum_ip^\ell_{ij}=q_\ell W_\ell(j)$ over all $j$, we obtain $q_\ell|W_\ell|=\sum_i\sum_jp^\ell_{ij}\leq n$. Hence $q_\ell\leq n/|W_\ell|$, and therefore $q_\ell\to0$. For every $i$, let $b_i^\ell=1-\sum_jp^\ell_{ij}$ denote voter $i$'s unspent budget. Fix some $j_i\in A_i$. By the last condition in the definition of priceability, $0\leq b_i^\ell\leq\sum_{h:j_i\in A_h}b_h^\ell\leq q_\ell$. It follows that $b_i^\ell\to0$ for every $i$. Consequently, $q_\ell|W_\ell|=\sum_i(1-b_i^\ell)\to n$.

        Since $N$ and $P$ are finite and $p^\ell\in[0,1]^{N\times P}$, $(p^\ell)_{\ell\in\mathbb{N}}$ has a converging subsequence. Thus, we may assume w.l.o.g. that $p^\ell_{ij}\to p_{ij}$. The limit payments are nonnegative and $p_{ij}=0$ whenever $j\notin A_i$. Moreover, $\sum_jp_{ij}=\lim_{\ell\to\infty}\sum_jp^\ell_{ij}=\lim_{\ell\to\infty}(1-b_i^\ell)=1$ for every $i$. Finally, for every $j$, $$\sum_ip_{ij}=\lim_{\ell\to\infty}\sum_ip^\ell_{ij}=\lim_{\ell\to\infty}q_\ell W_\ell(j)=\lim_{\ell\to\infty}(q_\ell|W_\ell|)W_\ell(j)/|W_\ell|=nr(j).$$
        Thus $r$ satisfies the proposed definition for the lift of priceability, witnessed by $(p_{ij})_{i,j}$.

         Conversely, suppose that \(r\) satisfies the proposed definition for
    priceability. Hence, there exist nonnegative payments
    \((p_{ij})_{i,j}\) such that \(p_{ij}=0\) whenever
    \(j\notin A_i\), \(\sum_{j}p_{ij}=1\) for every \(i\), and
    \(\sum_{i}p_{ij}=n\cdot r(j)\) for every \(j\).
    
    Consider the following polytope \(\mathcal{P}\), over the variables 
    \((x_j)_{j}\) and \((y_{ij})_{i,j}\):
    
    {
        \begin{align*}
            &\sum_{j}x_j =1\\
            &\sum_{j}y_{ij} =1
            &\forall i\\
            &\sum_{i}y_{ij} =nx_j
            &\forall j\\
            &y_{ij}=0
            &\forall i,\ j\notin A_i\\
            &x_j,\ y_{ij}\geq0
            &\forall i,\ j
        \end{align*}
    }
    
    All coefficients in the constraints of \(\mathcal{P}\) are rational, implying every vertex of the polytope is rational.
    As \((r,(p_{ij})_{i,j})\in\mathcal{P}\), we can therefore find a
    sequence of rational points
    \(
    \left(r_\ell,(p^\ell_{ij})_{i,j}\right)\in\mathcal{P}
    \)
    converging to \((r,(p_{ij})_{i,j})\). For every \(\ell\in\mathbb{N}\), choose \(I_\ell\in\mathbb{N}\) such
    that \(I_\ell\cdot  r_\ell\) is integral. Setting $W_{\ell} = \ell\cdot I_\ell \cdot r_\ell$ we have $\lim_{\ell\to \infty} |W_\ell|=+\infty$ and $\lim_{\ell\to \infty} {W_\ell}/{|W_\ell|}=r$. 
    
    Next, we show that \(W_\ell\) is priceable witnessed by payments $(p^\ell_{ij})_{i, j}$ and the per-unit price $q_\ell = {n}/{|W_\ell|}$. Each voter spends her entire budget, since
    \(
    \sum_{j}p^\ell_{ij}=1
    \) for every $i$, \(p_{ij}=0\) whenever
    \(j\notin A_i\). Moreover, for every \(j\), it holds that
    \(
    \sum_{i}p^\ell_{ij}
    =
    n\cdot r_\ell(j)
    =
    {n}/{|W_\ell|}W_\ell(j)
    =
    q_\ell W_\ell(j).
    \)
    Finally, since every voter spends her entire budget,
    \(
    \sum_{i:j\in A_i}
    (1-\sum_{j'\in P}p^\ell_{ij'})
    =
    0
    \leq q_\ell
    \) for every $j$.
    Therefore, \(W_\ell\) is priceable for every \(\ell\).

    To see this notion is equivalent to decomposability, simply set $\delta_i(j) = p_{ij}/n$.
\end{proof}

\begin{proposition}
    Given an instance $(N, P, A)$, portioning \(r\) satisfies the lift of \emph{Lindahl priceability} if, there exist nonnegative personalized prices $(p_{ij})_{i,j}$ such that $\sum_{i \in N} p_{ij} \leq n$ for all $j$ and such that, for any voter $i$ and $t\in \mathbb{R}^P_{\geq 0}$ such that $\sum_{j} t(j) \, p_{ij} \leq 1$, $u_i(t) \leq u_i(r)$.
\end{proposition}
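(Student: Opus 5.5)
We prove both directions, following the pattern of the preceding lifting proofs. The cleanest route goes through \cref{thm:PAV-LP}: in apportionment, Lindahl priceability is equivalent to bounded PAV improvement, and bounded PAV improvement has the explicit form $\sum_{i:j\in A_i} 1/(u_i(W)+1) < n/k$ for all $j$. So we characterize the portionings $r$ for which the stated continuous price condition holds. The natural candidate prices are $p_{ij} = 1/u_i(r)$ for $j\in A_i$ and $0$ otherwise.

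First, a remark on these prices. For any $t$ with $\sum_j t(j)p_{ij}\le 1$ we get $u_i(t)/u_i(r)\le 1$, so the stated definition holds if and only if $\sum_{i:j\in A_i}1/u_i(r)\le n$ for all $j$. For the "only if" direction of this equivalence, take any valid prices $p$. Buying only $j$ shows $p_{ij}\ge 1/u_i(r)$ whenever $j \in A_i$; otherwise voter $i$ could afford more than $u_i(r)$ units of $j$. This requires $u_i(r)>0$, which holds because each voter approves some party and can afford an unbounded amount of a zero-priced approved party. Summing over $i$ gives $\sum_{i:j\in A_i}1/u_i(r)\le \sum_i p_{ij}\le n$. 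So the stated condition is equivalent to (*): $u_i(r)>0$ for all $i$, and $\sum_{i:j\in A_i}1/u_i(r)\le n$ for every $j$.

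\emph{Lift implies (*).} Let $W_\ell$ be Lindahl priceable with $k_\ell=|W_\ell|\to\infty$ and $W_\ell/k_\ell\to r$. By \cref{thm:PAV-LP}, for every $j$,
\[
\sum_{i:j\in A_i}\frac{k_\ell}{u_i(W_\ell)+1}<n .
\]
If $u_i(r)=0$ for some $i$, pick $j\in A_i$. Then $u_i(W_\ell)/k_\ell\to0$, so $k_\ell/(u_i(W_\ell)+1)\to\infty$, which is a contradiction. Otherwise $k_\ell/(u_i(W_\ell)+1)\to 1/u_i(r)$, and taking limits gives $\le n$.

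\emph{(*) implies lift.} This is the main obstacle, because the inequality in (*) is non-strict, whereas bounded PAV improvement is strict. Take $W_\ell$ from \cref{cor:lift}, so that by \cref{eq:lift1} $u_i(W_\ell)> k_\ell u_i(r)-1$. Then
\[
\sum_{i:j\in A_i}\frac{k_\ell}{u_i(W_\ell)+1}<\sum_{i:j\in A_i}\frac{k_\ell}{k_\ell u_i(r)}=\sum_{i:j\in A_i}\frac1{u_i(r)}\le n .
\]
The first inequality is strict whenever some voter approves $j$; if nobody approves $j$, the sum is $0<n$. So the $+1$ slack from the integer rounding absorbs the non-strictness. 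Hence $W_\ell$ satisfies bounded PAV improvement for all large $\ell$, and by \cref{thm:PAV-LP} it is Lindahl priceable. This completes the characterization.
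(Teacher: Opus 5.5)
Your proof is correct, but it takes a different route from the paper.

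\textbf{The paper's route.} The paper works directly with price systems.
\begin{itemize}
\item Forward direction: it rescales the discrete Lindahl prices $q^\ell_{ij}$ of $W_\ell$ to $p^\ell_{ij}=|W_\ell|\,q^\ell_{ij}\in[0,n]$ and extracts a convergent subsequence. It then uses $q^\ell_{ij}>1/(u_i(W_\ell)+1)$ to show that the limit prices satisfy $p_{ij}\ge 1/u_i(r)$, so they are continuous Lindahl prices.
\item Converse: it takes the given continuous prices and shows $p_{ij}\ge 1/u_i(r)$. It then uses $q^\ell_{ij}=p_{ij}/|W_\ell|$ directly as Lindahl prices for the committees from \cref{cor:lift}, with \cref{eq:lift1} supplying the strict inequality.
\end{itemize}

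\textbf{Your route.} You first collapse the continuous condition onto the canonical prices $1/u_i(r)$, that is, onto the first-order condition $(*)$. You then pass through \cref{thm:PAV-LP} on each $W_\ell$. As a result, your two limit arguments are exactly those of \cref{por:bpav}.

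\textbf{Comparison.} Your route is shorter and needs no compactness argument over price vectors. It also makes explicit that the lifts of Lindahl priceability and of bounded PAV improvement both coincide with the Nash-gradient condition, which the paper otherwise assembles from \cref{rem:carry} and \cref{por:bpav}. The paper's route is self-contained, since it does not depend on \cref{thm:PAV-LP}. It also exhibits a direct correspondence between discrete and continuous price vectors. That kind of argument would still work for variants of the axiom that have no PAV characterization in the discrete setting.

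\textbf{One small imprecision.} Your justification of $u_i(r)>0$ mentions only zero-priced approved parties. If every approved $j$ has $p_{ij}>0$, then buying $1/p_{ij}$ units of $j$ gives positive utility, which equally contradicts $u_i(r)=0$. This is just your own ``buying only $j$'' argument, so the fix is immediate.
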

\begin{proof}
    First, assume that we have a sequence of committees
$(W_\ell)_{\ell\in\mathbb{N}}$ such that $W_\ell$ is Lindahl priceable on
$(N,P,A,|W_\ell|)$, $|W_\ell|\to\infty$, and
$W_\ell/|W_\ell|\to r$. For every $\ell\in\mathbb{N}$, let
$(q^\ell_{ij})_{i,j}$ be personalized prices witnessing that $W_\ell$ is
Lindahl priceable. Thus, $\sum_{i}q^\ell_{ij}\leq n/|W_\ell|$ for
every $j$, and, for every voter $i$ and every function
$T:P\to\mathbb{N}$ such that $\sum_jT(j)q^\ell_{ij}\leq1$, it holds that
$u_i(T)\leq u_i(W_\ell)$. Let $p^\ell_{ij}=|W_\ell|\cdot q^\ell_{ij}$. For every $j$, we have
\(
    \sum_{i}p^\ell_{ij}
    =
    |W_\ell|\sum_{i}q^\ell_{ij}
    \leq n.
\)
In particular, $p^\ell_{ij}\in[0,n]$ for every $i$ and $j$. Since $N$
and $P$ are finite, $(p^\ell)_{\ell\in\mathbb{N}}$ has a converging
subsequence. Thus, we may assume w.l.o.g.\ that
$p^\ell_{ij}\to p_{ij}$ for every $i$ and $j$.

For every $i$ and $j\in A_i$, consider the allocation consisting of
$u_i(W_\ell)+1$ copies of $j$. Since this allocation gives voter $i$
strictly larger utility than $W_\ell$, it cannot be affordable. Therefore,
\(
    \bigl(u_i(W_\ell)+1\bigr)q^\ell_{ij}>1,
\)
and hence
 $q^\ell_{ij}>\frac{1}{u_i(W_\ell)+1}.$ Fixing a voter $i$ and a party $j\in A_i$, we get
\[
    p^\ell_{ij}=|W_\ell|\cdot q^\ell_{ij}
    >
    \frac{|W_\ell|}{u_i(W_\ell)+1}
    =
    \frac{1}{
        u_i(W_\ell)/|W_\ell|+1/|W_\ell|
    }.
\]
Since $p^\ell_{ij}\leq n$, this also implies
$u_i(W_\ell)/|W_\ell|+1/|W_\ell| > 1/n$.
Taking limits gives $u_i(r) = \lim_{\ell\to\infty} u_i(W_\ell)/|W_\ell|+1/|W_\ell| \geq 1/n > 0$. We may therefore take limits in
the previous inequality and obtain $p_{ij}\geq{1}/{u_i(r)}$ for every $i$ and $j\in A_i$.
Moreover, for every $j$, it holds that
\(
    \sum_{i}p_{ij}
    =
    \lim_{\ell\to\infty}
    \sum_{i}p^\ell_{ij}
    \leq n.
\)
Now consider an arbitrary voter $i$ and
$t\in\mathbb{R}_{\geq0}^P$ such that
$\sum_jt(j)p_{ij}\leq1$. As $p_{ij} \geq 1/u_i(r)$ for $j\in A_i$, we have
\[
    1
    \geq
    \sum_jt(j)p_{ij}
    \geq
    \sum_{j\in A_i}t(j)p_{ij}
    \geq
    \frac{1}{u_i(r)}
    \sum_{j\in A_i}t(j)
    =
    \frac{u_i(t)}{u_i(r)}.
\]
Therefore, $u_i(t)\leq u_i(r)$. Thus, $(p_{ij})_{i,j}$ witnesses that
$r$ satisfies the proposed definition for the lift of Lindahl
priceability.

Conversely, suppose that $r$ satisfies the proposed definition for
Lindahl priceability, witnessed by nonnegative personalized prices
$(p_{ij})_{i,j}$. First, note that $u_i(r)>0$ for every voter $i$.
Indeed, if $u_i(r)=0$, then, for any $j\in A_i$, voter $i$ could afford
a sufficiently small positive amount of $j$ and obtain strictly positive
utility, which is a contradiction.

Moreover, for every $i$ and $j\in A_i$, we have $p_{ij}\geq{1}/{u_i(r)}$. Otherwise, suppose that $p_{ij}<1/u_i(r)$ for some $j\in A_i$. If
$p_{ij}>0$, consider the portioning $t$ where $t(j)=1/p_{ij}$ and
$t(j')=0$ for every $j'\neq j$. Then
$\sum_{j'}t(j')p_{ij'}=1$, while
$u_i(t)=1/p_{ij}>u_i(r)$, which is a contradiction. If $p_{ij}=0$,
voter $i$ can obtain an arbitrary positive amount of $j$ at zero cost,
which also gives a contradiction.

Apply \cref{cor:lift} to find committees $W_\ell: P \to \mathbb{N}$. By \cref{eq:lift1} for all sufficiently large $\ell$, $|u_i(W_\ell)-|W_\ell| u_i(r)|<1$ for every $i$, implying $u_i(W_\ell)+1>|W_\ell| u_i(r)$.
For every sufficiently large $\ell$, define
$q^\ell_{ij}=p_{ij}/|W_\ell|$, which implies $\sum_{i}q^\ell_{ij}
    =
    \frac{1}{|W_\ell|}
    \sum_{i}p_{ij}
    \leq
    \frac{n}{|W_\ell|}$ for every $j$, and for every $j\in A_i$,
\begin{equation}\label{eq:lindahl-discrete-price}
    q^\ell_{ij}
    =
    \frac{p_{ij}}{|W_\ell|}
    \geq
    \frac{1}{|W_\ell| u_i(r)}
    >
    \frac{1}{u_i(W_\ell)+1}.
\end{equation}

Now consider an arbitrary voter $i$ and a  function
$T:P\to\mathbb{N}$ such that $u_i(T)>u_i(W_\ell)$. Since both utilities
are integral, $u_i(T)\geq u_i(W_\ell)+1$. Using
\cref{eq:lindahl-discrete-price}, we obtain
\(
    \sum_jT(j)q^\ell_{ij}
    \geq
    \sum_{j\in A_i}T(j)q^\ell_{ij}
    >
    \frac{u_i(T)}{u_i(W_\ell)+1}
    \geq1.
\)
Implying that
$(q^\ell_{ij})_{i,j}$ witnesses that $W_\ell$ is Lindahl priceable.
\end{proof}

\begin{proposition}\label{prop:lindahleq}
    Lindahl equilibrium is equivalent to the lift of Lindahl priceability.
\end{proposition}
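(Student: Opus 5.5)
The immediately preceding proposition characterizes the lift of Lindahl priceability as the existence of nonnegative prices $(p_{ij})$ with $\sum_i p_{ij}\le n$ for all $j$ and such that no voter can afford a $t\in\mathbb{R}^P_{\ge 0}$ with $u_i(t)>u_i(r)$ within budget $1$. The claim is that this is equivalent to $r$ being part of a Lindahl equilibrium $(r,(p_{ij}))$. So I compare this characterization with the definition of Lindahl equilibrium. The equilibrium adds two conditions: budget balance $\sum_j r(j)p_{ij}\le 1$, and market clearing $\sum_i p_{ij}=n$ whenever $r(j)>0$.

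The direction from Lindahl equilibrium to the lift is immediate. The equilibrium prices satisfy $\sum_i p_{ij}\le n$ and the utility-maximization condition verbatim, so they witness the characterization.

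For the converse, start from witnessing prices $p$. The proof of the preceding proposition already gives $u_i(r)>0$ and $p_{ij}\ge 1/u_i(r)$ for all $j\in A_i$. I then replace $p$ by the canonical prices $p^*_{ij}\coloneqq 1/u_i(r)$ for $j\in A_i$ and $p^*_{ij}\coloneqq 0$ otherwise, and check the conditions in turn.
\begin{itemize}
\item \emph{Supply bound.} Since $p^*\le p$ entrywise, $\sum_i p^*_{ij}\le \sum_i p_{ij}\le n$.
\item \emph{Budget.} $\sum_j r(j)p^*_{ij}=u_i(r)/u_i(r)=1$.
\item \emph{Utility maximization.} If $\sum_j t(j)p^*_{ij}\le 1$, then $u_i(t)/u_i(r)\le 1$, as in the earlier chain of inequalities.
\end{itemize}

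The main obstacle is market clearing: we need $\sum_i p^*_{ij}=n$ for every $j$ with $r(j)>0$. Multiply $\sum_i p^*_{ij}\le n$ by $r(j)$ and sum over $j$. This gives
\[
\sum_j r(j)\sum_i p^*_{ij}=\sum_i \frac{\sum_{j\in A_i} r(j)}{u_i(r)}=n\le \sum_j r(j)\, n=n .
\]
The two ends of this chain are both $n$, so every term satisfies $r(j)\bigl(n-\sum_i p^*_{ij}\bigr)=0$. Each factor $n-\sum_i p^*_{ij}$ is nonnegative and $\norm{r}_1=1$, so $\sum_i p^*_{ij}=n$ whenever $r(j)>0$. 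Hence $(r,p^*)$ is a Lindahl equilibrium.
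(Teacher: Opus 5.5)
Your proof is correct. It shares the paper's key step, the sandwich $n \ge \sum_j r(j)\sum_i p_{ij} = \sum_i \sum_j r(j)p_{ij} \ge n$, which forces market clearing on the support of $r$. Where you differ is in how you get budget exhaustion $\sum_j r(j)p_{ij} \ge 1$.

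The paper keeps the given witness prices and proves this bound directly by a perturbation. If $\sum_j r(j)p_{ij} < 1$, then $r + \varepsilon e_{j'}$ for some $j' \in A_i$ is affordable for $i$ and strictly better for $i$. It concludes that these very prices already form a Lindahl equilibrium with $r$, so every witness of the lifted Lindahl priceability works.

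You instead import $u_i(r) > 0$ and $p_{ij} \ge 1/u_i(r)$ from the previous proof and pass to the canonical prices $p^*_{ij} = 1/u_i(r)$. For these, budget balance holds with equality by construction and utility maximization is immediate. Your route buys an explicit equilibrium whose market-clearing condition reads $\sum_{i : j \in A_i} 1/u_i(r) = n$ on the support of $r$. This is the Nash-welfare first-order condition: the lifted bounded-PAV-improvement condition of \cref{por:bpav}, holding with equality on the support. That makes the link to Nash portioning transparent.

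The cost is that your argument depends on those two auxiliary facts, which are proven only inside another proof, and it yields only the existence of equilibrium prices. The paper's argument is self-contained and shows that any witness is an equilibrium price vector.
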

\begin{proof}
If $(r,(p_{ij})_{i,j})$ is a Lindahl equilibrium, then $r$ is Lindahl priceable, as the definition of Lindahl equilibrium only imposes additional conditions.

Conversely, assume that a portioning $r$ satisfies the lift of Lindahl priceability, witnessed by $(p_{ij})_{i,j}$. We first show that $\sum_j r(j)p_{ij}\geq 1$ for every voter $i$. Suppose otherwise that $\sum_j r(j)p_{ij}<1$. Fix some $j'\in A_i$. Then we can choose $\varepsilon>0$ sufficiently small such that $\sum_j r(j)p_{ij}+\varepsilon p_{ij'}\leq1$. Let $t=r+\varepsilon e_{j'}$. Then $t$ is affordable for voter $i$, while $u_i(t)=u_i(r)+\varepsilon>u_i(r)$, contradicting that $r$ maximizes voter $i$'s utility among all affordable portionings.
On the other hand, since $\sum_i p_{ij}\leq n$ for every $j$, $r\geq0$, and $\lVert r\rVert_1=1$, we have $n\geq\sum_j r(j)\left(\sum_i p_{ij}\right)=\sum_i\sum_j r(j)p_{ij}\geq n$ where the last inequality follows from $\sum_j r(j)p_{ij}\geq 1$. Therefore, all inequalities hold with equality. In particular, $\sum_j r(j)p_{ij}=1$ for every voter $i$, and $\sum_i p_{ij}=n$ whenever $r(j)>0$. Hence $(r,(p_{ij})_{i,j})$ is a Lindahl equilibrium.
\end{proof}

\begin{proposition}\label{por:bpav}
    Given an instance $(N, P, A)$, portioning \(r\) satisfies the lift of \emph{bounded PAV improvement} if, $\frac{\partial \log\text{NW}(r)}{\partial r(j)} {=} \sum_{i: j \in A_i} \frac{1}{u_i(r)} \leq n$ for every $j$.
\end{proposition}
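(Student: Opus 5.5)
Since \cref{thm:PAV-LP} shows that bounded PAV improvement and Lindahl priceability are the same predicate on every apportionment instance, the set of sequences $(W_\ell)$ in \cref{def:lifting} is identical for both. Hence $\lift{\text{bounded PAV improvement}} = \lift{\text{Lindahl priceability}}$ on every portioning instance. I would therefore not argue with limits of $\Delta^+_{W_\ell}$ directly. Instead, I would take the characterization of the lift of Lindahl priceability proved above and show that it is equivalent to the stated gradient condition. Here $\sum_{i:j\in A_i}1/u_i(r)\le n$ is read with the convention that it requires $u_i(r)>0$ for every $i$ approving $j$. Since every voter approves some party, this means $u_i(r)>0$ for all $i$.

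For the forward direction, assume prices $(p_{ij})$ witness the lifted Lindahl priceability of $r$, so $\sum_i p_{ij}\le n$ and no voter can afford a strictly better $t$ within budget $1$. The converse part of the preceding proof already shows two things. First, $u_i(r)>0$ for all $i$, since otherwise a tiny amount of an approved party is affordable and strictly better. Second, $p_{ij}\ge 1/u_i(r)$ for all $j\in A_i$, by considering $t=e_j/p_{ij}$, or an arbitrary amount of $j$ if $p_{ij}=0$. Summing the second fact over the approvers of $j$ gives $\sum_{i:j\in A_i}1/u_i(r)\le\sum_i p_{ij}\le n$.

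For the reverse direction, assume all $u_i(r)>0$ and that the gradient condition holds. Define $p_{ij}\coloneqq 1/u_i(r)$ if $j\in A_i$ and $p_{ij}\coloneqq 0$ otherwise; these are the continuous analogs of the prices in \cref{thm:PAV-LP}. The column sums are exactly $\sum_{i:j\in A_i}1/u_i(r)\le n$. For any $t\ge 0$ with $\sum_j t(j)p_{ij}\le 1$, we get $u_i(t)/u_i(r)=\sum_{j\in A_i}t(j)p_{ij}\le 1$, so $u_i(t)\le u_i(r)$. Thus $r$ satisfies the lifted Lindahl priceability, and hence the lift of bounded PAV improvement.

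The only delicate point is the edge case $u_i(r)=0$, where the formula $\partial\log\mathrm{NW}/\partial r(j)$ is undefined. I would handle it by stating the convention explicitly and invoking the positivity argument above, so that no separate limit computation is needed. If one preferred a direct proof avoiding \cref{thm:PAV-LP}, the main obstacle would be controlling $\Delta^+_{W_\ell}(j)\cdot|W_\ell|/n$ along the sequence when some $u_i(W_\ell)$ stays bounded. The route via the Lindahl characterization sidesteps this entirely.
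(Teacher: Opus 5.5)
Your proof is correct, but it takes a different route from the paper. The paper works directly on the lifting sequence, without going through prices.
\begin{itemize}
\item For the forward direction, it writes $|W_\ell|\,\Delta^+_{W_\ell}(j)=\sum_{i:j\in A_i}|W_\ell|/(u_i(W_\ell)+1)<n$ and passes to the limit. The degenerate case $u_i(r)=0$, which you flag as the main obstacle of a direct proof, takes one line: the corresponding term $|W_\ell|/(u_i(W_\ell)+1)$ diverges, contradicting the bound $n$.
\item For the converse, it takes the rounded committees from \cref{cor:lift}, which satisfy $u_i(W_\ell)+1>|W_\ell|\,u_i(r)$. Comparing term by term gives $|W_\ell|\,\Delta^+_{W_\ell}(j)<\sum_{i:j\in A_i}1/u_i(r)\le n$.
\end{itemize}

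You instead use \cref{thm:PAV-LP} to identify the two predicates on every apportionment instance, so their lifts coincide. You then reduce to the already-proved characterization of $\lift{\text{Lindahl pr.}}$. There, both directions become the one-line KKT-style argument with prices $p_{ij}=1/u_i(r)$. Your reliance on the earlier results is legitimate and non-circular, since both are proved before this proposition.

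What your route buys: it is shorter given those results, and it shows that the gradient condition is the continuous counterpart of the prices $1/(u_i(W)+1)$ in \cref{thm:PAV-LP}.

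What the paper's route buys: it is self-contained and does not depend on \cref{thm:PAV-LP}. Read alongside the Lindahl-lift characterization, it independently confirms the lifted equivalence that \cref{prop:por:eq} later takes as inherited. Underneath, both proofs rest on the same ingredient. The converse of the Lindahl-lift characterization you invoke uses exactly the rounding inequality from \cref{cor:lift} that the paper applies directly, so the difference is packaging rather than substance.
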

\begin{proof}
    Assume that we have a sequence $(W_\ell)_{\ell\in\mathbb{N}}$ such that $\Delta^+_{W_\ell}(j)<n/|W_\ell|$ for every $j$, $|W_\ell|\to\infty$, and $W_\ell/|W_\ell|\to r$. 
        Therefore, for a fixed $j$, $|W_\ell|\Delta^+_{W_\ell}(j)=\sum_{i:j\in A_i}|W_\ell|/(u_i(W_\ell)+1)<n$. If $u_i(r)=0$ for some $i$ with $j\in A_i$, then $u_i(W_\ell)/|W_\ell|\to0$ and $$\lim_{\ell\to\infty}|W_\ell|/(u_i(W_\ell)+1)=\lim_{\ell\to\infty}1/(u_i(W_\ell)/|W_\ell|+1/|W_\ell|)=+\infty,$$ which contradicts $|W_\ell|\Delta^+_{W_\ell}(j)=\sum_{i:j\in A_i}|W_\ell|/(u_i(W_\ell)+1)<n$. Hence $u_i(r)>0$ for every $i$ that $j\in A_i$. 
        Taking limits gives 
        $$n \geq \lim_{\ell\to\infty}\sum_{i:j\in A_i}1/(u_i(W_\ell)/|W_\ell|+1/|W_\ell|)= \sum_{i:j\in A_i}1/u_i(r).$$
        Thus $r$ satisfies the proposed definition of bounded PAV improvement.

        Conversely, suppose that $r$ satisfies $\sum_{i:j\in A_i}1/u_i(r)\leq n$ for every $j$. Apply \cref{cor:lift} to find committees $W_\ell: P \to \mathbb{N}$. By \cref{eq:lift1} for all sufficiently large $\ell$, $|u_i(W_\ell)-|W_\ell| u_i(r)|<1$ for every $i$. Hence, for all sufficiently large $\ell$, $u_i(W_\ell)+1>|W_\ell| u_i(r)$ for every $i$. Therefore, for every $j$, $|W_\ell|\Delta^+_{W_\ell}(j)=\sum_{i:j\in A_i}|W_\ell|/(u_i(W_\ell)+1)<\sum_{i:j\in A_i}1/u_i(r)\leq n$. Thus $\Delta^+_{W_\ell}(j)<n/|W_\ell|$ for every $j$, and $W_\ell$ satisfies bounded PAV improvement for all sufficiently large $\ell$.
\end{proof}

\begin{proposition}\label{prop:lift}
The lift of proportional approval voting (PAV) is Nash portioning. That is, given an instance $(N, P, A)$, portioning \(r\) satisfies the lift of \emph{PAV optimality} if, $\text{NW}(t) \leq \text{NW}(r)$ for every portioning $t$.
\end{proposition}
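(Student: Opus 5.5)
The plan is to prove the two inclusions separately. The forward direction (limits of PAV-optimal committees are Nash-optimal) is essentially the result of \textcite{Peters25}. The reverse direction goes through the chain bounded PAV improvement $\Leftrightarrow$ Lindahl priceability, combined with an approximation argument. Throughout, write $k_\ell=|W_\ell|$.

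\emph{Forward direction.} Let $W_\ell$ be PAV optimal with $k_\ell\to\infty$ and $W_\ell/k_\ell\to r$. Since PAV optimality implies bounded PAV improvement, \cref{por:bpav} gives $u_i(r)>0$ for every voter $i$ with an approved party (all voters, by assumption). It also gives $\sum_{i:j\in A_i}1/u_i(r)\le n$ for all $j$. Now $\log\operatorname{NW}$ is concave on the simplex. Its gradient at $r$ is $g_j=\sum_{i:j\in A_i}1/u_i(r)$, and $\sum_j r(j)g_j=\sum_i u_i(r)/u_i(r)=n$. Hence for any portioning $t$,
\[
\log\operatorname{NW}(t)-\log\operatorname{NW}(r)\le \sum_j (t(j)-r(j))g_j\le n-n=0,
\]
which shows that $r$ is Nash optimal. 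This sidesteps redoing Peters' limit argument and uses only \cref{por:bpav}.

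\emph{Reverse direction.} Let $r$ be Nash optimal. Then $u_i(r)>0$ for all $i$, and the KKT conditions give $g_j\le n$ for all $j$, with equality when $r(j)>0$. The main obstacle is that \cref{cor:lift} only produces committees with bounded PAV improvement, not PAV-optimal ones. Moreover, PAV-optimal committees for a given $k$ need not sit near $r$ if the Nash optimum is not unique in $r$. I would handle this as follows.

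\begin{enumerate}
\item Take arbitrary PAV-optimal committees $V_k$ for every $k$. By compactness and the forward direction, every limit point of $V_k/k$ is Nash optimal.
\item Use the fact that the utility vector $(u_i(r))_i$ is the same for all Nash-optimal portionings, by strict concavity of $\log$ in utilities. So $u_i(V_k)/k\to u_i(r)$ for all $i$.
\item Since the target $r$ may differ from the limits of $V_k/k$, modify $V_k$. Let $\hat r$ be a limit point of $V_k/k$, and build $W_k$ from $V_k$ by moving seats toward $k r$. Both $r$ and $\hat r$ lie in the optimal face, and every voter's utility is constant on that face. So each step should move along a direction that preserves all utilities approximately, and small changes should not affect optimality.
\end{enumerate}

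Step 3 is the hard part, because moving seats between parties in the optimal face does not keep integer utilities exactly fixed. My first attempt would exploit that the optimal face is a rational polytope $\{t\ge 0: u_i(t)=u_i^*\ \forall i,\ \|t\|_1=1\}$. Approximate $r$ by rational points $r_m$ in the face (rational vertices scaled, as in the decomposability proof), then take $k$ a multiple of the common denominator so that $k r_m$ is integral. The committees $kr_m$ all give identical utilities $k u_i^*$. The remaining step is to show that $k r_m$ is PAV optimal for such $k$, which is true iff $k\cdot$(vertex) is PAV optimal.

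This requires $u^*$ to be rational and $k u^*$ to be integral, which holds at rational points of the face. It also requires an argument that, once $k u^*$ is integral, the committee achieving utilities $k u^*$ maximizes $\sum_i H_{u_i}$. I would argue this via the discrete Lindahl prices from \cref{thm:PAV-LP}: concavity of $H$ should bound any exchange by the KKT gradient. If this fails, I would fall back on restricting the claim to limit points via the sequence $V_k$.
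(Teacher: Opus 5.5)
Your forward direction is correct, and it takes a different route from the paper. The paper compares $\PAV(W_\ell)$ with PAV scores of roundings of an arbitrary $t$ using $H_m=\log m+\gamma+o(1)$. You instead combine \cref{por:bpav} with the first-order inequality for the concave function $\log\operatorname{NW}$. This is shorter, and it even shows that the lift of bounded PAV improvement already forces Nash optimality.

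The reverse direction, however, has a genuine gap. Your committees $kr_m$ require rational points in the optimal face $\{t\ge 0: u_i(t)=u_i^*\ \forall i,\ \norm{t}_1=1\}$, i.e., a rational Nash-optimal utility vector $u^*$, and that fails in general. Take parties $a,b,c$ and voters $A_1=\{a,b\}$, $A_2=\{b,c\}$, $A_3=\{a\}$, $A_4=\{b\}$, $A_5=\{c\}$. The Nash optimum is unique and symmetric, $r(a)=r(c)=\rho$. Maximizing $2\log\rho+2\log(1-\rho)+\log(1-2\rho)$ gives $5\rho^2-5\rho+1=0$, so $\rho=(5-\sqrt5)/10$ and $u_4^*=r(b)=1/\sqrt5$. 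Hence no committee has utilities exactly $ku^*$, and there is nothing to scale. Your sub-lemma is fine where it applies: from $H_t-H_w\le (t-w)/w$ for integers $t\ge 0$, $w\ge 1$, any $W$ with $u_i(W)=ku_i^*$ satisfies $\PAV(T)-\PAV(W)\le \frac1k\sum_j T(j)\sum_{i:j\in A_i}1/u_i^*-n\le 0$ for every committee $T$. The problem is only that such $W$ need not exist. Adding a clone $b'$ of $b$ turns the optimal face into a segment with the same irrational utilities. So you cannot fall back on uniqueness, and limit points of arbitrarily chosen PAV-optimal $V_k$ need not include a prescribed $r$.

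The missing idea is exactly what your Step~3 ruled out: moves inside the optimal face \emph{can} keep every integer utility and the committee size exactly fixed. Let $B$ be the integer matrix whose rows are the approval indicator vectors plus the all-ones row. Let $\hat r$ be the limit of $V_k/k$ along a subsequence, and $J=\{j: r(j)+\hat r(j)>0\}$. Since $B$ is integral, the kernel of its restriction to the columns in $J$ has an integer basis $h^1,\dots,h^s$. Because $Br=B\hat r$, on $J$ we can write $r-\hat r=\sum_a\alpha_a h^a$. Adding $\sum_a\lfloor k(1-\delta)\alpha_a\rfloor h^a$ to $V_k$ changes neither any $u_i$ nor the committee size. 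It therefore preserves PAV optimality exactly, since the score depends only on utilities. The normalized committees converge to $(1-\delta)r+\delta\hat r$, which is strictly positive on $J$, while coordinates outside $J$ are untouched. So the modified committees are nonnegative for large $k$, and a diagonal argument with $\delta\to 0$ yields a sequence converging to $r$. This is the paper's proof. Your remark that approximately utility-preserving moves ``should not affect optimality'' cannot replace it: PAV optimality is an exact discrete condition, not stable under perturbations of the utilities.
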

\begin{proof}
        Assume we have a sequence of PAV-optimal committees $(W_\ell)_{\ell\in\mathbb{N}}$ such that $|W_\ell|\to\infty$ and $W_\ell/|W_\ell|\to r$. 
        First note that, as $W_\ell$ is PAV-optimal, it also satisfies bounded PAV improvement, and we showed in the proof of \cref{por:bpav} that $u_i(r) > 0$, meaning $\log u_i(r)$ is well-defined. 
        As $W_\ell/|W_\ell| \to r$,  $\log u_i(W_\ell/|W_\ell|) = \log u_i(r) + \varepsilon'_{\ell i}$ where $\lim_{\ell \to \infty}\varepsilon'_{\ell i} = 0$. Also, recall that $H_m=\log m+\gamma+\varepsilon_m$ where $\gamma$ is Euler's constant and $\lim_{m\to \infty}\varepsilon_m = 0$. Thus:
        \begin{align*}
            \PAV(W_\ell) = \sum_{i} H_{u_i(W_\ell)} &= \sum_{i} \log u_i(W_\ell) +\varepsilon_{u_i(W_\ell)} + n\gamma \\
            &= \sum_{i} \log u_i(r) + n\log(|W_\ell|) + n\gamma + \bar{\varepsilon}_\ell \tag{as $\log u_i(W_\ell) =\log u_i(r) + \log |W_\ell| + \varepsilon'_{\ell i} $}\\
            &= \log \text{NW}(r) + n\log(|W_\ell|) + n\gamma + \bar{\varepsilon}_\ell
        \end{align*}

        Let $t$ be an arbitrary portioning. For every $\ell$, one can round $|W_\ell| t$ to a committee $T_\ell$ of size $|W_\ell|$ such that $T_\ell/|W_\ell|\to t$.
        If $u_i(t) = 0$ for some $i$, then the inequality $\text{NW}(t) \leq \text{NW}(r)$ trivially holds as $\text{NW}(t) = 0$. 
        So, suppose $u_i(t) > 0$ for every $i$. Then using the same argument as above, $\PAV(T_\ell) = \log\text{NW}(t) + n\log(|W_\ell|) + n\gamma + \hat{\varepsilon}_\ell$, and as $W_\ell$ is PAV optimal, $\PAV(W_\ell)\geq\PAV(T_\ell)$ implying $\text{NW}(r) \geq \text{NW}(t)$.

        Conversely, suppose that a portioning $r$ maximizes Nash welfare. First note that if $r'$ is another Nash-optimal portioning, we have $u_i(r) = u_i(r')$ for every $i$. If not, as $u_i(r), u_i(r') > 0$ for every $i$ and by strict concavity of $\sum_i\log u_i$ it holds that
        \[\log\text{NW}((r+r')/2)=\sum_i \log u_i((r+r')/2) > 1/2 \sum_i \log u_i(r) + 1/2\sum_i \log u_i(r')=\log\text{NW}(r),\]
        contradicting the optimality of $r$.
        
        Let matrix $B \in \{0,1\}^{(n+1) \times |P|}$ be such that $Bx=(u_1(x),\ldots,u_n(x),\sum_{j} x_j)$ for every portioning $x$. Basically, the first $n$ rows of $B$ are indicator vectors of approval sets $A_1, \dots, A_n$, and the last row is $\mathds{1}$.   
        For every $k\in\mathbb{N}$, choose a PAV-optimal committee $W_k$ for the instance $(N, P, A, k)$. By the first direction, every convergent subsequence of $(W_k/k)_{k\in\mathbb{N}}$ converges to a Nash-optimal portioning. Since $W_k/k\in [0,1]^P$ and $[0,1]^P$ is compact, at least one convergent subsequence exists. Fix a convergent subsequence, denoted $(W_\ell)_{\ell\in\mathbb{N}}$, with $W_\ell/|W_\ell|\to r'$ for some portioning $r'$. Then $r'$ is Nash optimal, and hence $Br'=Br$.

        Let $J=\{j \in P: r(j) + r'(j) > 0\}$ and $d = r-r'$. Then $Bd = Br - Br' = 0$ and as $d(j) = 0$ for every $j\notin J$ it holds that $B_Jd_J=0$, where $B_J$ is the restriction of $B$ to the columns in $J$, and similarly $d_J$ is the restriction of $d$ to coordinates in $J$. 
        Applying Gaussian elimination to $B_Jx=0$, and by integrality of $B_J$, we obtain rational vectors spanning all of its real solutions. Clearing denominators, we obtain integer vectors $h^1,\ldots,h^s\in \mathbb{Z}^{|J|}$ such that $\text{span}\{h^1, h^2, \dots, h^s\} = \{x: B_Jx = 0\}$. Therefore, we can write $d_J=\sum_{a=1}^s\alpha_ah^a$ for some real numbers $\alpha_1,\ldots,\alpha_s$.

        Fix $\delta\in(0,1)$. For every $\ell$ define $q_\ell=\sum_{a=1}^s\lfloor |W_\ell|(1-\delta)\alpha_a \rfloor h^a$, extended by zero outside $J$. Note that $Bq_\ell = 0$ and $\lim_{\ell \to \infty} q_\ell/|W_\ell| = (1-\delta)d$.
        Set $Z_\ell^\delta=W_\ell+q_\ell$. Since $W_\ell$ and $q_\ell$ are integral, $Z_\ell^\delta$ is integral and as $Bq_\ell=0$, it holds that $BZ_\ell^\delta = BW_\ell$ implying $|Z_\ell^\delta| = |W_\ell|$ and $u_i(Z_\ell^\delta)=u_i(W_\ell)$ for every $i$. 
        Thus, 
        \[
        \lim_{\ell \to \infty}  \frac{Z_\ell^\delta}{|Z_\ell^\delta|} = \lim_{\ell \to \infty} \frac{W_\ell}{|W_\ell|} + \lim_{\ell \to \infty}\frac{q_\ell}{|W_\ell|} = r' + (1-\delta) d = (1-\delta)r + \delta r'
        \]

         For $j \notin J$ we have  $q_\ell(j) = 0$ implying $Z_\ell^\delta(j) = W_\ell(j)\geq 0$, and for $j\in J$, the limiting point $(1-\delta)r(j) + \delta r'(j)$ is strictly positive. Hence $Z_\ell^\delta$ is nonnegative for all sufficiently large $\ell$, implying it is a valid PAV-optimal committee. Finally, let $0<\delta_m < 1/m$ and choose a PAV-optimal committee  $Z_m$ from $(Z_\ell^{\delta_m})_{\ell\in \mathbb{N}, \ell \geq m}$ such that $\lVert Z_m/|Z_m|-((1-\delta_m)r+\delta_m r')\rVert_\infty<1/m$. 
         Then $Z_m/|Z_m|\to r$ and $|Z_m| \to \infty$ implying $r$ is in the lift of PAV optimality.
\end{proof}

For the next result, we first describe the method of equal shares in the apportionment setting. 
\begin{description}
    \item[Method of Equal Shares.] It proceeds in rounds $t=1,2, \dots$. Initially, each voter $i$ has a budget $b_i(1) = 1$, the committee is $W = \emptyset$, and the price of a seat is $p = n/k$. Let $b_i(t)$ be the amount of leftover budget of $i$ just before iteration $t$. In iteration $t$, a party $j$ is said to be $q$-affordable if $\sum_{i: j\in A_i} \min(q, b_i(t)) \geq p$. If no party is $q$-affordable for any $q > 0$, the rule stops and returns $W$. Otherwise, we select a party $j$ that is $q$-affordable for a minimum $q$, and set $W=W+\{j\}$. Then, set the budget of each voter $i$ that $j\in A_i$ to $b_i(t + 1) = b_i(t) - \min(q, b_i(t))$, and set $b_i(t + 1) = b_i(t)$ for the rest of voters.
\end{description}

\begin{lemma}\label{lem:mes-stage}
    Consider a round $\ell$ of majoritarian portioning in which ties are broken according to a fixed priority order over parties. Let $R_\ell$ be
    the set of active voters immediately before this round, let
    $j_\ell$ be the selected party, and let
    \(N_\ell=\{i\in R_\ell:j_\ell\in A_i\}\).

    Suppose that there is a constant $C_\ell$, independent of
    $k$, such that for sufficiently large $k$, at some point in the execution of MES on the instance $(N, P, A, k)$, every voter
    \(i\in R_\ell\) has remaining budget at least \(1-C_\ell p_k\), while
    the voters outside $R_\ell$ have total remaining budget at most
    \(C_\ell p_k\) where $p_k=n/k$. 
 Starting from this point, define \emph{stage} $\ell$ as the consecutive sequence of MES iterations ending with the first iteration after which some voter in $N_\ell$ has remaining budget less than  \(p_k/|N_\ell|\). 
    Then there exist constants $E_\ell,D_\ell,B_\ell$, independent of
    $k$, such that:
    \begin{enumerate}
        \item at most $E_\ell$ copies of parties other than $j_\ell$ are
        selected during stage $\ell$;
        \item every voter in
        \(R_{\ell+1}=R_\ell\setminus N_\ell\) has remaining budget at
        least \(1-(C_\ell+E_\ell)p_k\);
        \item at the end of the stage, the voters in $N_\ell$ have total
        remaining budget at most \(D_\ell p_k\); and
        \item let $n^k_\ell$
    denote the number of copies of $j_\ell$ selected during this stage, then
        \(
            \left|
                n_\ell^k-kr(j_\ell)
            \right|
            \leq B_\ell.
        \)
    \end{enumerate}
\end{lemma}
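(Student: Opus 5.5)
The plan is to track budgets during stage $\ell$ by grouping voters into three classes: $N_\ell$, $R_{\ell+1}=R_\ell\setminus N_\ell$, and the inactive voters $N\setminus R_\ell$. All four claims then follow from accounting arguments in units of $p_k=n/k$. The central observation concerns the cost of $j_\ell$. As long as the stage has not ended, every voter in $N_\ell$ has budget at least $p_k/|N_\ell|$, so $j_\ell$ is $q$-affordable for $q=p_k/|N_\ell|$. Its supporters outside $R_\ell$ can only lower its minimal $q$. Hence in every iteration of the stage, the selected party is $q$-affordable for some $q\le p_k/|N_\ell|$.

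Step 1 proves item 1. Take an iteration in which some $j\neq j_\ell$ is selected at level $q\le p_k/|N_\ell|$, and let $m_j=|\{i\in R_\ell: j\in A_i\}|$. Majoritarian portioning chose $j_\ell$ in round $\ell$, and parties chosen in earlier rounds have no supporters in $R_\ell$. Therefore $m_j\le |N_\ell|$, with equality only if $j_\ell$ has priority over $j$. The voters in $R_\ell$ contribute at most $m_j q$ to $j$.
\begin{itemize}
\item If $m_j<|N_\ell|$, they contribute at most $(1-1/|N_\ell|)p_k$. So the inactive voters must pay at least $p_k/|N_\ell|$ for this copy.
\item If $m_j=|N_\ell|$, then $j$ must beat $j_\ell$ with strictly smaller $q$. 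This again forces a contribution from inactive voters, because $R_\ell$-voters contribute at most $|N_\ell|q<p_k$.
\end{itemize}
In the first case, the inactive voters' total budget is at most $C_\ell p_k$ and only decreases, so there are at most $C_\ell|N_\ell|$ such copies. I take $E_\ell$ of that order plus whatever the tie case requires.

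The tie case $m_j=|N_\ell|$ is the step I expect to be the main obstacle, since there the required outside contribution has no obvious lower bound. My first attempt will be a potential argument. Compare the budgets of $j$'s $R_\ell$-supporters with those of $N_\ell$. The two groups start within $C_\ell p_k$ of each other. Every purchase of $j$ moves money away from $j$'s supporters relative to $N_\ell$, and every purchase of $j_\ell$ does the reverse. A strictly cheaper $q$ for $j$ therefore requires an accumulated imbalance, and such an imbalance can only be financed by the $O(C_\ell p_k)$ outside budget. This should bound the number of tie-case copies by a constant.

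Step 2 proves items 2 and 3. Voters in $R_{\ell+1}$ do not approve $j_\ell$, so they pay only for the at most $E_\ell$ copies of other parties. Each copy costs at most $p_k$ in total, so each such voter keeps budget at least $1-(C_\ell+E_\ell)p_k$, which is item 2. For item 3, payments toward $j_\ell$ are at equal $q$ among voters with positive budget. Hence the budgets within $N_\ell$ remain within $O((C_\ell+E_\ell)p_k)$ of each other: they start within $C_\ell p_k$, and other parties perturb them by at most $E_\ell p_k$. When the first voter in $N_\ell$ drops below $p_k/|N_\ell|$, all of them are therefore at most $O((C_\ell+E_\ell+1)p_k)$. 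Summing over $N_\ell$ gives the constant $D_\ell$.

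Step 3 proves item 4 by a money balance. The voters in $N_\ell$ start the stage with total budget $|N_\ell|\pm C_\ell|N_\ell|p_k$ and end with at most $D_\ell p_k$. Along the way they spend at most $E_\ell p_k$ on other parties. Every copy of $j_\ell$ costs exactly $p_k$, of which at most $C_\ell p_k$ in total over the whole stage comes from inactive voters. So $n^k_\ell p_k=|N_\ell|\pm O(p_k)$. Dividing by $p_k=n/k$ gives $|n^k_\ell-k|N_\ell|/n|\le B_\ell$. Since $r(j_\ell)=|N_\ell|/n$ by the definition of majoritarian portioning, this is exactly item 4.
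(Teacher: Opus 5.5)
Your Steps 2 and 3, and the case $m_j<|N_\ell|$ of Step 1, match the paper's argument. The tie case $m_j=|N_\ell|$, however, is left open, and the potential argument you sketch for it rests on a false premise. A cheaper level for $j$ does not come from an imbalance between the budgets of $j$'s $R_\ell$-supporters and those of $N_\ell$. Each $R_\ell$-supporter of $j$ contributes $\min\{q,b_i\}\le q$ however large $b_i$ is, so $R_\ell$ contributes less than $|N_\ell|q<p_k$ under any budget profile; extra money on $j$'s side cannot lower $j$'s level. Concretely, take $|N_\ell|=1$, voters $1\in N_\ell$ and $2\in R_\ell\setminus N_\ell$ with identical budgets, and an inactive supporter of $j$ holding $\epsilon>0$. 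Then $j$ is $(p_k-\epsilon)$-affordable and beats $j_\ell$ with no imbalance at all. The advantage comes only from outside money. As you observed, that money can be arbitrarily small per iteration, so dividing the outside budget by a per-iteration contribution does not bound the count.

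The missing idea is to use the form of MES payments. Every selected party has $q\ge p_k/n$, since $p_k=\sum_{i:j\in A_i}\min\{q,b_i\}\le nq$. In each tie-case iteration some inactive voter pays a positive amount. Either some inactive voter pays her entire remaining budget and is exhausted, which happens at most once per voter and hence in at most $n$ iterations. Or every contributing inactive voter pays exactly $q\ge p_k/n$. The inactive voters' total budget is at most $C_\ell p_k$ and never increases, so the latter happens at most $nC_\ell$ times. This gives $E_\ell=C_\ell|N_\ell|+n+nC_\ell$, and with this bound your Steps 2 and 3 go through as written.
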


\begin{proof}
    For all sufficiently large $k$, we have
    \(1-C_\ell p_k>p_k/|N_\ell|\). Thus, at the beginning of stage $\ell$,
    every voter in $N_\ell$ has more than \(p_k/|N_\ell|\) remaining, meaning the stage is ``well-defined''.
    Throughout the stage, every voter in $N_\ell$ has a budget of at least
    \(p_k/|N_\ell|\), and since these contributions sum to $p_k$, the party
    $j_\ell$ is \(p_k/|N_\ell|\)-affordable. Consequently, the affordability
    value $q$ of every party selected during the stage satisfies
    \(q\leq p_k/|N_\ell|\).
    
Call an iteration \emph{exceptional} if MES selects a party
    $j\neq j_\ell$, and let
    \(
        a_j:=|\{i\in R_\ell:j\in A_i\}|.
    \)
    By the choice of $j_\ell$, we have \(a_j\leq |N_\ell|\). 
    If \(a_j<|N_\ell|\), by integrality,
    \(a_j\leq |N_\ell|-1\), and the voters in $R_\ell$ contribute at most
    \[
        a_jq
        \leq
        (|N_\ell|-1)\frac{p_k}{|N_\ell|}
        =
        p_k-\frac{p_k}{|N_\ell|}.
    \]
    Therefore, the voters outside $R_\ell$ contribute at least
    \(p_k/|N_\ell|\). Since their total remaining budget is at most
    \(C_\ell p_k\), there can be at most \(C_\ell p_k / (p_k/|N_\ell|)= C_\ell |N_\ell|\) exceptional
    iterations of this type. Otherwise, \(a_j=|N_\ell|\), and therefore $j_\ell$ has priority over $j$.
    So MES can select $j$ only if its affordability value is strictly
    smaller than that of $j_\ell$. In particular,
    \(q<p_k/|N_\ell|\). The voters in $R_\ell$ therefore contribute
    strictly less than $q|N_\ell| < p_k$, so some voter outside $R_\ell$ must
    contribute.
    Moreover, the affordability value of every selected party satisfies
    \(q\geq p_k/n\), since
    \(
        p_k
        =
        \sum_{i:j\in A_i}\min\{q,b_i\}
        \leq nq,
    \)
    where $b_i$ denotes voter $i$'s remaining budget. In each such
    exceptional iteration, either some voter outside $R_\ell$ exhausts
    her budget, which occurs at most $n$ times, or no such voter is exhausted. In the latter case, every contributing voter
    outside $R_\ell$ pays $q$, so these voters spend at least
    \(q\geq p_k/n\). Since their total remaining budget is at most
    \(C_\ell p_k\), this case can occur at most \(C_\ell p_k /(p_k/n)  = nC_\ell\) times.
    Thus, the total number of exceptional iterations is at most
    \(
        E_\ell:=C_\ell |N_\ell|+n+nC_\ell.
    \)

    In each exceptional iteration, any voter spends at most
    \(q\leq p_k/|N_\ell| \leq p_k\). Hence every voter spends at most \(E_\ell p_k\) over
    all exceptional iterations. The voters in
    \(R_{\ell+1}=R_\ell\setminus N_\ell\) do not approve $j_\ell$, so
    they spend only during exceptional iterations. Each such voter
    therefore has remaining budget at least
    \(1-(C_\ell+E_\ell)p_k\) at the end of the stage.

    At the beginning of the stage, the budget of every voter in $N_\ell$ is at least $1-C_\ell p_k$ and at most $1$. Thus, the budgets of any two voters in
    $N_\ell$ differ by at most \(C_\ell p_k\). Whenever MES selects
    $j_\ell$, all voters in $N_\ell$ pay the same amount. Exceptional
    iterations can increase the difference between their budgets by at
    most \(E_\ell p_k\). Thus, throughout the stage, the budgets of any
    two voters in $N_\ell$ differ by at most
    \((C_\ell+E_\ell)p_k\).
    When the stage ends, some voter in $N_\ell$ has remaining budget less
    than \(p_k/|N_\ell|\). Therefore, every voter in $N_\ell$ has remaining
    budget at most
    \(
        \frac{p_k}{|N_\ell|}+(C_\ell+E_\ell)p_k.
    \)
    Hence their total remaining budget is at most \(D_\ell p_k\), where
    \(
        D_\ell:=1+|N_\ell|(C_\ell+E_\ell).
    \)

    Finally, at the beginning of the stage, the
    voters in $N_\ell$ have total remaining budget of at least
    \(|N_\ell|(1- C_\ell p_k)\). They spend at most \(E_\ell p_k\) in total on
    exceptional selections, and retain at most \(D_\ell p_k\) at
    the end of the stage. Thus, they spend at least
    \(
        |N_\ell|-(|N_\ell| C_\ell+E_\ell+D_\ell)p_k
    \)
    on copies of $j_\ell$.
    No voter in \(R_\ell\setminus N_\ell\) approves $j_\ell$, the voters in $N_\ell$ have total budget at most $N_\ell$, while the
    voters outside $R_\ell$ have total remaining budget at most
    \(C_\ell p_k\). Therefore, the total spent on copies of $j_\ell$ is at most $|N_\ell| + C_\ell p_k$, implying
    \(
        |N_\ell|-(|N_\ell| C_\ell+E_\ell+D_\ell)p_k
        \leq n_\ell^k p_k
        \leq |N_\ell|+C_\ell p_k.
    \)
    Dividing by \(p_k=n/k\) gives
    \[
        \frac{k |N_\ell|}{n}
        -(|N_\ell| C_\ell+E_\ell+D_\ell)
        \leq n_\ell^k
        \leq
        \frac{k |N_\ell|}{n}+C_\ell.
    \]
    Therefore, setting
    \(
        B_\ell
        :=
        \max\{|N_\ell| C_\ell+E_\ell+D_\ell,C_\ell\},
    \)
    we obtain
    \(
        \left|
            n_\ell^k-{k|N_\ell|}/{n}
        \right| = \left|
            n_\ell^k-{kr(j_\ell)}
        \right|
        \leq B_\ell.
    \)
\end{proof}

\begin{proposition}
    Fix a priority order over $P$, and use this order to break ties in both
    majoritarian portioning and the method of equal shares (MES). Then
    majoritarian portioning is the lift of the MES.
\end{proposition}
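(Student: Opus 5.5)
The plan is to show that the MES committee $W_k$ on $(N,P,A,k)$ stays within a constant of $k\cdot r$. Here $r$ is the majoritarian portioning, and both rules use the same priority order. Concretely, I aim for a constant $K$, independent of $k$, with $|W_k(j)-k\,r(j)|\le K$ for all $j$ and all sufficiently large $k$. Both directions of the lift then follow quickly. Since MES with fixed tie-breaking is deterministic, any sequence witnessing $\lift{\text{MES}}$ for some $r'$ consists of committees $W_{k_\ell}$ with $k_\ell\to\infty$. The bound gives $W_{k_\ell}/k_\ell\to r$, so $r'=r$. Conversely, the sequence $(W_k)_{k\in\mathbb{N}}$ itself witnesses that $r$ lies in the lift. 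One caveat: if MES in apportionment can stop with fewer than $k$ seats, I will check that the number of missing seats is bounded by a constant. At termination, each party's supporters hold total leftover budget less than $p_k$, so the total leftover is at most $|P|\,p_k$, and hence at most $|P|$ seats are missing. Any completion convention then changes the counts by only $O(1)$, and the argument goes through unchanged.

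To obtain the constant $K$, I will induct over the rounds $\ell=1,\dots,L$ of majoritarian portioning using \cref{lem:mes-stage}. For the base case, take the start of MES with $R_1=N$: every voter has budget $1$ and nobody is outside $R_1$, so the hypothesis holds with $C_1=0$. For the inductive step, assume the hypothesis holds at the start of stage $\ell$ with constant $C_\ell$, and apply the lemma.
\begin{itemize}
\item Every voter in $R_{\ell+1}$ has budget at least $1-(C_\ell+E_\ell)p_k$.
\item The voters outside $R_{\ell+1}$ are those outside $R_\ell$ together with $N_\ell$. Their total remaining budget is at most $(C_\ell+D_\ell)p_k$.
\end{itemize}
Setting $C_{\ell+1}:=C_\ell+E_\ell+D_\ell$, the hypothesis holds at the moment stage $\ell$ ends, so stage $\ell+1$ can be started there. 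After the last round $L$, every voter is inactive, so the total remaining budget is at most $C_{L+1}p_k$. Each further MES selection costs $p_k$, so at most $C_{L+1}$ further seats are bought.

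The final step is the bookkeeping. Each party $j=j_\ell$ receives its $n_\ell^k$ copies in stage $\ell$, and $|n_\ell^k-k\,r(j_\ell)|\le B_\ell$ by the lemma; note $r(j_\ell)=|N_\ell|/n$ by definition of majoritarian portioning. Beyond that, $j$ can be selected only as an exceptional party in other stages, at most $\sum_\ell E_\ell$ times, or in the tail, at most $C_{L+1}$ times. A party never chosen by majoritarian portioning has $r(j)=0$ and receives at most $\sum_\ell E_\ell+C_{L+1}$ copies. Taking $K:=\max_\ell B_\ell+\sum_\ell E_\ell+C_{L+1}$ plus the $|P|$ slack for incompleteness gives the uniform bound.

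The main obstacle I expect is verifying that the lemma's hypothesis is exactly reproduced at stage boundaries. In particular, I need to check that voters of $N_\ell$ who later approve $j_{\ell+1}$ do not break the argument. They are counted among the ``outside'' budget, which is why $D_\ell$ is folded into $C_{\ell+1}$. I also need the constants to be genuinely independent of $k$ across all $L\le|P|$ rounds; this holds because there are finitely many rounds and each recursion step is additive.
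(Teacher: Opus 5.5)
Your stage-by-stage induction via \cref{lem:mes-stage}, the resulting uniform bound $|W_k(j)-k\,r(j)|\le K$, and your uniqueness direction (any witnessing sequence consists of MES outcomes $W_{k_\ell}$ with $|W_{k_\ell}|=k_\ell\to\infty$, hence converges to $r$) all match the paper. The gap is in the other direction. There you claim that $(W_k)_{k\in\mathbb{N}}$ itself witnesses $r\in\lift{\text{MES}}$. But \cref{def:lifting} requires each witness $W$ to satisfy the predicate on the instance $(N,P,A,|W|)$. So $W$ must be the MES outcome for committee size $|W|$.

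When MES stops early, $W_k$ is the outcome for size $k\neq |W_k|$, so it is not a valid witness. Your completion convention does not repair this. It replaces MES by a different rule (``MES plus completion''), whereas the statement concerns MES as the paper defines it, explicitly allowing $|W_k|<k$. The $O(1)$ seat deficit was never the issue: $|W_k|/k\to 1$ already gives $W_k/|W_k|\to r$ without any completion. What is missing is an infinite family of committee sizes at which MES is exhaustive. Your uniform bound cannot supply this, since it only gives $|W_k|\ge k-K$. Without such sizes, the lift of MES could a priori be empty.

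The paper closes this by taking $k=mn$, so that $p_k=1/m$, and showing that MES then reproduces majoritarian portioning exactly. At the start of round $\ell$, voters in $R_\ell$ have budget $1$ and all others have budget $0$. The party $j_\ell$ is $1/(m|N_\ell|)$-affordable. Any party with $a_j<|N_\ell|$ active supporters needs $q\ge 1/(m\,a_j)>1/(m|N_\ell|)$, and ties with $a_j=|N_\ell|$ go to $j_\ell$ by the shared priority order. Hence MES buys exactly $m|N_\ell|$ copies of $j_\ell$, exhausting precisely the budgets of $N_\ell$. This gives $W_{mn}=mn\cdot r$ with $|W_{mn}|=mn$, so $(W_{mn})_{m\in\mathbb{N}}$ is a legitimate witness; this step does not even need the lemma. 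With this replacing your forward direction, the rest of your argument goes through.
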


\begin{proof}
    Let $r$ be the portioning returned by majoritarian portioning, and
    suppose that the method terminates after $L$ rounds. For every
    $\ell\in\{1,\dots,L\}$, let $R_\ell$, $j_\ell$, and $N_\ell$ be as in
    \cref{lem:mes-stage}. For every $k\in\mathbb{N}$, let $W_k$ be the committee returned by MES
    on $(N,P,A,k)$.

    We apply \cref{lem:mes-stage} inductively. At the beginning of the
    execution of MES, every voter has budget $1$. Thus, the assumptions of
    the lemma hold for stage $1$ with $C_1=0$. Suppose that the assumptions of the lemma hold at the beginning of
    stage $\ell$ for some constant $C_\ell$. By
    \cref{lem:mes-stage}, at the end of this stage, every voter in
    $R_{\ell+1}=R_\ell\setminus N_\ell$ has remaining budget at least
    \(
        1-(C_\ell+E_\ell)p_k.
    \)
    Moreover, the voters outside $R_\ell$ have total remaining budget at
    most $C_\ell p_k$, while the voters in $N_\ell$ have total remaining
    budget at most $D_\ell p_k$. Hence, the voters outside
    $R_{\ell+1}$ have total remaining budget at most
    \(
        (C_\ell+D_\ell)p_k.
    \)
    Therefore, the assumptions of the lemma hold for stage $\ell+1$ with
    \(
        C_{\ell+1}
        :=
        C_\ell+\max\{E_\ell,D_\ell\}.
    \)

    After stage $L$, we have $R_{L+1}=\emptyset$, and the
    preceding induction shows that the voters have total remaining budget
    at most $C_{L+1}p_k$. Since every additional copy costs $p_k$, MES can
    select at most $C_{L+1}$ further copies after stage $L$.
     Also, the total number of exceptional iterations is bounded by $\sum_{\ell=1}^L E_\ell$, implying that \(0
        \leq
        W_k(j_\ell)-n_\ell^k
        \leq \sum_{\ell=1}^L E_\ell+C_{L+1}\) for every $\ell\in\{1, \dots, L\}$.
    Combining with \(\left|
            n_\ell^k-{kr(j_\ell)}
        \right|
        \leq B_\ell\),
    we obtain
    \(
        \left|W_k(j_\ell)-kr(j_\ell)\right|
        \leq
         \sum_{\ell=1}^L E_\ell+C_{L+1}+B_\ell
    \)
    for all sufficiently large $k$. Similarly, if $j \notin \{j_1, \dots, j_L\}$, we have $r(j) = 0$, and 
    \(
    0\leq
        W_k(j)
        \leq \sum_{\ell=1}^L E_\ell+C_{L+1}.
    \)
    Therefore,
    \(
        \left|W_k(j)-kr(j)\right|
        \leq
         \sum_{\ell=1}^L E_\ell+C_{L+1}+B_\ell,
    \) for every $j\in P$ and ${W_k(j)}/{k}\to r(j)$.
    Note that MES does not necessarily return a committee of size $k$, and $|W_k| \leq k$. However, since $\sum_{j\in P}r(j)=1$, it follows that
    \(
        {|W_k|}/{k}
        =
        \sum_{j\in P}{W_k(j)}/{k}
        \to
        \sum_{j\in P}r(j)
        =
        1.
    \)
    In particular, $|W_k|\to\infty$. Finally, for every $j\in P$,
    \[
        \frac{W_k(j)}{|W_k|}
        =
        \frac{W_k(j)/k}{|W_k|/k}
        \to
        r(j),
    \]
    and
    \(
        {W_k}/{|W_k|}\to r.
    \)
    
    It remains to verify that the above convergence characterizes the lift of MES according to \cref{def:lifting}\footnote{The subtlety is that $W_k$ is the outcome of MES on $(N, P, A, k)$, and not on $(N, P, A, |W_k|)$ as required by \cref{def:lifting}.}. First, consider committee sizes of the form \(k=mn\), where \(m\in\mathbb{N}\). In this case, \(p_k=n/k=1/m\), and MES exactly follows the rounds of majoritarian portioning. Indeed, inductively, at the beginning of round \(\ell\), every voter in \(R_\ell\) has budget \(1\), while every voter outside \(R_\ell\) has budget \(0\). The party \(j_\ell\) is \(1/(m|N_\ell|)\)-affordable, and no party has a smaller affordability value because \(j_\ell\) has the largest number of active supporters; ties are resolved in favor of \(j_\ell\) by the fixed priority order. Thus, MES selects exactly \(m|N_\ell|\) copies of \(j_\ell\), exhausting the budgets of the voters in \(N_\ell\), before proceeding to the next round. Consequently,
\(
    W_{mn}(j_\ell)
    =
    m|N_\ell|
    =
    mn\,r(j_\ell)
\)
for every \(\ell\), and hence \(|W_{mn}|=mn\) and \(W_{mn}/|W_{mn}|=r\). Therefore, \((W_{mn})_{m\in\mathbb{N}}\) witnesses that \(r\) satisfies the lift of MES.

Conversely, suppose that a portioning \(t\) satisfies the lift of MES, witnessed by a sequence \((V_m)_{m\in\mathbb{N}}\). Since ties are broken according to a fixed priority order, MES has a unique outcome for every committee size, and hence \(V_m=W_{|V_m|}\). As \(|V_m|\to\infty\), the convergence established above implies
\(
    {V_m}/{|V_m|}
    =
    {W_{|V_m|}}/{|W_{|V_m|}|}
    \to r.
\)
Since \(V_m/|V_m|\to t\) by assumption, we obtain \(t=r\). Hence, majoritarian portioning is precisely the lift of MES.
\end{proof}

\subsection{Implication Relations} \label{app:por:thms}
\begin{proposition}
Suppose that, on all apportionment instances, predicate $X$ implies predicate $Y$.
Then, $\lift{X}$ implies $\lift{Y}$ in all portioning instances.
Similarly, if for each apportionment instance, there exists a committee satisfying $X$, it must be true that, for each portioning instance, there exists a portioning satisfying $\lift{X}$.
\end{proposition}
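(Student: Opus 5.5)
The first part follows directly from the definition of $\lift{\cdot}$. The lift only quantifies over sequences of apportionment committees, and it imposes the predicate on each committee of the sequence.

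Fix a portioning instance $(N,P,A)$ and a portioning $r$ satisfying $\lift{X}$. By \cref{def:lifting}, there is a sequence $(W_\ell)_{\ell\in\mathbb{N}}$ with the following properties:
\begin{itemize}
\item each $W_\ell$ satisfies $X$ on the apportionment instance $(N,P,A,|W_\ell|)$;
\item $|W_\ell|\to\infty$;
\item $W_\ell/|W_\ell|\to r$.
\end{itemize}
Since $X$ implies $Y$ on every apportionment instance, each $W_\ell$ also satisfies $Y$ on $(N,P,A,|W_\ell|)$. The same sequence therefore witnesses $\lift{Y}$ for $r$. No limit argument is needed here, because the sequence and its limit are left unchanged.

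For the existence part, fix $(N,P,A)$ and take the committee sizes $k=1,2,3,\dots$. By hypothesis, for each $k$ there is a committee $W_k$ of size $k$ satisfying $X$ on $(N,P,A,k)$. The normalized vectors $W_k/k$ lie in the probability simplex $\{x\in\mathbb{R}^P_{\geq 0}:\norm{x}_1=1\}$. This set is compact because $P$ is finite. By Bolzano--Weierstrass, some subsequence $W_{k_\ell}/k_\ell$ converges to a point $r$ of the simplex, so $r$ is a portioning. Along this subsequence we still have $|W_{k_\ell}|=k_\ell\to\infty$, and each $W_{k_\ell}$ satisfies $X$ on $(N,P,A,|W_{k_\ell}|)$. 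Hence $r$ satisfies $\lift{X}$.

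The only point requiring care, and the closest thing to an obstacle, is bookkeeping in the definition of the lift. The predicate must hold on the instance whose committee size equals $|W_\ell|$. This is immediate here because the hypotheses are stated per apportionment instance with $k=|W|$. We also need the limit to have unit $\ell_1$-norm, which follows from closedness of the simplex.
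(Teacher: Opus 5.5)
Your proof is correct and follows the paper's argument: the same witnessing sequence transfers $\lift{X}$ to $\lift{Y}$, and existence comes from extracting a convergent subsequence of $W_k/k$ by compactness. Taking the compact set to be the simplex rather than $[0,1]^P$ is a small improvement, since it makes explicit that the limit has unit $\ell_1$-norm and is therefore a genuine portioning.
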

\begin{proof}
Assume apportionment predicate \(X\) implies \(Y\).
Let the portioning \(r\) satisfy $\text{lift}(X)$ witnessed by committees \((W_\ell)_{\ell}\)
such that \(|W_\ell|\to\infty\) and \(W_\ell/|W_\ell|\to r\).
Each \(W_\ell\) satisfies \(X\); hence, by the assumption, each \(W_\ell\) also satisfies \(Y\).
Therefore \(r\) satisfies $\text{lift}(Y)$ witnessed by   \((W_\ell)_{\ell}\).

For existence, consider committees $W_k$ satisfying condition $X$ for each committee size $k$.
Consider the normalized sequence $(W_k / k) \in [0,1]^P$ for $k = 1, 2, \dots$.
Since $[0,1]^P$ is compact, there exists a convergent subsequence, whose limit satisfies $\lift{X}$ by definition.
\end{proof}

\begin{proposition}\label{prop:por:eq}
   The lifts of the following axioms are equivalent: bounded PAV improvement, local PAV optimality, PAV optimality, and Lindahl priceability.
\end{proposition}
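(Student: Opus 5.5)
We will close a cycle of implications among the four lifts. Two arrows come for free from \cref{rem:carry} applied to implications already known in apportionment: PAV optimality implies local PAV optimality, and local PAV optimality implies bounded PAV improvement. So $\lift{\text{PAV opt.}} \Rightarrow \lift{\text{local PAV opt.}} \Rightarrow \lift{\text{bounded PAV imp.}}$. Moreover, \cref{thm:PAV-LP} is an equivalence in apportionment, so \cref{rem:carry} applied in both directions gives $\lift{\text{bounded PAV imp.}} \Leftrightarrow \lift{\text{Lindahl pr.}}$. It therefore remains only to show $\lift{\text{bounded PAV imp.}} \Rightarrow \lift{\text{PAV opt.}}$.

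By \cref{por:bpav}, a portioning $r$ satisfies the lift of bounded PAV improvement iff $u_i(r)>0$ for all $i$ and $g_j \coloneqq \sum_{i:j\in A_i} 1/u_i(r) \le n$ for every $j$. (The positivity $u_i(r)>0$ holds because otherwise the left side is infinite for any $j\in A_i$, and every voter approves some party.) By \cref{prop:lift}, the lift of PAV optimality is exactly Nash portioning, so it suffices to show that such an $r$ maximizes $f(x)\coloneqq\sum_i \log u_i(x)$ over the simplex.

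The key identity is
\[
\sum_j r(j)\, g_j = \sum_i \frac{1}{u_i(r)} \sum_{j\in A_i} r(j) = n.
\]
Since $g_j\le n$ for all $j$ and $\sum_j r(j)=1$, this forces $g_j = n$ whenever $r(j)>0$. These are exactly the KKT conditions: the gradient of $f$ is maximal, equal to $n$, on the support of $r$. Concavity of $f$ then finishes the argument. For any portioning $t$,
\[
f(t)\le f(r)+\sum_j g_j\,(t(j)-r(j)) \le f(r) + n - n = f(r).
\]
If some $u_i(t)=0$, the claim is trivial. Hence $\text{NW}(t)\le\text{NW}(r)$, and $r$ is Nash optimal.

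The main obstacle is the bookkeeping rather than any deep step: we need the positivity $u_i(r)>0$ in order to differentiate, and we need to combine the characterizations of \cref{por:bpav} and \cref{prop:lift} precisely. The concavity step is standard.
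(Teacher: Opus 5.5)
Your proof is correct. It shares the paper's first half: both arguments inherit $\lift{\text{PAV opt.}} \Rightarrow \lift{\text{local PAV opt.}} \Rightarrow \lift{\text{bounded PAV imp.}} \Leftrightarrow \lift{\text{Lindahl pr.}}$ from apportionment via \cref{rem:carry}. The routes differ only in how the cycle is closed.

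The paper goes through Lindahl priceability. It uses \cref{prop:lindahleq} to identify the lift of Lindahl priceability with Lindahl equilibrium, cites the equivalence of Lindahl equilibrium and Nash optimality~\cite{FGM16}, and finishes with \cref{prop:lift}.

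You start instead from the characterization in \cref{por:bpav}. The bound $g_j \le n$ together with the identity $\sum_j r(j)\,g_j = n$ gives $\sum_j g_j\,(t(j)-r(j)) \le 0$ for every portioning $t$. Concavity of $\sum_i \log u_i$ then certifies Nash optimality, and only afterwards do you invoke \cref{prop:lift}. Strictly, only this inequality is needed; the complementary-slackness remark ($g_j = n$ on the support) is a nice interpretation but is not used.

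Your argument is self-contained: it needs neither \cref{prop:lindahleq} nor the external equivalence. It is essentially a rigorous version of the ``step-by-step'' explanation the paper gives informally in the main text: bounded improvement becomes first-order optimality in the limit, and concavity upgrades local to global optimality. The paper's route is shorter and foregrounds the Lindahl-equilibrium interpretation.

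Both proofs rest on the nontrivial converse direction of \cref{prop:lift}, namely that every Nash-optimal portioning is a limit of PAV-optimal committees, and you invoke it correctly.
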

\begin{proof}
The following implications are inherited from the apportionment setting \[ \lift{\text{PAV opt.}} \Rightarrow \lift{\text{local PAV opt.}}\Rightarrow \lift{\text{bounded PAV improvement}}\Leftrightarrow \lift{\text{Lindahl pr.}}\] 

By \cref{prop:lindahleq}, the lift of Lindahl priceability is Lindahl equilibrium. 
\citet{FGM16} showed that Lindahl equilibrium is equivalent to Nash optimality, which is the lift of PAV optimality by \cref{prop:lift}. This completes the proof.
\end{proof}

\begin{theorem}
    Every core stable portioning $r$ satisfies priceability.
\end{theorem}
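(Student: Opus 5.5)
We read ``core stable portioning'' as a portioning in the weak core, which by the earlier proposition is the lift of core stability, and ``priceability'' as the lift of priceability, which is decomposability. Equivalently, by the result of \textcite{BBP+21} quoted earlier, it is group fair share. The plan is therefore to show that every weak-core portioning $r$ satisfies group fair share, and then convert to decomposability. We can do the conversion either by invoking the equivalence of \textcite{BBP+21} or, to be self-contained, by \cref{lem:supp-dem}.

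For the flow version, take supply side $V=N$ with $h_i = 1/n$ and demand side $U=P$ with $d_j = r(j)$, with edges given by approvals. The lemma asks for $\delta_i(j)$ with $\sum_j \delta_i(j)\le 1/n$ and $\sum_i \delta_i(j) = r(j)$. Since $\sum_j r(j) = 1 = \sum_i 1/n$, every voter's inequality must be tight. The Hall condition reads
\[
\sum_{j \in P'} r(j) \le |\{i : A_i \cap P' \neq \emptyset\}|/n \quad \text{for all } P' \subseteq P .
\]
Taking complements, let $S$ be the set of voters with $A_i\cap P'=\emptyset$, so that $\bigcup_{i\in S} A_i\subseteq P\setminus P'$. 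The condition becomes $\sum_{j\in \bigcup_{i\in S}A_i} r(j) \geq |S|/n$, which is exactly group fair share for $S$. For $j$ approved by nobody, the Hall condition forces $r(j)=0$. This also follows from group fair share applied to $S=N$: that gives $\sum_{j\in\bigcup_i A_i} r(j)\ge 1$, so all mass lies on approved parties.

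The core step is to show that weak core implies group fair share. Suppose some $S\neq\emptyset$ has $\sum_{j\in U_S} r(j) < |S|/n$, where $U_S=\bigcup_{i\in S}A_i$. Define the deviation
\[
t \coloneqq \tfrac{|S|/n}{\sum_{j\in U_S} r(j)}\, r|_{U_S},
\]
the restriction of $r$ to $U_S$ scaled up so that $\norm{t}_1 = |S|/n$. Every $i\in S$ has $A_i\subseteq U_S$, so $u_i(t) = c\,u_i(r)$ with scaling factor $c>1$. This gives a strict improvement provided $u_i(r)>0$.

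The main obstacle is voters in $S$ with $u_i(r)=0$, and the degenerate case $\sum_{j\in U_S} r(j)=0$. We handle these by perturbing. Let $\eta = |S|/n - \sum_{j\in U_S} r(j) > 0$. Take $t \coloneqq r|_{U_S} + \frac{\eta}{|U_S|}\mathbf{1}_{U_S}$, which has $\norm{t}_1 = |S|/n$. For each $i\in S$ we get $u_i(t) \ge u_i(r) + \eta|A_i|/|U_S| > u_i(r)$, using that every voter approves at least one party. So $t$ blocks $r$, contradicting the weak core, and $r$ satisfies group fair share. This perturbed $t$ makes the scaling argument unnecessary altogether. Combining with the Hall/flow step above yields decomposability, i.e.\ the lift of priceability.
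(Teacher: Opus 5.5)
Your proof is correct and follows the same route as the paper: reduce decomposability to the Hall condition of \cref{lem:supp-dem}, then verify that condition with a weak-core deviation by the coalition $S$ of voters whose approval sets avoid $P'$. The only real difference is the deviation itself: you add the deficit $\eta$ uniformly on $U_S$ instead of rescaling $r$ on $P\setminus P'$, which spares you the paper's preliminary singleton argument that $u_i(r)\ge 1/n$ and its separate case $\sum_{j\notin P'} r(j)=0$.
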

\begin{proof}
    Consider a core stable portioning $r$. First, note that letting for every voter $i$, letting $|S|=\{i\}$ and $t=e_j/n$ for some $j\in A_i$, core stability implies $u_i(r) \geq 1/n > 0$.
    
    Applying \cref{lem:supp-dem} to the graph \(G=(N\cup P,E)\) where \((i,j)\in E\) if and only if \(j\in A_i\), it suffices to verify that for every $P' \subseteq P$,
    \[
         n\cdot \sum_{j \in P'}r(j) \leq \sum_{i: (A_i \cap P')\ne\emptyset}  1 = 
         |\{i: (A_i \cap P') \ne \emptyset\}|
    \]
    Fix a subset $P' \subseteq P$. 
    If $\sum_{j\notin P'} r(j)=0$, as $u_i(r) > 0$ for every $i$,
    we get $|\{i: (A_i \cap P') \ne \emptyset\}| = n$ and the inequality holds.
    
    Otherwise, let $S=\{i: A_i \subseteq P\setminus P'\}$, and define $t$ as follows
    \begin{align*}
        t(j) =  
        \begin{cases}
            \frac{r(j)}{\sum_{j'\notin P'} r(j')}\cdot \frac{|S|}{n} & j \notin P'\\
            0 & j \in P'
        \end{cases}
    \end{align*}
    So we get $\lVert t \rVert_1 = {|S|}/{n}$, and by core stability there exists $i\in S$ such that
    $
        u_i(r) \geq u_i(t) ={u_i(r)}/({\sum_{j\notin P'} r(j)}
        )\cdot {|S|}/{n}
    $
    where the equality holds because $i\in S$ and $A_i \subseteq P\setminus P'$. As $u_i(r) > 0$, by rearranging the terms we get  ${|S|}/{n}\leq \sum_{j \notin P'}r(j)$. Therefore $1-{|S|}/{n}\geq 1-\sum_{j \notin P'}r(j)$ implying ${|N\setminus S|}/{n}\geq \sum_{j \in P'}r(j)$  where $N\setminus S =\{i: A_i \cap P' \ne \emptyset\}$.
\end{proof}

We next confirm that there are no other implication relations in \cref{fig:portioning}. For the reader's convenience, the figure is shown again below. Each axiom label refers to the lift of the named axiom.

        \begin{adjustbox}{max width=\linewidth,center}
            \begin{tikzpicture}[
     x=0.9cm,
    y=1.1cm,
    box/.style={
        draw,
        rounded corners,
        very thin,
        node font=\footnotesize,
        minimum height=14pt,
        inner xsep=2pt,
        inner ysep=1pt,
        align=center
    }
]
  \node[box] (node1) at (-3.57,3.96) {Lindahl Pr.\vphantom{Ty}};
  \node[box] (node2) at (-3.57,3.18) {Core Stability\vphantom{Ty}};
  \node[box] (node5) at (-3.56,2.22) {FJR\vphantom{Ty}};
  \node[box] (node12) at (-1.94,2.22) {EJR\vphantom{Ty}};
  \node[box] (node6) at (-0.18,2.22) {EJR+\vphantom{Ty}};

  \draw[arrows=-Stealth, thick] (node1.south) -- (node2.north);
  \draw[arrows=-Stealth, thick] (-3.19,2.18) -- (-2.32,2.15);
  \draw[arrows=-Stealth, thick] (-0.67,2.15) -- (-1.56,2.15);

  \node[box] (node7) at (-3.57,1.46) {FPJR\vphantom{Ty}};
  \node[box] (node11) at (-1.94,1.46) {PJR\vphantom{Ty}};
  \node[box] (node8) at (-0.18,1.45) {PJR+\vphantom{Ty}};
  \node[box] (node10) at (-1.91,0.38) {Priceability\vphantom{Ty}};

  \draw[arrows=-Stealth, thick] (node2.south) -- (node5.north);
  \draw[arrows=-Stealth, thick] (node5.south) -- (node7.north);
  \draw[decorate, thick, arrows=-Stealth, bend left=17]
      (node2.south) to[bend left=13] (node8.north);
  \draw[arrows=-Stealth, thick] (node12.south) -- (node11.north);
  \draw[arrows=-Stealth, thick] (node10.north) -- (node8.south);
  \draw[arrows=-Stealth, thick] (node10.north) -- (node7.south);
  \draw[arrows=-Stealth, thick] (node6.south) -- (node8.north);

  \node[box] (node3) at (-0.18,3.18) {Bounded PAV Imp.\vphantom{Ty}};
  \node[box] (node4) at (-0.18,3.96) {Local PAV Optimality\vphantom{Ty}};

  \draw[arrows=-Stealth, thick, line cap=projecting]
      (-0.27,3.74) -- (-0.27,3.4);
  \draw[thick, line cap=projecting, arrows=-Stealth]
      (-1.56,2.28) -- (-0.68,2.27);
  \draw[arrows=-Stealth, thick]
      (-0.63,1.37) -- (-1.57,1.37);
  \draw[thick, line cap=projecting, arrows=-Stealth]
      (-1.57,1.5) -- (-0.63,1.49);

  \draw[arrows=-Stealth, thick, line cap=projecting]
      (node3.south) -- (node6.north);

  \draw[thick, line cap=projecting, arrows=Stealth-]
      (-3.19,2.3) -- (-2.34,2.28);

  \draw[arrows=-Stealth, thick]
      (-3.12,1.39) -- (-2.3,1.39);
  \draw[thick, line cap=projecting, arrows=Stealth-]
      (-3.12,1.53) -- (-2.3,1.53);

  \draw[decorate, thick, arrows=Stealth-Stealth, bend right=15]
      (-0.5,2.45) to[bend right=11] (-3.38,2.45);

  \draw[decorate, thick, arrows=Stealth-Stealth, bend left=15]
      (-0.44,1.23) to[bend left=9] (-3.33,1.23);

  \draw[thick, arrows=-Stealth]
      (-2.96,3.74) -- (-1.62,3.24);
  \draw[thick, arrows=Stealth-]
      (-3.33,3.74) -- (-1.62,3.09);
  \draw[arrows=-Stealth, thick, bend left=50]
      (node4.east) to[bend left=30] (node10.east);

  \node[box] (node9) at (-0.18,4.74) {PAV Optimality\vphantom{Ty}};

  \draw[arrows=-Stealth, thick, line cap=projecting]
      (-0.28,4.52) -- (-0.27,4.18);

  \draw[arrows=-Stealth, thick, bend left=50, portioningcolor]
        (node2.west) to[bend right=32] (node10.west);

  \draw[arrows=-Stealth, thick, portioningcolor] (-0.11,3.4) -- (-0.11,3.74);

  \draw[arrows=-Stealth, thick, portioningcolor] (-0.11,4.18) -- (-0.11,4.52);
\end{tikzpicture}
        \end{adjustbox}

\begin{example}\label{ex:por:lp}
    Let $n=3$ and \(P=\{a, b\}\). The voters have approval sets 
\(
A_1=\{a\},
A_2=\{b\},
A_3=\{a, b\}
\), and the selected portioning is $r=(1/3, 2/3)$.
\end{example}

\begin{proposition}
    Core stability does not imply Lindahl priceability.
\end{proposition}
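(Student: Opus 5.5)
The plan is to use \cref{ex:por:lp} and check two things for $r=(1/3,2/3)$: that it lies in the weak core, which is the lift of core stability, and that it violates the lift of Lindahl priceability. For the negative part, I will not work with prices directly. Instead I will use the characterizations already established: by \cref{prop:por:eq} (equivalently, \cref{thm:PAV-LP} carried over through \cref{rem:carry}), the lift of Lindahl priceability coincides with the lift of bounded PAV improvement. \Cref{por:bpav} describes the latter as requiring $\sum_{i: j\in A_i} 1/u_i(r)\le n$ for every party $j$.

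\emph{Not Lindahl priceable.} The utilities are $u_1(r)=1/3$, $u_2(r)=2/3$ and $u_3(r)=1$. Party $a$ is approved by voters $1$ and $3$, so
\[
\sum_{i:a\in A_i}\frac{1}{u_i(r)} = 3+1 = 4 > 3 = n .
\]
Hence $r$ fails the lift of bounded PAV improvement, and therefore fails the lift of Lindahl priceability. Equivalently, $r$ is not Nash optimal: moving mass from $b$ to $a$ raises the Nash welfare, which gives a sanity check via \cite{FGM16}.

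\emph{Weak core.} Take any nonempty $S$ and any $t\ge 0$ with $\lVert t\rVert_1\le |S|/3$. I would split into cases.
\begin{itemize}
\item If $3\in S$, then $u_3(t)\le \lVert t\rVert_1\le 1=u_3(r)$, so voter $3$ does not strictly prefer $t$.
\item If $S=\{1\}$, then $u_1(t)\le 1/3=u_1(r)$.
\item If $S=\{2\}$, then $u_2(t)\le 1/3<2/3=u_2(r)$.
\item If $S=\{1,2\}$, a blocking $t$ would need $t(a)>1/3$ and $t(b)>2/3$. Then $\lVert t\rVert_1>1>2/3$, which contradicts the budget.
\end{itemize}
So no coalition blocks, and $r$ is in the weak core.

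The case analysis is routine. The only point needing care is to invoke the right prior characterization, namely \cref{prop:por:eq} together with \cref{por:bpav}, so that non-priceability reduces to a single numeric inequality rather than an argument over all price systems.
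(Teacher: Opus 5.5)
Your proposal is correct and follows the paper's proof. It uses the same example and the same computation $\sum_{i: a\in A_i} 1/u_i(r) = 4 > 3 = n$, checked against the lifted bounded-PAV-improvement characterization and its equivalence with the lift of Lindahl priceability. The only difference is that you carry out the weak-core case analysis explicitly, whereas the paper simply asserts that $r$ is core stable.
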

\begin{proof}
    In \cref{ex:por:lp}, portioning $r$ satisfies core stability, but it does not satisfy  (lift of) bounded PAV improvement as \[\frac{\partial \log\text{NW}(r)}{\partial r(a)} = \sum_{i: a \in A_i} \frac{1}{u_i(r)}=4 > 3 =n.\] 
    By \cref{prop:por:eq}, $r$ is not Lindahl priceable.  
\end{proof}

\begin{example}\label{ex:ejr-core}
    Let $n=4$ and $P=\{a, b, c\}$. The voters have approval sets 
\(
A_1=\{a\},
A_2=\{b\},
A_3=\{c\},
A_4=\{b, c\}
\), and the selected portioning is $r=(1/3, 1/3, 1/3)$.
\end{example}

\begin{example}\label{app:por:ex:pjr}
    Let $n=4$ and $P=\{a, b, c, d, e\}$. The voters have approval sets
    \(
A_1 = \{a\},
A_2=\{b, d\},
A_3=\{c, e\},
A_4=\{d, e\}
\), and the selected portioning is $r=(1/4, 1/4, 1/4, 1/8, 1/8)$.
\end{example}

\begin{proposition}
    Priceability and EJR are incomparable.
\end{proposition}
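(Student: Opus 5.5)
The plan is to prove both non-implications with the two examples stated just before the proposition. I will use the portioning characterizations derived in \cref{app:por:lift}. The lift of EJR/EJR+ requires that every nonempty $S$ with $\bigcap_{i\in S}A_i\neq\emptyset$ contain some $i$ with $u_i(r)\geq |S|/n$. The lift of priceability is decomposability. Equivalently, by \cref{lem:supp-dem} applied to the voter--party approval graph with supplies $1$ and demands $n\,r(j)$, it is the Hall condition $n\sum_{j\in P'}r(j)\leq |\{i : A_i\cap P'\neq\emptyset\}|$ for all $P'\subseteq P$.

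\emph{EJR without priceability}, via \cref{ex:ejr-core}. To refute priceability, I take $P'=\{a\}$. The Hall condition fails, since $n\,r(a)=4/3>1$ and only voter $1$ approves $a$. Concretely, voter $1$'s payments would have to fund all of $a$'s share $1/3>1/4$. For EJR, I enumerate the cohesive groups. Every group with a common approved party is a subset of $\{1\}$, $\{2,4\}$, or $\{3,4\}$. For singletons, each voter has $u_i(r)\geq 1/3\geq 1/4$. For $\{2,4\}$ and $\{3,4\}$, voter $4$ has $u_4(r)=2/3\geq 1/2$. Hence $r$ satisfies the lift of EJR.

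\emph{Priceability without EJR}, via \cref{app:por:ex:pjr}. I will exhibit an explicit decomposition with each voter contributing $1/n=1/4$:
\[
\delta_1=\tfrac14 e_a,\quad \delta_2=\tfrac14 e_b,\quad \delta_3=\tfrac14 e_c,\quad \delta_4=\tfrac18 e_d+\tfrac18 e_e .
\]
Each $\delta_i$ is supported on $A_i$, and the four sum to $r$. To refute EJR, I take $S=\{2,4\}$, who both approve $d$. They would need some member with utility at least $|S|/n=1/2$. However, $u_2(r)=1/4+1/8=3/8$ and $u_4(r)=1/8+1/8=1/4$, so the condition fails.

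I expect no real obstacle. The only care needed is in the enumeration of cohesive groups for \cref{ex:ejr-core}, to make sure no group is missed.
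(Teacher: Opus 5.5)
Your proof is correct and follows the paper's argument. It uses the same two examples. In the first, non-priceability comes from voter $1$ alone having to fund $a$: your Hall condition at $P'=\{a\}$ is the paper's observation that $p_{1a}=1$ and $p_{1a}=4/3$ clash. In the second, your decomposition is the paper's price system via $p_{ij}=n\,\delta_i(j)$, and your blocking group $\{2,4\}$ is symmetric to the paper's $\{3,4\}$. Your explicit enumeration of cohesive groups in the first example fills in an EJR check that the paper only asserts.
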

\begin{proof}
    In \cref{ex:ejr-core}, portioning $r$ satisfies EJR, but is not priceable: voter 1 only approves $a$, implying $p_{1a}=1$, and $a$ is only approved by voter 1, implying $p_{1a} = n\cdot r(a) =4/3$.
    In \cref{app:por:ex:pjr}, portioning $r$ is priceable witnessed by \(
    p_{1a} = 1,
    p_{2b}=1, 
    p_{3c}=1, 
    p_{4d} = p_{4e} = 1/2,
    \) and all other prices equal to zero, but $S=\{3, 4\}$ blocks EJR.
\end{proof}
The above proposition implies that PJR, which is weaker than priceability, does not imply EJR either.

\begin{proposition}
    EJR does not imply core stability.
\end{proposition}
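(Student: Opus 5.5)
We use \cref{ex:ejr-core} as the counterexample: $n=4$, $P=\{a,b,c\}$, $A_1=\{a\}$, $A_2=\{b\}$, $A_3=\{c\}$, $A_4=\{b,c\}$, and $r=(1/3,1/3,1/3)$. The plan is to show directly that $r$ satisfies the lifted EJR condition characterized earlier in this appendix but lies outside the weak core. We work with the limit-free descriptions of the two lifts derived above, so no sequences of committees are needed.

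\textbf{Step 1: $r$ satisfies lifted EJR.} Recall that the lift of EJR requires: for every nonempty $S\subseteq N$ with $\bigcap_{i\in S}A_i\neq\emptyset$, some $i\in S$ has $u_i(r)\geq |S|/n$. The utilities are $u_1(r)=u_2(r)=u_3(r)=1/3$ and $u_4(r)=2/3$. Each party is approved by at most two voters, so the relevant coalitions $S$ are exactly the four singletons and the pairs $\{2,4\}$ (sharing $b$) and $\{3,4\}$ (sharing $c$).
\begin{itemize}
\item For a singleton, the requirement is $1/4$, and every voter has utility at least $1/3$.
\item For each of the two pairs, the requirement is $1/2$, and voter $4$ has $u_4(r)=2/3\geq 1/2$.
\end{itemize}
Hence the lifted EJR condition holds.

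\textbf{Step 2: $r$ is not in the weak core.} We exhibit a blocking coalition. Take $S=\{2,3,4\}$, which may spend $|S|/n=3/4$, and let $t=(0,3/8,3/8)$, so $\lVert t\rVert_1=3/4$. Then
\begin{itemize}
\item $u_2(t)=3/8>1/3=u_2(r)$,
\item $u_3(t)=3/8>1/3=u_3(r)$,
\item $u_4(t)=3/4>2/3=u_4(r)$.
\end{itemize}
Every member of $S$ strictly prefers $t$, so $r$ violates the lifted core condition.

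The only step needing care is Step 1: we must list all cohesive coalitions, i.e.\ those $S$ whose members share an approved party. Because no party is approved by more than two voters, the enumeration above is exhaustive. The intuition is that voter $1$'s sole party $a$ absorbs a third of the portioning, which leaves the coalition $\{2,3,4\}$ able to improve on $r$ with its own $3/4$ share.
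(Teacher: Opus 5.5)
Your proof is correct and is the paper's own argument: the paper uses the same instance from \cref{ex:ejr-core}, asserts that it satisfies the lifted EJR condition, and blocks the weak core with the same coalition $S=\{2,3,4\}$ and $t=(0,3/8,3/8)$. Your explicit list of the cohesive coalitions (the singletons, $\{2,4\}$, and $\{3,4\}$) supplies the EJR check that the paper leaves to the reader.
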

\begin{proof}
    In \cref{ex:ejr-core}, portioning $r$ satisfies EJR, but $S=\{2,3,4\}$ and $t=\{0, 3/8, 3/8\}$ blocks core stability.
\end{proof}

\section{Committee Elections}
\label{app:abc}
A committee election instance is a tuple $(N, C, A, k)$. By default, $i$ ranges over the set of voters $N$ and $j$ over the set of candidates $C$.
A \emph{committee} is a subset of candidates $W\subseteq C$ of size $|W| = k$.
We write $u_i(W) \coloneqq |W\cap A_i|$ for voter $i$'s \emph{utility} for committee $W$. $q(S)\coloneqq \lfloor|S| \cdot k /n\rfloor$ is the (Hare) \emph{quota} of a set of voters $S\subseteq N$.
\subsection{Definitions}\label{app:abc:definitions}
Fix an instance $(N, C, A, k)$ and a committee $W$. We say that a nonempty group $S\subseteq N$ is $\ell$-cohesive, for some
$\ell\in\mathbb{N}$, if $q(S)\geq\ell$ and
$\left|\bigcap_{i\in S}A_i\right|\geq\ell$.
Then:
\begin{description}
    \item[JR.] $W$ satisfies \emph{justified representation} if, for every 1-cohesive group $S\subseteq N$, there exists a voter $i \in S$ with $u_i(W) \geq 1$.
    \item[EJR.] $W$ satisfies \emph{extended justified representation} if, for every $\ell \in \mathbb{N}$ and every $\ell$-cohesive group $S \subseteq N$, there exists a voter $i \in S$ with $u_i(W) \geq \ell$.
    \item[\chg{$\alpha$-}EJR+.] Given $\chg{\alpha}\geq 1$, $W$ satisfies \emph{extended justified representation plus} if, for every nonempty $S \subseteq N$ with $(\bigcap_{i \in S} A_i)\setminus W \ne \emptyset$, there exists a voter $i \in S$ with $u_i(W) \geq q(S)\chg{/\alpha}$.
    \item[\chg{$\alpha$-}FJR.] Given $\chg{\alpha}\geq 1$, $W$ satisfies \chg{$\alpha$-}\emph{full justified representation} if, for every $\emptyset \neq S \subseteq N$, every $T \subseteq C$ of size $|T| \leq q(S)$ and every $\beta \in \mathbb{N}$ such that $u_i(T) \geq \chg{\alpha \cdot}\beta$ for all $i \in S$, there is $i \in S$ with $u_i(W) \geq \beta$.
    \item[PJR.] $W$ satisfies \emph{proportional justified representation}
    if, for every $\ell \in \mathbb{N}$ and every $\ell$-cohesive group $S \subseteq N$, it holds that $\left| W \cap \bigcup_{i\in S} A_i\right| \geq \ell$.
    \item[\chg{$\alpha$-}PJR+.] Given $\chg{\alpha}\geq 1$, $W$ satisfies \emph{proportional justified representation plus}
    if, for every nonempty $S \subseteq N$ with $(\bigcap_{i \in S} A_i)\setminus W \neq \emptyset$, it holds that $\left| W \cap \bigcup_{i\in S} A_i\right| \geq q(S)\chg{/\alpha}$.
     \item[\chg{$\alpha$-}FPJR.] Given $\chg{\alpha}\geq 1$, $W$ satisfies \chg{$\alpha$-}\emph{full proportional justified representation}
    if, for every nonempty $S \subseteq N$, every subset of candidates $T\subseteq C$ of size $|T| \leq q(S)$, and every $\beta \in \mathbb{N}$ such that $u_i(T) \geq \chg{\alpha \cdot} \beta$ for all $i \in S$, it holds that $\left| W \cap \bigcup_{i\in S} A_i\right| \geq \beta$.
    \item[Core Stability.] $W$ satisfies \emph{core stability}
    if, for every nonempty $S\subseteq N$ and every subset of candidates $T\subseteq C$ of size $|T| \leq q(S)$, there exists a voter $i\in S$ with $u_i(T) \leq u_i(W)$.
    \item[Lindahl Priceability.] $W$ is \emph{Lindahl priceable} if there exist nonnegative personalized prices $(p_{ij})_{i,j}$ such that $\sum_{i} p_{ij} \leq n/k$ for all $j$ and such that, for any voter $i$ and subset of candidates  $T\subseteq C$ with $\sum_{j'\in T} p_{ij'} \leq 1$, it holds that $u_i(T) \leq u_i(W)$. (Conceptually, $i$ cannot afford any outcome they prefer over $W$ within a budget of $1$.)

    \item[\chg{$\alpha$-Frugal} Lindahl Priceability.] Given $\chg{\alpha}\geq 1$, $W$ is \chg{$\alpha$-frugal} Lindahl priceable if there exist prices $(p_{ij})_{i,j} \geq 0$ such that (i)~$\sum_{i}p_{ij} \leq n/k$ for all $j$; (ii)~whenever $\sum_{j \in T} p_{ij} \leq 1$ for some $i$ and $T \subseteq C$, it holds that $u_i(T) \leq \chg{\alpha \cdot} u_i(W)$; and (iii)~\chg{$p_{ij} \leq p_{ij'}$ whenever $j \in A_i \cap W$ and $j' \in A_i \setminus W$}.
    \item[\chg{$\alpha$-}Priceability.] $W$ is \chg{$\alpha$-}priceable for $\alpha \geq 1$ if there exist nonnegative payments $(p_{ij})_{i,j}$ and a per-unit price $p \geq 0$ such that $p_{ij}=0$ whenever $j \notin A_i$, $\sum_{j} p_{ij} \leq 1$ for all $i$, $\sum_{i} p_{ij} = p$ for all $j\in W$, $\sum_{i} p_{ij} = 0$ for all $j\in C\setminus W$, and $\sum_{i : j \in A_i} (1 -\sum_{j' \in C} p_{ij'}) \leq \chg{\alpha\cdot} p$ for all $j \in C \setminus W$. (Conceptually, the unspent budget of $j$'s supporters is at most $\chg{\alpha\cdot} p$.)
    
    \item[Bounded PAV Improvement.] $W$ satisfies bounded PAV improvement if, $\Delta^+_W(j) < n/k$ for every $j \in C \setminus W$.
    \item[Local PAV Optimality.] $W$ is locally PAV optimal if, $\PAV(W{-}\{j\}{+}\{j'\}) \leq \PAV(W)$ for all candidates $j\in W$ and $j'\in C\setminus W$.
    \item[PAV Optimality.] $W$ is PAV optimal if, $\PAV(W') \leq \PAV(W)$ for every committee $W'$.
\end{description}

\subsection{Newly Introduced Axioms}\label{app:com:new}
\paragraph{Bounded PAV Improvement} Since we formally treat bounded PAV improvement as an axiom, we include, for completeness, the proofs that bounded PAV improvement implies EJR+ and is implied by local PAV optimality. The following was established implicitly in the proof that PAV satisfies EJR, before EJR+ was introduced in \citet{EJR+}.

\begin{proposition} [\cite{JR}]
    Bounded PAV improvement implies EJR+.
\end{proposition}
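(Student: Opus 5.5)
We argue by contrapositive: assume $W$ violates EJR+ and exhibit a candidate $c \in C\setminus W$ with $\Delta_W^+(c) \geq n/k$.

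\textbf{Setup.} A violation of EJR+ gives a nonempty $S \subseteq N$ and a candidate $c \in (\bigcap_{i\in S} A_i)\setminus W$ such that $u_i(W) < q(S)$ for every $i \in S$. Write $\ell \coloneqq q(S)$. Because utilities are integers, $u_i(W) \leq \ell - 1$ for all $i \in S$. Also $\ell \geq 1$, since $\ell=0$ would force $u_i(W) < 0$.

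\textbf{Main estimate.} Every voter in $S$ approves $c$, and all marginal contributions to $\Delta_W^+(c)$ are nonnegative. We therefore lower-bound
\[
\Delta_W^+(c) = \sum_{i: c\in A_i} \frac{1}{u_i(W)+1} \geq \sum_{i\in S}\frac{1}{u_i(W)+1} \geq \frac{|S|}{\ell}.
\]
Since $\ell = \lfloor |S| k/n\rfloor \leq |S|k/n$, we get $|S|/\ell \geq n/k$. Hence $\Delta_W^+(c) \geq n/k$ with $c \notin W$, contradicting bounded PAV improvement as defined in the committee-election setting (\cref{sec:abc:prelims}).

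\textbf{Expected obstacle.} There is little difficulty here. The two points to check are the integrality step $u_i(W) < \ell \Rightarrow u_i(W)+1 \leq \ell$, which makes each term at least $1/\ell$, and the fact that $\ell \geq 1$, so that dividing by $\ell$ is legitimate. The requirement $c \notin W$ in EJR+ is exactly what allows us to apply the committee-election version of bounded PAV improvement, which only constrains candidates outside $W$.
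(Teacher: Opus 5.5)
Your proposal is correct and follows the paper's proof essentially step for step. Both argue by contrapositive from an EJR+ violation $(S, c)$ with $c \notin W$, lower-bound $\Delta_W^+(c)$ by $\sum_{i \in S} 1/(u_i(W)+1) \geq |S|/q(S) \geq n/k$, and conclude a violation of bounded PAV improvement. Your explicit checks that $q(S) \geq 1$ and that $u_i(W)+1 \leq q(S)$ by integrality are details the paper leaves implicit.
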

\begin{proof}
    We will prove the contrapositive. Consider a committee $W$ which does not satisfy EJR+, that is, there exists some $S \subseteq N$ and $c\in \cap_{i\in S} A_i \setminus W$ such that $u_i(W) < q(S)$ for every $i\in S$. Therefore, 
    \[
    \Delta^+_{W}(c) = \PAV(W \cup\{c\}) - \PAV(W) \geq \sum_{i\in S} \frac{1}{u_i(W) + 1} \geq \sum_{i\in S} \frac{1}{q(S)} = \frac{|S|}{\lfloor|S| \cdot k /n\rfloor} \geq \frac{n}{k},
    \]
    implying $W$ does not satisfy bounded PAV improvement.
\end{proof}

The following observation has been frequently used in the proofs related to PAV optimality \cite{JR, EJR+, paulapproval}.
\begin{observation}\label{app:com:prop:marginal}
    Given an instance $(N, C, A, k)$ and a committee $W$, the sum of marginal contributions of candidates in $W$ to its PAV score, $\sum_{j\in W} \Delta^-_W(j)$, is at most $n$. Consequently, there exists some candidate $c\in W$ such that 
    \[\Delta^-_W(c) = \PAV(W) - \PAV(W\setminus \{c\})\leq n/k.\]
\end{observation}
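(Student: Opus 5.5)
The plan is to compute $\Delta^-_W(j)$ for each $j\in W$ and then swap the order of summation, grouping terms by voter instead of by candidate. For $j\in W$ we have $\Delta^-_W(j)=\PAV(W)-\PAV(W\setminus\{j\})=\sum_{i:\,j\in A_i} 1/u_i(W)$. This holds because removing $j$ lowers the utility of each supporter $i$ of $j$ from $u_i(W)$ to $u_i(W)-1$. For such $i$ we have $u_i(W)\ge 1$, since $j\in A_i\cap W$, so every term is well-defined. Swapping the order of summation gives
\[
\sum_{j\in W}\Delta^-_W(j)=\sum_{j\in W}\sum_{i:\,j\in A_i}\frac{1}{u_i(W)}=\sum_{i:\,u_i(W)>0}\frac{|A_i\cap W|}{u_i(W)}.
\]
Each summand equals $1$, because $u_i(W)=|A_i\cap W|$. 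The total is therefore the number of voters with positive utility, which is at most $n$.

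For the second part we average. $W$ has exactly $k$ candidates and the sum of their marginal contributions is at most $n$. Hence the minimum over $c\in W$ of $\Delta^-_W(c)$ is at most the average, which is at most $n/k$. Any minimizer $c$ satisfies the claimed inequality. Here we use $k\ge 1$ so that $W$ is nonempty.

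There is no real obstacle. The only point needing care is that voters with $u_i(W)=0$ never appear in the double sum. This is because they approve no candidate in $W$, so no division by zero occurs and they contribute nothing.
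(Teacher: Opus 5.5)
Your proof is correct and follows the paper's argument: you write $\Delta^-_W(j)=\sum_{i:\,j\in A_i}1/u_i(W)$, exchange the order of summation so that each voter with $u_i(W)>0$ contributes exactly $1$, and then average over the $k$ members of $W$. Your bookkeeping is slightly more careful than the paper's, since you write the per-voter count explicitly as $|A_i\cap W|$, whereas the paper's displayed inner sum over $j\in A_i$ is only implicitly restricted to $W$.
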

\begin{proof}
\begin{align*}
    \sum_{j\in W} \Delta^-_W(j)=\sum_{j\in W} \PAV(W) - \PAV(W\setminus \{j\})=\sum_{j\in W} \sum_{i: j\in A_i} \frac{1}{u_i(W)}
    = \sum_{i: u_i(W) > 0} \sum_{j\in A_i} \frac{1}{|A_i\cap W|}
    = \sum_{i: u_i(W) > 0} 1
    \leq n
\end{align*}
An averaging argument directly implies the existence of a candidate $c$.
\end{proof}

\begin{proposition}\label{app:com:prop:local}
    Local PAV optimality strictly implies bounded PAV improvement.
\end{proposition}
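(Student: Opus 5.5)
We prove the contrapositive of the implication and then give a separating example for strictness. Suppose $W$ violates bounded PAV improvement, so some $j' \in C \setminus W$ has $\Delta_W^+(j') \geq n/k$. By \cref{app:com:prop:marginal}, there is a candidate $c \in W$ with $\Delta_W^-(c) \leq n/k$. We will show that swapping $c$ for $j'$ strictly increases the PAV score, contradicting local PAV optimality.

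The change in score is
\[
\PAV(W - \{c\} + \{j'\}) - \PAV(W) = \Delta_W^+(j') - \Delta^-_{W+\{j'\}}(c),
\]
and $\Delta^-_{W+\{j'\}}(c) \leq \Delta_W^-(c)$ by monotonicity, exactly as in the proof of \cref{thm:pavpriceable}. This gives only a weak inequality $\Delta_W^+(j') - \Delta_W^-(c) \geq 0$, and handling the equality case is the main obstacle.

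To get strictness, compare $\Delta^-_{W+\{j'\}}(c)$ with $\Delta^-_W(c)$ more carefully. Voters $i$ approving both $c$ and $j'$ contribute $1/(u_i(W)+1)$ instead of $1/u_i(W)$, which is strictly smaller. If at least one such voter exists, the swap strictly improves the score. If none exists, the terms are equal, so I would refine the choice of $c$ instead.

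The sum in \cref{app:com:prop:marginal} equals the number of voters with positive utility, which is at most $n$. So if every $c \in W$ has $\Delta^-_W(c) \geq n/k$, then all of them equal $n/k$ exactly and every voter has $u_i(W) > 0$. In that case, for any supporter $i$ of $j'$ we have $u_i(W) \geq 1$, so some $c \in A_i \cap W$ exists and satisfies $\Delta_W^-(c) = n/k$. Choosing this $c$ gives a shared voter and hence a strict gain, since
\[
\Delta_W^+(j') - \Delta^-_{W+\{j'\}}(c) > \Delta_W^+(j') - \Delta_W^-(c) \geq 0.
\]
If no supporter of $j'$ exists, then $\Delta_W^+(j') = 0 < n/k$, so this case cannot arise. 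Otherwise some $c$ has $\Delta_W^-(c) < n/k$, and the swap gives strict gain $\Delta_W^+(j') - \Delta^-_{W+\{j'\}}(c) \geq n/k - \Delta_W^-(c) > 0$. In all cases $W$ is not locally PAV optimal.

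For strictness, I would take an instance like \cref{ex:app:bpi} realized as a committee election with distinct clone candidates. There, bounded PAV improvement holds, but replacing one copy of $c$ with an unselected clone of $a$ increases the PAV score. I would verify the numbers directly: $\Delta^+(a) = 3 < 2 \cdot 3 / 2$ fails, so a small check is needed. If necessary, I would adjust the example so that every unselected candidate has marginal gain below $n/k$ while some swap still improves the score.
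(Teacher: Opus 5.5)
\paragraph{Implication: correct, but a different route.} Your proof of the implication is correct, though it goes a more roundabout way than the paper. The paper applies \cref{app:com:prop:marginal} not to $W$ but to the enlarged set $W\cup\{j'\}$ of size $k+1$. Some member $c$ then has $\Delta^-_{W\cup\{j'\}}(c)\le n/(k+1)$, and $c\neq j'$ because $\Delta^-_{W\cup\{j'\}}(j')=\Delta^+_W(j')\ge n/k$. Swapping $c$ for $j'$ therefore gains at least $n/k-n/(k+1)>0$ directly, and no tie case arises.

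You average over $W$ only, which gives just $\Delta^-_W(c)\le n/k$, so you must handle equality separately. Your handling is sound: if every $c\in W$ has $\Delta^-_W(c)\ge n/k$, the count in the observation forces all of them to equal $n/k$ and every voter to have positive utility. A supporter $i$ of $j'$ then has some $c\in A_i\cap W$, and the monotonicity step $\Delta^-_{W+\{j'\}}(c)<\Delta^-_W(c)$ becomes strict.

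The paper's route buys two things yours does not. It gives a uniform improvement of $n/k-n/(k+1)$ per swap, which is what bounds the running time of \cref{alg:bounded-pav}. It also gives the sharper remark that local optimality forces $\Delta^+_W(j')\le n/(k+1)$. Your route only yields some positive gain.

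\paragraph{Strictness: not established.} You abandon \cref{ex:app:bpi} because of an arithmetic slip, and leave the separation as ``adjust the example if necessary.'' In that example, $n/k=10/5=2$, and
\[
\Delta^+_W(a)=\Delta^+_W(b)=3\cdot\tfrac12=\tfrac32, \qquad \Delta^+_W(c)=4\cdot\tfrac14=1,
\]
all strictly below $2$. In the cloned instance, every unselected clone has the same marginal gain as its party. Replacing one selected clone of $c$ by an unselected clone of $a$ changes the PAV score by $3\cdot\tfrac12-4\cdot\tfrac13=\tfrac16>0$. So the example works unchanged, and no adjustment is needed.

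With that check filled in, your separation matches the paper's. The paper makes the same argument abstractly: if bounded PAV improvement implied local PAV optimality in committee elections, \cref{prop:liftimply} would transfer the implication to apportionment, contradicting \cref{app:appor:pav}.
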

\begin{proof}
    We first prove the contrapositive of (weak) implication. Consider a committee $W$ which does not satisfy bounded PAV improvement, meaning there exists $c\in C\setminus W$ such that $\Delta^+_W(c)\geq n/k$. By \cref{app:com:prop:marginal}, there exists $c' \in W \cup \{c\}$, satisfying $\Delta^-_W(c')\leq n/(k+1)$. This implies
\begin{align*}
    \PAV(W\cup \{c\}\setminus \{c'\}) - \PAV(W)
    =\  \Delta^+_W(c)-\Delta^-_{W\cup \{c\}}(c')
    \geq\  n/k-n/(k+1) > 0,
\end{align*}
which means $W$ is not locally PAV optimal.

If the implication were not strict, then bounded PAV improvement would imply local PAV optimality. By \cref{prop:liftimply}, this implication would carry over to the apportionment setting, contradicting \cref{app:appor:pav}.
\end{proof}

From the above proof, we can observe that local PAV optimality implies $\Delta^+_W(c) \leq n/(k+1)$ for every \(c \in C\setminus W\). We nevertheless use the weaker bound \(<n/k\) in the definition of bounded PAV improvement for two reasons. First, so that a committee satisfying the axiom can be found in polynomial time. Second,  in the apportionment setting, this condition is equivalent to Lindahl priceability. Moreover, the following example shows that the bound in bounded PAV improvement is tight for guaranteeing EJR+. In particular, if the strict inequality is replaced by the weak inequality $\Delta_W^+(j)\leq n/k$ for every $j\in C\setminus W$, the resulting condition no longer implies EJR+.

\begin{example}
    Let $n=4$, $k=2$, and $C=\{a, b, c\}$. Approval sets are
    \(
    A_1= \{a\},
    A_2 = \{b\},
    A_3=A_4=\{c\}
    \), and the elected committee is $W=\{a, b\}$. Here $S=\{3, 4\}$ blocks EJR+, and we also have $\Delta^+_W(c) = 2 = n/k$.
\end{example}

The next result about proportionality degree provides another reason why bounded PAV improvement is an interesting axiom. For a more detailed discussion of proportionality degree, see \citet{Skowron21}.
\begin{proposition} If a committee \(W\) satisfies bounded PAV improvement, then \(W\) has optimal\footnote{\citet{AEH+18} showed a proportionality degree of larger than $l-1$ is not achievable.} proportionality degree of \(\ell-1\). That is, for every $\ell\in \mathbb{N}$, and every group \(S\subseteq N\) with size $|S| \geq \ell \cdot n/k$, it holds that if $|\bigcap_{i\in S} A_i| \geq \ell - 1$, then \( \frac{1}{|S|}\sum_{i\in S}u_i(W)\geq \ell-1. \)
\end{proposition}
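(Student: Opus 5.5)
We argue by contradiction, using only the defining inequality $\Delta_W^+(j)<n/k$ for $j\in C\setminus W$ together with the elementary bound $\frac{1}{u+1}\geq \frac{2}{\ell}-\frac{u+1}{\ell^2}$ (convexity of $x\mapsto 1/x$, tangent line at $x=\ell$). Fix $\ell$ and a group $S$ with $|S|\geq \ell n/k$ and $\bigl|\bigcap_{i\in S}A_i\bigr|\geq \ell-1$. Suppose the average utility $\bar u\coloneqq\frac{1}{|S|}\sum_{i\in S}u_i(W)$ is strictly less than $\ell-1$.

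\textbf{Step 1: find a commonly approved candidate outside $W$.} Let $D\coloneqq\bigcap_{i\in S}A_i$. Every voter $i\in S$ approves all of $D\cap W$, so $u_i(W)\geq |D\cap W|$ for every $i\in S$, and hence $\bar u\geq |D\cap W|$. If $D\subseteq W$, this gives $\bar u\geq |D|\geq \ell-1$, contradicting our assumption. So there is some $c\in D\setminus W$. Bounded PAV improvement then gives
\[\Delta_W^+(c)\;\geq\;\sum_{i\in S}\frac{1}{u_i(W)+1}\;<\;\frac{n}{k}\;\leq\;\frac{|S|}{\ell}.\]
Read correctly, this says $\sum_{i\in S}\frac{1}{u_i(W)+1}<|S|/\ell$.

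\textbf{Step 2: contradict this with the average utility.} Jensen's inequality for the convex function $x\mapsto 1/(x+1)$ gives $\frac{1}{|S|}\sum_{i\in S}\frac{1}{u_i(W)+1}\geq \frac{1}{\bar u+1}$. Combined with Step 1, $\frac{1}{\bar u+1}<\frac1\ell$, i.e.\ $\bar u>\ell-1$. This contradicts the assumption $\bar u<\ell-1$, so $\bar u\geq \ell-1$.

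The argument is short, so the main risk is bookkeeping rather than depth. The point to check is that the cohesiveness threshold $\ell-1$ is exactly what forces a candidate of $D$ outside $W$: the case $D\subseteq W$ yields $\bar u\geq\ell-1$ directly, which is why the conclusion is $\ell-1$ rather than $\ell$. One should also confirm that the strict inequality in bounded PAV improvement combined with $|S|\geq \ell n/k$ gives the strict bound in Step 1, and hence $\bar u>\ell-1$ strictly, which is more than enough.
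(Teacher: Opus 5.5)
Your proof is correct and takes essentially the same route as the paper. The same argument produces a candidate $c \in \bigcap_{i\in S} A_i \setminus W$, and your Jensen step for $x \mapsto 1/(x+1)$ is exactly the paper's Cauchy--Schwarz bound $\sum_{i\in S} \frac{1}{u_i(W)+1} \geq |S|/(\bar u + 1)$. Two cosmetic fixes: the displayed chain should read $\sum_{i\in S}\frac{1}{u_i(W)+1} \leq \Delta_W^+(c) < n/k \leq |S|/\ell$, and the tangent-line bound you announce at the start is never used (though it would also work).
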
 
\begin{proof}
    Let $W$ be a committee  satisfying bounded PAV improvement,
    meaning that $\Delta_W^+(j)=\sum_{i:j\in A_i}1/(u_i(W)+1)< n/k$
    for every $j\in C\setminus W$. For $\ell=1$, the claim follows immediately from the nonnegativity of
    utilities. Thus, suppose $\ell\geq2$ and toward a contradiction,
    assume there exists a set $S$ satisfying $|S| \geq \ell\cdot n/k$, $|\bigcap_{i\in S} A_i| \geq \ell-1$, and $\frac{1}{|S|}\sum_{i\in S}u_i(W)<\ell-1$. If every candidate in $\bigcap_{i\in S} A_i$ was in $W$, we must have $u_i(W) \geq \ell-1$ for every $i\in S$, which is not true. So, there exists a candidate
    $c\in\bigcap_{i\in S}A_i\setminus W$. It holds that
    \begin{align*}
        \Delta_W^+(c)&\geq\sum_{i\in S}\frac{1}{u_i(W)+1}\tag{as every $i\in S$ approves $c$}\\
        &\geq  \frac{|S|^2}{\sum_{i\in S}(u_i(W)+1)} \tag{$\left(\sum_{i\in S}\frac{1}{u_i(W)+1}\right)\left(\sum_{i\in S}{(u_i(W)+1)}\right)\geq \left(\sum_{i\in S} 1\right)^2$ by Cauchy-Schwarz}\\
        &>  \frac{|S|^2}{|S| (\ell - 1) + |S|} \tag{as $\sum_{i\in S} u_i(W) < |S| (\ell - 1)$} \\
        &=  \frac{|S|}{\ell} 
        \geq \frac{n}{k}
    \end{align*}
    This
    contradicts bounded PAV improvement, and completes the proof.
\end{proof}

\begin{example}\label{ex:bpiii}
    Let $n=6$, $k=3$, and $C=\{a, b, c, d\}$. Approval sets are $A_1 = A_2 = A_3=\{a, c\}$, $A_4 = A_5 = \{b, c\}$ and $A_6 = \{d\}$, and the elected committee is $W = \{a,b,d\}$.
\end{example}

PAV is known to violate FPJR in some instances~\cite{FPJR}. Since PAV optimality is stronger than bounded PAV improvement by \cref{app:com:prop:local}, it follows that bounded PAV improvement implies neither FPJR nor any axioms stronger than FPJR, like Lindahl priceability. The following proposition implies that Lindahl priceability and bounded PAV improvement are incomparable, even though the two axioms coincide in the apportionment setting by \cref{thm:PAV-LP}.
\begin{proposition}\label{app:com:prop:lindhalpav}
Lindahl priceability does not imply bounded PAV improvement.
\end{proposition}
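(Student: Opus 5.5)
The plan is to exhibit an explicit counterexample, and \cref{ex:bpiii} is designed for this: $n=6$, $k=3$ (so $n/k=2$), $A_1=A_2=A_3=\{a,c\}$, $A_4=A_5=\{b,c\}$, $A_6=\{d\}$, and $W=\{a,b,d\}$. I will show two things: $W$ violates bounded PAV improvement, and $W$ is Lindahl priceable.

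\textbf{Bounded PAV improvement fails.} Every voter has $u_i(W)=1$. The candidate $c\notin W$ is approved by voters $1,\dots,5$, so
\[ \Delta_W^+(c)=5\cdot\tfrac12=\tfrac52\geq 2=n/k. \]

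\textbf{Reducing Lindahl priceability to inequalities.} I would set $p_{ij}=0$ whenever $j\notin A_i$; such candidates add no utility, so they are irrelevant to condition (ii). Since each $u_i(W)=1$, the utility condition only requires that no voter can afford two approved candidates.
\begin{itemize}
\item Voter $6$ approves only $d$, so she can never reach utility $2$, and $p_{6d}=0$ is harmless.
\item Voters $1,2,3$ need $p_{ia}+p_{ic}>1$.
\item Voters $4,5$ need $p_{ib}+p_{ic}>1$.
\end{itemize}
In addition, each column sum must be at most $2$.

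\textbf{Choosing prices.} The main constraint is $c$'s column. A naive choice such as $\tfrac12$ everywhere gives $c$ a total of $\tfrac52>2$, so the load has to be shifted onto $a$ and $b$. For a small $\epsilon>0$, set
\[ p_{ia}=\tfrac23,\quad p_{ic}=\tfrac13+\epsilon \quad (i=1,2,3), \qquad p_{ib}=1,\quad p_{ic}=\epsilon \quad (i=4,5). \]
Then the pair sums are $1+\epsilon>1$ for every relevant voter. The column sums are
\[ \textstyle\sum_i p_{ia}=2, \qquad \sum_i p_{ib}=2, \qquad \sum_i p_{ic}=1+5\epsilon\leq 2, \qquad \sum_i p_{id}=0, \]
all within the bound $2$.

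This confirms that $W$ is Lindahl priceable but does not satisfy bounded PAV improvement. The only delicate step is balancing $c$'s column, and the asymmetric split above resolves it.
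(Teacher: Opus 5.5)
Your proof is correct and is essentially the paper's argument: the same counterexample (\cref{ex:bpiii}), with $\Delta_W^+(c)=5/2\ge 2=n/k$ and explicit Lindahl prices. The only difference is the certificate. The paper uses the $\epsilon$-free prices $2/3$ on $a$ and $b$ and $2/5$ on $c$, giving column sums $2$, $4/3$, $2$ and pair costs $16/15>1$. Your asymmetric $\epsilon$-split works equally well.
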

\begin{proof}
    In \cref{ex:bpiii} $W$ is Lindahl priceable witnessed by $p_{1a}=p_{2a} = p_{3a} = 2/3$, $p_{4b}=p_{5b}= 2/3$, $p_{1c}=p_{2c}=p_{3c}=p_{4c}=p_{5c} = 2/5$, and zero otherwise. 
    But it does not satisfy bounded PAV improvement as $\Delta^+_W(c) = 2.5 \geq n/k = 2$
\end{proof}
Most of the axioms in \cref{fig:abctoapp} are implied by Lindahl priceability and therefore, by \cref{app:com:prop:lindhalpav}, do not imply bounded PAV improvement. Moreover, priceability does not imply EJR+ and hence cannot imply bounded PAV improvement. It remains to show that EJR+, which is not implied by Lindahl priceability, also does not imply bounded PAV improvement.\\

\begin{adjustbox}{max width=\linewidth,center}
            \begin{tikzpicture}[
     x=0.9cm,
    y=1.1cm,
    box/.style={
        draw,
        rounded corners,
        very thin,
        node font=\footnotesize,
        minimum height=14pt,
        inner xsep=2pt,
        inner ysep=1pt,
        align=center
    }
]
  \node[box] (node1) at (-3.57,3.96) {Lindahl Pr.\vphantom{Ty}};
  \node[box] (node13) at (-3.58,4.74) {Frugal Lindahl Pr.\vphantom{Ty}};
  \node[box] (node2) at (-3.57,3.18) {Core Stability\vphantom{Ty}};
  \node[draw=none, node font=\scriptsize, text=committeecolor, fill opacity=1, draw opacity=1] at (-1.89,2.77) {2-approx};
  \node[box] (node5) at (-3.56,2.22) {FJR\vphantom{Ty}};
  \node[box] (node12) at (-1.94,2.22) {EJR\vphantom{Ty}};
  \node[box] (node6) at (-0.18,2.22) {EJR+\vphantom{Ty}};

  \draw[arrows=-Stealth, thick] (node1.south) -- (node2.north);
  \draw[arrows=-Stealth, thick] (node5.east) -- (node12.west);
  \draw[arrows=-Stealth, thick] (node6.west) -- (node12.east);

  \node[box] (node7) at (-3.57,1.46) {FPJR\vphantom{Ty}};
  \node[box] (node11) at (-1.94,1.46) {PJR\vphantom{Ty}};
  \node[box] (node8) at (-0.18,1.45) {PJR+\vphantom{Ty}};
  \node[box] (node10) at (-1.91,0.38) {Priceability\vphantom{Ty}};

  \draw[arrows=-Stealth, thick] (node2.south) -- (node5.north);
  \draw[arrows=-Stealth, thick] (node5.south) -- (node7.north);
  \draw[decorate, thick, arrows=-Stealth, bend left=17]
      (node2.south) to[bend left=13] (node8.north);
  \draw[arrows=-Stealth, thick] (node12.south) -- (node11.north);
  \draw[arrows=-Stealth, thick] (node10.north) -- (node8.south);
  \draw[arrows=-Stealth, thick] (node10.north) -- (node7.south);
  \draw[arrows=-Stealth, thick] (node6.south) -- (node8.north);

  \node[box] (node3) at (-0.18,3.18) {Bounded PAV Imp.\vphantom{Ty}};
  \node[box] (node4) at (-0.18,3.96) {Local PAV Optimality\vphantom{Ty}};

  \draw[arrows=-Stealth, thick, line cap=projecting]
      (-0.18,3.74) -- (-0.18,3.4);
  \draw[arrows=-Stealth, thick]
      (node8.west) -- (node11.east);

  \draw[arrows=-Stealth, thick, line cap=projecting]
      (node3.south) -- (node6.north);

  \draw[arrows=-Stealth, thick]
      (node7.east) -- (node11.west);

  \draw[decorate, thick, arrows=-Stealth, bend right=15, committeecolor, decorate, decoration={snake}, /pgf/decoration/segment length=7pt, /pgf/decoration/amplitude=0.35pt]
      (-0.5,2.45) to[bend right=11] (-3.38,2.45);

  \draw[decorate, thick, arrows=-Stealth, bend left=15, committeecolor, decorate, decoration={snake}, /pgf/decoration/segment length=7.0pt, /pgf/decoration/amplitude=0.35pt]
      (-0.44,1.23) to[bend left=9] (-3.33,1.23);

  \draw[thick, arrows=-Stealth, committeecolor]
      (-2.63,4.51) -- (-1.60,3.32);
  \draw[thick, arrows=Stealth-, committeecolor, decorate, decoration={snake}, /pgf/decoration/segment length=6.6pt, /pgf/decoration/amplitude=0.4pt]
      (-2.89,4.55) -- (-1.62,3.09);

  \node[box] (node9) at (-0.18,4.74) {PAV Optimality\vphantom{Ty}};

  \draw[arrows=-Stealth, thick, line cap=projecting]
      (-0.19,4.52) -- (-0.18,4.18);

  \draw[arrows=-Stealth, thick, committeecolor] (node13.south) -- (node1);

  \node[draw=none, node font=\scriptsize, text=committeecolor, fill opacity=1, draw opacity=1, rotate=-56] at (-2.39,3.69) {2-approx};

  \node[draw=none, node font=\scriptsize, text=committeecolor, fill opacity=1, draw opacity=1] at (-1.89,1) {2-approx};
\end{tikzpicture}
        \end{adjustbox}

\begin{proposition}\label{app:com:prop:ejr+}
EJR+ does not imply bounded PAV improvement.
\end{proposition}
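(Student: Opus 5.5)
To show that EJR+ does not imply bounded PAV improvement, I would give a small committee-election instance with a committee $W$ that satisfies EJR+ but has an unselected candidate $c$ with $\Delta_W^+(c) \geq n/k$.

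The idea is that $\Delta_W^+(c)$ can reach $n/k$ through a mix of supporters: one supporter with utility $0$ contributes $1$, and supporters with utility $1$ contribute $1/2$ each. EJR+ is only violated when an entire group $S$ sharing an unselected candidate has utility below $q(S)$. So we need to avoid having two supporters of $c$ with utility $0$, since such a pair of size $n/k$ would form a violating group. One zero-utility supporter of $c$, padded with several utility-$1$ supporters, should be enough.

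Concretely, I would take $n=4$, $k=2$ (so $n/k=2$), $C=\{a,b,c,d\}$, and approval sets
\[
A_1=\{c\},\quad A_2=\{a,c\},\quad A_3=\{b,c\},\quad A_4=\{a,b\},
\]
with committee $W=\{a,b\}$. The utilities are $u_1(W)=0$, $u_2(W)=u_3(W)=1$ and $u_4(W)=2$.

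\textbf{Bounded PAV improvement fails.} The supporters of $c$ are voters $1$, $2$ and $3$, so
\[
\Delta_W^+(c)=1+\tfrac12+\tfrac12=2=n/k,
\]
and $c \in C\setminus W$.

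\textbf{EJR+ holds.} The unselected candidates are $c$ and $d$, and $d$ has no supporters. So the only relevant groups are the nonempty $S\subseteq\{1,2,3\}$, which have $q(S)=\lfloor |S|/2\rfloor$.
\begin{itemize}
\item If $|S|=1$, then $q(S)=0$ and the requirement holds trivially.
\item If $|S|\in\{2,3\}$, then $q(S)=1$. Any such $S$ contains voter $2$ or voter $3$, and both have utility $1$.
\end{itemize}

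This check is short, and it is the only step that needs care: it is where we confirm that no pair of zero-utility supporters exists. This example also complements the earlier example showing that the weak-inequality version of bounded PAV improvement does not imply EJR+.
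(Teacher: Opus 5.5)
Your counterexample is correct. The only groups sharing an unselected approved candidate are the nonempty subsets of $\{1,2,3\}$, all via $c$. Each such group of size at least two contains voter $2$ or $3$, whose utility is $1 = q(S)$. And $\Delta_W^+(c) = 1+\tfrac12+\tfrac12 = 2 = n/k$ violates the strict inequality. Candidate $d$ is superfluous but harmless.

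Your route differs from the paper's, which is indirect:
\begin{itemize}
\item If EJR+ implied bounded PAV improvement, then by \cref{prop:liftimply} the implication would also hold in apportionment.
\item There, EJR+ equals EJR (\cref{thm:efjr+}), and bounded PAV improvement equals Lindahl priceability (\cref{thm:PAV-LP}).
\item This contradicts the existence of EJR committees in apportionment that are not even core stable (the method of equal shares, \cref{app:appor:prop:PEJR}).
\end{itemize}
Given the paper's machinery, that argument takes two lines and showcases its methodology of transferring facts through apportionment. However, it is not self-contained: it rests on an example imported from prior work and on a chain of equivalences, and it never exhibits a witness.

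Your explicit instance is elementary and checkable by hand, and it gives up nothing. Apportionment EJR does not exclude candidates in $W$ when defining cohesive groups, so it is the stronger requirement here. Read as an apportionment instance, your $W$ still satisfies it. The only additional groups with positive quota are $\{2,4\}$ (via $a$) and $\{3,4\}$ (via $b$), and each contains a voter of utility $1$. So your example also directly witnesses the non-implication inside apportionment. Via \cref{thm:PAV-LP}, it then yields an explicit EJR committee in apportionment that is not Lindahl priceable.
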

\begin{proof}
    Suppose, for contradiction, that EJR+ implies bounded PAV improvement. By \cref{prop:liftimply}, this implication would carry over to the apportionment setting. By \cref{thm:efjr+} EJR+ coincides with EJR in apportionment, and therefore, from \cref{app:appor:prop:PEJR} we conclude EJR+ does not imply core stability and therefore the stronger notion of Lindahl priceability, in the apportionment setting. By \cref{thm:PAV-LP}, bounded PAV improvement coincides with Lindahl priceability, a contradiction.
\end{proof}

Finally, for convenience of the reader, the polynomial procedure for finding a committee satisfying bounded PAV improvement is given in \cref{alg:bounded-pav}. The polynomial running time of this procedure was first established by \citet{localpav}. By \cref{app:com:prop:marginal}, each iteration increases the PAV score by at least $n/k - n/(k+1)$. Since PAV score is bounded by $n\cdot H(k)$, the algorithm terminates after $O(k^2\log k)$ iterations.

\begin{algorithm}[t]
\caption{Satisfying Bounded PAV Improvement}
\label{alg:bounded-pav}
\textbf{Input}: A committee election $(N,C,A,k)$\\
\textbf{Output}: A committee $W\subseteq C$
\begin{algorithmic}[1]
\STATE Choose an arbitrary committee $W\subseteq C$ of size $|W|=k$
\WHILE{there exist $c\in C\setminus W$ such that
\(
\Delta^+_W(c)
\geq \frac{n}{k}
\)}
    \STATE Find $c' \in W$ maximizing $\PAV(W \cup \{c\} \setminus \{c'\})$
    \STATE $W\gets W \cup \{c\} \setminus \{c'\}$
\ENDWHILE
\RETURN $W$
\end{algorithmic}
\end{algorithm}

\paragraph{Frugal Lindahl Priceability} We showed in \cref{thm:frugal} that (1-)frugal Lindahl priceability implies bounded PAV improvement, whereas \cref{app:com:prop:lindhalpav} establishes that Lindahl priceability alone does not. Thus, the frugality requirement provides a genuine and meaningful strengthening. We next show that this strengthening is nevertheless mild: in the apportionment setting, (1-)frugal Lindahl priceability and Lindahl priceability coincide. 

We first derive the definition of frugal Lindahl priceability in the apportionment setting using the embedding from \cref{def:paullift}.
\begin{proposition}
    Following \cref{def:paullift}, the definition of frugal Lindahl priceability in apportionment is as follows:
    \begin{description}
        \item[ ] Given an apportionment instance $(N, P, A, k)$, a committee $W:P\to \mathbb{N}$ is frugal Lindahl priceable if there exist prices $(p_{ij})_{i,j} \geq 0$ such that (i)~$\sum_{i}p_{ij} \leq n/k$ for all $j$; (ii)~whenever $\sum_{j} T(j)p_{ij} \leq 1$ for some $i$ and $T: P \to \mathbb{N}$, it holds that $u_i(T) \leq u_i(W)$; and (iii)~$p_{ij} \leq p_{ij'}$ for every $j, j'\in A_i$ such that $W(j) > W(j')$.
    \end{description}
\end{proposition}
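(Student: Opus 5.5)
We need to show that the clone embedding of \cref{def:paullift} turns committee-election frugal Lindahl priceability into the stated apportionment condition. Concretely, for $(N,P,A,k)$ and $W$, the condition (i)--(iii) should hold iff $W^{\mathrm{cl}}$ is frugal Lindahl priceable in $(N,P^{\mathrm{cl}},A^{\mathrm{cl}},k)$. We follow the template of the earlier proposition that the apportionment definition of Lindahl priceability is the lift of the committee-election one, adding an argument for condition (iii).

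\emph{Apportionment to clones.} Given prices $(p_{ij})$ for $W$, we set $p'_{ij^{(\ell)}}\coloneqq p_{ij}$ for every clone. Condition (i) is immediate, and (ii) follows exactly as in the Lindahl proof by counting clones in $T'$. For (iii) we must show $p'_{ic}\le p'_{ic'}$ whenever $c\in A^{\mathrm{cl}}_i\cap W^{\mathrm{cl}}$ and $c'\in A^{\mathrm{cl}}_i\setminus W^{\mathrm{cl}}$. If $c$ and $c'$ are clones of the same party, their prices are equal. Otherwise $c=j^{(\ell)}$ with $\ell\le W(j)$ and $c'=j'^{(m)}$ with $m>W(j')$, so $W(j)\ge \ell$ and $W(j')<m$. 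This does not by itself give $W(j)>W(j')$: we could have $\ell=1$, $W(j)=1$, $m=5$, $W(j')=3$. So uniform prices per party seem insufficient, and the mismatch is the crux.

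\emph{Fixing the definition of the lift.} With $k+1$ clones, every party has an unselected clone. So (iii) in clone space forces, for every approved party $j$ with $W(j)>0$ and every approved $j'$, that the price of a selected clone of $j$ is at most the price of an unselected clone of $j'$. Under the uniform embedding this becomes $p_{ij}\le p_{ij'}$ for all $j,j'\in A_i$ with $W(j)>0$. Since this is strictly stronger than the stated (iii), I would use non-uniform clone prices instead. Take $p'_{ij^{(\ell)}}\coloneqq p_{ij}$ for $\ell\le W(j)$, and for the unselected clones a common higher price $\bar p_i\coloneqq\max_{j\in A_i}p_{ij}$, or the Lindahl-type price $1/(u_i(W)+1)+\epsilon$ from \cref{thm:PAV-LP}.

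\emph{Conversely, clones to apportionment.} Given frugal clone prices, I would assign to party $j$ the average clone price, as in the Lindahl proof. Condition (ii) then follows as there, by buying cheapest copies. For (iii), a selected clone of $j$ is cheaper than any unselected clone of $j'$, and averaging preserves an ordering when $W(j)>W(j')$, since $j$ has the larger share of cheap selected clones.

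\emph{Main obstacle.} The hardest step is the forward direction's (iii). Raising unselected-clone prices can break (i), because the price sum $\sum_i p'_{ic}$ must stay at most $n/k$ on each unselected clone. To handle this I would use \cref{thm:PAV-LP}: Lindahl priceability gives bounded PAV improvement, which gives $\sum_{i:j\in A_i}1/(u_i(W)+1)<n/k$. Unselected clones can then be priced at $1/(u_i(W)+1)+\epsilon$ for all $j$, and selected clones at their original $p_{ij}$. I would try this first and then check whether the selected prices can be lowered to $1/(u_i(W)+1)$ while keeping (ii).
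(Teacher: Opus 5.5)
\textbf{The gap is in your converse direction (clone prices to apportionment prices), at condition (iii).} Averaging the $k+1$ clone prices keeps (i) and (ii), but it need not preserve the order that (iii) requires. Clone frugality only says that each approved selected clone costs at most each approved unselected clone. It says nothing about how the unselected clones of different parties compare, so a party with more seats can end up with a larger average.

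Concretely, take $n=k=1$, $P=\{a,b\}$, $A_1=\{a,b\}$, $W=\{a\}$. Price $a^{(1)},b^{(1)},b^{(2)}$ at $3/5$ and $a^{(2)}$ at $1$. Then:
\begin{itemize}
\item every price is at most $n/k=1$;
\item any two clones cost at least $6/5>1$;
\item the only selected clone costs no more than any unselected one.
\end{itemize}
So $W^{\mathrm{cl}}$ is frugal Lindahl priceable. Yet the averages are $p_{1a}=4/5>3/5=p_{1b}$, although $W(a)>W(b)$. Averaging only over unselected clones fails in the same example ($1$ versus $3/5$).

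\textbf{What is missing is a uniform lower bound plus a monotone envelope.} Let $\tau_i$ be the cheapest price of an approved unselected clone. By clone (iii), the $u_i(W)$ approved selected clones cost at most $u_i(W)\tau_i$ in total. Applying clone (ii) to these clones together with that cheapest unselected clone gives $\tau_i>1/(u_i(W)+1)$.

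The paper then sets $p_{ij}$ to the minimum, over approved $j'$ with $W(j')\le W(j)$, of the average price of $j'$'s unselected clones. Condition (i) follows by taking $j'=j$. Condition (ii) holds because every such average is at least $\tau_i$. Condition (iii) holds because the minimum ranges over a larger set when $W(j)$ is larger.

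Alternatively, you can reuse your own tool here. Clone frugality gives clone Lindahl priceability, hence apportionment Lindahl priceability by the paper's proposition identifying the two notions. That gives bounded PAV improvement by \cref{thm:PAV-LP}. Voter-uniform prices $1/(u_i(W)+1)+\epsilon$ on $A_i$ then satisfy (i)--(iii), with (iii) holding with equality.

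\textbf{Your forward direction works only with your fallback.} Your first variant, keeping selected clones at $p_{ij}$, generally violates clone (iii): condition (ii) forces $p_{ij}>1/(u_i(W)+1)$ for every $j\in A_i$, and $p_{ij}$ may be much larger than $1/(u_i(W)+1)+\epsilon$. Pricing all approved clones at $1/(u_i(W)+1)+\epsilon$ (or the selected ones at $1/(u_i(W)+1)$), justified via \cref{thm:PAV-LP}, does satisfy (i)--(iii) in the clone instance.

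This differs from the paper, which stays self-contained. It prices selected clones at $\min_{j'\in A_i}p_{ij'}$ and unselected clones of $j$ at $p_{ij}$. Neither route uses the apportionment condition (iii) in this direction.
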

\begin{proof}
    Following \cref{def:paullift}, given the apportionment instance $(N,P,A,k)$, committee $W$ is frugal Lindahl priceable if and only if $W^{\mathrm{cl}}$ is frugal Lindahl priceable for $(N,P^{\mathrm{cl}},A^{\mathrm{cl}},k)$.
    
    Assume that $W^{\mathrm{cl}}$ is frugal Lindahl priceable, witnessed by $(p'_{ic})_{i,c\in P^{\mathrm{cl}}}$. Fix a voter $i$, and let $\tau_i=\min\{p'_{ij^{(\ell)}}:j\in A_i,\ \ell>W(j)\}$ be the minimum price of an approved unelected clone. Such a clone always exists because every party has $k+1$ clones and $W(j)\leq k$. By condition (iii), every approved elected clone has price at most $\tau_i$, and hence $\sum_{c\in A_i^{\mathrm{cl}}\cap W^{\mathrm{cl}}}p'_{ic}\leq u_i(W)\tau_i$. Moreover, the set consisting of all approved elected clones together with an approved unelected clone of price $\tau_i$ gives voter $i$ utility $u_i(W)+1$. The utility-maximization condition therefore implies $\sum_{c\in A_i^{\mathrm{cl}}\cap W^{\mathrm{cl}}}p'_{ic}+\tau_i>1$. Consequently, $(u_i(W)+1)\tau_i\geq \sum_{c\in A_i^{\mathrm{cl}}\cap W^{\mathrm{cl}}}p'_{ic}+\tau_i>1$, and thus $\tau_i>1/(u_i(W)+1)$.
    For every $j\in A_i$ define
    \[
    p_{ij}=\min\left\{\frac{1}{k+1-W(j')}\sum_{\ell=W(j')+1}^{k+1}p'_{ij'^{(\ell)}}:j'\in A_i,\ W(j')\leq W(j)\right\},
    \]
    and set $p_{ij}=0$ for every $j\notin A_i$.
    
    For every $j\in P$, we have $p_{ij}\leq\sum_{\ell=W(j)+1}^{k+1}p'_{ij^{(\ell)}}$ whenever $j\in A_i$, while $p_{ij}=0$ otherwise. Therefore,
    \[
    \sum_i p_{ij}
    \leq
    \frac{1}{k+1-W(j)}
    \sum_{\ell=W(j)+1}^{k+1}\sum_i p'_{ij^{(\ell)}}
    \leq \frac{n}{k}.
    \]
    Thus, condition~(i) holds. Also, $p_{ij}\geq \tau_i>1/(u_i(W)+1)$ for every $j\in A_i$. Hence, if $T:P\to\mathbb{N}$ satisfies $u_i(T)>u_i(W)$, then
    \(
    \sum_jT(j)p_{ij}
    \geq
    u_i(T)\min_{j\in A_i}p_{ij}
    \geq
    \bigl(u_i(W)+1\bigr)\min_{j\in A_i}p_{ij}
    >1,
    \)
    establishing condition~(ii). 
    Finally, suppose that $j,j'\in A_i$ and $W(j)>W(j')$. Then $\{h\in A_i:W(h)\leq W(j')\}\subseteq\{h\in A_i:W(h)\leq W(j)\}$, and hence $p_{ij}\leq p_{ij'}$. Thus, condition~(iii) holds, and $W$ satisfies the stated definition.

    Conversely, assume that $W$ satisfies conditions~(i)--(iii) of the stated definition, witnessed by $(p_{ij})_{i,j}$. For every $i$ and $j\in A_i$, we must have $p_{ij}>1/(u_i(W)+1)$. Indeed, otherwise the multiset consisting of $u_i(W)+1$ copies of $j$ would cost at most $1$ and give voter $i$ utility $u_i(W)+1$, contradicting condition~(ii). Therefore it holds that $\min_{j'\in A_i}p_{ij'} > 1/(u_i(W)+1)$.
    Define prices for the cloned instance by setting
    \[
    p'_{ij^{(\ell)}}=
    \begin{cases}
    \min_{j'\in A_i}p_{ij'}, & \text{if $j\in A_i$ and $\ell\leq W(j)$},\\
    p_{ij}, & \text{if $j\in A_i$ and $\ell>W(j)$},\\
    0, & \text{if $j\notin A_i$}.
    \end{cases}
    \]
    For an elected copy $j^{(\ell)}$, we have $\sum_i p'_{ij^{(\ell)}}= \sum_i \min_{j'\in A_i}p_{ij'} \leq \sum_i p_{ij}\leq n/k$. For an unelected copy $j^{(\ell)}$, we similarly have $\sum_i p'_{ij^{(\ell)}}\leq\sum_i p_{ij}\leq n/k$. Next, for every $i$ and $c\in A_i^{\mathrm{cl}}$, it holds that $p'_{ic} \geq \min_{j'\in A_i}p_{ij'} > 1/(u_i(W)+1)$. Thus, any $T\subseteq P^{\mathrm{cl}}$ satisfying $u_i(T) > u_i(W)$ has total price at least $(u_i(W)+1)\min_{j'\in A_i}p_{ij'}>1$. 
    Finally, as $\min_{j'\in A_i}p_{ij'} \leq p_{ij}$ it holds that $p'_{ic}\leq p'_{ic'}$ whenever $c\in A_i^{\mathrm{cl}}\cap W^{\mathrm{cl}}$ and $c'\in A_i^{\mathrm{cl}}\setminus W^{\mathrm{cl}}$. Therefore, $W^{\mathrm{cl}}$ is frugal Lindahl priceable.
\end{proof}

\begin{proposition}\label{app:com:prop:lindh}
    In the apportionment setting, frugal Lindahl priceability and Lindahl priceability are equivalent.
\end{proposition}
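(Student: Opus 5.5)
One direction is immediate. In the apportionment version just derived, frugal Lindahl priceability is Lindahl priceability with the extra condition (iii) added. So any price system witnessing frugal Lindahl priceability also witnesses Lindahl priceability.

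For the converse, I will not try to modify a given Lindahl price system directly. Instead I go through bounded PAV improvement, using \cref{thm:PAV-LP}. Suppose $W$ is Lindahl priceable. By \cref{thm:PAV-LP}, $\Delta_W^+(j) < n/k$ for every party $j$. I then reuse the price system from the second half of the proof of \cref{thm:PAV-LP}: for $j \in A_i$ set $p_{ij} \coloneqq 1/(u_i(W)+1) + \epsilon$, and set $p_{ij} \coloneqq 0$ otherwise. I claim these prices satisfy all three conditions.

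For condition (i), observe that $\sum_i p_{ij} \le \Delta_W^+(j) + n\epsilon$. Since there are finitely many parties and each inequality $\Delta_W^+(j) < n/k$ is strict, I can choose $\epsilon > 0$ small enough that this is at most $n/k$ for every $j$. For condition (ii), any multiset $T$ with $u_i(T) \ge u_i(W)+1$ costs voter $i$ at least $(u_i(W)+1)(1/(u_i(W)+1)+\epsilon) > 1$, exactly as in that proof. For condition (iii), for each fixed voter $i$ all approved parties carry the same price, so $p_{ij} \le p_{ij'}$ holds trivially for all $j, j' \in A_i$, in particular whenever $W(j) > W(j')$.

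The only step needing care is choosing $\epsilon$ uniformly over all parties, and finiteness of $P$ handles it. Nothing here is a real obstacle. The key observation is that the uniform-per-voter prices from \cref{thm:PAV-LP} are automatically frugal. The theorem's equivalence then lets us replace an arbitrary Lindahl price system with this canonical one.
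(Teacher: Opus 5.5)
Your proof is correct and essentially matches the paper's. For the easy direction, the paper invokes \cref{prop:liftimply} rather than reading it off the derived apportionment definition, but this is the same observation. For the converse, the paper likewise passes through bounded PAV improvement via \cref{thm:PAV-LP} and notes that the per-voter uniform prices $1/(u_i(W)+1)+\epsilon$ from that proof also satisfy condition (iii).
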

\begin{proof}
    By \cref{prop:liftimply}, frugal Lindahl priceability implies Lindahl priceability in the apportionment setting. As bounded PAV improvement is equivalent to Lindahl priceability in apportionment, it suffices to prove that bounded PAV improvement implies frugal Lindahl priceability. This directly follows from the second direction of the proof of \cref{thm:PAV-LP}. Just notice that the suggested prices satisfy condition (iii) of frugal Lindahl priceability.
\end{proof}

We next show (1-)frugal Lindahl priceability is implied by stable priceability~\cite{PPS+21}. This strengthens the result by ~\cite{PPS+21} that stable priceability implies core stability.\footnote{To the best of our knowledge, it has not previously been observed that
stable priceability implies Lindahl priceability.} We start by the definition of stable priceability.

A committee $W$ is \emph{stable priceable} if there exist nonnegative payments $(p_{ij})_{i,j}$ and a per-unit price $p \geq 0$ such that (1) $p_{ij}=0$ whenever $j \notin A_i$, (2) $\sum_{j} p_{ij} \leq 1$ for all $i$, (3) $\sum_{i} p_{ij} = p$ for all $j\in W$, (4) $\sum_{i} p_{ij} = 0$ for all $j\in C\setminus W$, and (5) there exist no nonempty set of voters $S\subseteq N$, no
    nonempty set of candidates $T\subseteq C\setminus W$, no nonnegative
    payments $(p'_{ij})_{i,j\in T}$, and no sets
    $(R_i)_{i\in S}$ with $R_i\subseteq W$, such that
    \begin{enumerate}[label=(\alph*)]
        \item $\sum_{i\in S}p'_{ij}>p$ for every $j\in T$;
        \item $\sum_{j\in W\setminus R_i} p_{ij}+\sum_{j\in T}p'_{ij}\leq 1$ for
        every $i\in S$; and
        \item for every $i\in S$, either
        \(
            u_i\bigl((W\setminus R_i)\cup T\bigr)>u_i(W),
        \)
        or
        \(
            u_i\bigl((W\setminus R_i)\cup T\bigr)=u_i(W)
        \)
        and
        \(
            \sum_{j\in W\setminus R_i}p_{ij}+\sum_{j\in T}p'_{ij}<\sum_{j\in W}p_{ij}.
        \)
    \end{enumerate}
\citet{PPS+21} showed that condition (5) can equivalently be represented as 
\begin{equation}\label{eq:stableprice}
    \sum_{i:c\in A_i}\max\left\{1-\sum_{j}p_{ij},\max_{j}p_{ij}\right\}\leq p \qquad \text{for every $c\in C\setminus W$}
\end{equation}
\begin{proposition}
    Stable priceability implies (1-)frugal Lindahl priceability.
\end{proposition}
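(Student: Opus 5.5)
The plan is to take the stable-priceability certificate $(p_{ij})_{i,j}$, $p$ and turn it directly into frugal Lindahl prices. Elected candidates keep their payments as prices. Each unelected candidate gets a uniform per-voter price read off the left-hand side of \eqref{eq:stableprice}. For each voter $i$ write
\[ s_i \coloneqq \sum_{j} p_{ij},\qquad m_i \coloneqq \max_j p_{ij},\qquad \pi_i \coloneqq \max\{1-s_i,\,m_i\}. \]
Define new prices $\hat p_{ij}$ by $\hat p_{ij} \coloneqq p_{ij}$ if $j\in W$, $\hat p_{ij}\coloneqq \pi_i$ if $j\in A_i\setminus W$, and $\hat p_{ij}\coloneqq 0$ otherwise. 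The prices will possibly be perturbed by a small $\epsilon>0$ on unelected approved candidates, as discussed below.

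\emph{Condition (i).} First, $p\le n/k$: summing condition (3) over the $k$ members of $W$ gives $kp=\sum_i\sum_{j\in W}p_{ij}\le n$, using (2) and (4). For $j\in W$ we then have $\sum_i\hat p_{ij}=p\le n/k$. For $j\in C\setminus W$ we have $\sum_i\hat p_{ij}=\sum_{i:j\in A_i}\pi_i\le p\le n/k$ by \eqref{eq:stableprice}.

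\emph{Condition (iii).} It is immediate: for $j\in A_i\cap W$ and $j'\in A_i\setminus W$ we have $\hat p_{ij}=p_{ij}\le m_i\le\pi_i=\hat p_{ij'}$.

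\emph{Condition (ii), the main step.} Fix $i$ and $T$ with $u_i(T)>u_i(W)$. Write $a\coloneqq|T\cap W\cap A_i|\le u_i(W)$ and $b\coloneqq|(T\setminus W)\cap A_i|$, so that $b\ge u_i(W)+1-a\ge 1$. The payments of $i$ are supported on $A_i\cap W$ by (1) and (4). Dropping $u_i(W)-a$ elected approved candidates therefore saves at most $(u_i(W)-a)m_i$, so the elected part of $T$ costs at least $s_i-(u_i(W)-a)m_i$. Hence
\[ \sum_{j\in T}\hat p_{ij}\;\ge\; s_i-(u_i(W)-a)m_i+(u_i(W)-a+1)\pi_i\;\ge\; s_i+\pi_i\;\ge\; s_i+(1-s_i)=1, \]
using $\pi_i\ge m_i$ in the middle step.

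This only gives cost $\ge 1$, whereas condition (ii) needs $>1$, and I expect handling this strictness to be the main obstacle. I would split into two cases.
\begin{itemize}
\item If $p<n/k$, replace $\pi_i$ by $\pi_i+\epsilon$ on unelected approved candidates. For small $\epsilon$ condition (i) still holds, condition (iii) only gets easier, and the chain above becomes strict because $b\ge1$.
\item If $p=n/k$, then $kp=n$ forces $s_i=1$ for all $i$, so $m_i>0$ and $\pi_i=m_i$. The chain then reads $\sum_{j\in T}\hat p_{ij}\ge s_i+\pi_i=1+m_i>1$.
\end{itemize}
In both cases no $T$ with $u_i(T)>u_i(W)$ is affordable, which establishes $1$-frugal Lindahl priceability.
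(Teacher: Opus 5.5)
Your proposal is correct and follows essentially the same route as the paper: your $\pi_i$ is the paper's $M_i$, the prices coincide (elected candidates keep $p_{ij}$, unelected approved candidates get $\pi_i$, plus $\epsilon$ when $p<n/k$), and strictness is handled by the same case split, using $s_i=1$ and hence $\pi_i=m_i>0=1-s_i$ when $p=n/k$. The only difference is cosmetic: you bound the cost of $T$ by counting $a$ and $b$ and using $m_i\le\pi_i$, whereas the paper pairs each dropped elected candidate with an added unelected one via $|(A_i\cap T)\setminus W|\ge|(A_i\cap W)\setminus T|+1$.
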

\begin{proof}
    Consider a stable priceable committee $W$ witnessed by payments $(p_{ij})_{i, j}$ and per-unit price $p$. First, note that summing the payments over all candidates gives $k\cdot p=\sum_{j}\sum_i p_{ij}
        =\sum_i\sum_{j}p_{ij}\leq n$, 
    implying $p\leq n/k$.  For every voter $i$ let
    $b_i=1-\sum_{j}p_{ij}$ be their remaining budget, and 
    \(
        M_i=\max\left\{b_i,\max_{j}p_{ij}\right\} > 0.
    \)
    
    If $p<n/k$, choose $\varepsilon>0$ sufficiently small such that
    $p+n\varepsilon\leq n/k$. If $p=n/k$, let $\varepsilon=0$. Define
    personalized prices $(q_{ij})_{i,j}$ by
    \[
        q_{ij}=
        \begin{cases}
            p_{ij} & j\in W\\
            M_i+\varepsilon & j\notin W\text{ and }j\in A_i\\
            0 & j\notin W\text{ and }j\notin A_i.
        \end{cases}
    \]

    We now verify $(q_{ij})_{i,j}$ is a certificate of (1-)frugal Lindahl priceability. For every
    $j\in W$, it holds that $\sum_i q_{ij}=\sum_i p_{ij}=p\leq n/k$.
    For every $j\notin W$ using \cref{eq:stableprice} it holds that 
    $\sum_i q_{ij}
    =\sum_{i:j\in A_i}(M_i+\varepsilon)
    \leq p+n\varepsilon\leq n/k$, meaning condition (i) of frugal Lindahl priceability holds.
    To verify condition (ii), fix a voter $i$ and a set $T\subseteq C$ such that $u_i(T)>u_i(W)$. As $|A_i \cap T| = u_i(T) > u_i(W) = |A_i \cap W|$, we have 
    \begin{equation}\label{eq:setminus}
        |(A_i\cap T)\setminus W| \geq |(A_i\cap W)\setminus T| + 1.
    \end{equation}
    
    It holds that
    \begin{align*}
        \sum_{j\in T}q_{ij}
        &\geq
        \sum_{j\in (A_i\cap T)\cap W}p_{ij} + \sum_{j\in (A_i\cap T)\setminus W} (M_i + \varepsilon)\\
        &\geq\sum_{j\in (A_i\cap W)\cap T}p_{ij} + \sum_{j\in (A_i\cap W)\setminus T} (M_i + \varepsilon) + (M_i + \varepsilon)\tag{by \cref{eq:setminus}}\\ 
        &\geq \sum_{j\in A_i\cap W}p_{ij} + (b_i + \varepsilon) \tag{as $M_i \geq p_{ij}$ for every $j$ and $M_i \geq b_i$}\\
        &= 1+\varepsilon  \tag{as $\sum_{j\in A_i\cap W}p_{ij} + b_i = \sum_{j}p_{ij} + b_i =1$}\\
        &\geq 1
    \end{align*}
    If $\epsilon > 0$, the last inequality is strict, and if $\epsilon=0$, we have $p=n/k$, and the total spent budget is $n$, meaning $b_i=0$ for every $i$. Therefore, $M_i > b_i$ and the third inequality above is strict. So, $\sum_{j\in T}q_{ij} > 1$ and condition (ii)  holds as well.

    Finally, consider any voter $i$, and candidates $c\in A_i\cap W$ and
    $c'\in A_i\setminus W$. By construction,
    $q_{ic}=p_{ic}\leq M_i\leq M_i+\varepsilon=q_{ic'}$. Hence condition
    (iii) is also satisfied.
\end{proof}

Stable priceability was introduced to strengthen the notion of priceability. Unfortunately, priceability is not implied by frugal Lindahl priceability.
\begin{proposition}\label{app:com:prop:flinhprice}
    (1-)frugal Lindahl priceability does not imply priceability.
\end{proposition}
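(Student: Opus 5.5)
We argue by contradiction, lifting the question to the apportionment setting, as in the proofs of \cref{app:com:prop:local} and \cref{app:com:prop:ejr+}. This avoids building a committee-election counterexample by hand. Suppose that, on all committee election instances, ($1$-)frugal Lindahl priceability implied priceability. By \cref{prop:liftimply}, the apportionment analog of frugal Lindahl priceability would then imply the apportionment analog of priceability on every apportionment instance.

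Next we identify both analogs with the apportionment notions already studied.
\begin{itemize}
\item For the premise, we use the proposition deriving the apportionment definition of frugal Lindahl priceability via \cref{def:paullift}. Combined with \cref{app:com:prop:lindh}, this analog is equivalent to plain Lindahl priceability in apportionment.
\item For the conclusion, we use the earlier proposition showing that the apportionment definition of priceability is exactly the lift of the committee-election definition.
\end{itemize}
Hence the assumed implication would give: in apportionment, Lindahl priceability implies priceability.

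This contradicts \cref{app:appor:lindhprice}. There, the committee $W=\{a,b,c,c,c\}$ of \cref{ex:app:bpi} satisfies bounded PAV improvement. By \cref{thm:PAV-LP} it is therefore Lindahl priceable, yet it is not priceable: one needs $p\geq 3/2$ from the $a$-voters and $p\leq 4/3$ from the $c$-voters.

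The only step needing care is matching the committee-election priceability of the appendix, which uses the form with $\alpha=1$, with the apportionment version under the embedding. The cited correspondence proposition already handles this, so I expect no real obstacle. If a concrete instance is preferred, I would also unfold the cloned instance $(N,P^{\mathrm{cl}},A^{\mathrm{cl}},5)$ of \cref{ex:app:bpi}. On it, the frugal Lindahl prices come from the second direction of \cref{thm:PAV-LP}: slightly above $1/(u_i(W)+1)$ on approved clones, which respects condition (iii). The same budget-counting argument then rules out priceability.
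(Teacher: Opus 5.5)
Your proposal is correct and follows the paper's proof exactly: assume the implication, transfer it to apportionment via \cref{prop:liftimply}, use \cref{app:com:prop:lindh} to identify frugal Lindahl priceability with Lindahl priceability there, and contradict \cref{app:appor:lindhprice}. Your optional check of the cloned instance of \cref{ex:app:bpi} (equal prices slightly above $1/(u_i(W)+1)$ on approved clones satisfy conditions (i)--(iii), while the budget counting still forces $3/2 \leq p \leq 4/3$) is also sound.
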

\begin{proof}
    Suppose, for contradiction, that frugal Lindahl priceability implies priceability. By \cref{prop:liftimply}, this implication would carry over to the apportionment setting. However, by \cref{app:com:prop:lindh}, frugal Lindahl priceability coincides with Lindahl priceability in apportionment, while \cref{app:appor:lindhprice} shows that Lindahl priceability does not imply priceability in this setting.
\end{proof}

\begin{proposition}
    (1-)frugal Lindahl priceability does not imply local PAV optimality.
\end{proposition}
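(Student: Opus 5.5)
The plan is to argue by contradiction through the embedding into apportionment, exactly as in the proofs of \cref{app:com:prop:flinhprice} and \cref{app:com:prop:ejr+}. Suppose that, in committee elections, (1-)frugal Lindahl priceability implied local PAV optimality. Both axioms are invariant under relabeling of identical candidates, so \cref{prop:liftimply} transfers the implication to the apportionment setting: apportionment frugal Lindahl priceability would imply apportionment local PAV optimality.

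Next I collapse the hypothesis side in apportionment. By \cref{app:com:prop:lindh}, frugal Lindahl priceability coincides with Lindahl priceability there. By \cref{thm:PAV-LP}, Lindahl priceability is in turn equivalent to bounded PAV improvement. So the transferred implication would say that, in apportionment, bounded PAV improvement implies local PAV optimality.

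This contradicts \cref{app:appor:pav}. There, \cref{ex:app:bpi} exhibits a committee $W=\{a,b,c,c,c\}$ with $n=10$, $k=5$ that satisfies bounded PAV improvement, since every $\Delta^+_W(j)<2$. Yet it is not locally PAV optimal, because swapping a copy of $c$ for $a$ increases the PAV score.

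There is essentially no technical obstacle. The one point to check is that the apportionment analog of committee-election local PAV optimality, obtained via \cref{def:paullift}, matches the apportionment definition used in \cref{app:appor:pav}. This holds because swapping an elected clone for an unelected clone in $P^{\mathrm{cl}}$ is the same as moving one seat between parties, and swapping two clones of the same party changes nothing. If a direct committee-election counterexample were preferred, I would instead clone \cref{ex:app:bpi} with $k+1$ copies per party. I would take as witnesses the prices $\tfrac{1}{u_i(W)+1}+\epsilon$ on all approved clones, which satisfy conditions (i)--(iii) by the computation in the proof of \cref{thm:PAV-LP}. The swap of a $c$-clone for an $a$-clone still strictly increases PAV.
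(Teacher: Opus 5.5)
Your proof is correct and follows the paper's argument exactly. You transfer the hypothetical implication to apportionment via \cref{prop:liftimply}, collapse frugal Lindahl priceability to bounded PAV improvement via \cref{app:com:prop:lindh} and \cref{thm:PAV-LP}, and contradict \cref{app:appor:pav}; your check that the lifted local PAV optimality matches the apportionment definition, and your direct cloned counterexample with prices $\tfrac{1}{u_i(W)+1}+\epsilon$, are both sound and just make explicit what the paper leaves implicit.
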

\begin{proof}
    The reasoning is analogous to the proof of \cref{app:com:prop:flinhprice}, except we need the equivalence of Lindahl priceability and bounded PAV improvement (\cref{thm:PAV-LP}) and that bounded PAV improvement is strictly weaker than local PAV optimality in the apportionment setting (\cref{app:appor:pav}).
\end{proof}

\subsection{Deferred Proofs}
\label{app:abc:proofs}

\begin{proposition}
    In committee elections, PJR+ implies 2-FPJR.
\end{proposition}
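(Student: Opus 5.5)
We argue by contrapositive, mirroring the EJR+ $\Rightarrow$ 2-FJR argument in the proof of \cref{thm:2fjr}. Suppose $W$ violates $2$-FPJR. Then there are a nonempty $S\subseteq N$, a set $T\subseteq C$ with $|T|\le q(S)$, and $\beta\in\mathbb{N}$ such that $u_i(T)\ge 2\beta$ for all $i\in S$, while $\bigl|W\cap\bigcup_{i\in S}A_i\bigr|<\beta$. The plan is to discard the candidates of $T$ already in $W$. This yields a $1$-FPJR violation whose proposal is disjoint from $W$, and such a violation contradicts PJR+.

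The first step is the reduction to $T'\coloneqq T\setminus W$. For every $i\in S$ we have $u_i(W)\le\bigl|W\cap\bigcup_{h\in S}A_h\bigr|<\beta$, so $|A_i\cap T\cap W|<\beta$. Since $u_i(T)\ge 2\beta$, it follows that $u_i(T')\ge\beta$. Moreover $|T'|\le|T|\le q(S)$. Thus $(S,T',\beta)$ is a $1$-FPJR violation with $T'\subseteq C\setminus W$.

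The second step is the averaging argument from \cref{prop:FPJR+}. Summing over $i\in S$ gives $\sum_{j\in T'}|\{i\in S:j\in A_i\}|\ge\beta|S|$. Hence some $j^*\in T'$ is approved by a set $S'\subseteq S$ with
\[|S'|\ge\beta|S|/|T'|\ge\beta|S|/q(S)\ge\beta n/k.\]
Here $T'$ is nonempty because $\beta\ge1$; the case $\beta=0$ is trivial. It follows that $q(S')=\lfloor |S'|k/n\rfloor\ge\beta$. Also $j^*\in\bigl(\bigcap_{i\in S'}A_i\bigr)\setminus W$, so PJR+ applies to $S'$ and gives $\bigl|W\cap\bigcup_{i\in S'}A_i\bigr|\ge q(S')\ge\beta$. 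Since $S'\subseteq S$, the same lower bound holds for $\bigl|W\cap\bigcup_{i\in S}A_i\bigr|$, which contradicts the assumed violation.

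The step that needs care is the first one. The bound $|A_i\cap T\cap W|<\beta$ comes from the FPJR violation condition on the union $\bigl|W\cap\bigcup_{h\in S}A_h\bigr|<\beta$, not from individual utilities as in the FJR case. It is still valid because each voter's utility is at most the size of that union. The rest is routine: we only need $q(S)>0$, which follows from $T'\neq\emptyset$ and $|T'|\le q(S)$.
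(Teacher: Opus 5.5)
Your proof is correct and follows the paper's argument. You discard $T\cap W$ to obtain a $1$-FPJR violation whose proposal $T\setminus W$ is disjoint from $W$, then average over it to find a group $S'\subseteq S$ of size at least $\beta n/k$ that jointly approves an unelected candidate, contradicting PJR+. Your observation that $u_i(W)\le |W\cap\bigcup_{h\in S}A_h|<\beta$ is exactly what justifies the inequality $2\beta>2u_i(W)$, which the paper's proof carries over from the FJR case without comment.
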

    \begin{proof}
    The proof is identical to the proof for the first part of \cref{thm:2fjr}. Recreating the proof of \cref{prop:FPJR+}, we can already see that, if there were a ($1$-)FPJR violation $S, T, \beta$ such that $T \subseteq C \setminus W$, there must be a group of size at least $\beta \cdot n / k$ who all like one element of $T \in C\setminus W$, so a PJR+ violation.

    We now assume that $W$ violates $2$-FPJR and will deduce a PJR+ violation from that.
    We are given $S, T, \beta$ such that $u_i(T) \geq 2 \, \beta > 2 \, u_i(W)$ for all $i \in S$.
    Since $|A_i \cap W| < \beta$ and $|A_i \cap T| \geq 2 \, \beta$, $|A_i \cap (T \setminus W)| \geq \beta$.
    But then, $u_i(T \setminus W) \geq \beta > u_i(W)$ for all $i \in S$, so $S, T \setminus W, \beta$ witnesses a $1$-FPJR violation with a candidate set $T \setminus W$ disjoint from $W$, so we obtain a contradiction with PJR+ as explained above.
\end{proof}

We next show that none of the remaining implications established in the apportionment setting hold approximately in committee elections.

Given a committee election \((N, C, A, k)\), we define the \emph{utility vector} of a committee \(W\) as
\(
\mathbf{u}_A(W)
=
\bigl(\lvert A_1\cap W\rvert,\ldots,\lvert A_n\cap W\rvert\bigr).
\)
A possibly irresolute\footnote{Meaning that the voting rule maps the given instance to a set of committees and not necessarily a single committee.} voting rule $f$ is \emph{welfarist} if, for every \(k\), there exists a function \(g_k\) mapping utility vectors to real values such that,
\[
f(N, C, A, k)
=
\operatorname*{arg\,max}_{W\subseteq C:\,\lvert W\rvert=k}
g_k\bigl(\mathbf{u}_A(W)\bigr).
\]
We say that a welfarist rule is Pareto optimal if the functions $g_k$ are strictly increasing, and it is easy
to see that the corresponding committee rule is then Pareto optimal in the usual sense. 

Unlike the apportionment setting, where PAV satisfies priceability as shown in \cref{thm:pavpriceable}, \citet{limitsofwelf} showed no Pareto optimal and welfarist voting rule satisfies priceability in committee elections. In fact, the following more general result also holds.

\begin{proposition}
    For any constant $\alpha \geq 1$, there exists no Pareto optimal welfarist rule that would always return an $\alpha$-priceable committee.
\end{proposition}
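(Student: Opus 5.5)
The plan is to adapt the two-instance argument of \textcite{limitsofwelf}, scaling its group sizes with a parameter depending on $\alpha$ so that the violation exceeds any fixed multiplicative slack. The key feature of a welfarist rule is that it only sees the set of achievable utility vectors. So I will build two committee-election instances $I_1=(N,C,A,k)$ and $I_2=(N,C',A',k)$ on the same voters and with the same $k$, with two properties:
(a) the sets of utility vectors of Pareto-optimal committees coincide (call this set $U$), and, for every $\mathbf u\in U$, the set of committees realising $\mathbf u$ is described explicitly in both instances;
(b) for every $\mathbf u\in U$, in at least one of the two instances, \emph{every} committee with utility vector $\mathbf u$ fails $\alpha$-priceability.

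Given (a) and (b), the argument is short. Any Pareto-optimal welfarist rule with strictly increasing $g_k$ selects on $I_1$ and $I_2$ committees whose utility vectors maximise $g_k$ over the same set $U$ of Pareto-optimal vectors. Fix one maximiser $\mathbf u^*$; the rule returns a committee with vector $\mathbf u^*$ in both instances. By (b), this committee is not $\alpha$-priceable in one of them. Requiring \emph{every} realising committee to fail handles irresolute rules and tie-breaking.

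For the violation in (b) I will use the last condition of $\alpha$-priceability. Take a block $B$ of voters who approve only a single candidate $c$ (or only a few candidates). If all of $B$'s candidates are unelected, then every $i\in B$ has $p_{ij}=0$ for all $j$ and keeps its full budget. Their leftover is then $|B|$, while summing payments gives $p\le n/k$. Choosing $|B|>\alpha n/k$ therefore forces $\sum_{i:c\in A_i}(1-\sum_{j'}p_{ij'})\ge |B|>\alpha p$, so $\alpha$-priceability fails.

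The instances themselves follow the Peters--Pierczy\'nski--Skowron template. There are a few voter blocks whose sizes are multiples of a large integer $m\gg\alpha$, and $k$ is chosen so that $n/k$ is a fixed constant. In $I_1$ a block $B_1$ has private candidates and in $I_2$ a different block $B_2$ has them. The remaining approvals are arranged so that the two profiles yield the same set of Pareto-optimal utility vectors, via a relabelling of candidates that permutes which block's representation is realised ``cheaply''. Then each Pareto-optimal vector either leaves $B_1$ unrepresented in $I_1$ or leaves $B_2$ unrepresented in $I_2$. The alternative for each vector is that it forces the overrepresentation of another block; there I will show priceability fails through the budget constraint $\sum_j p_{ij}\le 1$ together with $\sum_i p_{ij}=p$, because too many seats would have to be bought by too few voters.

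The main obstacle is designing the approval profiles so that (a) holds exactly, meaning the achievable Pareto-optimal utility vectors really are identical in the two instances, while (b) holds for \emph{all} of them with a gap growing in $m$. I will first take the original counterexample of \textcite{limitsofwelf} and replace each voter by $m$ copies and each relevant candidate by proportionally many copies, keeping utility vectors in correspondence. I will then check that the priceability violation scales linearly in $m$ while $p\le n/k$ stays bounded, so that taking $m>\alpha$ times a constant suffices.
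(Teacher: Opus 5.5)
\paragraph{Gap: assumption (a) does not hold and is not needed.}
Your concluding step needs $I_1$ and $I_2$ to have \emph{exactly} the same set of Pareto-optimal utility vectors. The template you intend to scale does not have this property, and replicating voters or candidates does not fix it. Take Instance~1 of the Peters--Skowron construction, with $t$ clones per candidate and $k=57t$. The committee that spends all $k$ seats on private candidates of $v_7,\dots,v_{12}$ is Pareto optimal there: these voters have disjoint approval sets, so any seat given to $v_1,\dots,v_6$ lowers one of them. In Instance~2, $v_7,\dots,v_{12}$ share $c_1,c_2,c_3$. There the same vector is strictly dominated: trade $3t$ of their private seats, spread over the six voters, for the $3t$ common clones.

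A pure relabelling of candidates would give you (a), but it also preserves $\alpha$-priceability. Then (b) would require that $I_1$ alone has no $\alpha$-priceable Pareto-optimal committee, which is exactly the single-instance claim the two-instance trick is meant to avoid. Without (a), the $g_k$-maximiser in $I_1$ need not be Pareto optimal in $I_2$, so the step ``the rule returns a committee with vector $\mathbf{u}^*$ in both instances'' fails.

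The violation mechanisms you sketch are also not the operative ones. An entirely unrepresented block of more than $\alpha n/k$ voters sharing a candidate is a JR violation, which rules like PAV never commit, so it cannot drive an argument for general $g_k$. ``Too many seats bought by too few voters'' only bounds $p$ from above, which contradicts nothing by itself.

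\paragraph{The missing idea: characterise outputs per instance, then use cross-achievability.}
The paper never compares whole frontiers. It first pins down the outputs in each instance separately, using only $\alpha$-priceability and Pareto optimality, with $t\geq\alpha$.
\begin{itemize}
\item Suppose some $v_i$ with $i\geq 7$ had at least $6t$ seats. It pays for them alone, so $p\leq 1/(6t)$ and hence $\alpha p\leq 1/6$. Each of the other five voters in that group has an unelected private candidate, so each needs at least $5t$ seats. That leaves at most $26t$ seats for $v_1,\dots,v_6$, so one of them has leftover budget greater than $1/6\geq\alpha p$, a contradiction.
\item Suppose some $v_i$ with $i\geq 7$ had at most $3t$ seats. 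Its leftover bound gives $p\geq 1/(3t+\alpha)\geq 1/(4t)$, which caps the other five at $4t$ seats each. Counting then gives some $v_i$ with $i\leq 6$ more than $5t$ private seats bought alone, i.e.\ $p<1/(5t)$, a contradiction.
\item So each of $v_7,\dots,v_{12}$ has strictly between $3t$ and $6t$ seats. Pareto optimality then puts all common clones in $W$, so $v_1,\dots,v_6$ have average utility greater than $6t$. The symmetric statement holds in Instance~2.
\end{itemize}
The final step only needs the two selected vectors $\mathbf{u}_1,\mathbf{u}_2$ to be \emph{achievable} in the other instance. This holds because every coordinate exceeds $3t$, so a $3t$-for-$3t$ exchange of private seats works. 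Now $\mathbf{u}_2$ is achievable but never selected in Instance~1, so $g_k(\mathbf{u}_1)>g_k(\mathbf{u}_2)$. Symmetrically, $g_k(\mathbf{u}_2)>g_k(\mathbf{u}_1)$, which is the contradiction.

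Note also that the quantity that must exceed $\alpha$ is the clone count $t$, i.e.\ the number of seats one voter buys alone. Replicating voters is unnecessary.
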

\begin{proof}
\begin{figure*}[t]
\centering

\newcommand{\sixcells}[3]{\foreach \r in {0,...,5}{
        \filldraw[fill=#3,draw=black]
        ({#1+0.8*\r},#2) rectangle ++(0.8,0.5);
    }
}

\begin{subfigure}[t]{0.48\textwidth}
    \centering
    \begin{tikzpicture}[scale=0.85,transform shape]

\filldraw[fill=blue!10!white,draw=black]
            (0,0) rectangle (4.8,0.5);
        \node at (2.4,0.25) {$c_1$};

        \filldraw[fill=blue!10!white,draw=black]
            (0,0.5) rectangle (4.8,1);
        \node at (2.4,0.75) {$c_2$};

        \filldraw[fill=blue!10!white,draw=black]
            (0,1) rectangle (4.8,1.5);
        \node at (2.4,1.25) {$c_3$};

\sixcells{0}{1.5}{blue!10!white}
        \sixcells{0}{2}{blue!10!white}
        \sixcells{0}{2.5}{blue!10!white}
        \sixcells{0}{3}{blue!10!white}
        \sixcells{0}{3.5}{white}

        \node at (2.4,4.25) {$\cdots$};

        \sixcells{0}{4.5}{white}

\sixcells{4.8}{0}{blue!10!white}
        \sixcells{4.8}{0.5}{blue!10!white}
        \sixcells{4.8}{1}{blue!10!white}
        \sixcells{4.8}{1.5}{blue!10!white}
        \sixcells{4.8}{2}{blue!10!white}
        \sixcells{4.8}{2.5}{white}

        \node at (7.2,3.25) {$\cdots$};

        \sixcells{4.8}{3.5}{white}

\foreach \i/\x in {
            1/0.4,2/1.2,3/2.0,4/2.8,5/3.6,6/4.4,
            7/5.2,8/6.0,9/6.8,10/7.6,11/8.4,12/9.2
        }{
            \node at (\x,-0.35) {$v_{\i}$};
        }

    \end{tikzpicture}
    \caption{Instance 1}
    \label{fig:alpha-welfarism-1}
\end{subfigure}
\hfill
\begin{subfigure}[t]{0.48\textwidth}
    \centering
    \begin{tikzpicture}[scale=0.85,transform shape]

\filldraw[fill=blue!10!white,draw=black]
            (4.8,0) rectangle (9.6,0.5);
        \node at (7.2,0.25) {$c_1$};

        \filldraw[fill=blue!10!white,draw=black]
            (4.8,0.5) rectangle (9.6,1);
        \node at (7.2,0.75) {$c_2$};

        \filldraw[fill=blue!10!white,draw=black]
            (4.8,1) rectangle (9.6,1.5);
        \node at (7.2,1.25) {$c_3$};

\sixcells{0}{0}{blue!10!white}
        \sixcells{0}{0.5}{blue!10!white}
        \sixcells{0}{1}{blue!10!white}
        \sixcells{0}{1.5}{blue!10!white}
        \sixcells{0}{2}{blue!10!white}
        \sixcells{0}{2.5}{white}

        \node at (2.4,3.25) {$\cdots$};

        \sixcells{0}{3.5}{white}

\sixcells{4.8}{1.5}{blue!10!white}
        \sixcells{4.8}{2}{blue!10!white}
        \sixcells{4.8}{2.5}{blue!10!white}
        \sixcells{4.8}{3}{blue!10!white}
        \sixcells{4.8}{3.5}{white}

        \node at (7.2,4.25) {$\cdots$};

        \sixcells{4.8}{4.5}{white}

\foreach \i/\x in {
            1/0.4,2/1.2,3/2.0,4/2.8,5/3.6,6/4.4,
            7/5.2,8/6.0,9/6.8,10/7.6,11/8.4,12/9.2
        }{
            \node at (\x,-0.35) {$v_{\i}$};
        }

    \end{tikzpicture}
    \caption{Instance 2}
    \label{fig:alpha-welfarism-2}
\end{subfigure}

\caption{Each rectangle represents a block of $t$ identical candidates. Each voter approves more than $k=57t$ candidates, and $k$ candidates are colored to illustrate a committee. The figure is reproduced from \citet{limitsofwelf}.}
\label{fig:alpha-welfarism}
\end{figure*}

    Fixing $\alpha \geq 1$, the proof by \citet{limitsofwelf} goes through by creating $t$ clones of each candidate and setting $k=57t$ for some integer $t \geq \alpha$. For completeness, we include the complete proof below.

Consider Instance~1 shown in \cref{fig:alpha-welfarism}. There are $12$ voters, the committee size is $k=57t$, and each voter approves at least $57t$ candidates. Suppose, for contradiction, that there exists a Pareto optimal welfarist rule $f$ that always returns $\alpha$-priceable committees. Let $g_k$ be as in the definition of welfarist rules, and $W$ be an $\alpha$-priceable outcome of $f$ on instance~1, witnessed by $(p,(p_{ij})_{i, j})$. 

We first show that each of the last six voters has strictly fewer than $6t$ representatives in $W$. Suppose, toward a contradiction, that one of these voters, say $v_7$, has at least $6t$ representatives. 
Since no other voter approves the candidates approved by $v_7$, this voter must pay for them alone, and hence $p\leq 1/(6t)$. 
As $t \geq \alpha$, this implies $\alpha p \leq 1/6$. Each of the other five voters must have at least $5t$ representatives. Indeed, a voter with fewer than $5t$ representatives would have a remaining budget greater than $1-5tp\geq 1/6\geq \alpha p$, contradicting $\alpha$-priceability.
Thus, at least $6t+5\times 5t=31t$ seats are filled by candidates approved by voters $v_7,\dots,v_{12}$, leaving at most $57t-31t=26t$ seats for the remaining candidates. Hence, the total spending of voters $v_1,\dots,v_6$ is at most $26tp\leq 26/6$. Their average remaining budget is therefore at least
\(
{(6-26/6)}/{6}>\frac{1}{6}\geq\alpha p.
\)
Consequently, one of these voters has a remaining budget greater than $\alpha p$ and can afford an unelected private candidate, again contradicting $\alpha$-priceability.

Second, we show that each of the last six voters has strictly more than $3t$ representatives. Suppose, toward a contradiction, that one of these voters, say $v_7$, has at most $3t$ representatives. Since $v_7$ approves an unelected private candidate, $\alpha$-priceability restricts their remaining budget to be at most $\alpha p$, that is, $1-3tp\leq\alpha p$, and therefore
\(
p\geq{1}/{(3t+\alpha)}\geq{1}/{(4t)}.
\)
It follows that each of the other five voters has at most $4t$ representatives. Hence, at most $3t+5\times4t=23t$ candidates of $A_7\cup\cdots\cup A_{12}$ belong to $W$. This leaves at least $57t-23t=34t$ seats for candidates approved by voters $v_1,\dots,v_6$. At most $3t$ of these are clones of $c_1,c_2,c_3$, so $W$ contains at least $31t$ private candidates approved by the first six voters. Thus, one of these voters approves at least $31t/6>5t$ private candidates in $W$. Since no other voter approves these candidates, we obtain $p<1/(5t)$, contradicting $p\geq1/(4t)$.

Summarizing, each of the last six voters has strictly between $3t$ and $6t$ representatives in every $\alpha$-priceable committee. Since fewer than $36t$ members of $W$ are approved by voters $v_7,\dots,v_{12}$, more than $21t$ members are approved by voters $v_1,\dots,v_6$. Since $f$ is Pareto optimal, all $3t$ clones of $c_1,c_2,c_3$ must belong to $W$. Indeed, if one of these clones were not selected, it could replace a selected private candidate approved by one of the first six voters, making the other five voters strictly better off without making anyone worse off.
Therefore, $W$ contains more than $18t$ private candidates approved by voters $v_1,\dots,v_6$. Together with the $3t$ common candidates, these voters have total utility greater than $18t+6\times 3t=36t$, and hence average utility greater than $6t$. Let $\mathbf{u}_1$ denote the welfare vector of $W$.

Consider Instance~2, which is obtained from Instance~1 by exchanging the roles of the first and last six voters. By the same reasoning, in every committee selected by $f$, each of the first six voters has strictly between $3t$ and $6t$ representatives, all $3t$ clones of $c_1,c_2,c_3$ are selected, and the last six voters have average utility greater than $6t$. Let $\mathbf{u}_2$ denote the welfare vector of such a committee.

Observe that both $\mathbf{u}_1$ and $\mathbf{u}_2$ are achievable in both instances. To preserve a utility vector when moving from one instance to the other, select all $3t$ common candidates, remove $3t$ private representatives from each voter who gains the common candidates, and add $3t$ private representatives for each voter who loses them. This is possible because every coordinate of $\mathbf{u}_1$ and $\mathbf{u}_2$ is greater than $3t$, and it preserves the total committee size. In Instance~1, the vector $\mathbf{u}_2$ cannot be selected, since each of its first six coordinates is smaller than $6t$, whereas every selected committee gives these voters average utility greater than $6t$. Therefore, $g_k(\mathbf{u}_1)>g_k(\mathbf{u}_2)$. By the symmetric argument applied to Instance~2, we obtain $g_k(\mathbf{u}_2)>g_k(\mathbf{u}_1)$, a contradiction.

 \end{proof}

\begin{proposition}\label{app:com:prop:approxejr}
    EJR does not imply any constant approximation of FPJR or PJR+.
\end{proposition}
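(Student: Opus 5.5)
The plan is to build, for every constant $\alpha \geq 1$, one instance together with a committee $W$ that satisfies EJR but violates both $\alpha$-FPJR and $\alpha$-PJR+. Since PJR+ implies 2-FPJR (\cref{thm:2fjr}), a violation of $\alpha$-PJR+ does not by itself give an $\alpha$-FPJR violation. I will therefore check the two failures separately on the same instance.

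\emph{Where EJR is weak.} EJR only protects $\ell$-cohesive groups, i.e.\ groups with at least $\ell$ commonly approved candidates. For such a group it only demands that \emph{one} member have utility $\ell$. PJR+ and FPJR are different: a group that shares a single common unelected candidate, or that collectively likes a set $T$ of size $q(S)$, is owed $q(S)/\alpha$ (resp.\ $\beta$) seats in the union of its approval sets. So I will aim for a large group $S$ with the following properties:
\begin{itemize}
\item every voter in $S$ approves a common candidate $c \notin W$;
\item every voter in $S$ approves many further candidates, arranged so that any subgroup sharing $\ell \geq 2$ common candidates has quota below $\ell$;
\item $W$ gives $S$ very few seats in total.
\end{itemize}

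\emph{Concrete construction.} I would index the voters of $S$ by the points of a combinatorial design, for instance the lines of a finite projective plane or pairs in a grid. Each further candidate is approved by a block of voters whose size is just below $2n/k$. Two blocks should meet in at most one voter, so every set of candidates shared by a subgroup of size at least $2n/k$ has size at most one. The only cohesiveness that survives is then $1$-cohesiveness via $c$ and via single blocks. To satisfy EJR, $W$ needs only a hitting structure: a few seats so that every group of size $\geq n/k$ inside $S$ contains a voter with utility $\geq 1$. The rest of $W$ goes to an unrelated group of voters with private candidates, and their own EJR is checked directly.

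\emph{The two violations.}
\begin{itemize}
\item PJR+: take $S$ itself with the common candidate $c \notin W$. Then $q(S)$ is large while $|W\cap \bigcup_{i\in S}A_i|$ is small, which gives the violation once the parameters are scaled so that the ratio exceeds $\alpha$.
\item FPJR: let $T$ be the set of block candidates, with $|T| \leq q(S)$. The design should make each voter appear in at least $\alpha\beta$ blocks while $W$ gives the whole group fewer than $\beta$ seats.
\end{itemize}

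\emph{Main obstacle.} The hard step is reconciling the hitting requirement of EJR with a small union of represented seats. Naively, requiring a representative in every subgroup of size about $n/k$ forces about $|S|\cdot k/n$ seats, which is exactly $q(S)$. My first attempt to get around this is to exploit $k \gg n$ together with blocks of size exactly $\lceil n/k \rceil \cdot$const, so that cohesive subgroups are governed by blocks rather than by $c$. Concretely, I would let $c$ be approved by a set $S$ with $|S|$ a fixed multiple of $n/k$, and check every subgroup's cohesiveness level against its best-off member. If one instance cannot be tuned to violate both axioms for every $\alpha$, I would fall back to two separate families: one designed for PJR+ and one for FPJR.
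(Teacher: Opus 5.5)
\textbf{There is a genuine gap.} You never pin down an instance, and the design you do sketch cannot produce the FPJR violation. Blocks of size just below $2n/k$ are too small. Suppose $|T|\le q(S)\le |S|k/n$ and $u_i(T)\ge\alpha\beta$ for all $i\in S$. Double counting (the averaging step in the proof of \cref{thm:efjr+}) then gives a candidate of $T$ approved by at least $\alpha\beta|S|/|T|\ge\alpha\beta\, n/k$ voters of $S$. If every block is smaller than $2n/k$, this forces $\alpha\beta<2$. A violation needs $\beta\ge 1$, so no $\alpha$-FPJR violation with $\alpha\ge 2$ is possible in your design. The non-$W$ candidates must therefore be approved by groups that are large in units of $n/k$, and cohesiveness has to be limited by the intersection pattern (two voters share at most one such candidate), not by block size.

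Your ``hitting structure'' has a second problem: a seat of $W$ approved by several voters is itself a common candidate of those voters. For example, one seat approved by all of $S$ gives everyone utility $1$, but it makes every subgroup of size at least $2n/k$ $2$-cohesive, which demands utility $2$, and so on. The real obstacle is this cascade of new cohesiveness, not the number of hitting seats. Neither the $k\gg n$ idea nor the fallback to two separate families addresses it.

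\textbf{The missing idea is to make the sets $A_i\cap W$ pairwise distinct while keeping their union tiny.} The paper's construction is as follows.
\begin{itemize}
\item Take $n=k=q^2$ for a prime $q$, with voters $(a,b)\in\mathbb{Z}_q^2$.
\item Give each voter a distinct nonempty subset of only $t=\lceil\log_2(q^2+1)\rceil$ ``regular'' candidates. Put these $t$ candidates into $W$, and fill the rest of $W$ with $k-t$ dummies that nobody approves.
\item The candidates outside $W$ are $c_{u,v}$, each approved by the voters on the line $au+b\equiv v \pmod q$. Each has $q$ approvers, and any two voters share at most one of them.
\end{itemize}
EJR holds for the following reason. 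A group of $s\ge 2$ voters with $m$ common regular candidates has at most $m+1$ common candidates in total. Because their regular subsets are distinct, some member approves at least $m+1$ candidates of $W$. Singletons have quota $1$ and nonempty subsets, so they are covered too.

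Both axioms still fail badly.
\begin{itemize}
\item PJR+: each line is a group of quota $q$ with a common unelected candidate, yet only at most $t$ seats of $W$ are approved by its members.
\item FPJR: take $S=N$ and $T=\{c_{u,v}\}$, so that $|T|=q^2=k$ equals the quota of $N$ and $u_i(T)=q$ for every voter.
\end{itemize}
In both cases the ratio is $q/t\to\infty$. Note that the PJR+ witnesses are many small groups rather than one large group sharing a single candidate $c$.
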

\begin{proof}
    Fix a prime \(q\), let \(\mathbb{Z}_q=\{0,1,\dots,q-1\}\), and \(t=\left\lceil\log_2(q^2+1)\right\rceil.\) Consider the instance \((N,C,A,k)\) with \(n=k=q^2\) and
\(
N=\mathbb{Z}_q^2=\{(a,b):a,b\in\mathbb{Z}_q\}.
\) Since $q^2 \leq 2^t-1$, we can assign each voter $(a, b)$ a distinct nonempty subset $R_{(a,b)} \subseteq R=\{r_1, r_2, \dots, r_t\}$.
The candidate set is \(C=R\cup I\cup D\), where
\(
R=\{r_1, \dots, r_t\}
\)
is the set of \emph{regular candidates},
\(
I=\{c_{u,v}:u,v\in\mathbb{Z}_q\}
\)
is the set of \emph{irregular candidates}, and \(
D=\{d_1, \dots, d_{k-t}\}
\) is the set of \emph{dummy candidates}. Voter \((a,b)\) approves the following candidate:
\[
A_{(a,b)}
=
R_{(a,b)} \cup \{c_{u,(au+b)\bmod q}:u\in\mathbb{Z}_q\}.
\]

Since \(q\) is prime, every irregular candidate is approved by exactly \(q\) voters, and any two distinct voters have at most one approved irregular candidate in common. Indeed, if \(a\neq c\), the equations \(au+b=cu+d\pmod q\) have a unique solution \(u\in\mathbb{Z}_q\); if \(a=c\) and \(b\neq d\), they have no solution.

Consider the committee \(W=R\cup D\), which consists of all regular and dummy candidates. Consider any $\ell$-cohesive group $S$ with $|S| \geq 2$. We have $|\cap_{(a,b)\in S} A_{(a, b)}| \geq \ell$, and at most one of the common candidates is irregular. Thus, $|\cap_{(a,b)\in S} R_{(a, b)}| \geq \ell - 1$. On the other hand, because all $R_{(a, b)}$ are distinct, for at least one voter $(a, b)\in S$ we have $R_{(a, b)} \not\subseteq \cap_{(c,d)\in S} R_{(c, d)}$, implying $\ell \leq |R_{(a, b)}|= |A_{(a, b)} \cap W|$ and $S$ does not block EJR. If $|S| = 1$, then as $n=k$, every voter deserves one candidate, so $S$ must be $1$-cohesive. However, every voter approves at least one regular candidate in $W$, and a $1$-cohesive group cannot block EJR. 

On the other hand, every irregular candidate \(c_{u, v}\) is approved by \(q\) voters, and under PJR+ these voters must collectively approve at least $q$ candidates in $W$. But the voters approve at most $|R| = t$ candidates, implying \(W\) violates \(\alpha\)-PJR+ for every \(\alpha<q/t\).
Also, consider \(S=N\) and \(T=I\). We have $q(S)=q^2=|T|$, every voter \((a,b)\in S\) approves exactly \(q\) candidates in \(T\), and they collectively approve at most $|R|=t$ voters in in $W$, implying \(W\) violates \(\alpha\)-FPJR for every \(\alpha<q/t\).
Finally, note that $q/t = q/ \left\lceil\log_2(q^2+1)\right\rceil \to \infty$ as $q\to \infty$ completing the proof.
\end{proof}

As PJR is implied by EJR, $\alpha$-EJR+ implies $\alpha$-PJR+, and $\alpha$-FJR implies $\alpha$-FPJR, we can directly conclude the following from \cref{app:com:prop:approxejr}.
\begin{corollary}
    EJR does not imply any constant approximation of FJR or EJR+, and PJR does not imply any constant approximation of FPJR or PJR+.
\end{corollary}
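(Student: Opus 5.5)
The corollary follows from \cref{app:com:prop:approxejr} by monotonicity of the axioms in $\alpha$ and along the implication chain; no new construction is needed. Fix any constant $\alpha \geq 1$ and take a prime $q$ with $q/\lceil\log_2(q^2+1)\rceil > \alpha$. Let $W = R \cup D$ be the committee from the proof of \cref{app:com:prop:approxejr}. That proof shows that $W$ satisfies EJR but violates $\alpha$-PJR+ and $\alpha$-FPJR.

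\emph{EJR versus $\alpha$-EJR+ and $\alpha$-FJR.} First I check that $\alpha$-EJR+ implies $\alpha$-PJR+. If some $i \in S$ has $u_i(W) \geq q(S)/\alpha$, then
\[
\left|W \cap \textstyle\bigcup_{h\in S} A_h\right| \geq |W \cap A_i| = u_i(W) \geq q(S)/\alpha .
\]
In the same way, $\alpha$-FJR implies $\alpha$-FPJR: a voter $i\in S$ with $u_i(W)\geq\beta$ gives $|W\cap\bigcup_{h\in S}A_h|\geq\beta$. Since $W$ violates $\alpha$-PJR+ and $\alpha$-FPJR, it therefore also violates $\alpha$-EJR+ and $\alpha$-FJR, while still satisfying EJR.

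\emph{PJR versus $\alpha$-PJR+ and $\alpha$-FPJR.} EJR implies PJR in committee elections, so $W$ satisfies PJR. Its violations of $\alpha$-PJR+ and $\alpha$-FPJR come directly from \cref{app:com:prop:approxejr}.

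Since $\alpha$ was arbitrary, neither EJR nor PJR implies any constant approximation of the listed axioms. The only step needing care is that the two inherited implications hold with the same $\alpha$ on both sides. This is immediate from the definitions in \cref{app:abc:definitions}: the same scaled thresholds $q(S)/\alpha$ and $\alpha\beta$ appear in both the EJR-type and PJR-type versions.
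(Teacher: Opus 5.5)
Your proposal is correct and follows the paper's argument exactly: take the committee $W = R \cup D$ from \cref{app:com:prop:approxejr}, pass its EJR guarantee down to PJR, and pass its $\alpha$-PJR+ and $\alpha$-FPJR violations up to $\alpha$-EJR+ and $\alpha$-FJR using the same-$\alpha$ implications you verify. One small nit, inherited from the paper's own wording: the FPJR witness $S = N$, $T = I$ needs an integer $\beta \geq t+1$ with $\alpha\beta \leq q$ (where $t = \lceil\log_2(q^2+1)\rceil$), so you should choose $q$ with $q/(t+1) \geq \alpha$ rather than only $q/t > \alpha$; the PJR+ part is fine as you state it.
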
 
\end{document}